\documentclass[11pt]{article}
\usepackage[margin=1in,bottom=1in]{geometry}
\usepackage[utf8]{inputenc}
\usepackage[T1]{fontenc}
\usepackage[english]{babel}

\usepackage{tabularx}
\usepackage{mdframed}
\usepackage{libertine}
\usepackage{amsthm}
\usepackage{amsmath}

\usepackage{amssymb}
\usepackage{stmaryrd}
\usepackage{mathrsfs} 
\usepackage{mathtools}
\usepackage{thm-restate}
\usepackage[colorlinks = true,linkcolor = blue,urlcolor = blue, citecolor = blue]{hyperref}
\usepackage[capitalize, nameinlink]{cleveref}
\usepackage{url}
\usepackage{todonotes}
\usepackage{mathrsfs} 
\usepackage{soul}
\usepackage{relsize}
\usepackage{enumitem}
\usepackage[all,defaultlines=3]{nowidow}
\usepackage[mathlines]{lineno}
\usepackage{multicol}
\usepackage{enumitem}
\usepackage{xspace}
\usepackage{lineno}
\usepackage{nicefrac}
\usepackage{wasysym}
\usepackage{comment}
\usepackage[linesnumbered,ruled]{algorithm2e}
\usepackage{textpos}
\setlist[itemize]{topsep=4pt,itemsep=3pt,parsep=0pt} 
\setlist[enumerate]{topsep=4pt,itemsep=3pt,parsep=0pt} 

\usetikzlibrary{decorations.markings}

\crefname{algocf}{Algorithm}{Algorithms}
\crefname{claim}{Claim}{Claims}
\crefname{figure}{Figure}{Figures}
\renewcommand{\preceq}{\preccurlyeq}

\newtheorem{theorem}{Theorem}[section]
\newtheorem*{theorem*}{Theorem}

\newtheorem{corollary}[theorem]{Corollary}

\newtheorem{lemma}[theorem]{Lemma}

\newtheorem{claim}[theorem]{Claim}

\newtheorem{proposition}[theorem]{Proposition}
\theoremstyle{definition}
\newtheorem{definition}[theorem]{Definition}
\theoremstyle{plain}

\theoremstyle{definition}

\newtheorem*{example*}{Example}

\numberwithin{equation}{section}

\renewcommand{\cal}{\mathcal}

\newenvironment{claimproof}[1][Proof of the claim.]{%
  \begin{proof}[#1]%
}{%
  \end{proof}%
}

\newcommand{\drds}[1]{\textsc{Distance-}$#1$ \textsc{Dominating Set}}
\newcommand{\dris}[1]{\textsc{Distance-}$#1$ \textsc{Independent Set}}
\newcommand{\notDominated}[1]{\mathtt{witnessNotDom}(#1)}
\newcommand{\candidate}[1]{\mathtt{candidateDom}(#1)}
\newcommand{\witnessInd}[1]{\mathtt{witnessNotInd}(#1)}
\newcommand{\candidateInd}[1]{\mathtt{candidateInd}(#1)}
\newcommand{\add}[1]{\mathtt{insert}(#1)}
\newcommand{\remove}[1]{\mathtt{remove}(#1)}
\newcommand{\near}[1]{\mathtt{nearVertex}(#1)}
\newcommand{\far}[1]{\mathtt{farVertex}(#1)}
\newcommand{\init}[1]{\mathtt{init}(#1)}
\newcommand{\farVertexDS}[2]{\mathsf{FarVertex}_{#2}[#1]}
\newcommand{\farVertexDSS}[3]{\mathsf{FarVertex}_{#2}^{#3}[#1]}
\newcommand{\nearVertexDS}[2]{\mathsf{NearVertex}_{#2}[#1]}
\newcommand{\nearVertexDSS}[3]{\mathsf{NearVertex}_{#2}^{#3}[#1]}
\newcommand{\dominatingSetDS}[2]{\mathsf{DominatingSet}_{#2}[#1]}
\newcommand{\independentSetDS}[2]{\mathsf{IndependentSet}_{#2}[#1]}
\newcommand{\ds}[1]{\mathtt{domSet}(#1)}
\newcommand{\is}[1]{\mathtt{indSet}(#1)}
\newcommand{\FO}{\mathsf{FO}}

\newcommand{\avgdeg}{\mathsf{avgdeg}}

\newcommand{\Minors}{\mathsf{Minors}}
\newcommand{\Fra}{\mathsf{Fraternal}}
\newcommand{\fra}{\mathsf{fraternal}}

\newcommand{\ar}{\mathsf{ar}}
\newcommand{\Af}{\mathbb{A}}
\newcommand{\Bf}{\mathbb{B}}

\newcommand{\wei}{\mathbf{w}}

\newcommand{\Pp}{\mathcal{P}}

\newcommand{\tup}[1]{{\bar{#1}}}

\newcommand{\Prob}{\mathbb{P}}
\renewcommand{\Pr}{\Prob}

\newcommand{\rt}[2]{#1\langle #2\rangle}

\def\phi{\varphi}
\newcommand{\N}{\mathbb{N}}

\newcommand{\Z}{\mathbb{Z}}

\newcommand{\dist}{\mathrm{dist}}

\newcommand{\sli}{\mathsf{sli}}

\newcommand{\Ff}{\mathcal{F}}
\newcommand{\Ss}{\mathcal{S}}

\newcommand{\F}{\mathcal{F}}
\newcommand{\G}{\Cc}
\newcommand{\val}{\mathrm{val}}

\newcommand{\Hom}{\mathrm{Hom}}
\newcommand{\Sub}{\mathrm{Sub}}
\newcommand{\ISub}{\mathrm{ISub}}

\newcommand{\Cc}{\mathscr C}

\def\epsilon{\varepsilon}
\def\eps{\varepsilon}
\newcommand{\Oh}{\mathcal{O}}

\newcommand{\id}{\mathrm{id}}
\renewcommand{\emptyset}{\varnothing}

\newcommand{\michalin}[1]{\todo[size=\normalsize,inline,color=pink!40]{Micha\l{}: #1}}

\newcommand{\bbinline}[2][inline]{\todo[color=blue!50,#1]{{\textbf{B:} #2}}}
\newcommand{\azinline}[2][inline]{\todo[color=yellow!50,#1]{{\textbf{A:} #2}}}

\newcommand{\WReach}{\textrm{WReach}}
\newcommand{\wcol}{\textrm{wcol}}

\newcommand{\UnMaps}{\mathsf{UnweightedMappings}}
\newcommand{\WeiMaps}{\mathsf{WeightedMappings}}
\newcommand{\MapEx}{\mathsf{MappingExample}}
\newcommand{\RetVer}{\mathsf{RetrieveVertex}}
\newcommand{\WeiSum}{\mathsf{WeightedSum}}
\newcommand{\Ver}{\mathsf{Verify}}
\newcommand{\MapVer}{\mathsf{MapVertex}}

\renewcommand{\leq}{\leqslant}
\renewcommand{\geq}{\geqslant}
\renewcommand{\le}{\leq}
\renewcommand{\ge}{\geq}

\renewcommand{\setminus}{-}
\newcommand{\set}[1]{\{#1\}}
\newcommand{\setof}[2]{\{#1 \colon#2\}}

\usepackage{tikz}

\SetKwBlock{Repeat}{repeat}{}
\let\oldnl\nl
\newcommand{\nonl}{\renewcommand{\nl}{\let\nl\oldnl}}

\newcommand{\affi}[1]{\textcolor{black!50}{#1}}

\tikzstyle{vertex}=[circle,draw=black,fill=gray,minimum size=0.15cm,inner sep=0pt]

\tikzstyle{svertex}=[circle,draw=black,fill=black,minimum size=0.11cm,inner sep=0pt]

\begin{document}

\newcommand{\funding}{
	\begin{minipage}[t]{0.30\textwidth}
		\includegraphics[scale=0.15]{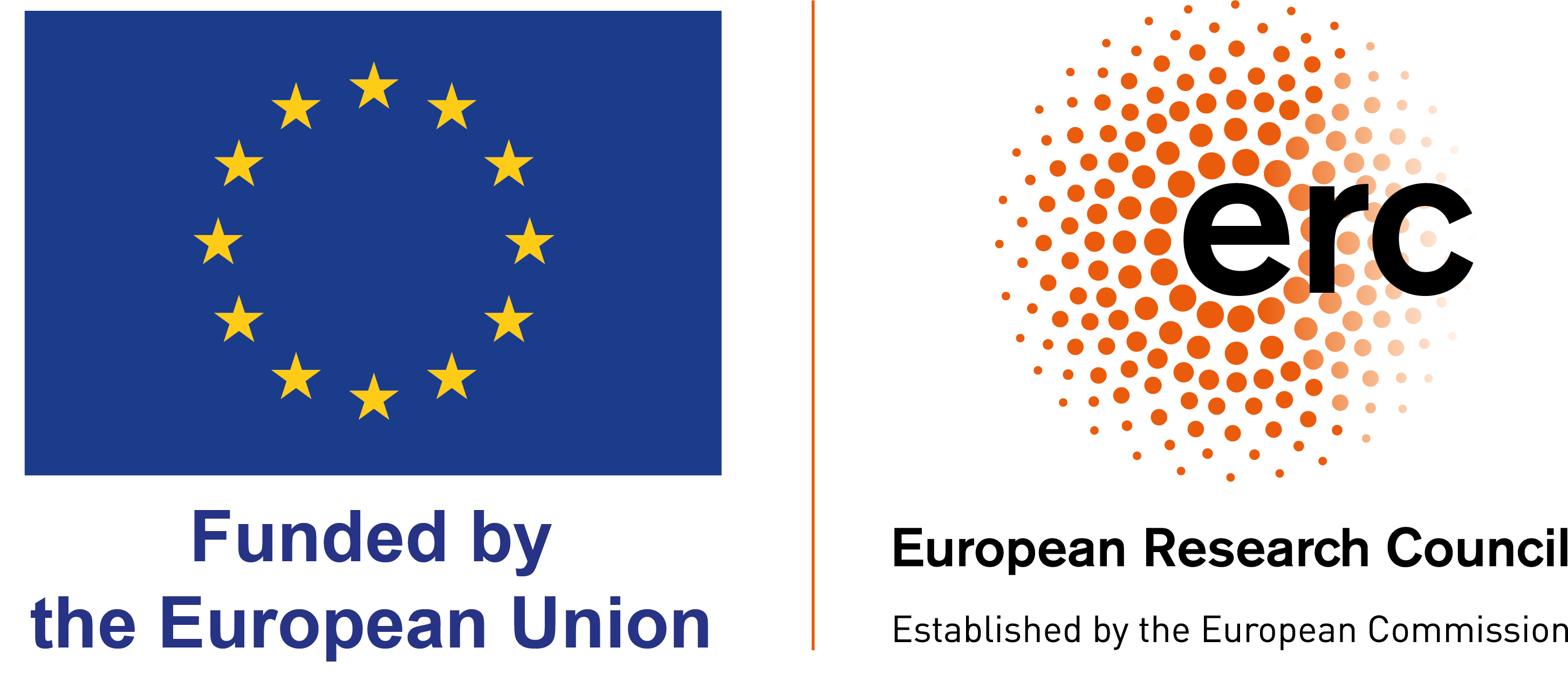}
	\end{minipage}\hfill
	\begin{minipage}[b]{0.65\textwidth}\footnotesize
		All the authors were supported by the project BOBR that received funding from the European Research Council (ERC) under the European Union’s Horizon 2020 research and innovation programme, grant agreement No. 948057. In particular, a majority of the work on this project was done while B. Bosek was also affiliated with the University of Warsaw.
	\end{minipage}}

\renewcommand\footnotemark{}
\title{Dynamic domination and independence in sparse graphs\footnote{\funding}}
\date{}
\author{
	Bart\l{}omiej Bosek\\{\small \affi{Jagiellonian University, Kraków}}\\\href{mailto:bartlomiej.bosek@uj.edu.pl}{\small bartlomiej.bosek@uj.edu.pl}\\
	\and
	Wojciech Nadara\\{\small \affi{University of Warsaw}}\\\href{mailto:w.nadara@mimuw.edu.pl}{\small w.nadara@mimuw.edu.pl}
	\and
	Michał Pilipczuk\\{\small \affi{University of Warsaw}}\\\href{mailto:michal.pilipczuk@mimuw.edu.pl}{\small michal.pilipczuk@mimuw.edu.pl}
	\and
	Anna Zych-Pawlewicz\\{\small \affi{University of Warsaw}}\\\href{mailto:anka@mimuw.edu.pl}{\small anka@mimuw.edu.pl}
}

\maketitle

\begin{abstract}
	Let $\Cc$ be a class of graphs of bounded expansion and $r,k\in \N$ be fixed. We give a dynamic data structure that for a given dynamic graph $G$, updated by edge insertions and deletions subject to the promise that $G\in \Cc$ at all times, maintains the answer to the following two queries:
	\begin{itemize}[nosep]
		\item Does $G$ contain a distance-$r$ dominating set of size $k$?
		\item Does $G$ contain a distance-$r$ independent set of size $k$?
	\end{itemize}
	The data structure is randomized with error probability bounded by $\eps$, for a parameter $\eps>0$ fixed upon the initialization. The amortized update time is $\log^c n\cdot \log \frac{1}{\eps}$, where $n$ is the vertex count of $G$ and $c$ is a constant that depends only on $r$, $k$, and $\Cc$. In the case of the first query, the data structure can also output a distance-$r$ dominating set of size $k$, if existent.
	
	We also prove that when $r=1$, our data structure for the dominating set query can be implemented even if we only assume that the maintained graph $G$ has degeneracy bounded by a constant $d$, yielding a simpler data structure with an improved amortized update time of $2^{k^{\Oh(d)}}\cdot \log^3 n\cdot \log \tfrac{1}{\eps}$. Finally, we prove that in graphs of degeneracy at most $d$, one can maintain an $\Oh(d^2)$-approximation of the minimum size of a (distance-$1$) dominating set with amortized expected update time $d^{\Oh(1)}\cdot \log n$.
\end{abstract}


\thispagestyle{empty}
\newpage

\tableofcontents

\thispagestyle{empty}
\newpage

\clearpage
\setcounter{page}{1}

\section{Introduction}\label{sec:intro}

\emph{Sparsity} is a research area in structural graph theory that studies the structure in graphs that exclude dense local obstructions, formalized through the notion of shallow minors. More precisely, we say that a graph $H$ is a \emph{depth-$r$ minor} of a graph $G$ if a supergraph of $H$ can be obtained from $G$ by contracting mutually disjoint connected subgraphs of radius at most $r$. The two main concepts are the following:
\begin{itemize}
	\item A graph class $\Cc$ has \emph{bounded expansion} if for every $r\in \N$ there is a finite upper bound $c_r\in \N$ on the average degree of depth-$r$ minors of graphs from $\Cc$.
	\item A graph class $\Cc$ is \emph{nowhere dense} if for every $r\in \N$ there is a finite upper bound $t_r\in \N$ on the sizes of complete graphs that can be found as depth-$r$ minors of graphs from $\Cc$.
\end{itemize}
Clearly, if $\Cc$ has bounded expansion then $\Cc$ is nowhere dense as well, but the reverse implication does not hold in general. Many well-studied classes of sparse graphs, such as graphs with bounded maximum degree, graphs excluding a fixed (topological) minor (in particular planar graphs), or some natural classes of sparse geometric intersection graphs, do have bounded expansion. Therefore, any methods designed for bounded expansion classes will apply to them as well. We refer the reader to the monograph of Ne\v{s}et\v{r}il and Ossona de Mendez~\cite{sparsity} or to the lecture notes of Pilipczuk, Pilipczuk, and Siebertz~\cite{sparsityNotes} for a broad introduction to the vast toolbox of techniques for classes of bounded expansion and nowhere dense classes.

The methods of Sparsity are particularly useful in the context of parameterized algorithms, as witnessed by the award of the Nerode Prize 2025 to Ne\v{s}et\v{r}il and Ossona de Mendez for laying foundations of the field. Not surprisingly, these methods apply predominantly to parameterized problems of local nature, such as the following:
\begin{itemize}
	\item \textsc{$H$-(Induced)-Subgraph Isomorphism}: Decide whether the fixed graph $H$ is an (induced) subgraph of the given graph $G$.
	\item \drds{r}: Decide whether the given graph $G$ has a \emph{distance-$r$ dominating set} of size $k$, that is, a set $D$ of $k$ vertices such that every vertex of $G$ is at distance at most $r$ from some vertex of $D$.
	\item \dris{r}: Decide whether the given graph $G$ has a \emph{distance-$r$ independent set} of size $k$, that is, a set $I$ of $k$ vertices that are pairwise at distance more than $r$ from each other.
\end{itemize}
All these problems are $\mathsf{W}[1]$-hard on general graphs, but become fixed-parameter tractable on any nowhere dense class of graphs. This is a manifestation of a more general phenomenon. Namely, each of the problems above can be expressed by a sentence of first-order logic $\FO$, of length depending on the relevant parameter(s): the vertex count of $H$, or $r$ and $k$. As proved by Grohe, Kreutzer, and Siebertz~\cite{GroheKS17}, the \emph{model-checking problem} for~$\FO$ --- given a graph $G$ and an $\FO$ sentence $\varphi$, decide whether $\varphi$ holds in $G$ ---  can be solved in almost linear fixed-parameter time on any nowhere dense class $\Cc$; precisely in time $\Oh_{\Cc,\varphi,\eps}(|G|^{1+\eps})$, for any fixed~$\eps>0$, where $|G|$ denotes the number of vertices of $G$. As proved earlier by Dvo\v{r}\'ak, Kr\'al', and Thomas~\cite{DvorakKT13}, on bounded expansion classes one can even obtain a linear fixed-parameter time of $\Oh_{\Cc,\varphi}(|G|)$. While fixed-parameter tractability of \textsc{$H$-(Induced)-Subgraph Isomorphism}, \drds{r}, \dris{r} on nowhere dense classes follows from these meta-theorems, dedicated algorithms for those problems were known already earlier~\cite{DawarK09,NesetrilM08a}. In fact, their development was vital in establishing the toolbox needed~for~\cite{DvorakKT13,GroheKS17}.

Since the fundamental work of Dvo\v{r}\'ak et al.~\cite{DvorakKT13} and of Grohe et al.~\cite{GroheKS17}, the approach to model-checking on classes of sparse graphs has been thoroughly understood, particularly for classes of bounded expansion. The developed framework is robust and has been successfully extended to the settings of enumeration, counting, and other aggregate queries~\cite{GroheS18,KazanaS19,SchweikardtSV22,Torunczyk20}, circuit complexity~\cite{PilipczukST18}, approximation~\cite{Dvorak22}, and distributed computing~\cite{BlinFFGGMRT26}. One setting that has so far resisted progress, despite efforts, is that of dynamic data structures. 

Precisely, consider the following setting. Fix a class of bounded expansion $\Cc$ and an $\FO$ sentence $\varphi$. Let $G$ be a fully dynamic graph that is updated over time by edge insertions and edge deletions (the vertex set stays fixed), subject to a guarantee that $G\in \Cc$ at all times. The goal is to design a data structure that efficiently maintains whether $\varphi$ is satisfied in $G$.

The question about the existence of such a data structure was asked in 2013 by Dvo\v{r}\'ak and T\r{u}ma~\cite{DvorakT13}, who showed that this is indeed possible for the \textsc{$H$-Induced-Subgraph Isomorphism} problem. Precisely, they gave a dynamic data structure that is able to maintain the number of induced copies of $H$ in $G$ (as well as the number of $H$-subgraphs in $G$ and the number of homomorphisms from $H$ to~$G$) with amortized update time $\Oh_{\Cc,h}(\log^{\binom{h}{2}-1} n)$, where $h$ and $n$ denote the vertex counts of $H$ and $G$, respectively. Somewhat curiously, their approach is based on the Inclusion-Exclusion Principle, so even though their data structure can count the number of solutions, it is not capable of actually providing an example solution.

\textsc{$H$-Induced-Subgraph Isomorphism} is a specific example of an $\FO$-definable problem, because the sentence expressing it is purely existential: After quantifying the vertices of $H$   existentially, the adjacencies and nonadjacencies can be verified using a quantifier-free subformula. Consequently, algorithms for this problem on classes of sparse graphs are typically much simpler; see e.g.~\cite{NesetrilM08a}. Since the work of Dvo\v{r}\'ak and T\r{u}ma in 2013, no progress on their question has been reported, even for the more complicated $\exists\forall$ problems such as \drds{r} and \dris{r}.

We remark that some previous works~\cite{DvorakKT13,Torunczyk20} considered the partially dynamic setting where (essentially) the graph stays fixed and its vertices and edges gets recolored, but this setting is much simpler than the fully dynamic one that we study here.

\paragraph*{Our contribution.} In this work we propose fully dynamic data structures for the \drds{r} and \dris{r} problems on classes of bounded expansion, with polylogarithmic amortized update time.\footnote{For a tuple of parameters $\tup p$, the $\Oh_{\tup p}(\cdot)$ notation hides multiplicative factors that may depend on $\tup p$.} 

\begin{restatable}{theorem}{mainDomBE}\label{thm:main-domset-be}
	Fix a graph class $\Cc$ of bounded expansion, $r,k\in \N$, and $\eps>0$. Then there is a randomized dynamic data structure that given a dynamic $n$-vertex graph $G$, updated by edge insertions and removals subject to guarantee that it always belongs to $\Cc$, supports the following query:
	\begin{itemize}
		\item $\ds{}$: return a distance-$r$ dominating set of size at most $k$ in $G$, or $\bot$ if no such set exists.
	\end{itemize}
	Every answer to the query is correct with probability at least $1-\eps$ against an oblivious adversary.
	The amortized time complexity of an update or a query is $\log^{\Oh_{\Cc,r,k}(1)} n\cdot \log \tfrac{1}{\eps}$. The data structure can be initialized on an edgeless graph in time $\Oh_{\Cc,r,k}(n\log n\log \tfrac{1}{\eps})$ and occupies  $\Oh_{\Cc,r,k}(n\log n\log \tfrac{1}{\eps})$ space at all times.
\end{restatable}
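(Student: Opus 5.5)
We reduce the query to a small number of calls to two dynamic primitives and then iterate in the style of the classic $k$-level branching algorithm for dominating set on sparse graphs. The primitives are $\notDominated{D}$ and $\candidate{W}$:
\begin{itemize}
\item Given a set $D$ of at most $k$ vertices, $\notDominated{D}$ returns a vertex at distance more than $r$ from $D$, or reports that none exists.
\item Given a set $W$ of at most $k+1$ vertices, $\candidate{W}$ returns a small set $C_W$, of size $\Oh_{\Cc,r,k}(1)$ or polylogarithmic, with the following property: if some distance-$r$ dominating set $D$ of size $k$ extends a partial solution and must dominate all of $W$, then some vertex of $C_W$ can replace a vertex of $D$ without losing domination.
\end{itemize}
Given these primitives, $\ds{}$ runs the bounded search tree. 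Start from $D=\emptyset$. Repeatedly ask $\notDominated{D}$ for a witness $w$, add $w$ to the witness set $W$, and branch over the candidates in $\candidate{W}$. The tree has depth $k$ and branching $|C_W|$, so the query makes a bounded or polylogarithmic number of primitive calls. Correctness follows from the standard argument that the witnesses $w_1,\dots,w_{k+1}$ found along a branch are pairwise far from the partial solutions.

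For $\candidate{\cdot}$, I would use weak coloring numbers, or more concretely a dynamic low-treedepth or generalized coloring structure for bounded expansion classes. A vertex $w$ is dominated by $u$ within distance $r$ only if some vertex $z\in \WReach_{2r}[w]\cap\WReach_{2r}[u]$ lies on the shortest path and is the minimum vertex there. So the vertices dominating all of $W$ can be grouped by their tuple of ``meeting points''. For each tuple of weakly reachable vertices, which gives only $\Oh(1)$ tuples per $w$, it suffices to store one representative $u$. This would be maintained by a $\nearVertexDS{\cdot}{}$-type structure keyed by these hubs. Symmetrically, $\notDominated{D}$ is answered with a $\farVertexDS{\cdot}{}$ structure: for each vertex $z$ we maintain the vertices $x$ with $z\in\WReach_r[x]$, indexed by distance. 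We then need a vertex not covered by any of the $\Oh(1)$ hubs of $D$.

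The hard part is that ``not dominated by $D$'' is a universal condition. A single hub listing cannot exclude all of $D$ at once, because $x$ may be close to $D$ through any of its hubs. My plan is to handle this with random sampling and color coding, which explains the $\log\frac1\eps$ factor and the error probability $\eps$. Each vertex gets $\Oh(\log n\log\frac1\eps)$ independent random labels. For each hub we then keep a small number of far-vertex candidates per label class, so that a vertex far from $D$ survives in some class with good probability.

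To maintain weak reachability sets under edge updates in polylogarithmic amortized time, I would use a dynamic version of the augmentation or orientation procedure. This is bounded-outdegree orientation maintenance, as in Brodal--Fagerberg or Dvořák--Tůma, iterated $\Oh_r(1)$ times with transitive fraternal augmentations. Each round costs a $\log n$ factor, which gives the $\log^{\Oh_{\Cc,r,k}(1)} n$ total. I expect the main obstacle to be keeping the far-vertex structure consistent when augmentation edges flip. The number of affected hub lists must stay polylogarithmic amortized, and distances through augmented edges must be exact up to $r$. Here I would first try to store distances only approximately via the augmented graph, using the standard fact that dist-$\le r$ in $G$ is witnessed within the $r$-th fraternal augmentation, and then verify candidates exactly by a bounded local search.
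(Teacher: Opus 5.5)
Your proposal has genuine gaps in both of its main components. It also never uses the ingredient that actually bounds the search in the paper: finiteness of the distance-$r$ semi-ladder index $\sli_r(\Cc)$. First, the branching. A depth-$k$ search tree is sound only if, at every node, $C_W$ contains a vertex of some solution extending the current partial one. Such a vertex trivially exists; the whole difficulty is computing it. Your implementation stores one representative per tuple of meeting points with respect to $W$, and this fails because the type of $u$ relative to $W$ says nothing about what else $u$ dominates. Take $r=k=1$ and the planar graph on $w,u_1,\dots,u_m,x$ in which $w$ is adjacent to every $u_i$, $u_m$ is adjacent to $u_1,\dots,u_{m-1}$, and $x$ is a pendant at $u_m$. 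The unique solution is $\{u_m\}$. If $\sigma$ starts with $w$ and the first witness is $w$, then every $u_i$ meets $w$ through the single hub $w$ at distance $1$, so the stored representative may be $u_1$. Both branches $\{w\}$ and $\{u_1\}$ then fail, and you output $\bot$. Your remark that witnesses are far from partial solutions does not address this. The paper does not branch at all; it runs the semi-ladder algorithm. Each round takes an \emph{arbitrary} $k$-set dominating all witnesses collected so far. This set is found by trying all partitions of $W$ into at most $k$ parts and issuing one $\near{S,r}$ query per part, each answered by rooted homomorphism counting plus fingerprint retrieval. The round then takes an arbitrary non-dominated vertex as the next witness. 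The candidates and witnesses form a semi-ladder for the formula ``$\bar x$ distance-$r$ dominates $y$'', so by Fabia\'nski et al.\ the loop stops within $k^{\sli_r(\Cc)}$ rounds. In particular, $|W|$ grows up to $k^{\sli_r(\Cc)}$, not $k+1$.

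Second, the far query. Per-hub candidate lists with random labels cannot certify that a stored vertex is far from \emph{all} of $D$; consider the case where exactly one vertex of $G$ is far. Random isolation helps only once you can compute weighted sums over the far set, and obtaining those sums is the paper's central new idea. It maintains a faithful augmentation whose out-edges at each vertex carry distinct labels, so that for a fixed root every shortcut pattern has at most one realization. The sum of $\wei$ over vertices at distance $>r$ from the root then becomes an inclusion--exclusion, over sets $\Gamma$ of patterns, of rooted homomorphism counts of bounded-size decorated graphs $H_\Gamma$. Each such count is maintained by the weighted Dvo\v{r}\'ak--T\r{u}ma structure, and the sampled sets $S_{i,j}$ then isolate a single far vertex. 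A multi-vertex $D$ is reduced to a single root by temporarily attaching a fresh root to $D$ via short paths. Your plan to maintain weak reachability sets dynamically also has no supporting result. For a fixed order, a single edge update can change unboundedly many of them, and the paper explicitly leaves their maintenance open for $r>1$. Iterated fraternal augmentations give bounded-outdegree orientations with shortcuts, not $\WReach$ sets. Likewise, a ``bounded local search'' to verify exact distances is not bounded-time in graphs of unbounded degree.
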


\vspace{-0.3cm}

\begin{restatable}{theorem}{mainIndBE}\label{thm:main-indset-be}
	Fix a graph class $\Cc$ of bounded expansion, $r,k\in \N$, and $\eps>0$. Then there is a randomized dynamic data structure that given a dynamic $n$-vertex graph $G$, updated by edge insertions and removals subject to guarantee that it always belongs to $\Cc$, supports the following query:
	\begin{itemize}
		\item $\is{}$: decide whether in $G$ there exists a distance-$r$ independent set of size $k$.
	\end{itemize}
	Every answer to the query is correct with probability at least $1-\eps$ against an oblivious adversary.
	The amortized time complexity of an update or a query is $\log^{\Oh_{\Cc,r,k}(1)} n\cdot \log \tfrac{1}{\eps}$. The data structure can be initialized on an edgeless graph in time $\Oh_{\Cc,r,k}(n\log n\log \tfrac{1}{\eps})$ and occupies  $\Oh_{\Cc,r,k}(n\log n\log \tfrac{1}{\eps})$ space at all times.
\end{restatable}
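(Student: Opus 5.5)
Our plan for \cref{thm:main-indset-be} mirrors the strategy for \cref{thm:main-domset-be}. We reduce the $\exists\forall$ query to a constant number of purely local checks supported by a dynamic sparse decomposition. The first ingredient is a dynamically maintained orientation with bounded out-degree (via the Brodal--Fagerberg style amortized orientation), augmented to a bounded-out-degree transitive-fraternal augmentation up to depth $\Oh(r)$, as in Dvo\v{r}\'ak--T\r{u}ma. This gives polylogarithmic amortized recourse per edge update, because each augmentation round increases the degree bound by a constant and the whole structure is recomputed lazily on a $\log n$ scale. In the augmented graph, distances up to $r$ are certified by short ``witness'' paths through common weakly-reachable vertices. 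This lets us answer $\dist(u,v)\le r$ and enumerate the bounded-size sets of ``hubs'' through which a vertex reaches its $r$-neighbourhood.

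The second ingredient is a randomized color-coding reduction, which accounts for the $\log\frac1\eps$ factor. We repeat $\Oh_k(\log\frac1\eps)$ independent random colorings of $V(G)$ with $k$ colors, fixed at initialization so that the adversary is oblivious. For each coloring, we ask for a distance-$r$ independent set with one vertex of each color class. A fixed solution is colorful with probability $k^{-k}$, which yields the one-sided error bound.

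Within a coloring, we proceed greedily, following the classical approach for \dris{r} on bounded expansion classes. The key step is a ``far vertex'' primitive: for a bounded-size set $S$ and a color $i$, return a vertex of color $i$ at distance greater than $r$ from all of $S$, or report that none exists. We maintain this primitive by keeping, for each hub $h$ and each color, counters of vertices routed through $h$. A vertex of color $i$ is near $S$ iff it shares a hub with some $s\in S$ under a suitable weak-reachability ordering. Inclusion--exclusion over the $\Oh(1)$ hubs of $S$ then counts exactly the near vertices of color $i$, and comparing with the class size decides existence. To actually extract a far vertex, we binary-search over a balanced tree on the color class, storing the same aggregates, which costs $\log n$ factors.

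Using the far-vertex primitive, we search for a solution by a bounded-depth branching. Take any vertex $v_1$ of color $1$. If it is not in some solution, then every solution contains a vertex within distance $2r$ of $v_1$, which lies in a structure of bounded size relative to the hubs. The branching is bounded by functions of $k$ and $\Cc$, using uniform quasi-wideness style arguments via weak coloring numbers. This yields a search tree of size $\Oh_{\Cc,r,k}(1)$ and query time $\log^{\Oh(1)}n$.

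The main obstacle is making this branching sound. A vertex near $v_1$ need not be ``equivalent'' to it, and bounded expansion does not directly give constant-size branching. I would first try replacing the naive greedy choice with the standard kernel-style argument: if there are many pairwise far vertices of a color, any choice can be completed. Otherwise, all vertices of that color are covered by a few balls, which we can enumerate via the far-vertex oracle with $k$ queries. This gives a dichotomy on each color, answering the query after $\Oh_k(1)$ oracle calls.
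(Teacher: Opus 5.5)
\textbf{There is a genuine gap in the search procedure.} It sits exactly where you flag ``the main obstacle'': you never obtain a search that uses only $\Oh_{\Cc,r,k}(1)$ oracle calls, and your proposed dichotomy does not close it.

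\emph{The first branch is fine once stated correctly.} If color class $i$ contains $k$ vertices pairwise at distance $>2r$, then each of the other $k-1$ solution vertices is $r$-close to at most one of them, so color $i$ can be dropped.

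\emph{The second branch gives no bounded choice.} There you only learn that color class $i$ lies in fewer than $k$ balls of radius $2r$. These balls are unbounded, and which vertex of a ball you take genuinely matters, because the constraint is pairwise between chosen vertices. Here is an example with $r=k=2$:
\begin{itemize}
\item let $h$ be adjacent to $a_1,\dots,a_{m-1},b_1,\dots,b_{m-1}$;
\item let $a_m$ be adjacent to $b_1,\dots,b_{m-1}$, and $b_m$ adjacent to $a_1,\dots,a_{m-1}$;
\item color the $a_i$ with color $1$, the $b_j$ with color $2$, and $h$ arbitrarily.
\end{itemize}
Deleting $h,a_m,b_m$ leaves an edgeless graph, so this family is sparse. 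The unique colorful solution is $\{a_m,b_m\}$, yet every $a_i$ lies within distance $2$ of $h$.

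\emph{The branching on $v_1$ fails for the same reason.} ``Every solution has a vertex within distance $r$ of $v_1$'' leaves unboundedly many candidates. The hubs of $v_1$ are boundedly many, but the vertices sharing a hub with $v_1$ are not.

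\emph{The missing idea is what the paper imports.} It uses the ladder algorithm of Fabia\'nski et al.\ (\cref{alg:2}) together with \cref{thm:ladder-bound}: for suitable constants $p,\ell$, the algorithm stops within $\ell$ rounds with a correct answer (in the positive case without producing a solution). Each round's $\candidateInd{W,r,k}$ and $\witnessInd{\mathcal{L},r,p}$ calls are then reduced via distance profiles (\cref{lem:candidateInd-far,lem:witnessInd-near}) to boundedly many $\far{S,\rho}$ and $\near{S,\rho}$ queries with per-vertex radii. So you also need a near-vertex primitive, which your plan lacks. Color-coding plays no role; the $\log\frac{1}{\eps}$ factor comes from fingerprint retrieval.

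\textbf{Your far-vertex primitive has a second, independent gap: the counters cannot be maintained in polylogarithmic time.}
\begin{itemize}
\item Dvo\v{r}\'ak and T\r{u}ma maintain only \emph{fraternal} augmentations. There is no known way to maintain transitive edges with small recourse: inserting an arc $(a,b)$ forces transitive arcs from all in-neighbours of $a$, and in-degrees are unbounded.
\item Weak $r$-reachability sets can likewise change for unboundedly many vertices after one update (see \cref{sec:overview-other}).
\item With plain fraternal augmentations, the hubs of $v$ are endpoints of bounded-length directed walks from $v$. A single arc change therefore alters the hub sets of every vertex that can reach its tail, so explicit hub counters cannot be updated in bounded time.
\item Inclusion--exclusion over a union of hub neighbourhoods would in any case need counts for \emph{subsets} of hubs, not single hubs.
\end{itemize}

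The paper avoids explicit counters entirely:
\begin{itemize}
\item it reduces every query to a single fixed root $u$;
\item it applies inclusion--exclusion over sets $\Gamma$ of \emph{patterns} (label/length sequences) rather than over hubs, and uses unambiguity to get $|\Hom(\rt{H_\Gamma}{x,y},\rt{D}{u,v})|\in\{0,1\}$;
\item it maintains the resulting weighted sums with the weighted Dvo\v{r}\'ak--T\r{u}ma homomorphism-counting structure (\cref{lem:IE,lem:wei-sums-far});
\item it extracts a vertex by fingerprint retrieval (\cref{lem:restore-by-sampling}).
\end{itemize}
Your balanced-tree extraction would instead require a separate counting structure for every tree node.
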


Note that the data structure of \cref{thm:main-domset-be} is capable of providing an example solution, while that of \cref{thm:main-indset-be} only reports the existence of a solution. Both data structures are randomized against an oblivious adversary, which means that the guarantees on the probability of correctness hold only assuming the adversary cannot choose the next update/query based on the data structure's answers to the previous queries. We stress that the data structure's answers to the consecutive queries are \emph{not} independent random variables; they all depend on some common random bits fixed upon the initialization. 

At this point, the proofs of \cref{thm:main-domset-be,thm:main-indset-be} do not extend to $\FO$-definable problems beyond \drds{r} and \dris{r}. This is because our approach is based on dynamizing two dedicated algorithms for these problems due to Fabia\'nski, Pilipczuk, Siebertz, and Toru\'nczyk~\cite{FabianskiPST19}, obtained via their framework of \emph{progressive exploration}. We believe that this might indicate that in order to resolve the question of Dvo\v{r}\'ak and T\r{u}ma, one has to design a new $\FO$-model-checking algorithm on classes of sparse graphs that would be more in the spirit of progressive exploration.

We remark that on the way to \cref{thm:main-domset-be,thm:main-indset-be}, we resolve the question posed by Dvo\v{r}\'ak and T\r{u}ma~\cite{DvorakT13} by extending their data structure so that it can also report examples of relevant mappings. See \cref{thm:dvorak-tuma-examples} for a precise formulation.

\medskip

Next, we observe that for the specific case $r=1$ (so the classic \textsc{Dominating Set} problem), the data structure of \cref{thm:main-domset-be} can be simplified and works already under the assumption of bounded degeneracy (equivalently, of bounded arboricity or of bounded maximum average degree among subgraphs).

\begin{restatable}{theorem}{mainDomDeg}\label{thm:main-domset-deg}
	Fix $d,k\in \N$ and $\eps>0$. Then there is a randomized dynamic data structure that given a dynamic $n$-vertex graph $G$, updated by edge insertions and removals subject to guarantee that $G$ is $d$-degenerate at all times, supports the following query:
	\begin{itemize}
		\item $\ds{}$: return a dominating set of size at most $k$ in $G$, or $\bot$ if no such set exists.
	\end{itemize} 
	Every answer to the query is correct with probability at least $1-\eps$ against an oblivious adversary.
	The amortized time complexity of an update or a query is $2^{k^{\Oh(d)}}\cdot \log^3 n \log \tfrac{1}{\eps}$. The data structure can be initialized on an edgeless graph in time $k^{\Oh(d)}\cdot n\log n\log \tfrac{1}{\eps}$ and uses  $k^{\Oh(d)}\cdot n\log n\log \tfrac{1}{\eps}$ space at all times.
\end{restatable}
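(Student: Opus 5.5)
We plan to dynamize the classic branching algorithm for \textsc{Dominating Set} on $d$-degenerate graphs: repeatedly pick a not-yet-dominated vertex $u$ and branch on which vertex of $N[u]$ dominates it. In a $d$-degenerate graph this is made bounded by the observation that if $D$ is a dominating set of size $k$, then some vertex is dominated only through its out-neighbours. We orient $G$ and maintain a low out-degree orientation dynamically, using a known dynamic orientation algorithm with out-degree $\Oh(d)$ and amortized polylogarithmic update time. The key combinatorial step replaces ``branch on all of $N[u]$'' by ``branch on the out-neighbourhood $N^+[u]$, or on a vertex $v$ that is an in-neighbour of $u$''. The second kind of branch is handled by sampling: given a current partial solution $X$ with $|X|<k$, we need a vertex $w$ not dominated by $X$. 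We then need to find a candidate $v$ dominating $w$ such that the remaining graph still admits a completion.

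Concretely, we maintain two kinds of data structures, in analogy with $\notDominated{}$ and $\candidate{}$. Given a set $X$ of at most $k$ vertices, the first returns a vertex not dominated by $X$, if one exists. The second, given $X$, returns a vertex $v$ not adjacent via out-edges and dominating ``many'' undominated vertices.

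To implement $\notDominated{X}$, we note that $w$ is dominated by $X$ iff $w\in X$, or $w$ has an out-neighbour in $X$, or some $x\in X$ has $w$ as out-neighbour. The last two sets have size at most $k\cdot \Oh(d)$ plus the in-neighbourhoods of $X$. We therefore count, for each $X$-pattern over out-neighbourhoods, the vertices whose out-neighbourhood avoids $X$. We do this with a Dvořák--Tůma style counting structure indexed by subsets of the out-neighbourhood of size at most $k$, giving $k^{\Oh(d)}$ counters per vertex. Witnesses are extracted via random sampling with $\log\frac1\eps$ repetitions and $\log n$ levels of subsampling (isolation), which accounts for one $\log n$ factor.

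The query then runs a bounded search tree. At each node we obtain an undominated $w$. We branch on each $v\in N^+[w]$, which is at most $d+1$ options. The case that $w$ is dominated by an in-neighbour $v$ is handled by guessing the ``type'' of $v$ with respect to $X$, namely the intersection pattern of $N^+(v)$ with the set of branching-relevant vertices, and retrieving a representative via the sampling structure. This retrieval is valid because vertices of the same type are interchangeable for domination of vertices outside out-neighbourhoods. The depth is $k$, and the number of types is $2^{k^{\Oh(d)}}$, giving the stated bound. Each update changes $\Oh(\log n)$ amortized orientations (with $\log^2 n$ cost), and each orientation flip touches $k^{\Oh(d)}$ counters, which gives the amortized $\log^3 n$. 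The main obstacle is the in-neighbour branch: we must show that replacing $v$ by an arbitrary vertex of its type preserves extendability to a size-$k$ solution. We expect to prove this by an exchange argument, defining the type relative to the closure of the current partial solution under out-neighbourhoods.
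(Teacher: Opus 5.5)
Your search-tree layer has a genuine gap, and it sits exactly at the step you flag as the main obstacle. The claim that in-neighbours $v$ of $w$ with the same type (the pattern $N^+(v)\cap R$ on a bounded set $R$) are interchangeable is false, so no exchange argument can prove it. A vertex $v$ dominates all of $N[v]$, including its in-neighbourhood. That in-neighbourhood is unbounded and consists of vertices the search has not examined yet, and a type that records only out-neighbourhoods cannot see it.

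Here is a concrete counterexample with $k=2$. Let $G$ be the tree consisting of the path $s$--$w$--$t$--$y$, leaves $a_1,a_2$ attached to $s$, and leaves $b_1,b_2$ attached to $y$. Orient it as $s\to w$, $t\to w$, $y\to t$, $a_i\to s$, $b_i\to y$, so every out-degree is at most $1$. The only dominating set of size $2$ is $\{s,y\}$.
\begin{itemize}
\item Suppose the first undominated vertex returned is $w$. The branch on $N^+[w]=\{w\}$ fails.
\item In the in-neighbour branch, $s$ and $t$ have the same out-neighbourhood $\{w\}$, hence the same type for every possible $R$.
\item If the retrieval returns $t$, then $s,a_1,a_2,b_1,b_2$ remain undominated and no single vertex dominates them. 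The whole search therefore reports $\bot$ incorrectly.
\end{itemize}
Your second structure, which returns a vertex dominating ``many'' undominated vertices, is a greedy heuristic and cannot restore exactness.

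The missing idea is the progressive-exploration (semi-ladder) algorithm, which the paper uses instead of a search tree. There a representative never has to be correct for the whole graph, only for a finite witness set $W$.
\begin{itemize}
\item A candidate is built by enumerating the partitions of $W$ into at most $k$ parts. For each part $S$ one asks $\near{S,1}$, i.e.\ for some $v$ with $S\subseteq N[v]$.
\item Any such $v$ lies either in $S\cup N^+(S)$, a set of size $\Oh(d|S|)$ that can be checked vertex by vertex, or in the bucket $B(S)=\{u : S\subseteq N^+(u)\}$. For this finite condition every member of $B(S)$ is as good as any other, so a dictionary of buckets over all subsets of out-neighbourhoods answers the query deterministically.
\item If the resulting candidate $D$ fails to dominate $G$, then $\far{D,1}$ produces a new witness rather than a dead end.
\item The number of rounds is below $k^{\sli_1(G)}$ with $\sli_1(G)<3(d+1)$, because $d$-degenerate graphs contain no $K_{d+1,d+1}$.
\end{itemize}
The factor $2^{k^{\Oh(d)}}$ comes from the $k^{|W|+1}$ partitions with $|W|\le k^{\Oh(d)}$, not from a count of types.

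Your implementation of the non-domination query can be kept. Treating $X\cup N^+(X)$ explicitly, applying inclusion--exclusion over subsets of $X$ to counters indexed by subsets of out-neighbourhoods, and using isolation sampling for retrieval is essentially the paper's $\far{S,1}$ structure. It is the search layer on top of it that has to be replaced.
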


Finally, it is also known that on bounded expansion classes, both \drds{r} and \dris{r} admit constant-factor approximation algorithms~\cite{Dvorak13}. We prove that for $r=1$, this can be lifted to the dynamic setting, even assuming only bounded~degeneracy.

\begin{restatable}{theorem}{mainApx}\label{thm:apx-dynamic}
For every $d\in \N$ there is a randomized dynamic data structure (against an oblivious adversary) that given a dynamic $n$-vertex graph $G$, updated by edge insertions and removals, subject to guarantee that $G$ is $d$-degenerate at all times, maintains a dominating set $D$ in $G$ satisfying $|D|\leq (4d+1)^2\cdot \mathsf{dom}_1(G)$, where $\mathsf{dom}_1(G)$ denotes the minimum size of a dominating set in $G$. The access to $D$ is provided by queries that report its size in time $\Oh(1)$ and enumerate all its vertices in time $\Oh(|D|)$. The data structure can be initialized on an edgeless graph in time $\Oh(n)$, uses $\Oh(dn)$ space at all times, and processes every edge insertion and deletion in expected amortized time $d^{\Oh(1)}\cdot \log n$.
\end{restatable}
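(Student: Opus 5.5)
We plan to dynamize the classic static $\Oh(d^2)$-approximation for dominating set in $d$-degenerate graphs (in the spirit of Dvořák's LP-free argument). The approach is to maintain a low out-degree orientation of $G$ and to define $D$ purely locally from that orientation, so that each edge update changes $D$ in only $d^{\Oh(1)}$ places.

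\textbf{Step 1: orientation.} We maintain an orientation $\vec G$ of $G$ with all out-degrees at most $\Delta=\Oh(d)$, say $2d$ or $4d$. For this we invoke a known dynamic low out-degree orientation algorithm for graphs of arboricity at most $d$ (e.g.\ Brodal--Fagerberg, or the more refined schemes with amortized $\Oh(\log n)$ flips per update). We want amortized $d^{\Oh(1)}\log n$ edge flips per update. This is where the $\log n$ term will come from, and the constant $4d+1$ suggests out-degree bound $\Delta\le 4d$.

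\textbf{Step 2: local definition of $D$.} For each vertex $v$, the closed out-neighbourhood $N^+[v]$ has size at most $\Delta+1=4d+1$. One candidate rule is to let $D$ consist of every vertex that is dominated by none of its in-neighbours' chosen representatives. We instead try a cleaner rule that uses randomization. Assign a uniformly random priority $\pi(u)$ to each vertex, fixed at initialization; this is where the oblivious adversary is needed. For each $v$, put into $D$ the vertex of $N^+[v]$ with the smallest priority, call it $s(v)$. Clearly $D$ dominates $G$, since $v$ is adjacent or equal to $s(v)$.

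\textbf{Step 3: approximation ratio.} A per-vertex rule over $N^+[v]$ alone cannot be charged directly, because the optimal dominator $u$ of $v$ may be an in-neighbour of $v$. We therefore refine the rule in two levels. Let $w(v)$ be $v$ if $v$ has an out-neighbour-or-self in $D^*$, and otherwise $v$ is dominated by an in-neighbour $u$. The deterministic rule we commit to is
\[
D=\bigcup_{v} N^+[v]\cap\{\text{vertices } x \text{ with } N^+[x]\ni\text{...}\}.
\]
Concretely, we take $D=\{\,x : x\in N^+[y] \text{ for some } y\in N^+[v]\cup\{v\},\ y \text{ chosen}\,\}$. The intended charging is that each vertex $u\in D^*$ is charged for at most $(4d+1)^2$ vertices of $D$, namely those in $N^+[N^+[u]]$. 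To realize this, we define $D$ as the set of vertices $x$ such that for some $v$, $x\in N^+[v]$ and $x$ is the selected representative of $v$. The selection is made so that the representatives of all $v$ dominated via in-edges by $u$ land inside $N^+[N^+[u]]$. The natural choice is to select, for each $v$, all of $N^+[v]$ whenever $v$ is not dominated "upward." This is the main obstacle: finding a rule whose output size is bounded by $(4d+1)^2|D^*|$, which is expected to go through an LP-duality-like charging in which each undominated $v$ picks a certified witness. We would try the Jones et al.\ style rule first: $D=\bigcup_v N^+[s(v)]$ where $s(v)$ is a canonical vertex of $N^+[v]$, and charge $s(v)$ to $D^*$ through the $(4d+1)$-bounded out-neighbourhoods.

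\textbf{Step 4: dynamics.} Since membership in $D$ depends only on out-neighbourhoods up to distance $2$, each flip or edge change affects $\Oh(\Delta^2)$ vertices. We maintain for each $x$ a counter of how many $v$ select it, so that $D=\{x:\text{counter}>0\}$ is kept in a list with $\Oh(1)$ size queries and $\Oh(|D|)$ enumeration. Multiplying the flips by $d^{\Oh(1)}$ gives the claimed $d^{\Oh(1)}\log n$ expected amortized time, with $\Oh(dn)$ space for adjacency lists.
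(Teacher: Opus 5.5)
Your Steps 1 and 4 point the right way: a Brodal--Fagerberg $4d$-orientation with $\Oh(\log n)$ amortized reorientations, and local maintenance. Step 3, however, has a genuine gap. You never fix a rule with a proven $(4d+1)^2$ bound: the displayed ``rule we commit to'' is left with an ellipsis, and you flag this yourself as the main obstacle. The concrete rules you do propose fail, because each vertex $v$ picks from $N^+[v]$ without coordinating with other vertices.

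Take a star with centre $c$ and leaves $\ell_1,\ldots,\ell_m$. Since $c$ has out-degree at most $4d$, all but at most $4d$ leaves are oriented $\ell_i\to c$, so $N^+[\ell_i]=\{\ell_i,c\}$. With random priorities, $s(\ell_i)=\ell_i$ with probability $1/2$. Both $D=\{s(v)\}$ and $D=\bigcup_v N^+[s(v)]$ then have expected size at least $(m-4d)/2$, while $\mathsf{dom}_1=1$. A deterministic canonical choice fails on the same graph with adversarial identifiers. Your envisaged charging, $D\subseteq\bigcup_{u\in D^*}N^+[N^+[u]]$, also cannot work: an in-neighbour $v$ of $u\in D^*$ brings in out-neighbours of $v$ that are unrelated to $u$.

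The missing idea is to select the sets $S(v)=\{v\}\cup N^+(v)$ in a globally coordinated way. Choose $M$ so that $\{S(v)\colon v\in M\}$ is a \emph{maximal packing}: the sets are pairwise disjoint, and every other $S(x)$ meets one of them. Then put $D=\bigcup_{v\in M}S(v)$.

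\begin{itemize}
\item Domination follows from maximality. If $x\notin D$, then $S(x)$ meets some $S(v)$ with $v\in M$ in a vertex $y\neq x$. Hence $y\in N^+(x)\cap D$ dominates $x$.
\item For the size bound, every $S(v)$ meets $T=\bigcup_{u\in D^*}S(u)$. If $u\in D^*$ dominates $v$, then either $u=v$, or $u\in S(v)\cap S(u)$, or $v\in S(v)\cap S(u)$, depending on how $uv$ is oriented.
\item Disjointness of the packed sets gives $|M|\le|T|\le(4d+1)|D^*|$. Each packed set has size at most $4d+1$, so $|D|\le(4d+1)^2|D^*|$.
\end{itemize}

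Dynamically, each reorientation changes $\Oh(1)$ sets of arity at most $4d+1$. The maximal packing is then maintained by the Assadi--Solomon dynamic maximal hypergraph-matching structure, in expected amortized $\Oh(d^2)$ time per set update against an oblivious adversary. That structure, not random vertex priorities, is the source of the randomization and of the expected time bound. Storing $|D|$ alongside the packing gives the $\Oh(1)$ size query, and disjointness of the packed sets gives enumeration in $\Oh(|D|)$ time without duplicate removal.
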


Whether \cref{thm:apx-dynamic} can be lifted to $r>1$ under the assumption that the maintained graph belongs to a fixed bounded-expansion class $\Cc$ at all times, remains an interesting open question. Note here that while the setting $r=1$ is trivial for the \dris{r} problem (an $n$-vertex $d$-degenerate graph contains an independent set of size $\frac{n}{d+1}$), it becomes again interesting for $r>1$. In fact, it is known~\cite{Dvorak13} that in bounded-expansion classes, there is a constant-factor multiplicative gap between the minimum size of a distance-$r$ dominating set and the maximum size of a distance-$(2r+1)$ independent set. So if only an approximate value of the optimum needs to be reported, the questions of dynamic approximation for \drds{r} and \dris{(2r+1)} in bounded expansion classes are~equivalent.

\paragraph*{Acknowledgements.} We thank \L{}ukasz Kowalik, Piotr Sankowski, and Marek Soko\l{}owski for many discussions over several years that, perhaps indirectly, led to this work.

\section{Overview}\label{sec:overview}

In this section we provide a high-level exposition of the technical ideas leading to our findings, particularly to the main results: \cref{thm:main-domset-be,thm:main-indset-be}.

\subsection{Progressive exploration}

The key idea behind \cref{thm:main-domset-be,thm:main-indset-be} is to develop dynamic counterparts of the \emph{progressive exploration} algorithms for \drds{r} and \dris{r} in sparse graphs, proposed by Fabia\'nski, Pilipczuk, Siebertz, and Toru\'nczyk~\cite{FabianskiPST19}. For the sake of concreteness, let us focus our discussion on the \drds{r} problem.

Suppose that we are given a graph $G$ and positive integers $r$ and $k$. We would like to find a distance-$r$ dominating set of size $k$ in $G$, or conclude that no such set exists. A progressive exploration algorithm for this problem --- called by Fabia\'nski et al. the \emph{semi-ladder algorithm} --- proceeds in rounds, and constructs a sequence of \emph{candidates} $D_1,D_2,D_3,\ldots$, which are $k$-tuples of vertices, and \emph{witnesses} $w_1,w_2,w_3,\ldots$, which are single vertices. With $D_1,\ldots,D_{i-1}$ and $w_1,\ldots,w_{i-1}$ having been constructed in rounds $1,2,\ldots,i-1$, the $i$th round proceeds as follows:
\begin{description}
	\item[Step 1:] Determine whether there exists a $k$-tuple of vertices $D_i$ that distance-$r$ dominates all the witnesses found so far, that is, $\{w_1,\ldots,w_{i-1}\}$. If not, then $G$ does not admit a distance-$r$ dominating set of size $k$ and we can terminate the algorithm.
	\item[Step 2:] Verify whether $D_i$ distance-$r$ dominates the whole graph. If so, then $D_i$ is a valid solution that can be reported. Otherwise, select any vertex $w_i$ that is not distance-$r$ dominated by $D_i$, and proceed with the new candidate $D_i$ and witness $w_i$ to the next round.
\end{description}
It is clear that when this algorithm reports an outcome, then this outcome is correct. However, a priori it is not at all clear that the algorithm will reach an outcome within a small number of rounds. This seems particularly problematic, because the task  in Step~1 --- finding a $k$-tuple of vertices that distance-$r$ dominates all the $i-1$ witnesses gathered so far --- gets more and more computationally expensive with every round. (We have not discussed its implementation so far, we will do it shortly.)

Somewhat surprisingly, Fabia\'nski et al. proved that on sparse graphs, more precisely on any nowhere dense class of graphs, the semi-ladder algorithm always terminates within a bounded number of rounds.

\begin{theorem}[follows from~\cite{FabianskiPST19}]\label{thm:overview-semiladder}
	For every $r\in \N$ and every nowhere dense class of graphs $\Cc$, there exists a constant $c$ such that the semi-ladder algorithm deployed on any graph $G\in \Cc$ with parameters $r$ and $k$, terminates after performing at most $k^c$ rounds.
\end{theorem}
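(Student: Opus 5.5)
The plan is to read the run of the semi-ladder algorithm as a combinatorial object and to bound its length using the semi-ladder bounds of Fabia\'nski et al.~\cite{FabianskiPST19} for nowhere dense classes. Suppose the algorithm runs for $m$ rounds without terminating. It produces candidates $D_1,\ldots,D_m$, which are $k$-tuples, and witnesses $w_1,\ldots,w_m$. By construction two properties hold:
\begin{itemize}
	\item for every $j<i$, the witness $w_j$ is distance-$r$ dominated by $D_i$;
	\item for every $i$, the witness $w_i$ is not distance-$r$ dominated by $D_i$.
\end{itemize}
So the pairs $(D_i,w_i)$ form a \emph{semi-ladder} of length $m$ for the relation ``the $k$-tuple $\bar x$ distance-$r$ dominates $y$''. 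This relation is defined by the formula $\varphi(\bar x,y)=\bigvee_{t=1}^k \dist(x_t,y)\leq r$, and the pattern is ``$\varphi$ holds below the diagonal, fails on it''. The goal is therefore to show that in a nowhere dense class every such semi-ladder has length at most $k^c$.

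The key reduction is from $k$-tuples to single vertices. First, I would apply Ramsey-type or pigeonhole arguments to the ordered sequence. For each pair $j<i$, choose a coordinate $t(i,j)\in[k]$ with $\dist(D_i[t],w_j)\le r$. This colors the pairs $j<i$ with $k$ colors. By a Ramsey argument we could pass to a long subsequence where the color is constant. That would give a semi-ladder of single vertices, $\dist(u_i,w_j)\le r$ for $j<i$ and $\dist(u_i,w_i)>r$, since $w_i$ is not dominated by any coordinate of $D_i$.

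However, Ramsey only gives logarithmic subsequences. It would yield a bound of the form ``length at most $2^{\Oh(\cdot)}$-type in $k$'' rather than $k^c$. So the polynomial bound needs the sharper tool from~\cite{FabianskiPST19}. In nowhere dense classes, the distance-$r$ semi-ladder index of single vertices is bounded by a constant $L=L(r,\Cc)$. For $k$-tuples, the semi-ladder index of the disjunction is polynomial in $k$. I would derive this via uniform quasi-wideness, or via the bounded VC-density / neighborhood complexity of nowhere dense classes. The number of distinct distance-$r$ profiles of $w$'s over a set $A$ is $\Oh(|A|^{1+\eps})$.

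Concretely, the counting argument would go as follows. Let $A=\bigcup_i D_i$, so $|A|\le km$. The witnesses $w_1,\ldots,w_m$ must have pairwise distinct distance-$r$ neighborhood traces on $A$. Indeed, if $w_j$ and $w_i$ with $j<i$ had equal traces, then $D_i$ would dominate $w_j$ but not $w_i$, a contradiction. This alone gives only $m\le \Oh((km)^{1+\eps})$, which is no contradiction. So I would refine the argument to exploit the ladder order and the fact that each $D_i$ has only $k$ elements.

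That refinement is the main obstacle. I would first try to lean directly on the theorem of~\cite{FabianskiPST19}, which states that for nowhere dense $\Cc$, semi-ladders for the formula ``distance-$r$ dominated by a $k$-tuple'' have length bounded by $k^{c}$. The argument there combines uniform quasi-wideness with a bound of $L$ on single-vertex semi-ladders, where $L$ depends only on $r$ and $\Cc$. Once that bound is invoked, the theorem follows immediately, since the run is exactly such a semi-ladder.

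The step of replacing Ramsey by a polynomial argument is where I expect the difficulty. Absent a direct citation, I would try an induction on $k$: repeatedly remove the coordinate responsible for most dominations, losing a polynomial factor at each step.
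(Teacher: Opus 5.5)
Your decisive step is the same as the paper's: the $m$ non-terminating rounds give a semi-ladder $(D_i,w_i)$ for $\delta^k_r(\tup x,y)=\bigvee_{t=1}^k \dist(x_t,y)\leq r$, and \cite[Lemma~5 and Corollary~13]{Fabianski2018arxiv} bound its length by $k^{\sli_r(\Cc)}$, where $\sli_r(\Cc)<\infty$ for nowhere dense $\Cc$. Note, however, that the polynomial dependence on $k$ needs neither uniform quasi-wideness, nor VC-density, nor Ramsey theory, since each rung may use its own coordinate: take the last candidate, keep one of its coordinates that dominates at least a $1/k$ fraction of the earlier witnesses, restrict to those witnesses and recurse, so $\ell=\sli_r(\Cc)$ rounds of pigeonhole turn a semi-ladder of length $k^{\ell}$ (for $k\geq 2$) into a single-vertex semi-ladder of order $\ell+1$, and sparsity enters only through $\sli_r(\Cc)<\infty$ --- whereas your fallback induction on $k$ with a polynomial loss per step would compound to roughly $p(k)^k$, not $k^c$.
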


See also \cref{thm:ladder-bound,thm:termination} for a more precise formulation, and recall that bounded expansion classes are in particular nowhere dense. We remark that the proof of \cref{thm:overview-semiladder} requires deep tools from the theory of Sparsity, in particular the equivalent characterization of nowhere denseness through the notion of \emph{flatness} (also known as \emph{uniform quasi-wideness}).

\cref{thm:overview-semiladder} makes the semi-ladder algorithm perfect for lifting to the dynamic setting. Namely, since the number of rounds executed by the algorithm is bounded by a constant depending only on the class $\Cc$ and the parameters $r$ and $k$, we can just run the semi-ladder algorithm after every update to the graph, provided we are able to maintain data structures that allow efficient implementation of every~round.

Consider then the $i$th round of the algorithm. The task in Step~2 boils down to a single application of the following query in $G$, with $S=D_i$ (so $|S|\leq k$):
\begin{itemize}
	\item $\far{S,r}$: Given a set $S\subseteq V(G)$, find a vertex $v$ such that $\dist(v,S)>r$ for all $s\in S$, or return $\bot$ if no such vertex exists.
\end{itemize}
On the other hand, Step~1 can be implemented using at most $k^i$ queries of the following kind, with $|S|<i$:
\begin{itemize}
	\item $\near{S,r}$: Given a set $S\subseteq V(G)$, find a vertex $v$ such that $\dist(v,S)\leq r$ for all $s\in S$, or return $\bot$ if no such vertex exists.
\end{itemize}
Indeed, we may iterate through all the partitions of the set of witnesses $\{w_1,\ldots,w_{i-1}\}$ into $k$ sets, and for each partition $\Pp$, invoke $\near{S,r}$ on each part $S$ of $\Pp$ to verify whether $S$ can be \mbox{distance-$r$} dominated by a single vertex $v_S$. Note that if for some partition $\Pp$ all these checks go through, then $D_i\coloneqq \{v_S\colon S\in \Pp\}$ is a $k$-vertex set that distance-$r$ dominates $\{w_1,\ldots,w_{i-1}\}$; and otherwise, if the checks for all the partitions fail, then $\{w_1,\ldots,w_{i-1}\}$ cannot be distance-$r$ dominated by $k$ vertices. There are at most $k^{i-1}$ partitions to check, and checking each requires $k$ applications of the $\near{S,r}$~query.

The discussion above combined with \cref{thm:overview-semiladder} amounts to the proof of the following statement.

\begin{proposition}\label{prop:overview-reduction}
	Let $r,k\in \N$ and $\Cc$ be a nowhere dense class of graphs. Then the semi-ladder algorithm on any graph $G\in \Cc$ with parameters $r$ and $k$ can be implemented using $k^{\Oh_{\Cc,r}(1)}$ queries of the form $\near{S,r}$ with $|S|\leq k^{\Oh_{\Cc,r}(1)}$, and $k^{\Oh_{\Cc,r}(1)}$ queries of the form $\far{S,r}$ with $|S|\leq k$.
\end{proposition}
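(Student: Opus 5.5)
The plan is to follow the reduction sketched just before the statement. Each round of the semi-ladder algorithm is split into Step~1 and Step~2, and \cref{thm:overview-semiladder} bounds the number of rounds by $k^{c}$ with $c=c(\Cc,r)$. Step~2 of round $i$ is a single query $\far{D_i,r}$ with $|D_i|\le k$, so all rounds together use at most $k^{c}$ far-queries, as required. The witness set $W_i=\{w_1,\dots,w_{i-1}\}$ always has size $m<k^{c}$, so every near-query we issue has $|S|\le k^{\Oh_{\Cc,r}(1)}$. The only real issue is the \emph{number} of near-queries in Step~1. Enumerating all partitions of $W_i$ gives $k^{m}$ queries, which is not polynomial in $k$, so I would replace the partition enumeration by a polynomial procedure.

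Call $S\subseteq W_i$ \emph{realizable} if $\near{S,r}\neq\bot$; realizable sets are closed under taking subsets. Step~1 asks whether $W_i$ is the union of $k$ realizable sets, and it suffices to know the maximal realizable sets. Every maximal realizable set equals the trace $N_r[v]\cap W_i$ of some vertex $v$. I would invoke the neighborhood-complexity bound for nowhere dense classes: for every $\delta>0$, the number of distinct traces $N_r[v]\cap W$ is $\Oh_{\Cc,r,\delta}(|W|^{1+\delta})$. Hence there are only $m^{\Oh(1)}=k^{\Oh_{\Cc,r}(1)}$ maximal realizable sets.

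The main obstacle is to enumerate these polynomially many maximal sets using only polynomially many near-queries. I would proceed greedily. Starting from any realizable $S$, extend it to a maximal set by trying to add each remaining witness, testing realizability with one near-query; this costs at most $m$ queries per maximal set. To discover new maximal sets, I would use the bounded VC-dimension of the trace system (which follows from nowhere denseness). Any maximal set not yet found contains a witness set of bounded size $t=t(\Cc,r)$ that is not contained in any already-found maximal set. So it suffices to test all realizable subsets of $W_i$ of size at most $t$ against the current list, which costs $m^{t}$ near-queries, and to extend each uncovered one to a maximal set. I expect this step to be the delicate point. If the bounded-witness argument fails, the fallback is to branch on the first witness not yet covered and restrict to its realizable supersets, charging the work to distinct traces.

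Once the family $\mathcal M$ of maximal realizable sets is known, deciding whether $k$ of them cover $W_i$ and producing $D_i$ needs no further queries: $D_i$ consists of the vertices returned by the near-queries for the chosen sets. Multiplying $k^{c}$ rounds by $k^{\Oh_{\Cc,r}(1)}$ near-queries per round gives the claimed $k^{\Oh_{\Cc,r}(1)}$ bound.
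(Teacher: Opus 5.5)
Your observation is correct, and your route differs from the paper's. The paper simply combines the round bound of \cref{thm:overview-semiladder} with the partition enumeration of \cref{lem:candidate-to-near}. That costs $k^{|W|+1}$ near-queries in a round, hence $2^{k^{\Oh(1)}}$ overall, which is the bound actually carried into \cref{lem:summary-dom}. So enumerating maximal realizable sets is what would give the polynomial near-query count the proposition literally claims.

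\textbf{The gap.} The step you flag as delicate is justified with the wrong tool. Bounded VC-dimension does \emph{not} imply that an unfound maximal set contains a bounded-size subset lying in no found maximal set. Take $M=\{1,\dots,j\}$ and $M_l=(M\setminus\{l\})\cup\{y_l\}$ for $l\in[j]$. All these sets are maximal and the family has VC-dimension $1$. Yet every subset of $M$ of size less than $j$ is contained in some $M_l$. This configuration is a co-matching, i.e.\ a long semi-ladder. Excluding it is a stability property strictly stronger than bounded VC-dimension.

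\textbf{The fix.} The correct justification is the finiteness of $t\coloneqq\sli_r(\Cc)$ (\cref{thm:sli-be}, which the paper notes also holds for nowhere dense classes).
Write the found sets as $M_l=N_r[v_l]\cap W$; maximality forces equality with the trace of the returned vertex. Let $M$ be maximal and unfound.
Suppose every subset of $M$ of size at most $t$ were contained in some $M_l$. Then for $i=1,\dots,t+1$ you can choose $l_i$ with $\{x_1,\dots,x_{i-1}\}\subseteq M_{l_i}$, and then $x_i\in M\setminus M_{l_i}$, which exists because $M\not\subseteq M_{l_i}$.
This gives $\dist(v_{l_i},x_i)>r$ and $\dist(v_{l_i},x_a)\le r$ for all $a<i$. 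That is a distance-$r$ semi-ladder of order $t+1$, a contradiction.

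\textbf{Consequences.}
Apply the same argument with all other maximal sets in the role of the found ones. Every maximal set then has a private subset of size at most $t$, so there are at most $(m+1)^t$ maximal realizable sets; you do not need the neighborhood-complexity theorem at all.
The procedure is then a single pass over the subsets of size at most $t$ (coverage only grows), plus at most $m$ greedy extension queries per discovered set. This costs at most $(m+1)^{t+1}$ near-queries per round.
Since $m<k^t$ and there are fewer than $k^t$ rounds, the total is $k^{\Oh(t^2)}$, and your fallback branching is unnecessary.
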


So from now on our task is to design data structures that can efficiently answer queries $\near{S,r}$ and $\far{S,r}$ with sets $S$ of bounded size and $r$ being a fixed constant. As we will see shortly, our data structure for $\near{S,r}$ is a non-trivial extension of the result of Dvo\v{r}\'ak and T\r{u}ma~\cite{DvorakT13}, which in particular answers one of their open questions. On the other hand, the data structure for $\far{S,r}$ is more involved and requires new insights.

We remark that our data structures for $\near{S,r}$ and $\far{S,r}$ queries heavily rely on the assumption that the fixed class $\Cc$, to which the graph $G$ is guaranteed to belong at all times, has bounded expansion, even though \cref{prop:overview-reduction} actually works in the larger generality of nowhere dense classes. In \cref{sec:conclusions}, we comment on the possibility of lifting our results to nowhere dense classes.

\medskip

Finally, let us discuss the case of \dris{r}. For this problem, Fabia\'nski et al. gave a different algorithm, called the \emph{ladder algorithm}, which is somewhat more complicated than the semi-ladder algorithm and whose proof of correctness requires more insight. However, the bottom line is that on any nowhere dense class $\Cc$ and for fixed parameters $r$ and $k$, the ladder algorithm again terminates within $\Oh_{\Cc,r,k}(1)$ rounds and each of those rounds can be implemented using $\Oh_{\Cc,r,k}(1)$ calls to (slight generalizations of) the queries $\near{S,r}$ and $\far{S,r}$ with $|S|\leq \Oh_{\Cc,r,k}(1)$. So designing  data structures implementing these queries efficiently is sufficient to obtain the data structure for the \dris{s} promised in \cref{thm:main-indset-be}. We remark that in the case of a positive outcome, the ladder algorithm actually does not provide a distance-$r$ independent set of size $k$, but only a proof of its existence; this is why this restraint is also present in \cref{thm:main-indset-be}.

\subsection{Finding near vertices}

We start by designing a data structure for the $\near{S,r}$ queries, which is the easier part of the argument. For this, we first recall the result of Dvo\v{r}\'ak and T\r{u}ma~\cite{DvorakT13}, which requires some terminology.

For two graphs $H$ and $G$, a \emph{homomorphism} from $H$ to $G$ is a mapping $\varphi\colon V(H)\to V(G)$ such that for any edge $uv$ of $H$, $\varphi(u)\varphi(v)$ is an edge in $G$. A \emph{subgraph isomorphism} is an injective homomorphism, and an \emph{induced subgraph isomorphism} is one where the implication above is an equivalence: $uv$ is an edge in $H$ if and only if $\varphi(u)\varphi(v)$ is an edge in $G$. We denote by $\Hom(H,G),\Sub(H,G),\ISub(H,G)$ the sets of homomorphisms, subgraph isomorphisms, and induced subgraph isomorphisms from $H$ to $G$, respectively.

With this terminology, the result of Dvo\v{r}\'ak and T\r{u}ma reads as follows.

\begin{theorem}[\cite{DvorakT13}]\label{thm:overview-DT}
	Fix a bounded expansion class $\Cc$, graph $H$, and $\Ff\in \{\Hom,\Sub,\ISub\}$. Then there is a data structure that for a dynamic graph $G$ on $n$ vertices, guaranteed to belong to $\Cc$ at all times, maintains the value $|\Ff(H,G)|$ with amortized update time $\Oh_{\Cc,H}(\log^{\binom{h}{2}-1} n)$, where $h$ is the vertex count of $H$. The data structure can be initialized on an edgeless $G$ in $\Oh_{\Cc,H}(n)$ time and uses $\Oh_{\Cc,H}(n)$ space at all~times.	
\end{theorem}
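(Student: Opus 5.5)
The plan is to reproduce the Dvořák--Tůma argument: maintain a low out-degree orientation and reduce counting to a bounded number of "rooted out-pattern" counts that can be updated locally.

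\emph{Dynamic orientation with bounded out-degree.} Since $\Cc$ has bounded expansion, $G$ has bounded degeneracy at all times. A dynamic orientation algorithm (e.g.\ Brodal--Fagerberg) therefore maintains an orientation $\vec G$ of $G$ with out-degree $\Oh_{\Cc}(1)$. Each edge update causes amortized $\Oh(\log n)$ edge reorientations.

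\emph{Transitive fraternal augmentations.} A homomorphic image of $H$ in $G$ need not be reachable by out-arcs from a single vertex, so one iteratively adds augmentation arcs:
\begin{itemize}
\item transitive arcs $u\to w$ whenever $u\to v\to w$;
\item fraternal arcs between $u$ and $w$ whenever $u\leftarrow v\to w$, oriented so that out-degrees stay bounded.
\end{itemize}
Iterating this $h-1$ levels yields a graph $\vec G_{h}$ with bounded out-degree. Its key property is that for every set $X$ of at most $h$ vertices inducing a connected subgraph of $G$, some vertex of $X$ is joined to all of $X$ by out-arcs of $\vec G_h$ (after labeling arcs by their "history" of how they arose). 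Bounded expansion is exactly what keeps the out-degrees of these augmentations bounded, since each augmentation layer is a bounded-depth minor-like structure.

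\emph{Counting.} Counting $\Ff(H,G)$ reduces, via inclusion--exclusion over quotients and over connected components of $H$, to counting connected patterns. Connected patterns in turn reduce to sums over roots $v$ of the number of \emph{out-neighbourhood extensions}: tuples in the bounded out-neighbourhood of $v$ in $\vec G_h$, where arcs carry labels recording whether they are original edges. The counts for $\Hom$, $\Sub$ and $\ISub$ are related by the standard Möbius/inclusion--exclusion identities, so it suffices to maintain one of them. A global sum $\sum_v c(v)$ with bounded-size local terms $c(v)$ can be maintained by recomputing $c(v)$ whenever $v$'s out-neighbourhood changes.

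\emph{Main obstacle: cascading changes in the augmentation.} The hard step is bounding how much $\vec G_h$ changes per update of $G$. One edge flip in level $i$ creates or destroys $\Oh(\log n)$ amortized arcs in level $i+1$. These new arcs must again be oriented dynamically, which costs another $\log n$ factor per level. The exponent $\binom h2-1$ suggests the accounting is not per level but per potential pattern edge.

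I would therefore define the augmentation carefully, adding only arcs needed for the pattern, and charge each arc insertion at level $i$ to a chain of $\Oh(\log n)$-amortized reorientations. Bounding this chain amortization rigorously, including deletions (arcs of the augmentation must be reference-counted by their witnesses), is where most of the care goes. Each local recomputation costs $\Oh_{\Cc,H}(1)$, and initialization on an edgeless graph is trivially linear.
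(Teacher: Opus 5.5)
There is a genuine gap in your \emph{Counting} step. The structural property it relies on is false, and more fundamentally, no scheme that sums bounded local terms $c(v)$ can count $\Ff(H,G)$.

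Take $G=K_{1,n}$ (a forest) oriented with every arc pointing into the centre. The out-degree is $1$, no vertex has two out-arcs, and there is no directed path of length $2$. So no fraternal or transitive arc is ever added, and $\vec G_h=\vec G$. The centre together with two leaves is connected, yet no vertex of it has out-arcs to the other two.

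The counting argument is decisive and does not depend on the orientation. We have $|\Sub(P_3,K_{1,n})|=n(n-1)$. If every copy had to lie in the bounded out-neighbourhood of some root, each vertex would account for $\Oh(1)$ copies, and the total would be $\Oh(n)$. Counts in classes of bounded expansion can be polynomially large. Copies of a connected pattern are typically anchored at a vertex that the rest of the copy reaches through in-arcs, and in-degrees are unbounded. So partial counts must be stored in aggregated form and combined multiplicatively, not enumerated locally.

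This is the missing ingredient of Dvo\v{r}\'ak and T\r{u}ma's proof, whose structure is recalled in the paper's appendix for the weighted version.
\begin{itemize}
\item First, $\ISub$ is reduced to $\Sub$ (inclusion--exclusion over supergraphs), and $\Sub$ to $\Hom$ (projections).
\item Next, one writes $|\Hom(H,G)|=\sum_{H'\in\mathcal{H}^e}|\Hom(H',G')|$. Here $G'$ is a $(\binom{h}{2}-2)$-th \emph{fraternal} augmentation of $G$, and $H'$ ranges over $0$-contractions of zero-recoloured augmentations of $H$, all of which are elder graphs.
\item For elder $H'$, a dynamic program over a vineyard stores aggregated partial counts $S(C,w_1,\ldots,w_m)$. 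Each such count sums over unboundedly many images of the root of a clan $C$, with fixed ghost images $w_1,\ldots,w_m$.
\item An edge change in $G'$ is processed by guessing which edges of $H'$ map to it and multiplying stored sub-counts. This touches only $\Oh(1)$ entries.
\end{itemize}

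Two further parts of your plan would also need to change.
\begin{itemize}
\item Transitive arcs cannot be maintained with bounded propagation. Changing an arc $(v,w)$ affects the transitive arcs $(u,w)$ for all in-neighbours $u$ of $v$. This is why only fraternal augmentations are used.
\item The exponent $\binom{h}{2}-1$ does not come from per-pattern-edge accounting. It reflects the $\binom{h}{2}-2$ rounds of fraternal augmentation, on top of the base orientation, needed for the augmented image of $H$ to become elder. Each level of Brodal--Fagerberg reorientations contributes one $\log n$ factor.
\end{itemize}
The cascading-update bound you flag as the main obstacle is, by contrast, a routine multiplication of amortized costs across levels.
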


As already remarked in \cite{DvorakT13}, \cref{thm:overview-DT} is quite robust and can be easily extended to the setting of vertex- and edge-colored graphs, where vertices and edges bear colors (that can be updated), and these colors have to be preserved under homomorphisms and (induced) subgraph isomorphisms. Note that by placing unique colors on some vertices, we can extend \cref{thm:overview-DT} to the setting of \emph{rooted mappings}: we can query for the number of homomorphisms/(induced) subgraph isomorphisms that map some fixed tuple $\tup u$ of vertices of $H$ to a given tuple $\tup v$ of vertices of $G$.

Coming back to the implementation of the query $\near{S,r}$, the following claim is straightforward and translates the query to the language of homomorphisms.

\begin{claim}\label{cl:homs}
	Let $r\in \N$, $G$ be a graph, and $S=\{s_1,\ldots,s_\ell\}$ be a set of vertices of $G$. Then the following conditions are equivalent:
	\begin{itemize}
		\item There is a vertex $v$ in $G$ such that $\dist(v,s_i)\leq r$ for all $i\in \{1,\ldots,\ell\}$.
		\item There is a function $\rho\colon \{1,\ldots,\ell\}\to \{0,1,\ldots,r\}$ such that there is a homomorphism $\varphi$ from the graph $H_\rho$ in \cref{fig:homs} to $G$ satisfying $\varphi(z_i)=s_i$ for all $i\in \{1,\ldots,\ell\}$.
	\end{itemize}
\end{claim}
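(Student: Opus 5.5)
We first need to pin down $H_\rho$, since the figure is not shown. The natural reading is a ``spider'': a central vertex $x$, together with, for each $i\in\{1,\ldots,\ell\}$, a path of length $\rho(i)$ from $x$ to a leaf $z_i$. All these paths are internally disjoint. When $\rho(i)=0$ we identify $z_i$ with $x$, so several $z_i$ may coincide with $x$. The proof is then a direct two-directional argument using only the fact that homomorphisms map walks to walks.

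For the forward direction, suppose $v$ satisfies $\dist(v,s_i)\le r$ for all $i$. We set $\rho(i)\coloneqq\dist(v,s_i)\in\{0,\ldots,r\}$ and fix for each $i$ a shortest path $P_i$ from $v$ to $s_i$. We define $\varphi(x)=v$ and map the $j$th vertex of the $i$th leg of $H_\rho$ to the $j$th vertex of $P_i$. This map is well defined because the legs of $H_\rho$ are disjoint apart from $x$, and all $P_i$ start at $v$. It is a homomorphism because consecutive vertices of a leg go to consecutive, hence adjacent, vertices of $P_i$. It does not need to be injective, which is exactly why we use homomorphisms rather than subgraph isomorphisms. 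Finally, $\varphi(z_i)=s_i$ by construction.

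One subtle point in this direction is that distinct indices $i\neq j$ have distinct $s_i\neq s_j$, since $S$ is a set. So if $\rho(i)=\rho(j)=0$ we would need $s_i=v=s_j$, and this case cannot arise. Therefore identifying $z_i$ with $x$ when $\rho(i)=0$ causes no conflict, because at most one index has $\rho(i)=0$.

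For the backward direction, take $\rho$ and a homomorphism $\varphi$ with $\varphi(z_i)=s_i$, and put $v\coloneqq\varphi(x)$. The image of the $i$th leg is a walk in $G$ of length $\rho(i)$ from $v$ to $s_i$. Hence $\dist(v,s_i)\le\rho(i)\le r$ for every $i$.

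No step is a real obstacle. The only care needed is to match the exact definition of $H_\rho$ in the figure, in particular how $\rho(i)=0$ is handled, and, if the figure uses a different but equivalent gadget, to replace ``legs'' above by the corresponding paths.
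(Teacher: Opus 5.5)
Your proof is correct and is exactly the routine argument the paper has in mind; the paper calls the claim straightforward and gives no proof. In one direction, shortest $v$--$s_i$ paths define the homomorphism with $\rho(i)=\dist(v,s_i)$; in the other, the image of each leg is a walk of length $\rho(i)\le r$ from $\varphi(x)$ to $s_i$. Your remark about $\rho(i)=0$ is fine but not needed: $\rho(i)=0$ forces $s_i=v=\varphi(x)$, so identifying $z_i$ with $x$ is consistent however many indices have $\rho(i)=0$.
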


\begin{figure}
	\centering
	\begin{tikzpicture}
		
		\node[vertex] (x) at (5,2) {};
		\node at (5,2.5) {$x$};
		
		\foreach \i/\j in {1/1,2/2,3/3,4/4,5/5,8/\ell-1,9/\ell} {
			\node[vertex] (z\i) at (\i,0) {};
			\node at (\i,-0.5) {$z_{\j}$};
			\draw[
			postaction={decorate, decoration={
					markings,
					mark=at position 0.25 with {\node[svertex] {};},
					mark=at position 0.5 with {\node[svertex] {};},
					mark=at position 0.75 with {\node[svertex] {};}
			}}
			] (x) -- (z\i);
 		}
 	
 		\draw[<->,dashed] (1-0.1,0+0.3) -- (5-0.22,2+0.22); 
		\node at (6.5,0) {$\dots$};
		\node at (3-0.5,1+0.5) {$\rho(1)$};
		
	\end{tikzpicture}
	\caption{Graph $H_\rho$ in \cref{cl:homs}. It is constructed from  vertex $x$ and vertices $z_1,\ldots,z_\ell$ by connecting $x$ with $z_i$ by a path of length $\rho(i)$, for each $i\in \{1,\ldots,\ell\}$.}\label{fig:homs}
\end{figure}

Therefore, in order to obtain a data structure for just the existential query $\near{S,r}$ --- whether a suitable vertex $v$ exists --- it suffices to maintain the data structure of \cref{thm:overview-DT} for every possible graph $H_\rho$ (of which there are $(r+1)^\ell$ many, where $\ell$ is an upper bound on $|S|$), where the vertices $z_1,\ldots,z_\ell$ are marked with unique colors. Upon query $\near{S,r}$, we enumerate the vertices of $S$ as $s_1,\ldots,s_\ell$, mark them using the colors of $z_1,\ldots,z_\ell$, and determine whether any of the maintained data structures indicates that there exists at least one homomorphism from some $H_\rho$ to $G$ that maps each $z_i$ to the respective $s_i$. Once this is determined, vertices $s_i$ can be unmarked.

However, this method only determines the \emph{existence} of a vertex $v$ that is close to all the vertices of $S$, and does not provide an \emph{example} of such a vertex. To implement the semi-ladder algorithm, we actually have to be able to retrieve a suitable vertex $v$, because it should serve as a witness for the next rounds. While we believe that this could be done for homomorphisms by a careful inspection of the arguments in~\cite{DvorakT13}, how to do this for (induced) subgraph isomorphisms is not clear at all, due to an application of the Inclusion--Exclusion Principle; see also the discussion in~\cite[Section~9]{DvorakT13}.

We now present a robust way to mitigate this caveat and turn a counting data structure into an example-reporting data structure with the help of randomization. While for the $\near{S,r}$ query this is probably not necessary, because the approach presented above relies only on homomorphisms, the method that we are going to introduce now will be reused in the implementation of $\far{S,r}$. More precisely, we will use an elegant fingerprint retrieval technique that was recently used in the context of data structures for parameterized problems~\cite{Majewski24,Nadara22}; let us describe it now.

The first observation is that \cref{thm:overview-DT} can be easily lifted to count \emph{weighted} mappings. Precisely, we assume that we additionally have a fixed weight function $\wei\colon V(H)\times V(G)\to \Z$, and the contribution of a mapping $\varphi$ to the count is $\val_\wei(\varphi)\coloneqq \prod_{u\in V(H)} \wei(u,\varphi(u))$, instead of just $1$. Precisely, instead of $|\Ff(H,G)|$, the data structure maintains the value
\[\val_\wei(\Ff(H,G))\coloneqq \sum_{\varphi\in \Ff(H,G)} \val_\wei(\varphi).\]
Suppose now that we have a fixed vertex $x$ of $H$ and we would like to find, if existent, any vertex $v$ of $G$ such that there exists $\varphi\in \Ff(H,G)$ with $v=\varphi(x)$. Suppose further, for a moment, that there is at most one such vertex $v$. By assuming without loss of generality that the vertex set of $G$ consists of numbers $\{1,\ldots,n\}$, we consider two weight functions $\wei,\wei'\colon V(H)\times V(G)\to \Z$ defined as follows:
\begin{align*}
	\wei(y,w) &\coloneqq 1\qquad \textrm{for all }(y,w)\in V(H)\times V(G);\\
	\wei'(y,w) &\coloneqq \begin{cases}w & \textrm{if }y=x,\\ 1 & \textrm{otherwise.}\end{cases}
\end{align*}
The key observation is the following: if indeed the sought vertex $v$ is unique, then it is equal to the ratio
\[
\frac{\val_{\wei'}(\Ff(H,G))}{\val_{\wei}(\Ff(H,G))}.
\]
Note that this ratio can be easily computed by maintaining the (weight-extended) data structures of Dvo\v{r}\'ak and T\r{u}ma for weight functions $\wei$ and $\wei'$. Observe also that if the denominator of this ratio --- $\val_\wei(\Ff(H,G))$ --- is equal to $0$, then this means that there is no $\Ff$-mapping from $H$ to $G$ and consequently no vertex $v$ with the required property.

The remaining question is how to lift the assumption that there is only at most one vertex $v$ that can be the image of $x$ in a mapping $\varphi\in \Ff(H,G)$. We do this with the help of randomization. Let us fix a desired bound $\eps>0$ on the error probability. Upon initialization of the data structure, for each $i\in \{0,1,\ldots,\lceil \log n\rceil\}$ we sample $\xi=\Theta(\log \tfrac{1}{\eps})$ vertex subsets $S_{i,1},\ldots,S_{i,\xi}\subseteq V(G)$, where each set $S_{i,j}$ is constructed by including every vertex independently with probability $\frac{1}{2^{i+1}}$, so that the expected size of $S_{i,j}$ is $\tfrac{n}{2^{i+1}}$. It is not hard to see that if the set $X\coloneqq \{v\mid \textrm{there is }\varphi\in \Ff(H,G)\textrm{ such that }\varphi(x)=v\}$ is non-empty, then with probability at least $1-\eps$ at least one of the sets $S_{i,j}$ will have intersection of size exactly $1$ with $X$. Therefore, we may apply the same trick as in the previous paragraph, but for every set  $S_{i,j}$ simultaneously, where in the data structures constructed for $S_{i,j}$ we nullify all the weights of the pairs $(y,w)$ with $y=x$ and $w\notin S_{i,j}$. Thus, with probability at least $1-\eps$ at least one of the data~structures points to an example vertex $v\in X$, whose feasibility can be verified using the data structure of~\cref{thm:overview-DT}.

The fingerprinting retrieval technique presented above can be applied iteratively to retrieve example images of the vertices of $H$ under a homomorphism or an (induced) subgraph isomorphism one by one, until a full example mapping is retrieved. In this way, we answer the open question of Dvo\v{r}\'ak and T\r{u}ma~\cite{DvorakT13} about extending their data structure with a possibility to report example (induced) copies of a fixed graph $H$ in a dynamic graph $G$ that always belongs to a fixed class of bounded expansion. Precisely, we prove the following result. (A \emph{$k$-colored graph} is one where each edge is colored using one of $k$ colors.)

\begin{restatable}{theorem}{reporting} \label{thm:dvorak-tuma-examples}
	Let $\eps>0$, $H$ be a fixed graph, $\G$ be a graph class of bounded expansion, $k\in \N$, and $\F$ be either $\Hom$, $\Sub$, or $\ISub$. 
	Let $G$ be a dynamic $k$-colored
	graph on $n$ vertices that belongs to $\G$ at all times.
	Then, there is a randomized data structure $\MapEx_{\F, H, k,\eps}[G]$
	which after every update is able to report that either $\F(H, G)$ is empty, or provide some $\phi \in \F(H, G)$. The amortized update time is $\Oh_{\G, H, k}(\log^{\Oh_H(1)} n \log{\frac{1}{\eps}})$, the initialization time is $\Oh_{\G, H, k}(n \log n \log {\tfrac{1}{\eps}})$, and the space complexity is $\Oh_{\G, H, k}(n \log n \log {\tfrac{1}{\eps}})$.  The data structure never provides false positives, but may fail to provide an example mapping $\phi$ with probability at most $\eps$, against an oblivious adversary.
\end{restatable}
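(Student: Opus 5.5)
The plan is to combine a weighted, colored extension of \cref{thm:overview-DT} with the isolation-based fingerprint retrieval described above, applied iteratively to the vertices of $H$.

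\textbf{Weighted counting.} First I would show that the data structure of \cref{thm:overview-DT} maintains $\val_\wei(\F(H,G))$ for any weight function $\wei\colon V(H)\times V(G)\to\Z$ that is fixed or updated one pair at a time. Since $G$ is $k$-colored, the mappings are also required to respect edge colors. I would re-inspect the proof of \cref{thm:overview-DT} at the level of its invariants. It maintains a low out-degree orientation, or augmentation, of bounded-expansion graphs, and then counts rooted partial mappings along the orientation by sums of products over locally stored counters. Replacing each count by a weighted sum is harmless: every stored counter becomes a sum of products of the local weights $\wei(u,\cdot)$, and every local recomputation still touches the same number of entries. Inclusion--Exclusion for $\Sub$ and $\ISub$ commutes with weights, because it is a linear combination of weighted homomorphism counts of quotient graphs of $H$, with weights multiplied over merged classes. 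Edge colors are handled by tracking colors in the counted patterns. Changing a single weight $\wei(u,w)$ is simulated like recoloring $w$, which costs polylogarithmic amortized time. To avoid huge integers, I would work modulo a random prime of $\Theta(\log n+\log\frac1\eps)$ bits, or note that the numbers have $\Oh_H(\log n)$ bits, so word operations suffice up to a polylog factor.

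\textbf{Retrieving one vertex.} Order $V(H)=\{x_1,\dots,x_h\}$. Suppose $\varphi(x_1),\dots,\varphi(x_{t-1})$ are already fixed to $v_1,\dots,v_{t-1}$ so that some extension to $\varphi\in\F(H,G)$ exists. Pinning is done by weights: set $\wei(x_j,w)=[w=v_j]$. Let $X$ be the set of possible images of $x_t$ under extensions. For each $i\le\lceil\log n\rceil$ and $j\le\xi=\Theta(\log\frac1\eps)$, I keep a pair of structures with $\wei(x_t,w)=[w\in S_{i,j}]$ and $\wei'(x_t,w)=w\cdot[w\in S_{i,j}]$. If $|X\cap S_{i,j}|=1$, then the ratio of the two maintained values is exactly the unique element of that intersection. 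For a candidate $v$ obtained from a ratio, I check it by pinning $x_t\mapsto v$ in the plain counting structure and testing that the count is nonzero. This check is exactly why there are never false positives.

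For the isolation probability, take the scale $i$ with $2^{i}\approx|X|$. A single $S_{i,j}$ meets $X$ in exactly one vertex with constant probability, so $\xi$ independent samples all fail with probability at most $\eps/h$. Taking a union bound over the $h$ steps, with $\xi$ scaled by $\log h$, gives total failure probability at most $\eps$. The oblivious adversary assumption ensures that the sets $S_{i,j}$ are independent of $G$ at query time.

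\textbf{Costs and main obstacle.} The pins change per query, so rather than keeping separate structures for each prefix, I would temporarily update the $\Oh(h)$ pinned weights in every structure, then roll them back. This costs $\Oh_H(\log n\cdot\log\frac1\eps)$ structures times polylog update cost per pin, giving the claimed bound. Initialization and space pick up the factor $\log n\log\frac1\eps$ from the number of sampled sets. I expect the main obstacle to be the first step: verifying carefully that the internals of Dvořák--Tůma support arbitrary per-vertex weights, and weight updates at polylog cost, including inside the Inclusion--Exclusion for $\ISub$.
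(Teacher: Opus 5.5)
Your overall route is the paper's: weighted Dvo\v{r}\'ak--T\r{u}ma counting (\cref{lem:weighted-dvorak}), fingerprint retrieval over sampled sets $S_{i,j}$ using the ratio of an indicator-weighted sum and a label-weighted sum, a verification oracle (\cref{lem:restore-by-sampling}), and retrieval of the images of $x_1,\ldots,x_{|H|}$ one at a time with a union bound over the $|H|$ steps. Two points need repair.

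\textbf{The probability argument.} You justify isolation by noting that, since the adversary is oblivious, the sets $S_{i,j}$ are independent of $G$. At a step $t\geq 2$ that is not enough. The relevant set $X_t$ of admissible images of $x_t$ depends on $v_1,\ldots,v_{t-1}$. If you use one family of sets for all steps, as your notation suggests, these were produced from the same $S_{i,j}$. The correlation is real. Take $H$ a single edge $x_1x_2$, and suppose $v_1$ was returned as the unique element of $X_1\cap S_{i,j}$. Then $X_2=N(v_1)\subseteq X_1\setminus\{v_1\}$ is disjoint from $S_{i,j}$. So the constant isolation probability you invoke for step $2$ cannot be derived by treating $S_{i,j}$ as independent of $X_2$. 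The fix is to sample an independent family for each step. The paper does this by keeping a separate instance $\MapVer_i$ for every $i\in\{0,\ldots,|H|-1\}$, each an instance of \cref{lem:restore-by-sampling} with its own random sets. The pins seen by the step-$t$ structures are then determined by randomness independent of theirs, and the union bound with $\eps'=\eps/|H|$ goes through.

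\textbf{Pinning.} Setting $\wei(x_j,\cdot)$ to the indicator of $v_j$ costs $\Oh(1)$ updates only under one condition. In the structures used at step $t$, the weight of each $x_j$ with $j<t$ must be identically $0$ by default, so that a pin flips the single pair $(x_j,v_j)$. If a structure is shared with a step where $x_j$ is still free (weight $1$), a pin costs $n-1$ weight changes. So you need dedicated per-step families with this convention.

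Once you adopt it, a pin is just a unary mark on $v_j$. Then the step you call the main obstacle, dynamic weight updates inside Dvo\v{r}\'ak--T\r{u}ma, is unnecessary. The paper realizes such marks with the root gadget of \cref{lem:root-gadget}: it attaches two pendants with fresh edge colours to $x_j$, and temporarily to $v_j$ using reserved isolated vertices. This needs only edge updates. It keeps $G$ in a bounded expansion class by \cref{obs:pendants-bnd-exp}, and it lets every weight function stay fixed as in \cref{lem:weighted-dvorak}. The same gadget applied to the candidate vertex gives the verification oracle (\cref{lem:map-vertex}).
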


\newcommand{\fr}{\mathsf{far}}

\subsection{Finding far vertices}

We now proceed to the description of the data structure supporting the query $\far{S,r}$. For simplicity, we shall focus on a simpler task: We would like to be able to report the quantity $|V_\fr|$, where
\[V_\fr\coloneqq \{v\in V(G)\mid \dist(v,s)>r\textrm{ for all }s\in S\}.\]
Once we achieve this, a data structure that can actually report an example vertex $v\in V_\fr$ can be obtained using the fingerprint retrieval technique described in the previous section, as follows:
\begin{itemize}
	\item First, we generalize the data structure so that it reports the quantity $\sum_{v\in V_\fr} \wei(v)$, for a weight function $\wei\colon V(G)\to \Z$ fixed upon initialization.
	\item Then, we apply the fingerprint retrieval technique so that by maintaining the data structures from the first point for $\Oh(\log n\log \tfrac{1}{\eps})$ different weight functions $\wei$ allows us to retrieve the index of an example vertex from $V_\fr$, provided $V_\fr\neq \emptyset$.
\end{itemize}

Furthermore, we will assume that in all the queries, $S$ consists of a single fixed vertex $s$, $S=\{s\}$. This can be achieved by adding a fresh isolated vertex $s$ to the graph that at the time of a query, is made adjacent to all the vertices of $S$. Then the original query $\far{S,r}$ is equivalent to the query $\far{\{s\},r+1}$ after the modification.

At this point we reach the crucial novel part of this work. Namely, counting $|V_\fr|$ cannot be directly formulated as counting homomorphisms, because roughly speaking, $V_\fr$ is defined by the \emph{non-existence} of a homomorphism. Instead, we will compute $|V_\fr|$ using the Inclusion-Exclusion Principle, by adding and subtracting vertices that are, in various ways, close to $s$. As there may be an unbounded number of short paths connecting any pair of vertices, it is a priori unclear how to apply the Inclusion-Exclusion Principle over them. We overcome this issue by exploiting the  properties of bounded expansion graph classes again.
Precisely, we use the same Sparsity tool as Dvo\v{r}\'ak and T\r{u}ma~\cite{DvorakT13}: \emph{fraternal augmentations}.



\paragraph*{Fraternal augmentations.}
Let $G$ be a graph belonging to a fixed bounded expansion class $\Cc$. The assumption that~$G\in \Cc$ implies that for some constant $d_1=\Oh_\Cc(1)$, $G$ is \emph{$d_1$-degenerate}: every subgraph of $G$ has a vertex of degree at most $d_1$. Hence, by iteratively deleting vertices of the smallest degree from $G$ and orienting incident edges outwards, we obtain an orientation $\vec G_1$ with maximum outdegree at most~$d_1$.

Now, we add \emph{fraternal edges} to $\vec G_1$: for every pair of (directed) edges $(u,v)$ and $(u,v')$ with a common tail $u$, we add an (undirected) edge $vv'$, unless it was already present. It turns out that the new graph $G_2$ --- consisting of the oriented edges of $G$ and of the added fraternal edges --- is still sparse: the underlying undirected graph belongs to some graph class $\Cc_2$, depending only on $\Cc$, that still has bounded expansion. Consequently, the new fraternal edges can be oriented so that we obtain an oriented supergraph $\vec G_2$ of $\vec G_1$ whose maximum outdegree is at most $d_1+d_2$, for some constant $d_2$ depending only on $\Cc_2$. By performing this operation $r$ times, we eventually obtain an orientation $\vec G_r$ of a supergraph of $G$ whose maximum outdegree is at most $d\coloneqq d_1+d_2+\ldots+d_r$. 

With every edge of $\vec G_r$ we may naturally associate its \emph{length}: the original edges of $G$ have length $1$, and a fraternal edge added to edges of lengths $a$ and $b$ is assigned length $a+b$. Edges of length larger than $r$ will have no significance for us, so we may just not create them in the process.

An important idea in the work of Dvo\v{r}\'ak and T\r{u}ma~\cite{DvorakT13} is that the sequence of fraternal augmentations described above can be maintained in the dynamic setting with amortized update time $\Oh_{\Cc,r}(\log^r n)$. The key component here is the classic data structure of Brodal and Fagerberg~\cite{BrodalF99}, which can be used to maintain the orientation at each level of the construction. In this ``tower'' of Brodal--Fagerberg data structures, every update to $G$ triggers $\Oh_{\Cc}(\log n)$ updates to $\vec G_1$ in the amortized sense, which in turn triggers $\Oh_{\Cc}(\log^2 n)$ updates to $\vec G_2$, and so on. 

So from now on we assume that we maintain the augmentation $\vec G_r$ in our data structure, where every edge is decorated with its length. Moreover, since the outdegree of every vertex is at most $d$, we may additionally maintain a labelling of the edges of $\vec G_r$ with labels $\{1,\ldots,d\}$ so that the edges with the same tail are assigned pairwise different labels.

\begin{figure}
	\centering
	\begin{tikzpicture}
		
		\node[vertex] (z0) at (0,0) {};
		\node[vertex] (z15) at (15,0) {};
		\draw (z0) -- (z15);
		
		\foreach \i in {1,2,3,4,5,6,7,8,9,10,11,12,13,14} {
			\node[svertex] (z\i) at (\i,0) {};
		}
		
		\draw[->,very thick, red] (z0) edge[bend left] (z3);
		\draw[->,very thick, red] (z3) edge[bend left] (z5);
		
		\draw[->, very thick, red] (z5) edge (z6);
		\draw[->, very thick, red] (z6) edge[bend left] (z8);
		
		\draw[->, very thick, blue] (z15) edge[bend right] (z13);
		\draw[->, very thick, blue] (z13) edge[bend right] (z10);
		\draw[->, very thick, blue] (z10) edge[bend right] (z8);
		
		\node at (0,-0.5) {$a=w_{\alpha_0}$};
		\node at (3,-0.5) {$w_{\alpha_1}$};
		\node at (5,-0.5) {$w_{\alpha_2}$};
		\node at (6,-0.5) {$w_{\alpha_3}$};
		\node at (8,-0.5) {$w_{\alpha_4}=w_{\beta_3}$};
		\node at (10,-0.5) {$w_{\beta_2}$};
		\node at (13,-0.5) {$w_{\beta_1}$};
		\node at (15,-0.5) {$w_{\beta_0}=b$};
		
	\end{tikzpicture}
	\caption{An $a$-$b$-path with a shortcut. The directed paths $Q,R$ of the shortcut are resp. red and blue.}\label{fig:shortcut}
\end{figure}

\paragraph*{Shortcuts.}
Suppose now that in $G$ we have two vertices $a$ and $b$ that can be connected by a path $P=(a=w_0,w_1,\ldots,w_{r'}=b)$ of some length $r'\leq r$. If we trace what happens with $P$ during the consecutive augmentations, then it is not hard to see that in $\vec G_r$, $P$ will have an oriented ``shortcut'' consisting of:
\begin{itemize}
	\item a directed path $Q$ of the form $w_{\alpha_0}\to w_{\alpha_1}\to \ldots \to w_{\alpha_p}$, where $0=\alpha_1<\alpha_2<\ldots<\alpha_p$, so that each edge $(w_{\alpha_j},w_{\alpha_{j+1}})$ has length at most $\alpha_{j+1}-\alpha_j$; and
	\item a directed path $R$ of the form $w_{\beta_0}\to w_{\beta_1}\to \ldots \to w_{\beta_q}$, where $r'=\beta_1>\beta_2>\ldots>\beta_q=\alpha_p$, so that each edge $(w_{\beta_j},w_{\beta_{j+1}})$ has length at most $\beta_j-\beta_{j+1}$.
\end{itemize}
See \cref{fig:shortcut}. If we now record the lengths and the labels of the consecutive edges of the shortcut paths $Q$ and $R$, we obtain two sequences of pairs from $\{1,\ldots,d\}\times \{1,\ldots,r\}$ such that the sum of all the second coordinates is at most $r$; we call such a pair of sequences the \emph{pattern} of a shortcut. At this point, two observations are crucial:
\begin{enumerate}[label=(O\arabic*)]
	\item\label{o:numpat} There are only at most $(dr)^{2r}=\Oh_{\Cc,r}(1)$ different patterns. Let $\Pi$ be their set.
	\item\label{o:unique} For every pair of vertices $s,u\in V(G)$ and pattern $\pi=(\pi_Q,\pi_R)$, there is at most one shortcut of an $a$-$b$-path with pattern $\pi$. This is because in $\vec G_r$, the edges outgoing from a single vertex have pairwise different labels, so knowing $a$, $b$, and sequences $\pi_Q,\pi_R$, we can uniquely reconstruct the paths $Q$ and $R$. A shortcut is present if the reconstructed paths $Q$ and $R$ end at the same vertex.
\end{enumerate}
Based on the above discussion, the following is now clear.

\begin{claim}\label{cl:overview-shortcuts}
	For any two vertices $a,b\in V(G)$, the following conditions are equivalent:
	\begin{itemize}
		\item $\dist(a,b)\leq r$.
		\item In $\vec G^r$, there exists a shortcut between $a$ and $b$ with some pattern $\pi\in \Pi$.
	\end{itemize}
\end{claim}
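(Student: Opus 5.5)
We prove the two implications separately; the direction from the first condition to the second is the substantial one.

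\emph{From a shortcut to $\dist(a,b)\le r$.} Every edge of $\vec G_r$ of length $\ell$ joins two vertices at distance at most $\ell$ in $G$. We check this by induction over the augmentation levels. Original edges have length $1$. A fraternal edge $vv'$ of length $a+b$ arises from edges $(u,v)$ and $(u,v')$ of lengths $a$ and $b$; by induction $\dist(u,v)\le a$ and $\dist(u,v')\le b$, so $\dist(v,v')\le a+b$ by the triangle inequality. A shortcut with pattern $\pi\in\Pi$ consists of the two directed paths $Q$ (from $a$) and $R$ (from $b$), which meet at a common vertex. It therefore yields an $a$-$b$ walk in $G$ of length at most the sum of the lengths in $\pi$, and this sum is at most $r$ by the definition of a pattern.

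\emph{From $\dist(a,b)\le r$ to a shortcut.} Fix a shortest $a$-$b$ path $P=(w_0,\dots,w_{r'})$ with $r'\le r$. We prove a stronger statement by induction on the augmentation level $i$.

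\textbf{Invariant at level $i$.} Take a subsequence $w_{\gamma_0},\dots,w_{\gamma_m}$ with $\gamma_0=0$ and $\gamma_m=r'$ such that consecutive vertices are joined in $\vec G_i$ by an edge (in one of the two directions) of length at most $\gamma_{j+1}-\gamma_j$. The invariant asserts that such a subsequence exists with no index $j$ for which $w_{\gamma_j}$ is a tail of both incident edges. Call such an index a \emph{peak}.

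\textbf{Base case.} At level $1$, take the whole of $P$ with its orientation from $\vec G_1$; this trivially gives a valid subsequence.

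\textbf{Induction step.} Consider a peak $w_{\gamma_j}$ at level $i$, that is, a vertex with outgoing edges to both $w_{\gamma_{j-1}}$ and $w_{\gamma_{j+1}}$. The augmentation creates the fraternal edge $w_{\gamma_{j-1}}w_{\gamma_{j+1}}$, and its length is the sum of the two lengths, hence at most $\gamma_{j+1}-\gamma_{j-1}\le r$. So it is indeed created rather than discarded. Alternatively, an edge $w_{\gamma_{j-1}}w_{\gamma_{j+1}}$ may already exist; the paper's convention is that we do not add it twice. In that case the existing edge must be assumed to have length at most the sum, i.e.\ lengths are minimized; we adopt this convention. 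We then remove $w_{\gamma_j}$ from the subsequence.

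The peaks to remove should be chosen as a set of pairwise non-adjacent peaks in the current sequence, all handled simultaneously; for instance, take every peak. Two peaks cannot be consecutive, since the edge between them would have to be oriented out of both endpoints. So each round at least decrements the number of vertices while peaks exist. The count of peaks itself need not decrease, however, so a potential argument is required. We use the following potential: each round removes at least one vertex, and at most $r'-1$ internal vertices exist. Hence after at most $r-1$ levels no peak remains.

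\textbf{Conclusion.} A sequence with no peak has the form $\to\to\cdots\to\leftarrow\cdots\leftarrow$, that is, the shape $Q$ followed by $R$ reversed, meeting at a common vertex. This is exactly a shortcut. Its lengths and labels give a pattern, and the total length is at most $\sum_j(\gamma_{j+1}-\gamma_j)=r'\le r$, so the pattern lies in $\Pi$.

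\textbf{Main obstacle.} The delicate step is the peak-elimination termination. We must ensure that within $r$ augmentation levels all peaks disappear even though new peaks can form after each contraction. The argument sketched above is that every level with a peak shortens the sequence, and the sequence has at most $r+1$ vertices; so $r$ levels suffice, perhaps with an off-by-one that forces doing one extra level or noting that $r'\le r$ covers it. A secondary point is checking the length convention when a fraternal edge already exists.
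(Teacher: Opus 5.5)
Your proof is correct. The direction from a shortcut to $\dist(a,b)\le r$ is the same as the paper's (property~\ref{c:short} plus the triangle inequality). For the other direction you take a different route.

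The paper's formal version (\cref{cl:dist-pattern}, with property~\ref{c:center} supplied by \cref{cl:shortcuts}) argues extremally in the \emph{final} augmentation:
\begin{itemize}
\item call an edge $u_iu_j$ useful if its length is at most $|j-i|$;
\item take a minimum-size index set $0=s_0<\dots<s_c=r'$ whose consecutive vertices are joined by useful edges;
\item a peak $s_i$ would make $u_{s_{i-1}}u_{s_{i+1}}$ useful, contradicting minimality, so the minimal set is already a shortcut.
\end{itemize}
This needs only that the final graph is closed under forks of total length at most $r$. The paper's construction guarantees this by building level $L$ from all forks with length sum exactly $L$, so an already present edge automatically has length at most $L$. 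That is exactly the ``minimized lengths'' convention you imposed by hand on the looser iteration-indexed description of the overview, and you were right to flag it.

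Your level-by-level peak elimination works directly in that iteration-indexed model and makes the number of augmentation rounds explicit. The cost is bookkeeping that the minimality argument avoids: simultaneous removals, non-adjacency of peaks, and a round count.

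One thing to fix. The invariant as you state it (a peak-free valid subsequence exists at every level) is false at level~$1$. What your step actually proves is that at level $i$ there is a valid subsequence that is either peak-free or has at most $r'+2-i$ vertices. Removing all peaks keeps validity and drops at least one vertex. A peak-free sequence stays valid at later levels, since edges and orientations persist and lengths never increase. So at level $r'\le r$ the sequence is peak-free (a two-vertex sequence has no internal vertex), and there is no off-by-one problem.
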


We remark that the concept of path shortcuts in the context of fraternal augmentations that we describe above actually dates back to the 2006 work of Kowalik and Kurowski~\cite{KowalikK06}.

\paragraph*{Counting far vertices.} We may now come back to our initial goal of maintaining the cardinality of the set $V_\fr$, which consists of vertices $u$ with $\dist(s,u)>r$. For a vertex $u$ and a pattern $\pi\in \Pi$, call $\pi$ \emph{realized} at $u$ if in $\vec G_r$ there is a shortcut between $s$ and $u$ with pattern $\pi$. By \cref{cl:overview-shortcuts}, $V_\fr$ consists of those vertices $u$ at which no pattern of $\Pi$ is realized. Now, if for $\Gamma\subseteq \Pi$ we define $V_\Gamma$ as the set of those $u\in V(G)$ at which all the patterns of $\Gamma$ are realized, then by the Inclusion-Exclusion principle, we have
\[|V_\fr|=\sum_{\Gamma\subseteq \Pi} (-1)^{|\Gamma|}\cdot |V_\Gamma|.\]
(Note here that $V_\emptyset=V(G)$.) Since $\Pi$ is of size $\Oh_{\Cc,r}(1)$ by \ref{o:numpat}, to compute $|V_\fr|$ it suffices to compute $|V_\Gamma|$ for each $\Gamma\subseteq \Pi$. However, it is easy to argue using \ref{o:unique} that $|V_\Gamma|$ is equal to the number of homomorphisms $\varphi$ from the oriented graph $H_\Gamma$ depicted in \cref{fig:HGamma} to $\vec G_r$ that satisfy $\phi(z)=s$. (Here, we consider both $H_\Gamma$ and $\vec G_r$ as directed graphs with edges decorated with lengths and labels, and the considered homomorphisms have to respect those decorations.) It now remains to note that this homomorphism count can be maintained using the data structure of \cref{thm:overview-DT}. So all in all, we maintain the data structures of \cref{thm:overview-DT} for all graphs $H_\Gamma$ for $\Gamma\subseteq \Pi$, and from the maintained homomorphism counts we piece together the value of $|V_\fr|$ using Inclusion-Exclusion. This concludes the description of the data structure supporting query $\far{S,r}$ in classes of bounded expansion.

\subsection{Other results}\label{sec:overview-other}

Finally, let us comment on the two side results, \cref{thm:main-domset-deg,thm:apx-dynamic}.

\medskip

The proof of \cref{thm:main-domset-deg} follows the same path as that of \cref{thm:main-domset-be}, except that for $r=1$, queries $\near{S,1}$ and $\far{S,1}$ can be implemented in a much simpler way, and relying only on the boundedness of degeneracy. Let us consider the query $\near{S,1}$, which boils down to finding a vertex $v$ that is simultaneously adjacent to all the vertices of $S$. Since the maintained graph $G$ is $d$-degenerate, we may use the data structure of Brodal and Fagerberg~\cite{BrodalF99} to maintain its orientation $\vec G$ with maximum outdegree $\Oh(d)$. Now, when searching for a suitable vertex $v$, we may first test every vertex in the outneighborhood of $S$ in $\vec G$; this amounts to testing $\Oh(d|S|)$ vertices, each in time $\Oh(d|S|)$. If we do not find a suitable $v$ in this way, we know that we are looking for $v$ such that all the edges between $v$ and $S$ are oriented away from $v$ in $\vec G$. The idea now is that together with $\vec G$, we can efficiently maintain the hypergraph of the outneighborhoods of all the vertices (which are of size $\Oh(d)$), together with all their subsets (which amounts to $2^{\Oh(d)}\cdot n$ subsets in total). Then verifying whether there exists a vertex $v$ with $S$ contained in its neighborhood amounts to checking whether $S$ belongs to the maintained hypergraph.

The argument presented above easily extends to reporting a suitable vertex $v$ adjacent to all the vertices of $S$, as well as counting the number of such vertices $v$. Similarly, for every subset $S'\subseteq S$, we may count the number of vertices $v$ that are adjacent to all the vertices of $S'$. Thus, using the Inclusion-Exclusion Principle, we may also count the number of vertices $v$ that are adjacent to none of the vertices in $S$. This can be easily extended to weighted counting (under any weight function $\wei\colon V(G)\to \Z$ fixed upon initialization), so we may again use the fingerprint retrieval technique to design a randomized data structure that can report a vertex non-adjacent to all the vertices of $S$, provided there exists one. This gives an implementation of $\far{S,1}$ and completes the proof of \cref{thm:main-domset-deg}.

\medskip

Finally, the proof of \cref{thm:apx-dynamic} mostly relies on completely different ideas. Similarly as above, using the data structure of Brodal and Fagerberg~\cite{BrodalF99} we may maintain an orientation $\vec G$ of the maintained graph $G$ with maximum outdegree $\Oh(d)$. The key idea is that if we consider the set system of closed outneighborhoods $\Ss\coloneqq \{\{v\}\cup N^+_{\vec G}(v)\colon v\in V(G)\}$, then for any inclusion-wise maximal packing of disjoint sets ${\cal M}\subseteq \Ss$, $\bigcup {\cal M}$ is an $\Oh(d^2)$-approximation of the minimum dominating set. This observation is in essence already present in an old distributed $\Oh(d^2)$-approximation algorithm for \textsc{Dominating Set} on degenerate graphs, due to Lenzen and Wattenhofer~\cite{LenzenW10}. It is not hard to see that the set system $\Ss$ can be maintained efficiently under updates to $\vec G$, hence we can use the recent data structure of Assadi and Solomon~\cite{AssadiSolomon21} to efficiently maintain an inclusion-wise maximal packing in $\Ss$ as well.

In our argumentation we take an extra mile to clarify that the combinatorial argument behind the approximation guarantee is in fact more general, as it extends to larger $r$. Namely, if we assume that $G$ belongs to a fixed class of bounded expansion $\Cc$, then there exists a vertex ordering $\sigma$ of $G$ whose \emph{weak $r$-coloring number} $\wcol_r(G,\sigma)$ is bounded by a constant $c$ depending only on $\Cc$ and $r$. This means that all the \emph{weak $r$-reachability sets} $\WReach_r[G,\sigma,v]$, for $v\in V(G)$, are of size bounded by $c$; these are analogues of the closed outneighborhoods in the $r=1$ case. We observe that again, the union of any inclusion-wise maximal packing in the set system $\Ff_r(G,\sigma)\coloneqq \{\WReach_r[G,\sigma,v]\colon v\in V(G)\}$ is a $c^2$-approximation of the minimum distance-$r$ dominating set in $G$. Unfortunately, for $r>1$ we do not know how to efficiently maintain $\Ff_r(G,\sigma)$ under updates to $G$, as a single update to $G$ may necessitate an unbounded number of changes to $\Ff_r(G,\sigma)$.
\begin{figure}
	\centering
	\begin{tikzpicture}
		
		\node[vertex] (z) at (-4,0) {};
		\node at (-4.5,0) {$z$};
		
		\node[vertex] (x) at (4,0) {};
		\node at (4.5,0) {$x$};
		
		\foreach \i in {1,2,3,4,5} {
			\node[vertex] (u\i) at (0,3-\i*1) {};
		}
		
		\foreach \t in {1,2,3} {
			\node[svertex] (z1\t) at (-\t,2-0.5*\t) {};
		}
		\draw[very thick, red,->] (z) -- (z13);
		\draw[very thick, red,->] (z13) -- (z12);
		\draw[very thick, red,->] (z12) -- (z11);
		\draw[very thick, red,->] (z11) -- (u1);

		\foreach \t in {1,2,3} {
			\node[svertex] (z4\t) at (-\t,-1+0.25*\t) {};
		}
		\draw[very thick, red,->] (z) -- (z43);
		\draw[very thick, red,->] (z43) -- (z42);
		\draw[very thick, red,->] (z42) -- (z41);
		\draw[very thick, red,->] (z41) -- (u4);
		
		\foreach \t in {2} {
			\node[svertex] (z2\t) at (-\t,1-0.25*\t) {};
		}
		\draw[very thick, red,->] (z) -- (z22);
		\draw[very thick, red,->] (z22) -- (u2);
		
		\foreach \t in {2} {
			\node[svertex] (z5\t) at (-\t,-2+0.5*\t) {};
		}
		\draw[very thick, red,->] (z) -- (z52);
		\draw[very thick, red,->] (z52) -- (u5);

		\foreach \t in {1,2} {
			\node[svertex] (z3\t) at (-\t*1.33,0) {};
		}
		\draw[very thick, red,->] (z) -- (z32);
		\draw[very thick, red,->] (z32) -- (z31);
		\draw[very thick, red,->] (z31) -- (u3);

		\foreach \t in {1,2} {
			\node[svertex] (x1\t) at (\t*1.33,2-0.66*\t) {};
		}
		\draw[very thick, blue,->] (x) -- (x12);
		\draw[very thick, blue,->] (x12) -- (x11);
		\draw[very thick, blue,->] (x11) -- (u1);

		\foreach \t in {1,2} {
			\node[svertex] (x2\t) at (\t*1.33,1-0.33*\t) {};
		}
		\draw[very thick, blue,->] (x) -- (x22);
		\draw[very thick, blue,->] (x22) -- (x21);
		\draw[very thick, blue,->] (x21) -- (u2);

		\foreach \t in {1,2,3} {
			\node[svertex] (x3\t) at (\t,0) {};
		}
		\draw[very thick, blue,->] (x) -- (x33);
		\draw[very thick, blue,->] (x33) -- (x32);
		\draw[very thick, blue,->] (x32) -- (x31);
		\draw[very thick, blue,->] (x31) -- (u3);

		\foreach \t in {1,2,3} {
			\node[svertex] (x5\t) at (\t,-2+\t*0.5) {};
		}
		\draw[very thick, blue,->] (x) -- (x53);
		\draw[very thick, blue,->] (x53) -- (x52);
		\draw[very thick, blue,->] (x52) -- (x51);
		\draw[very thick, blue,->] (x51) -- (u5);
		
		\foreach \t in {2} {
			\node[svertex] (x4\t) at (\t,-1+\t*0.25) {};
		}
		\draw[very thick, blue,->] (x) -- (x42);
		\draw[very thick, blue,->] (x42) -- (u4);

	\end{tikzpicture}
	\caption{Example graph $H_\Gamma$ constructed for a set $\Gamma\subseteq \Pi$ of size $5$. Every pair (red path, blue path) represents a shortcut with some pattern $\pi\in \Gamma$. Note that both the red edges and the blue edges are decorated with $\{1,\ldots,d\}\times \{1,\ldots,r\}$ (not depicted).}\label{fig:HGamma}
\end{figure}
\section{Preliminaries}\label{sec:prelims}

We write $\N$ for the set of nonnegative integers.
For a positive integer $k$, we denote $[k]\coloneqq \{1,\ldots,k\}$ and $[0,k]\coloneqq \{0,1,\ldots,k\}$.
We follow the convention that if $\tup x$ is a tuple of objects, then the $i$th element of $\tup x$ is denoted by $x_i$. For a tuple of parameters $\tup p$, the $\Oh_{\tup p}(\cdot)$ notation hides multiplicative factors that may depend on $\tup p$.

\subsection{Graphs}

\paragraph{Basics.} We use standard graph terminology and notation. All graphs considered in this paper are finite, undirected, and simple (without loops or parallel edges), unless explicitly stated.

For a graph $G$, by $V(G)$ and $E(G)$ we denote the vertex set and the edge set of $G$, respectively. We also write $|G| \coloneqq |V(G)|$ and $\|G\| \coloneqq |E(G)|$. For vertices $u,v$ of $G$, by $\dist_G(u,v)$ we denote the distance between $u$ and $v$, defined as the smallest possible length of a path in $G$ with endpoints $u$ and $v$. This notation is extended to subsets naturally, e.g. $\dist_G(v,S)=\min_{s\in S}\dist_G(v,s)$. The \emph{closed neighborhood} of a vertex $u$ in $G$ is the set $N_G[u]$ that consists of $u$ and all the neighbors of $u$. By $\avgdeg(G)$ we denote the average degree in $G$; note that it is equal to $\frac{2\|G\|}{|G|}$.

A \emph{rooted graph} is a graph $G$ together with a tuple $\tup u$ of vertices of $G$ (not necessarily distinct), called the \emph{roots}. We denote it by $\rt{G}{\tup u}$. 



We use the standard notion of directed graphs, again disallowing loops and parallel edges (edges with same head and tail). An \emph{oriented graph} is a directed graph $D$ where any pair of vertices can be the endpoints of at most one edge; that is, for distinct vertices $u,v$ we disallow that the edges $(u,v)$ and $(v,u)$ are simultaneously present. For an undirected graph $G$, an \emph{orientation} of $G$ is an oriented graph obtained from $G$ by choosing an orientation of every edge of $G$. Moreover, if an orientation of $G$ satisfies that the maximum outdegree is at most $d$, we will call it a \emph{$d$-orientation}. 

A directed graph is \textit{connected} if its underlying undirected graph is connected.

A graph $G$ is \emph{$d$-degenerate} if every subgraph of $G$ contains a vertex of degree at most $d$. The \emph{degeneracy} of $G$ is the smallest $d$ for which $G$ is $d$-degenerate. Note that if $G$ is $d$-degenerate, then it has a $d$-orientation. Indeed, such an orientation can be obtained by iteratively removing from $G$ a vertex of the smallest degree (which is always at most~$d$) and orienting the edges incident to it away from it.

The \emph{arboricity} of a graph $G$ is the smallest $\alpha\in \N$ such that the edge of $G$ can be partitioned into $\alpha$ sets each of which forms a forest. It is well-known that if $d$ and $\alpha$ are the degeneracy and the arboricity of~$G$, respectively, then we have
\[\alpha\leq d\leq 2\alpha.\]
In other words, the degeneracy and the arboricity of a graph are within a multiplicative factor of $2$, so these are essentially equivalent parameters from the point of view of our purposes. In this paper we choose to use degeneracy as the base parameter to state our results.




\paragraph*{Mappings between graphs.} For graphs $H$ and $G$, a \emph{homomorphism} from $H$ to $G$ is a map $\varphi\colon V(H)\to V(G)$ such that for every pair of distinct vertices $u,v\in V(H)$, if $uv$ is an edge in $H$ then $\varphi(u)\varphi(v)$ is an edge in $G$. A \emph{subgraph isomorphism} is a homomorphism that is injective (i.e. $\varphi(u)\neq \varphi(v)$ for all distinct $u,v\in V(H)$), and an \emph{induced subgraph isomorphism} is a subgraph isomorphism where the implication stated above is in fact an equivalence: $uv$ is an edge in $H$ if and only if $\varphi(u)\varphi(v)$ is an edge in $G$. By $\Hom(H,G)$, $\Sub(H,G)$, and $\ISub(H,G)$ we denote the sets of all homomorphisms, subgraph isomorphisms, and induced subgraph isomorphisms from $H$ to $G$, respectively.

The definitions of homomorphisms and (induced) subgraphs isomorphisms can be naturally extended to rooted graphs, by requiring that the source graph $\rt{H}{\tup u}$ and the target graph $\rt{G}{\tup v}$ have the same number of roots ($|\tup u|=|\tup v|$) and that the $i$th root of the source graph is mapped to the $i$th root of the target graph ($\varphi(u_i)=v_i$, for all $i\in [|\tup u|]$). In particular, if $\rt{H}{\tup u}$ and $\rt{G}{\tup v}$ are rooted graphs with $|\tup u|=|\tup v|$, then by $\Hom(\rt{H}{\tup u},\rt{G}{\tup v})$ we denote the set of all homomorphisms from $\rt{H}{\tup u}$ to $\rt{G}{\tup v}$; and similarly for (induced) subgraph isomorphisms.



\paragraph*{Relational structures.} We use the standard notion of relational structures considered in finite model theory. A \emph{signature} is a finite set $\Sigma$ of relation names, with each relation name $R\in \Sigma$ having a prescribed \emph{arity} $\ar(R)\in \N$. A \emph{$\Sigma$-structure} $\Af$ consists of a finite universe $U=U(\Af)$ and, for every relation name $R\in \Sigma$, its \emph{interpretation} $R^\Af\subseteq U^{\ar(R)}$. 

Here are some examples of modeling (variations of) graphs as relational structures, which will be used in this paper.
\begin{itemize}
	\item An undirected graph $G$ is modelled as a $\Sigma$-structure whose universe is the vertex set $U=V(G)$, and $\Sigma=\{ R \}$ consists of a single binary (arity-$2$) relation $R$ that is always irreflexive and symmetric. We let $(u,v),(v,u) \in R$ iff $\{u,v\} \in E(G)$.  To model directed graphs, we drop the symmetricity requirement.
	\item For a finite set of colors $C$, a \emph{$C$-colored graph} is an undirected graph $G$ together with a function $\mathrm{col}\colon E(G)\to C$ assigning every edge its color. We model $C$-colored graph $G$ by letting the universe be the vertex set, $U=V(G)$, and making $\Sigma$ consist of $|C|$ relational symbols, one for each color, $\Sigma = \{ R_c  \colon c \in C\}$.
	 All these relations are binary, and each relation $R_c$ for $c \in C$ contains tuples $(u,v)$ and $(v,u)$ for each edge $\{ u,v \} \in E(G)$ with $\mathrm{col}(\{ u,v \})=c$.
	 We may similarly define and model $C$-colored directed graphs.
	 For $k\in \N$, we use a shorthand: a \emph{$k$-colored graph} is a $[k]$-colored graph, that is, a graph whose edges are colored with colors $\{1,\ldots,k\}$.
\end{itemize}

The \emph{Gaifman graph} of a $\Sigma$-structure $\Af$ is the graph on vertex set $U(\Af)$ where two distinct elements $u,v\in U(\Af)$ are adjacent if and only if they appear together in some tuple of some relation $R^\Af$, $R\in \Sigma$.

Homomorphisms of relational structures are defined naturally: for two $\Sigma$-structures $\Af$ and $\Bf$, a homomorphism from $\Af$ to $\Bf$ is a mapping $\varphi\colon U(\Af)\to U(\Bf)$ such that for every $R\in \Sigma$ and $\tup u\in U(\Af)^{\ar(R)}$, it holds that $\tup u\in R^{\Af}$ implies $\varphi(\tup u)\in R^{\Bf}$ (where $\varphi(\tup u)$ denotes coordinate-wise application). The notions of (induced) substructure isomorphisms then follow in the same way as for graphs, and so we may extend the $\Hom(\cdot,\cdot),\Sub(\cdot,\cdot),\ISub(\cdot,\cdot)$ notation to relational structures as well.

Similarly to graphs, we may consider \emph{rooted relational structures} by equipping a structure $\Af$ with a tuple $\tup u$ of roots, which are (not necessarily distinct) elements of the universe; we denote it by $\rt{\Af}{\tup u}$. Homomorphisms and (induced) substructure isomorphisms can be again extended to rooted structures in the expected way. Thus, we may speak about homomorphisms of rooted, colored directed graphs etc.

\paragraph*{Sparsity.} For $r\in \N$, we say that a \emph{depth-$r$ model} of a graph $H$ in a graph $G$ is a mapping $\eta$ from the vertices of $H$ to subsets of the vertices of $G$ satisfying the following two conditions:
\begin{itemize}
	\item The sets $\{\eta(u)\colon V(H)\}$ are pairwise disjoint and each of them induces a connected subgraph of $G$ of radius at most $r$.
	\item Whenever $uv$ is an edge in $H$, in $G$ there is an edge with one endpoint in $\eta(u)$ and the other in $\eta(v)$.
\end{itemize}
The sets $\{\eta(u)\colon u\in V(H)\}$ are called \emph{branch sets} of the model. We say that $G$ contains $H$ as a \emph{depth-$r$ minor} if $G$ contains a depth-$r$ model of $H$.

For a graph $G$ we define\footnote{Classic literature, e.g.~\cite{sparsity}, use the ratio $\frac{\|H\|}{|H|}$ instead of $\avgdeg(H)$, which is exactly twice smaller. This difference is immaterial for our results, as we are interested only in the boundedness of the $\nabla_r(\cdot)$ parameters.}
\[\nabla_r(G)\coloneqq \max\{\,\avgdeg(H)~\colon~H\textrm{ is a depth-}r\textrm{ minor of }G\,\}.\]
For a graph class $\Cc$, we define
\[\nabla_r(\Cc)\coloneqq \sup_{G\in \Cc} \nabla_r(G).\]
Note that $\nabla_r(\Cc)$ may be equal to $+\infty$; this happens if graphs $G\in \Cc$ attain arbitrarily large values of $\nabla_r(G)$. This finite/infinite distinction underlies the central definition considered in this work.


\begin{definition}
	A class of graphs $\Cc$ has \emph{bounded expansion} if $\nabla_r(\Cc)$ is finite for every $r\in \N$. 
\end{definition}

Unpacking the definitions, this means that there exists a function $f\colon \N\to \N$ such that for every $G\in \Cc$ and a depth-$r$ minor $H$ of $G$, we have $\avgdeg(H)\leq f(r)$. Note that depth-$0$ minors are just subgraphs, hence every graph $G$ belonging to a class of bounded expansion $\Cc$ is $\nabla_0(\Cc)$-degenerate; this is a constant depending only on $\Cc$.

In our proofs, we will use stability of the notion of bounded expansion under two basic operations. The first is adding pendants.
For a graph $G$ and a vertex $v$ of $G$, the operation of \emph{adding a pendant} to $v$ produces a new graph $G'$ obtained from $G$ by adding a fresh vertex $v'$ and an edge $vv'$.
	
\begin{lemma} \label{obs:pendants-bnd-exp}
	Let $\Cc$ be a class of graphs of bounded expansion and let $\Cc'$ be the class of all the graphs that can be obtained from a graph from $\Cc$ by repeatedly adding pendants (an arbitrary number of times). Then $\Cc'$ is also of bounded expansion.
\end{lemma}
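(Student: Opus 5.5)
We bound $\nabla_r(G')$ directly in terms of $\nabla_r(G)$ for every $G'\in\Cc'$ obtained from some $G\in\Cc$ by adding pendants. The starting observation is structural: $G'$ is obtained from $G$ by attaching, at each vertex of $G$, a rooted tree. Repeatedly adding pendants can hang pendants on previously added pendant vertices, so each attached piece is a tree, not merely a star.

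Fix $r$ and a depth-$r$ model $\eta$ of a graph $H$ in $G'$. We want $\avgdeg(H)\le \max(\nabla_r(G),2)+c$ for some absolute constant $c$; in fact we aim for a bound like $\avgdeg(H)\le 2\nabla_r(G)+2$. We split $V(H)$ into two groups:
\begin{itemize}
\item $A$, the vertices $u$ with $\eta(u)\cap V(G)\neq\emptyset$;
\item $B$, the vertices $u$ with $\eta(u)$ contained entirely in $V(G')\setminus V(G)$.
\end{itemize}

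First consider $H[A]$. For $u\in A$, set $\eta'(u)=\eta(u)\cap V(G)$. Since $\eta(u)$ is connected, and each pendant tree meets $V(G)$ in a single vertex, deleting the tree parts keeps $\eta'(u)$ connected. Its radius does not increase either: a shortest path inside $\eta(u)$ between two vertices of $G$ never needs to enter a pendant tree, as any such excursion must return to the same attachment vertex. One detail needs care here, namely choosing the center. If the center of $\eta(u)$ lies in a tree attached at some $v\in V(G)$, we recenter at $v$, which lies on every path from the center to $G$, so distances only drop. Edges of $H[A]$ are realized by edges of $G'$ between branch sets, and an edge between two sets each meeting $V(G)$ can likewise be rerouted to an edge of $G$. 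More precisely, an edge from $\eta(u)$ to $\eta(w)$ either lies in $G$, or lies in a tree attached at some $v$; in the latter case the tree plus $v$ is split between branch sets, and $v$ itself witnesses adjacency in $G$ after the intersection. I would verify this point carefully. Granting it, $\eta'$ is a depth-$r$ model of $H[A]$ in $G$, so $\|H[A]\|\le \nabla_r(G)|A|/2$.

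Next consider the vertices in $B$. Each $\eta(u)$ with $u\in B$ lies inside one pendant tree, and the subgraph $G'-V(G)$ is a forest. Hence $H[B]$ is a minor of a forest, so it is itself a forest and $\|H[B]\|<|B|$. For edges between $A$ and $B$, orient each such edge from its $B$-endpoint, considering $u\in B$. The set $\eta(u)$ is a connected subtree of a pendant tree, and the tree has a unique path toward $V(G)$. The component of the tree minus $\eta(u)$ containing the root side therefore meets at most one $A$-set adjacent to $u$ through the parent edge. The child-side subtrees contain no vertex of $G$, so no $A$-set reaches them. Thus each $u\in B$ has at most one $A$-neighbour, giving $\|H[A,B]\|\le |B|$.

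Summing the three counts gives $\|H\|\le \nabla_r(G)|V(H)|/2+2|V(H)|$, so $\avgdeg(H)\le \nabla_r(\Cc)+4$ for every depth-$r$ minor $H$ of every $G'\in\Cc'$. Hence $\nabla_r(\Cc')$ is finite for every $r$, and $\Cc'$ has bounded expansion. The main obstacle is the first step: checking rigorously that intersecting the branch sets with $V(G)$ preserves both the radius bound and all edges of $H[A]$, in particular when a boundary edge runs through a pendant tree shared between two $A$-branch sets.
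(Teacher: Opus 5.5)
Your proof is correct and takes a different route from the paper's. One justification needs repair. In the first step you say that ``$v$ itself witnesses adjacency in $G$''. That cannot be the reason, since a single vertex cannot witness an edge between two disjoint branch sets. The actual reason is that this case never arises.

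Suppose an $A$-branch set $\eta(u)$ contains a vertex $a$ of a component $T$ of $G'-V(G)$ hanging from $v\in V(G)$. Then $v\in\eta(u)$, because $G'[\eta(u)]$ is connected, meets $V(G)$, and every path from $a$ to $V(G)$ passes through $v$. Every $G'$-neighbour $b$ of $a$ lies in $T\cup\{v\}$. So if $b$ belonged to a different $A$-branch set $\eta(w)$, the same reasoning would give $v\in\eta(w)$, contradicting disjointness. Hence every $G'$-edge between two $A$-branch sets has both endpoints in $V(G)$, and so is an edge of $G$ between $\eta'(u)$ and $\eta'(w)$. This is the same observation you already use for re-centering and for the $A$--$B$ count, so the ``main obstacle'' you flag is resolved. (A minor point: $\|H[B]\|<|B|$ should be $\le$ to cover $B=\emptyset$.)

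The paper instead proves a one-step inequality for adding a single pendant $v'$ to $v$, with two cases:
\begin{itemize}
\item No branch set equals $\{v'\}$. Then deleting $v'$ from its branch set gives a depth-$r$ model of the same $H'$ in $G$.
\item Some branch set equals $\{v'\}$. Its vertex has degree at most $1$ in $H'$, and deleting it leaves a depth-$r$ minor of $G$.
\end{itemize}
The paper then iterates this. Iteration needs a bound that is stable under repetition, which is why the paper uses the form $\nabla_r(G')\le\max(c,\nabla_r(G))$; an additive one-step bound would blow up with the number of pendants. That argument is purely local and never needs the global shape of $G'$.

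Your route does need that shape: each component of $G'-V(G)$ is a tree attached to $G$ by a single edge, and you make three separate edge counts. In exchange it handles all pendants in one pass and gives the explicit bound $\nabla_r(\Cc)+4$. It also avoids relying on how the average degree behaves when a degree-$1$ vertex is added. That fact is where the paper's constant is slightly off: with its $\avgdeg$ normalization, adding a pendant to $K_2$ gives $P_3$ with average degree $4/3>1$. So the threshold in the paper's inequality should be $2$ rather than $1$, which does not affect the conclusion.
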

\begin{proof}
	Fix $r\in \N$.
	We prove that if $G'$ is obtained from $G$ by adding a pendant $v'$ to $v\in V(G)$, then
	\begin{equation}\label{eq:zubr}
		\nabla_r(G')\leq \max\left(1,\nabla_r(G)\right).
	\end{equation}
	By applying \eqref{eq:zubr} repeatedly to every graph in $G'\in \Cc'$, we conclude that
	\[\nabla_r(\Cc')\leq \max\left(1,\nabla_r(\Cc)\right),\]
	so $\Cc$ having bounded expansion implies that $\Cc'$ has bounded expansion as well.
	
	Towards \eqref{eq:zubr}, let $H'$ be a depth-$r$ minor of some $G'$, and let $\eta'$ be the witnessing model. We consider two cases. First, if $\eta'$ contains no branch set equal to $\{v'\}$, then it is easy to see that $\eta$ obtained by removing $v'$ from the branch set it is contained in (if any) is a depth-$r$ model of $H'$ in $G$. Hence $\avgdeg(H')\leq \nabla_r(G)$. Second, if there exists $x\in V(H')$ such that $\eta'(x)=\{v'\}$, then $x$ must have degree at most $1$ in $H'$ and $\eta$ obtained from $\eta'$ by removing the branch set $\{v'\}$ is a depth-$r$ model of $H\coloneqq H'-x$ in $G$. It is easy to see that adding a pendant to a graph does not increase its average degree unless it stays below $1$, hence $\avgdeg(H')\leq \max(1,\avgdeg(H))\leq \max(1,\nabla_r(\Cc))$. This proves \eqref{eq:zubr} 
\end{proof}

The next operation is taking congested shallow minors. For $c,r\in \N$, a \emph{congestion-$c$ depth-$r$ model} $\eta$ of a graph $H$ in a graph $G$ is defined similarly to a (standard) depth-$r$ model, except for the following amendments:
\begin{itemize}
	\item In the first condition, we allow the branch sets $\{\eta(u)\colon u\in V(H)\}$ to intersect, but every vertex $v$ of $G$ may belong to at most $c$ of them.
	\item In the second condition, we demand that whenever $uv$ is an edge of $H$, the branch sets $\eta(u)$ and $\eta(v)$ intersect or there is an edge connecting them in $G$.
\end{itemize}
Note that for $c=1$, these are just standard depth-$r$ models.

For a class of graphs $\Cc$, by $\Minors^{c,r}(\Cc)$ we denote the class of all congestion-$c$ depth-$r$ minors of graphs from $\Cc$. We will use the following standard statement.

\begin{theorem}[see e.g.~{\cite[Proposition 4.6]{sparsity}} or {\cite[Chapter 1, Lemma 2.27]{sparsityNotes}}]\label{thm:cong-minors}
	For every graph class $\Cc$ of bounded expansion and $c,r\in \N$, the class $\Minors^{c,r}(\Cc)$ also has bounded expansion.
\end{theorem}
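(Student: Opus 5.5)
The plan is to reduce congested shallow minors to ordinary shallow minors of a blown-up graph, and then to show that this blow-up operation preserves bounded expansion.

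\textbf{Step 1: blow up $G$.} For a graph $G$ and $c\in\N$, let $G\bullet K_c$ be the lexicographic product. Its vertex set is $V(G)\times[c]$. Two vertices $(v,i)$ and $(w,j)$ are adjacent iff either $v=w$ and $i\neq j$, or $vw\in E(G)$.

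\textbf{Step 2: untangle the model.} Let $\eta$ be a congestion-$c$ depth-$r$ model of $H$ in $G$. Every vertex $v\in V(G)$ lies in at most $c$ branch sets, so we fix an injective labelling $\lambda_v\colon\{u\in V(H)\colon v\in\eta(u)\}\to[c]$. Define
\[\eta'(u)\coloneqq\{(v,\lambda_v(u))\colon v\in\eta(u)\}.\]
We then check the following.
\begin{itemize}
\item The sets $\eta'(u)$ are pairwise disjoint, because each $\lambda_v$ is injective.
\item Each $\eta'(u)$ induces a copy of $G[\eta(u)]$, since copies of adjacent vertices are adjacent regardless of labels. Hence it is connected of radius at most $r$.
\item Let $uu'\in E(H)$. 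If $\eta(u)$ and $\eta(u')$ share a vertex $v$, then $(v,\lambda_v(u))$ and $(v,\lambda_v(u'))$ are distinct and adjacent. If instead some edge $vw$ of $G$ joins the two branch sets, then the corresponding copies are adjacent.
\end{itemize}
Thus $H$ is an ordinary depth-$r$ minor of $G\bullet K_c$. This gives $\Minors^{c,r}(\Cc)\subseteq\Minors^{1,r}(\Cc\bullet K_c)$, where $\Cc\bullet K_c\coloneqq\{G\bullet K_c\colon G\in\Cc\}$. Bounded expansion is closed under (ordinary) shallow minors, since a depth-$s$ minor of a depth-$r$ minor is a depth-$(2rs+r+s)$ minor. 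It therefore suffices to show that $\Cc\bullet K_c$ has bounded expansion.

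\textbf{Step 3: the blow-up preserves bounded expansion.} This is the main obstacle. Bounding $\nabla_r(G\bullet K_c)$ directly via minors is circular, because projecting a minor of $G\bullet K_c$ back to $G$ yields exactly a congested minor. I would therefore use the characterization of bounded expansion by weak coloring numbers (Zhu): a class has bounded expansion iff $\wcol_r$ is bounded for every $r$. Quantitatively, $\wcol_r(G)$ is bounded by a function of $\nabla_{r'}(G)$ for suitable $r'$, and conversely $\nabla_r(G)$ is bounded polynomially in $\wcol_{4r+1}(G)$.

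Take an ordering $\sigma$ of $G$ witnessing $\wcol_r(G)$. Order $V(G)\times[c]$ by $\sigma$ on the first coordinate, placing the $c$ copies of each vertex consecutively. Suppose $(w,j)$ is weakly $r$-reachable from $(v,i)$ via a path all of whose vertices are at least $(w,j)$. Project this path to $G$ and delete repeated consecutive vertices. This gives a walk of length at most $r$ from $v$ to $w$ all of whose vertices are at least $w$ in $\sigma$, and it contains such a path. Hence $w\in\WReach_r[G,\sigma,v]$, and therefore
\[\wcol_r(G\bullet K_c)\le c\cdot\wcol_r(G).\]
Combining this with both directions of Zhu's equivalence bounds $\nabla_r(\Cc\bullet K_c)$ for every $r$, which completes the argument.
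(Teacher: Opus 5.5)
Your proof is correct. The paper gives no proof of this statement and only cites the literature, so the relevant comparison is with the cited sources.

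Your Steps~1--2 coincide with the standard reduction used there. An untangled congestion-$c$ depth-$r$ model in $G$ is an ordinary depth-$r$ model in the lexicographic product $G\bullet K_c$, and your checks of disjointness, radius, and the two adjacency cases are fine.

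The divergence is in Step~3.
\begin{itemize}
\item The cited Proposition~4.6 of the Sparsity book bounds the blow-up directly, as $\nabla_r(G\bullet K_c)\le \max(2r(c-1)+1,c^2)\cdot\nabla_r(G)$, by a combinatorial argument. So the obstacle you point out is real: projecting back does produce overlapping branch sets. It is not a circularity, though, because that argument deals with the overlaps directly.
\item You instead pass to weak coloring numbers, where the blow-up is trivial. Write $\sigma'$ for your order on $V(G)\times[c]$. It refines $\sigma$ on first coordinates, so a weak-reachability path in $G\bullet K_c$ ending at $(w,j)$ projects to a walk in $G$ that stays $\sigma$-above $w$. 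This gives $\WReach_r[G\bullet K_c,\sigma',(v,i)]\subseteq \WReach_r[G,\sigma,v]\times[c]$, and hence $\wcol_r(G\bullet K_c)\le c\cdot\wcol_r(G)$, exactly as you claim.
\end{itemize}

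Your route makes Step~3 a two-line computation. Its cost is that it invokes Zhu's characterization in both directions. Nothing is circular there, since the standard proofs of that equivalence do not use congested minors. The other cost is weaker quantitative control. You bound $\nabla_s$ of a congested minor through generalized coloring numbers at a larger radius. The direct route instead gives a bound linear in $\nabla_{2rs+r+s}(G)$. The paper only needs finiteness of every $\nabla_s$, so either route suffices.
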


We use \cref{thm:cong-minors} to argue a classic result from Sparsity 
about classes of bounded expansion admitting \emph{fraternal augmentations} of bounded maximum outdegree. Let $G$ be a graph and $\vec G$ be an orientation of $G$. The \emph{fraternal augmentation} of $\vec G$, denoted as $\fra(\vec G)$, is the undirected graph with vertex set $V(\fra(\vec G))=V(\vec G)=V(G)$, obtained from $\vec G$ as follows:
\begin{itemize}
    \item $E(\fra(\vec G))$ contains all edges $E(G)$, plus
	\item for every pair of edges $(w, u),(w, v)\in E(\vec G)$ with a common tail $w$ (such pairs will be said to form a \emph{fork}), add an undirected edge $uv$, if not already present.
\end{itemize}
Note that thus, $\fra(\vec G)$ is a supergraph of $G$, obtained by adding all the \emph{fraternal} edges $uv$ as above.
Next, for an undirected graph $G$ and $d \in \N$ we define $\Fra(G,d)$
to be the graph class constructed as follows: for every $d$-orientation $\vec G$ of $G$, include $\fra(\vec G)$ and all its subgraphs in $\Fra(G,d)$.
Finally, for a class $\Cc$ of graphs, we let $\Fra(\Cc,d)=\bigcup_{G \in \Cc} \Fra(G,d)$.
The following lemma is the key observation about the behavior of fraternal augmentations on classes of bounded expansion.
\begin{lemma}\label{lem:fraternal}
	Let $\Cc$ be a graph class of bounded expansion and $d\in \N$. Then $\Fra(\Cc,d)$ also has bounded expansion.
\end{lemma}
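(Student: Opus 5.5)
Since $\Fra(\Cc,d)$ is closed under subgraphs by definition, I will deduce the lemma from \cref{thm:cong-minors} and \cref{obs:pendants-bnd-exp}. Concretely, I aim to show that every graph $\fra(\vec G)$, for $G\in\Cc$ and $\vec G$ a $d$-orientation of $G$, lies in $\Minors^{c,1}(\Cc')$, where $\Cc'$ is the pendant-closure of $\Cc$ and $c=d+1$. Depth-$1$ minors of graphs in $\Minors^{c,1}(\Cc')$ are again congested shallow minors of $\Cc'$, with bounded congestion and depth. The class $\Minors^{c,1}(\Cc')$ is closed under subgraphs, since we may drop branch sets. Hence every graph in $\Fra(\Cc,d)$ is a congestion-$(d+1)$ depth-$1$ minor of a graph in a bounded expansion class, and \cref{thm:cong-minors} finishes the proof.

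\textbf{The model.} Fix $G\in\Cc$ and a $d$-orientation $\vec G$. For every vertex $u$, set
\[\eta(u)\coloneqq\{u\}\cup\{w\colon (w,u)\in E(\vec G)\}.\]
That is, $u$ together with all its in-neighbours. This set induces a star centred at $u$, so it is connected and has radius at most $1$. A vertex $w$ belongs to $\eta(u)$ only if $u=w$ or $u$ is an out-neighbour of $w$. Since $w$ has at most $d$ out-neighbours, $w$ lies in at most $d+1$ branch sets, which gives congestion $d+1$.

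\textbf{Edges of $\fra(\vec G)$.} An original edge $uv$ of $G$ is realised because $\eta(u)\ni u$, $\eta(v)\ni v$, and $uv\in E(G)$. For a fraternal edge $uv$ coming from a fork $(w,u),(w,v)$, the vertex $w$ lies in both $\eta(u)$ and $\eta(v)$, so the two branch sets intersect, which is allowed in congested models. So here the pendant closure is not even needed, and $\Cc$ itself suffices. I would keep \cref{obs:pendants-bnd-exp} in reserve only in case the precise definition of congested model requires branch sets to be nonempty or disjoint from something; it is not needed.

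\textbf{Main obstacle.} The only delicate point is matching the exact convention of \cref{thm:cong-minors}: the radius of a branch set is measured in the induced subgraph $G[\eta(u)]$, and intersecting branch sets count as adjacent. Both hold here: the star centred at $u$ lies inside $G[\eta(u)]$ and gives radius at most $1$, and intersection counts as adjacency by the definition given above. Hence $\nabla_r(\Fra(\Cc,d))\le\nabla_r(\Minors^{d+1,1}(\Cc))<\infty$ for every $r$.
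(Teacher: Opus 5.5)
Your proof is correct and is the same argument as the paper's: exhibit $\fra(\vec G)$ as a congestion-$(d+1)$ depth-$1$ minor of $G$ and invoke \cref{thm:cong-minors}. The detour through the pendant closure and the remark about depth-$1$ minors of minors are unnecessary. One point in your favour: your branch sets $\{u\}\cup\{w\colon (w,u)\in E(\vec G)\}$ use in-neighbours, and that is the right choice. The paper writes $\{u\}\cup\{v\colon (u,v)\in E(\vec G)\}$, i.e.\ out-neighbours, which under its common-tail fork convention neither bounds congestion nor realises fraternal edges; for example, $w\to u$, $w\to v$ gives $\eta(u)=\{u\}$ and $\eta(v)=\{v\}$, so yours is the correct form of that step.
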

\begin{proof}
	We observe that if $G'=\fra(\vec G)$ is the fraternal augmentation of an orientation $\vec G$ of $G$ of maximum outdegree~$d$, then $G'$ is a congestion-$(d+1)$ depth-$1$ minor of $G$. Indeed, if for $u\in V(G)$ we define
	\[\eta(u)\coloneqq \{u\}\cup \{v~\mid~(u, v)\in E(\vec G)\},\]
	then $\eta$ is a congestion-$(d+1)$ depth-$1$ model of $G'$ in $G$. It follows that $\Fra(\Cc,d)\subseteq \Minors^{d+1,1}(\Cc)$, so $\Fra(\Cc,d)$ has bounded expansion by \cref{thm:cong-minors}.
\end{proof}

\subsection{Computation}

\paragraph*{Computation model.} When working with a graph on $n$ vertices in the algorithmic context, we use the standard word RAM model with words of length $\Oh(\log n)$. In particular, space complexity of data structures is measured in the number of words occupied. We remark that all the numbers appearing in the computation, e.g. vertex weights or homomorphism counts, will be always bounded polynomially in $n$, and hence they fit within a constant number of words and arithmetic operations on them can be executed in constant time.

\paragraph*{Dynamic graphs and data structures.} Let us introduce some terminology to facilitate speaking about fully dynamic data structures for graphs that are updated over time.

By a \emph{dynamic graph} we mean a graph $G$ whose vertex set stays fixed, but whose edge set is modified over time by updates of the following types:
\begin{itemize}
	\item \emph{insert} an edge $uv$, if not already present; and
	\item \emph{remove} the edge $uv$, if present.
\end{itemize}
By a \emph{data structure} for a dynamic graph $G$ we mean a data structure that maintains $G$ and supports a superset of the following basic methods:
\begin{itemize}
	\item $\init{n}$: Initialize the data structure on an edgeless graph $G$ on $n$ vertices.
	\item $\add{uv}$: Insert the edge $uv$.
	\item $\remove{uv}$: Remove the edge $uv$.
\end{itemize}
These methods will be typically augmented by some further queries, specified in the description of the data structure. By the \emph{update time} of a data structure we mean the time complexity of methods $\add{uv}$ and $\remove{uv}$, and by the \emph{initialization time} we mean the time complexity of method $\init{n}$. While for simplicity we assume that the initialization works for an edgeless graph, note that one can initialize any graph by first initializing an edgeless graph and then adding the edges one by one.

In this work, we deal with classes of graphs of bounded expansion, which are by definition $d$-degenerate for some constant $d$. Hence, in all our data structures, we use the data structure of Brodal and Fagerberg (see \cref{thm:bf}) to answer adjacency queries. This introduces an additional factor of $d$ in the query times of our data structures, which is submerged within the asymptotic notation.

Some of our data structures are randomized \emph{against an oblivious adversary}. By this we mean that the data structure is run on some stream of updates and queries and the answer to every query is correct with probability lower bounded by a specified value (typically, $1-\eps$ for a parameter $\eps>0$ fixed in the context), assuming that the consecutive updates/queries \emph{do not} depend on the answers previously returned by that data structure. We remark that while the answer to every query is correct with probability at least, say, $1-\eps$, the answers to the queries are \emph{not} independent random variables. In fact, in all the randomized data structures proposed in this work, the answers to the queries depend on some common random bits fixed upon the initialization of the data structure. We also remark that if we are asked only about deciding the existence of distance-$r$ dominating or independent set (i.e., if we do not need to return a certifying distance-$r$ dominating set), then assuming we answer correctly, we do not leak any random bits chosen by the data structure. So in the restricted variant with binary answers, the data structures guarantees of both \cref{thm:main-domset-be} and \cref{thm:main-indset-be} may be strengthened to answer correctly any series of $q$ queries with probability at least $1-q\eps$ against an adaptive~adversary.

\paragraph*{Brodal--Fagerberg data structure.}
We recall a classic result by Brodal and Fagerberg that we will use multiple times throughout.
\begin{theorem}[Brodal--Fagerberg, \cite{BrodalF99}]\label{thm:bf}
	Let $d \in \N$. There exists a deterministic data structure that maintains a dynamic graph $G$, guaranteed to be $d$-degenerate at all times, under edge insertions and deletions.
	The data structure stores an explicit $4d$-orientation $\vec{G}$ of $G$.
	An edge insertion has amortized cost $\Oh(1)$, an edge deletion has amortized cost $\Oh(d + \log n)$, and the amortized number of edges that change orientation in a single operation is $\Oh(\log n)$. Moreover, the reorientations can be reported within the same time bounds.
	The data structure uses $\Oh(n + m)$ memory, where $n$ and $m$ are the number of vertices of $G$ and the current number of edges of $G$, respectively. The data structure requires the knowledge of $d$.
	In addition to that, the data structure can also handle adjacency queries in $\Oh(d)$ worst case time per query.
\end{theorem}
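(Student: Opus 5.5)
The statement to be proved is \cref{thm:bf}, the Brodal--Fagerberg data structure. It is quoted from~\cite{BrodalF99}, so the plan is to recall the original argument and check that it gives each claimed bound.

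\textbf{The structure.} Maintain an orientation $\vec G$ in which every vertex stores its out-list explicitly, together with its out-degree. Adjacency queries are answered by scanning the out-lists of both endpoints, which takes $\Oh(d)$ time once the invariant ``out-degree at most $4d$'' holds.

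\textbf{Insertion.} To insert $uv$, orient it $u\to v$. If $u$ now has out-degree above $4d$, run a reset cascade: repeatedly pick an overflowing vertex $w$ and flip all of its out-edges to point into $w$. This may push its neighbours over the threshold, and we repeat until no vertex overflows.

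\textbf{Deletion.} To delete $uv$, remove it from the relevant out-list; this costs $\Oh(d)$ after locating it by scanning. The additional $\Oh(\log n)$ term in the amortized deletion bound comes from the potential accounting below.

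\textbf{Amortization.} Fix, for the analysis only, an offline $d$-orientation $\vec G^*$ of the current graph; it exists by $d$-degeneracy. Let the potential $\Phi$ be the number of edges whose orientation in $\vec G$ disagrees with $\vec G^*$.

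A reset of a vertex $w$ with more than $4d$ out-edges flips more than $4d$ edges. At most $d$ of them agreed with $\vec G^*$ before the flip, since $w$ has out-degree at most $d$ in $\vec G^*$. So at least $3d$ disagreeing edges become agreeing, and $\Phi$ drops by at least $2d$ per reset. Each insertion raises $\Phi$ by at most $1$.

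The complication is that $\vec G^*$ must change as the graph changes. The original paper handles this by comparing against an arbitrary offline sequence of $d$-orientations. It shows that the total number of flips is $\Oh(\text{number of operations}+\text{number of flips in the offline sequence})$. It then exhibits an offline strategy that uses $\Oh(\log n)$ amortized flips per deletion, by a layered, degeneracy-based argument over $\Oh(\log n)$ levels of vertex partitions. This is where both the $\Oh(\log n)$ deletion term and the $\Oh(\log n)$ bound on reorientations come from.

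\textbf{Main obstacle.} The hard step is precisely this competitive comparison with an adaptive offline orientation, including the $\log n$ bound for the offline strategy. In this paper I would simply cite~\cite{BrodalF99} for it, since \cref{thm:bf} is stated as an imported result.

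Reporting reorientations costs nothing extra, because each flip is executed explicitly by the cascade. Memory is $\Oh(n+m)$, since we store only the out-lists and the counters.
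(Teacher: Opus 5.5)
Your plan matches the paper's treatment: \cref{thm:bf} is imported from \cite{BrodalF99} without proof, so deferring the competitive analysis to that citation is exactly what the paper does. One correction to your recollection of the internals: a cheap offline comparison sequence cannot consist of exact $d$-orientations. Already for $d=1$, inserting and then deleting, cyclically, the edges $p_1q_1,p_kq_k,p_1q_k,p_kq_1$ between two paths $p_1\cdots p_k$ and $q_1\cdots q_k$ forces $\Omega(n)$ amortized flips for every sequence of $1$-orientations. One must instead compare against orientations of some outdegree $d'$ with slack (e.g.\ $d'=\lceil 3d/2\rceil$). This slack makes the sets reachable along out-edges from an overfull vertex grow geometrically, which gives reorientation paths of length $\Oh(\log n)$. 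A reset flipping $s>4d$ edges then lowers the disagreement potential by at least $s-2d'\geq s/5$; a fixed drop such as your $2d$ does not by itself bound the number of flips.
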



\section{Progressive exploration}\label{sec:ladders}

In this section we recall the framework of \emph{progressive exploration}, proposed by Fabia\'nski et al.~\cite{FabianskiPST19,Fabianski2018arxiv} for  designing parameterized algorithms for \drds{r} and \dris{r}, among other problems of similar kind. While \cite{FabianskiPST19} is the conference version, we refer to the arxiv version \cite{Fabianski2018arxiv} for all the proofs. Our main focus here is to explain that the progressive exploration algorithms for \drds{r} and \dris{r} can be executed using a bounded number of calls to the following two basic queries, each working on a graph $G$ and stipulated by a distance parameter $r\in \N$:
\begin{itemize}
	\item $\far{S,r}$: Given a set $S \subseteq V(G)$, return a vertex $v$ of $G$ such that $\dist_G(v,S)>r$; or $\bot$ if no such vertex exists.
	\item $\near{S,r}$: Given a set $S \subseteq V(G)$, return a vertex $v$ of $G$ such that $\dist_G(v,s)\leq r$ for all $s\in S$; or $\bot$ if no such vertex exists.
\end{itemize}
For the \dris{r} problem, we will need a slight generalization of these.
This will allow us to concentrate in subsequent sections on implementing those queries efficiently in the setting of dynamic data structures.

\subsection{Domination}
For \drds{r}, Fabia\'nski et al. proposed a \emph{semi-ladder algorithm} that proceeds as follows (see \cref{alg:1} for a pseudo-code). The algorithm is employed on a graph $G$ and searches for a distance-$r$ dominating set of size $k$, for given parameters $r,k\in \N$. It assumes access to two queries:
\begin{itemize}
	\item $\notDominated{D,r}$:
	Given a set $D$ of at most $k$ vertices, return a vertex $v$ that is not distance-$r$ dominated by~$D$, that is, such that $\dist_G(v,D)>r$. In case no such vertex $v$ exists, return $\bot$.
	\item $\candidate{W,r,k}$: Given a vertex subset $W$, return a set $D$ of at most $k$ vertices that distance-$r$ dominates~$W$, that is, $\dist_G(w,D)\leq r$ for all $w\in W$. In case no such set $D$ exists, return $\bot$.
\end{itemize}
Note that the $\notDominated{D,r}$ query is just equivalent to the $\far{D,r}$ query. We later~show that also the $\candidate{W,r,k}$ query can be implemented using a bounded (in terms of $k$ and~$|W|$) number of $\near{S,r}$ queries with $S\subseteq W$.

The algorithm iteratively constructs two sequences:
\begin{itemize}
	\item a sequence of \emph{candidates} $D_1,D_2,\ldots$, each being a set of at most $k$ vertices; and 
	\item a sequence of \emph{witnesses} $w_1,w_2,\ldots$, each being a single vertex of $G$.
\end{itemize}
After the $(i-1)$st iteration, candidates $D_1,\ldots,D_{i-1}$ and witnesses $w_1,\ldots,w_{i-1}$ are already constructed. Then the $i$th iteration proceeds as follows:
\begin{itemize}
	\item First, we call $\candidate{W,r,k}$ where $W$ comprises of all the  witnesses gathered so far, that is, $W\coloneqq \{w_1,\ldots,w_{i-1}\}$. This either yields the next candidate $D_i$ that distance-$r$ dominates $W$, or a conclusion that no such $D_i$ exists. In the latter case, we may conclude that $G$ has no distance-$r$ dominating set of size at most $k$.
	\item Second, we call $\notDominated{D_i,r}$ to verify whether $D_i$ already distance-$r$ dominates the whole graph. If so, then we have found a solution, and otherwise the call provides the next witness $w_i$, with which we may proceed to the next iteration.
\end{itemize}
It is clear that when this algorithm returns an answer, then this answer is always correct. The main insight of Fabia\'nski et al. is that since every candidate $D_i$ distance-$r$ dominates all the witnesses $w_1,\ldots,w_{i-1}$ but not the witness $w_i$, the constructed candidates and witnesses form a pattern called a \emph{semi-ladder}. Since semi-ladders cannot be too long in classes of bounded expansion, this provides a bound on the number of iterations that the algorithm executes before breaking the loop and reporting an outcome.

\begin{algorithm}
	\SetKwInOut{Input}{Input}
	\SetKwInOut{Output}{Output}
	
	\nonl
	\underline{procedure DomSet} $(G,r,k)$\\
	\Input{A graph $G$ and parameters $r,k\in \N$}
	\Output{A distance-$r$ dominating set of size $k$, or $\bot$ if no such set exists.}
	\medskip
	$W \coloneqq \emptyset$\;
	\Repeat
	{
		$D\coloneqq \candidate{W,r,k}$\;
		\If{$D = \bot$}
		{
			\KwRet{$\bot$} \;
		}
		$w \coloneqq \notDominated{D,r}$ \;
		\If{$w = \bot$}
		{
			\KwRet{$D$} \;
		}
		$W \coloneqq W \cup \{ w \}$ \;
	}
	\caption{Semi-ladder algorithm for distance-$r$ dominating set of size $k$ in a graph $G$}\label{alg:1}
\end{algorithm}

To be more precise, a \emph{distance-$r$ semi-ladder} of order $\ell$ in a graph $G$ is a pair of sequences of vertices $a_1,\ldots,a_\ell$ and $b_1,\ldots,b_\ell$ such that
\begin{itemize}
	\item $\dist_G(a_i,b_i)>r$ for each $i\in [\ell]$; and
	\item $\dist_G(a_i,b_j)\leq r$ for all $1\leq j<i\leq \ell$.
\end{itemize}
The \emph{distance-$r$ semi-ladder index} of a graph $G$, denoted $\sli_r(G)$, is the largest order of a distance-$r$ semi-ladder that can be found in $G$. For a graph class $\Cc$, we define its distance-$r$ semi-ladder index as $\sli_r(\Cc)\coloneqq \sup_{G\in \Cc} \sli_r(G)$. Note that this value may be infinite if graphs from $\Cc$ contain distance-$r$ semi-ladders of arbitrarily large order. But as proved by Fabia\'nski et al., this does not happen in bounded expansion~classes.

\begin{theorem}[{\cite[Lemma~29]{Fabianski2018arxiv}}]\label{thm:sli-be}
	For every class $\Cc$ of bounded expansion and $r\in \N$, $\sli_r(\Cc)$ is finite.
\end{theorem}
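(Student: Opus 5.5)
The plan is a Ramsey argument over a vertex ordering of bounded weak $r$-coloring number. Fix $\Cc$ of bounded expansion and $r\in\N$. I will use the standard fact from Sparsity (Zhu's theorem) that bounded expansion implies bounded weak coloring numbers. So there is a constant $c=c(\Cc,r)$ such that every $G\in\Cc$ has a vertex ordering $\sigma$ with $|\WReach_r[G,\sigma,v]|\le c$ for all $v$. Here $\WReach_r[G,\sigma,v]$ is the set of vertices $m$ reachable from $v$ by a path of length at most $r$ on which $m$ is the $\sigma$-minimum. This fact is not stated earlier in the paper, so I will cite it. I list each set $\WReach_r[G,\sigma,v]$ in $\sigma$-order, so that "the $q$-th element of $\WReach_r[v]$" is well defined.

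Now take a distance-$r$ semi-ladder $a_1,\dots,a_\ell$, $b_1,\dots,b_\ell$ in such a $G$. For every pair $j<i$ we have $\dist(a_i,b_j)\le r$, so fix a shortest path from $a_i$ to $b_j$ and let $m_{ji}$ be its $\sigma$-minimal vertex. Both subpaths from $m_{ji}$ have length at most $r$ and have $m_{ji}$ as their minimum. Hence $m_{ji}\in\WReach_r[a_i]\cap\WReach_r[b_j]$, and $\dist(a_i,m_{ji})+\dist(m_{ji},b_j)\le r$. Colour the pair $\{j,i\}$ by a tuple $(q,p,d_a,d_b)$ defined as follows:
\begin{itemize}
\item $q$ is the position of $m_{ji}$ in $\WReach_r[a_i]$;
\item $p$ is the position of $m_{ji}$ in $\WReach_r[b_j]$;
\item $d_a=\dist(a_i,m_{ji})$ and $d_b=\dist(m_{ji},b_j)$.
\end{itemize}
This uses at most $c^2(r+1)^2$ colours, and $d_a+d_b\le r$ for every colour that occurs.

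By Ramsey's theorem, if $\ell\ge R_3(c^2(r+1)^2)$ there are indices $j<i<k$ such that all three pairs get the same colour $(q,p,d_a,d_b)$. I then derive a contradiction in three steps.
\begin{enumerate}
\item The pairs $(i,k)$ and $(j,k)$ have the same $q$. So $m_{ik}$ and $m_{jk}$ are both the $q$-th element of $\WReach_r[a_k]$, hence they are one vertex $m$.
\item From the pair $(i,k)$, $m$ is the $p$-th element of $\WReach_r[b_i]$ and $\dist(m,b_i)=d_b$. From the pair $(j,k)$, $m$ is the $p$-th element of $\WReach_r[b_j]$.
\item The pair $(j,i)$ has the same $p$, so $m_{ji}$ is the $p$-th element of $\WReach_r[b_j]$, that is, $m_{ji}=m$. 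The same pair gives $\dist(a_i,m)=d_a$.
\end{enumerate}
Therefore $\dist(a_i,b_i)\le d_a+d_b\le r$, contradicting the first semi-ladder condition. Hence $\sli_r(\Cc)< R_3(c^2(r+1)^2)$.

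The main subtlety is the identification step: the positional indexing must be canonical, which is why the sets are listed in $\sigma$-order. That is what lets equality of colours force equality of the witnessing vertices across different pairs. The rest of the argument only needs bounded weak $r$-coloring numbers, which is where bounded expansion is used.
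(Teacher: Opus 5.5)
Your proof is correct, and it takes a genuinely different route from the paper. The paper proves nothing here: it imports the statement from Fabia\'nski et al.\ and remarks that it holds more generally for nowhere dense classes. For that general setting, the overview says the known argument relies on flatness (uniform quasi-wideness).

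You replace the citation with Zhu's bounded-weak-coloring theorem plus a multicolour Ramsey argument, and the identification chain closes:
\begin{itemize}
\item the common $q$ in $\WReach_r[a_k]$ gives $m_{ik}=m_{jk}$;
\item the common $p$ in $\WReach_r[b_j]$ gives $m_{ji}=m_{jk}$;
\item $d_a$ is read from the pair $(j,i)$ and $d_b$ from the pair $(i,k)$;
\item $d_a+d_b=\dist(a_i,b_j)\le r$ holds because you chose shortest paths.
\end{itemize}
The weak-coloring fact you cite is not foreign to the paper: its overview of the approximation result invokes it informally. Also, the enumeration of each $\WReach_r[v]$ need not follow $\sigma$; any fixed enumeration depending only on $v$ works.

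Your approach buys a short, self-contained proof. It has two costs. First, it is tied to bounded expansion: in nowhere dense classes $\wcol_r$ is bounded only by $n^{o(1)}$, so your colour count would grow with $n$, and the nowhere dense generalization the paper mentions is out of reach. Second, your bound on $\sli_r(\Cc)$ is a triangle Ramsey number in $C=c^2(r+1)^2$ colours, of order $e\cdot C!$. This only affects the constant in the exponent $k^{\sli_r(\Cc)}$ of the running time, so it is harmless for the paper's purposes.
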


We remark that \cref{thm:sli-be} holds even in a larger generality when $\Cc$ is nowhere dense, but we will not use this here. We also note that for $r=1$, excluding a biclique suffices to bound the semi-ladder index.

\begin{theorem}[{\cite[Lemma~33]{Fabianski2018arxiv}}]\label{thm:sli-biclique}
	For every graph $G$ that does not contain the biclique $K_{t,t}$ as a subgraph, for some $t\in \N$, we have $\sli_1(G)<3t$.
\end{theorem}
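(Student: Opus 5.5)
Suppose, for contradiction, that $G$ contains a distance-$1$ semi-ladder $a_1,\ldots,a_\ell$, $b_1,\ldots,b_\ell$ of order $\ell\geq 3t$. The plan is to extract from it $t$ vertices $a_i$ and $t$ vertices $b_j$, all pairwise distinct, such that every chosen index $j$ is smaller than every chosen index $i$. The semi-ladder condition $\dist_G(a_i,b_j)\leq 1$ then forces $a_ib_j\in E(G)$ for every such pair, since the two vertices are distinct. This yields a $K_{t,t}$ subgraph, a contradiction. The only real work is to handle coincidences among the vertices, because $\dist\leq 1$ allows equality.

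First I check that the $a_i$ are pairwise distinct and the $b_j$ are pairwise distinct.
\begin{itemize}
\item If $b_j=b_{j'}$ with $j<j'$, then $\dist_G(a_{j'},b_{j'})=\dist_G(a_{j'},b_j)\leq 1$. This contradicts $\dist_G(a_{j'},b_{j'})>1$.
\item If $a_i=a_{i'}$ with $i<i'$, then $\dist_G(a_i,b_i)=\dist_G(a_{i'},b_i)\leq 1$. This contradicts $\dist_G(a_i,b_i)>1$.
\end{itemize}
Consequently, each vertex $a_i$ coincides with at most one of the vertices $b_j$.

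Now take $A\coloneqq\{a_{2t+1},\ldots,a_{3t}\}$, which is a set of exactly $t$ distinct vertices. Among the $2t$ distinct vertices $b_1,\ldots,b_{2t}$, at most $|A|=t$ equal some vertex of $A$. Hence we can pick a set $B$ of $t$ of them that is disjoint from $A$.

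For every $a_i\in A$ and $b_j\in B$ we have $j\leq 2t<i$. The semi-ladder condition gives $\dist_G(a_i,b_j)\leq 1$, and $a_i\neq b_j$, so $a_ib_j$ is an edge. Thus $A$ and $B$ span a $K_{t,t}$ subgraph of $G$, contradicting the assumption. Therefore $\ell\leq 3t-1$, that is, $\sli_1(G)<3t$.

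The main obstacle is the bookkeeping of coincidences, namely showing that the vertices within each side are distinct and that discarding $t$ of the $b$'s suffices. This is exactly why the first $2t$ rungs are used for the $b$'s.
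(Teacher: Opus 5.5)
Your proof is correct. The rung condition $\dist_G(a_i,b_i)>1$ forces the $a_i$'s to be pairwise distinct and the $b_j$'s to be pairwise distinct, exactly as you argue. Hence the $t$ late vertices $a_{2t+1},\ldots,a_{3t}$ coincide with at most $t$ of the $2t$ early vertices $b_1,\ldots,b_{2t}$, and every remaining pair is forced to be an edge, which gives a $K_{t,t}$ subgraph.

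The paper gives no proof of its own here; it only cites Lemma~33 of Fabia\'nski et al.~\cite{Fabianski2018arxiv}. Your argument is a complete, self-contained proof of that statement and yields exactly the stated constant $3t$.
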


The following statement summarizes how a bound on the semi-ladder index influences the number of iterations executed by the semi-ladder algorithm.

\begin{theorem}[consequence of {\cite[Lemma~5 and Corollary~13]{Fabianski2018arxiv}}]\label{thm:termination}
	Let $\Cc$ be a graph class and $r\in \N$ be such that $\sli_r(\Cc)$ is finite. Then the semi-ladder algorithm (\cref{alg:1}) run on any graph from $\Cc$ terminates after executing less than $k^{\sli_r(\Cc)}$ iterations.
\end{theorem}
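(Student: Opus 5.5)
The idea is to show that a long run of \cref{alg:1} forces a long distance-$r$ semi-ladder. We then take $s\coloneqq\sli_r(\Cc)$ and derive a contradiction with the bound on the number of iterations.

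Suppose the algorithm executes iterations $1,\ldots,N$ without terminating. It produces candidates $D_1,\ldots,D_N$, each of size at most $k$, and witnesses $w_1,\ldots,w_N$. By construction, $D_i$ distance-$r$ dominates $\{w_1,\ldots,w_{i-1}\}$ but not $w_i$, so $\dist(w_i,D_i)>r$. For every $j<i$, choose an element $f_i(w_j)\in D_i$ with $\dist(f_i(w_j),w_j)\le r$. The goal is to extract from this data indices $i_1<\dots<i_m$ and vertices $a_t\in D_{i_t}$ such that $a_t$ dominates $w_{i_{t'}}$ for all $t'<t$. Since $\dist(a_t,w_{i_t})>r$ holds automatically, the pairs $(a_t,w_{i_t})$ then form a distance-$r$ semi-ladder of order $m$. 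We need $m>s$ to get the contradiction.

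The extraction is the combinatorial core, and I would do it by a tree/pigeonhole argument in the style of a Ramsey bound. I would prove by induction on $m$ that any sequence of at least $k^m$ rounds contains a semi-ladder of order $m$, or something slightly stronger so that the induction goes through. Concretely, for a suffix of rounds and a fixed "earlier" witness, I would colour each later round $i$ by which of the at most $k$ elements of $D_i$ dominates it. The cleaner route is probably to go backwards. Start from the last candidate $D_N$. It dominates $w_1,\ldots,w_{N-1}$, so by pigeonhole a single vertex $a\in D_N$ dominates at least $(N-1)/k$ of these witnesses. Restrict attention to the rounds whose witnesses are dominated by $a$, and recurse on the latest such round. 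Each recursion step divides the pool by about $k$ and adds one rung, with $a$ placed at the top of the ladder.

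To verify that the rungs are consistent, index them top-down. Let $i^{(1)}=N$ with vertex $a^{(1)}\in D_N$. Let $i^{(2)}$ be the largest index in the surviving pool, with $a^{(2)}\in D_{i^{(2)}}$ dominating many earlier pool witnesses, and so on. Then $a^{(t)}$ dominates every $w_{i^{(t')}}$ for $t'>t$, because those witnesses all lie in the pool selected by $a^{(t)}$. After reversing the order, this is exactly the semi-ladder condition. The pool shrinks from size $p$ to at least $(p-1)/k$, so starting from $N\ge k^{s}$ we can perform more than $s$ steps. This gives a semi-ladder of order $s+1$, contradicting the definition of $\sli_r(\Cc)$. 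The number of iterations is therefore less than $k^{s}$, possibly after adjusting the off-by-one in the count.

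I expect the main obstacle to be the precise bookkeeping of this off-by-one. The pool recursion loses the "$-1$" at each step, so I may need the threshold $N\ge k^{s}$ to survive $s+1$ halvings-by-$k$. If the naive estimate falls short by a constant, I would tighten it by also counting the final round's witness $w_N$ (undominated by $D_N$) as the bottom rung, or by handling $k=1$ separately.
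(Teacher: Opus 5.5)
Your argument is correct in substance, but it takes a different route from the paper. The paper's sketch composes two cited results of Fabia\'nski et al.:
\begin{itemize}
\item Lemma~5: if $\delta_r$ has semi-ladder index $\ell$, then $\delta^k_r(\tup{x},y)=\bigvee_{i\le k}\delta_r(x_i,y)$ has semi-ladder index below $k^\ell$;
\item Corollary~13: the number of iterations of \cref{alg:1} is bounded by that index, because a run is a $\delta^k_r$-semi-ladder.
\end{itemize}
You fuse the two steps. You read the run directly as a $\delta^k_r$-semi-ladder, and your backward pigeonhole on pools extracts a $\delta_r$-semi-ladder from it. This is in effect a proof of the disjunction lemma specialised to $\delta_r$. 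Your route is self-contained and elementary. The paper's route is modular, since the disjunction lemma holds for arbitrary formulas.

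The counting you worry about closes without any extra rung.
\begin{itemize}
\item Pigeonhole gives $p_t\ge\lceil (p_{t-1}-1)/k\rceil$.
\item For $k\ge 2$ we have $\lceil (k^{j}-1)/k\rceil=k^{j-1}$.
\item So $N\ge k^{s}$ yields $p_t\ge k^{s-t}$ for all $t\le s$, hence $p_s\ge 1$, and the pool still supplies rung $s+1$.
\end{itemize}
Thus at most $k^{s}-1$ iterations produce a witness, which is what the statement asserts (the terminating iteration is not counted). Your fallback of reusing $w_N$ as a bottom rung is unavailable anyway, because $w_N$ is already the top rung. Also, an induction of the form ``$k^m$ rounds give order $m$'' would lose a factor of $k$. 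What the recursion actually gives, and what you need, is ``$k^m$ rounds give order $m+1$''.

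Two implicit assumptions remain, and the statement itself needs them, not only your proof.

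First, choosing $a_t\in D_{i_t}$ requires $D_{i_t}\neq\emptyset$. This is automatic for $i_t\ge 2$, since $D_{i_t}$ dominates $w_1$. However, $D_1$ may be empty: the implementation of $\candidate{\emptyset,r,k}$ in \cref{lem:candidate-to-near} returns $\emptyset$. Without nonemptiness the bound fails. The class of stars has $\sli_1=2$; take $r=1$, $k=2$, a star with centre $c$ and leaves $\ell_1,\dots,\ell_4$, and the run
\begin{itemize}
\item $D_1=\emptyset$, $w_1=c$;
\item $D_2=\{\ell_1\}$, $w_2=\ell_2$;
\item $D_3=\{\ell_2\}$, $w_3=\ell_3$;
\item $D_4=\{\ell_2,\ell_3\}$, $w_4=\ell_4$.
\end{itemize}
This run produces $4=k^{s}$ witnesses. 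So you must assume that candidates are nonempty, as the cited framework does by working with $k$-tuples.

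Second, for $k=1$ the bound $k^{s}=1$ is false as soon as $s\ge 1$. A single completed iteration $D_1=\{u\}$, $w_1=v$ with $\dist(u,v)>r$ already violates it. So $k=1$ cannot be ``handled separately''.

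Under nonemptiness, your recurrence with exact ceilings gives $N\le 1+k+\dots+k^{s-1}$. This is at most $k^{s}-1$ for $k\ge 2$ and equals $s$ for $k=1$.
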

\begin{proof}[Proof sketch]
Let us briefly sketch how this result follows from the discussion in~\cite{Fabianski2018arxiv}.

In~\cite{Fabianski2018arxiv}, Fabia\'nski et al. define the semi-ladder index of first-order formulas on graph classes as follows. For a graph $G$ and a first-order formula $\phi(\tup x,\tup y)$, where $\tup x,\tup y$ are tuples of variables, they define $\phi(G)$ be a bipartite graph $(A \uplus B, F)$ where $A=V(G)^{\tup x}$ (evaluations of variables of $\tup x$ in $V(G)$), $B=V(G)^{\tup y}$ (same for $\tup y$), and for $\tup a\in A$ and $\tup b\in B$, we put $\tup a\tup b \in F$ iff $\phi(\tup a,\tup b)$ holds in $G$.
A \emph{semi-ladder} of order $\ell$ in $\varphi(G)$ is a pair of sequences $\tup a_1,\ldots,\tup a_\ell\in A$ and $\tup b_1,\ldots,\tup b_\ell\in B$ such that $a_ib_i\notin F$ for all $i\in [\ell]$ and $a_ib_j\in F$ for all $i,j\in [\ell]$ with $i<j$. The \emph{semi-ladder index} of $\phi(G)$ is the largest order of a semi-ladder in $\phi(G)$, and the \emph{semi-ladder index} of a formula $\phi(\tup x,\tup y)$ on a graph class $\Cc$ is defined as the supremum of the semi-ladder indices of graphs $\phi(G)$, for $G\in\Cc$.

In this notation, the quantity $\sli_r(\Cc)$ is equal to the semi-ladder index on $\Cc$ of the formula $\delta_r(x,y)$ that verifies whether $x$ and $y$ are at distance at most $r$. \cite[Lemma~5]{Fabianski2018arxiv} then says that the semi-ladder index of the formula $\delta^k_r(\tup x,y)=\bigvee_{i=1}^k \delta_r(x_i,y)$, which checks whether the $k$-tuple of vertices $\tup x$ distance-$r$ dominates $y$, has semi-ladder index smaller than $k^\ell$, where $\ell=\sli_r(\Cc)$ is the semi-ladder index of $\delta_r(x,y)$. With this, \cite[Corollary~13]{Fabianski2018arxiv} asserts that $k^\ell-1$ is an upper bound on the number of iterations executed by the semi-ladder algorithm.
\end{proof}


Note here that if the algorithm performs at most $\ell\coloneqq k^{\sli_r(\Cc)}$ iterations in total, then all the calls to the $\candidate{W,r,k}$ query are applied only to sets $W$ of size at most $\ell$.

Our goal now is to reduce implementation of the $\notDominated{D,r}$ and $\candidate{W,r,k}$ queries to the more basic queries $\far{S,r}$ and $\near{S,r}$. For $\notDominated{D,r}$ this is trivial: queries $\notDominated{D,r}$ and $\far{D,r}$ are just equivalent. For $\candidate{W,r,k}$, this is a bit more complex.

\begin{lemma}\label{lem:candidate-to-near}
	Let $G$ be a graph, $W$ be a subset of vertices of $G$, and $k,r\in \N$ be parameters. Then the query $\candidate{W,r,k}$ can be answered in time $\Oh(k^{|W|+1})$ by performing at most $k^{|W|+1}$ queries of the form $\near{S,r}$, where $S$ is a subset of $W$.
\end{lemma}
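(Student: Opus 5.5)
We enumerate all functions $f\colon W\to[k]$ and, for each one, query $\near{\cdot,r}$ once per color class.

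Fix any enumeration $W=\{w_1,\ldots,w_m\}$ with $m=|W|$. For a function $f\colon W\to[k]$, let $S_j\coloneqq f^{-1}(j)$ for $j\in[k]$. For each nonempty $S_j$ we call $\near{S_j,r}$.
\begin{itemize}
	\item If every such call returns a vertex $v_j\neq\bot$, we return $D\coloneqq\{v_j\colon S_j\neq\emptyset\}$. This set has at most $k$ vertices. Every $w\in W$ lies in some $S_{f(w)}$, so $\dist_G(w,v_{f(w)})\leq r$, and hence $D$ distance-$r$ dominates $W$.
	\item If for every $f$ at least one call returns $\bot$, we return $\bot$.
\end{itemize}
If $W=\emptyset$, there is a single (empty) function, so we simply return $D=\emptyset$ (or any single vertex). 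This is correct, and we will not dwell on it further.

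Correctness of the negative answer is the only step with any content. Suppose some set $D=\{d_1,\ldots,d_{k'}\}$ with $k'\leq k$ distance-$r$ dominates $W$. Define $f(w)$ to be the least index $j$ such that $\dist_G(w,d_j)\leq r$. Then every nonempty class $S_j=f^{-1}(j)$ satisfies $\dist_G(s,d_j)\leq r$ for all $s\in S_j$. Hence $\near{S_j,r}$ cannot return $\bot$: by its specification it returns a vertex whenever one exists, and $d_j$ witnesses existence. It may return a vertex other than $d_j$, but that does not matter. So this $f$ leads the procedure to output a set, and we never wrongly output $\bot$. Note that the queried sets $S_j$ are always subsets of $W$, as the statement requires.

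The counting is routine. There are $k^{|W|}$ functions $f$, and each one triggers at most $k$ queries, for a total of at most $k^{|W|+1}$ queries. The overhead outside the queries is $\Oh(k^{|W|+1})$ as well. We can iterate over the functions $f$ in Gray-code or odometer order and maintain the classes $S_j$ incrementally. Alternatively, we can simply charge building each family $(S_j)_j$ to the at most $k$ queries it generates, treating $|W|$ as bounded in context.

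I do not expect a real obstacle. The only subtlety is ensuring that the time bound does not pick up an extra factor of $|W|$. If necessary, this is absorbed by interpreting the bound as the query count times the cost of each query call.
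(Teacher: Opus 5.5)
Your proof is correct and is essentially the paper's argument. The paper enumerates partitions of $W$ into at most $k$ parts, bounding their number by $k^{|W|}$ via exactly the colorings $W\to[k]$ you use, and issues one $\mathtt{nearVertex}$ query per part just as you do with the classes $f^{-1}(j)$; your least-index justification of the $\bot$ answer and your enumeration-overhead remarks spell out details the paper leaves implicit.
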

\begin{proof}
	We enumerate all
	the partitions of $W$ into at most $k$ subsets; there are at most $k^{|W|}$ such partitions. For every partition $\cal P$ and every part $S\in \cal P$, we issue the query $\near{S,r}$ to verify whether there exists a vertex $v_S$ such that $\dist_G(v_S,s)\leq r$ for all $s\in S$. If for some partition $\cal P$ we can find such a vertex $v_S$ for every part $S\in \cal P$, then the set $D\coloneqq \{v_S\colon S\in \cal P\}$ is a valid answer to the query $\candidate{W,r,k}$. And if this check fails for every considered partition $\cal P$, then there is no set $D$ of size at most $k$ that distance-$r$ dominates $W$ and we can safely return $\bot$.
	Thus, the total
	number of queries $\near{S,r}$ issued is bounded by
	$k^{|W|+1}$.
\end{proof}

We conclude this section by a statement summarizing the discussion and implying the following: designing an efficient data structure for the $\near{S,r}$ and $\far{S,r}$ queries suffices to implement the semi-ladder algorithm in the dynamic setting, thereby providing a dynamic data structure for the \drds{r} problem.

\begin{lemma}\label{lem:summary-dom}
	Let $\Cc$ be a class of graphs, $k,r\in \N$ be such that $\sli_r(\Cc)$ is finite, and $\ell\coloneqq k^{\sli_r(\Cc)}$. Let $G$ be a dynamic graph that belongs to $\Cc$ at all times. Suppose that there are data structures $\nearVertexDS{G}{r,\ell}$, supporting queries $\near{S,r}$ in $G$ with $|S|\leq \ell$, and $\farVertexDS{G}{r,k}$, supporting queries $\far{S,r}$ in $G$ with $|S|\leq k$, so that $\nearVertexDS{G}{r,\ell}$ and $\farVertexDS{G}{r,k}$ have amortized update/query time $T$, initialization time $I$, and space complexity $M$. Then there is a data structure $\dominatingSetDS{G}{r,k}$ for $G$ that supports the~query
	$\ds{}$, which returns a distance-$r$ dominating set of size at most $k$ in $G$, or $\bot$ if no such set exists.
	The amortized update time of $\dominatingSetDS{G}{r,k}$ is $2^{k^{\Oh(\sli_r(\Cc))}}\cdot T$, the query time is $\Oh(k)$, the initialization time is $\Oh(I)$, and the space complexity is $\Oh(M)$.
	
	Moreover, we may allow that the data structures $\nearVertexDS{G}{r,\ell}$ and $\farVertexDS{G}{r,k}$ are randomized with error probability at most $\eps$, for any fixed parameter $\eps>0$ and against an oblivious adversary, with amortized update/query time, initialization time, and space complexity becoming $T\cdot \log \tfrac{1}{\eps}$, $I\cdot \log \tfrac{1}{\eps}$, and $M\cdot \log \tfrac{1}{\eps}$, respectively. In this case,  $\dominatingSetDS{G}{r,k}$ is also randomized with error probability at most $\eps$ against an oblivious adversary. The amortized update time becomes $2^{k^{\Oh(\sli_r(\Cc))}}\cdot T\cdot \log \tfrac{1}{\eps}$, the query time remains $\Oh(k)$, the initialization time becomes $k^{\Oh(\sli_r(\Cc))}\cdot I\cdot \log \tfrac{1}{\eps}$, and the space complexity becomes  $k^{\Oh(\sli_r(\Cc))}\cdot M\cdot \log \tfrac{1}{\eps}$.
\end{lemma}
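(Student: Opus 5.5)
The plan is to run the semi-ladder algorithm (\cref{alg:1}) from scratch after every update, and store its output so that $\ds{}$ just returns it in time $\Oh(k)$. The structure $\dominatingSetDS{G}{r,k}$ therefore consists of $\nearVertexDS{G}{r,\ell}$ and $\farVertexDS{G}{r,k}$, together with a cached answer, which is either a set of at most $k$ vertices or $\bot$. On $\init{n}$ we initialize both structures and then run the algorithm once. On $\add{uv}$ or $\remove{uv}$ we forward the update to both structures and then rerun the algorithm.

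To bound the cost of one run, I use \cref{thm:termination}: the algorithm performs fewer than $\ell=k^{\sli_r(\Cc)}$ iterations. Hence every set $W$ passed to $\candidate{W,r,k}$ has $|W|<\ell$.
\begin{itemize}
\item Each call $\notDominated{D,r}$ is a single $\far{D,r}$ query with $|D|\le k$.
\item Each call $\candidate{W,r,k}$ is implemented as in \cref{lem:candidate-to-near}. It uses at most $k^{|W|+1}\le k^{\ell}$ queries $\near{S,r}$ with $S\subseteq W$, so $|S|\le\ell$, and overhead $\Oh(k^{\ell})$.
\end{itemize}
In total one run issues at most $\ell\cdot(k^\ell+1)=2^{k^{\Oh(\sli_r(\Cc))}}$ queries, each of amortized cost $T$. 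This gives the update time $2^{k^{\Oh(\sli_r(\Cc))}}\cdot T$. Queries do not change the underlying structures, so the amortized analysis of the two structures still applies. The space is $\Oh(M)$ plus $\Oh(k)$ for the cache. Initialization costs $\Oh(I)$ plus one run on the edgeless graph. The deterministic statement does not address that run separately; I would either charge it to $I$ or observe that on an edgeless graph the run is trivial to simulate.

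For the randomized part, the main obstacle is that one run makes many dependent queries, and each query may fail with probability $\eps$. A union bound over $N=2^{k^{\Oh(\sli_r(\Cc))}}$ queries would need error $\eps/N$ per query. That costs $\log\tfrac{N}{\eps}=k^{\Oh(\sli_r(\Cc))}\log\tfrac1\eps$ instead of $\log\tfrac{1}{\eps}$, and explains the extra $k^{\Oh(\sli_r(\Cc))}$ factor in initialization and space. There is a second subtlety: the query sets depend on earlier answers of the same randomized structures, which is adaptive use. I would handle both issues as follows.
\begin{itemize}
\item Create $\ell$ independent copies of each structure, one per iteration index, each with error parameter $\eps'=\eps/2^{k^{c\,\sli_r(\Cc)}}$.
\item In iteration $i$, use only copy $i$.
\item Make sure the sequence of updates and queries fed to copy $i$ is determined by the update stream and the outputs of copies $1,\dots,i-1$. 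Conditioned on those, copy $i$ faces an oblivious adversary.
\end{itemize}
Within an iteration, the $k^{\ell}$ $\near{}$ queries enumerate all partitions. Their sets can be made independent of answers: issue every query for every partition regardless of outcomes, which the count above already allows. A union bound over all queries in one run then gives total error at most $\eps$.

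Finally, I would tally the costs. The amortized update time becomes $2^{k^{\Oh(\sli_r(\Cc))}}\cdot T\log\tfrac1{\eps'}=2^{k^{\Oh(\sli_r(\Cc))}}\cdot T\log\tfrac1\eps$. Initialization and space become $\ell\cdot I\log\tfrac1{\eps'}$ and $\ell\cdot M\log\tfrac1{\eps'}$, which is $k^{\Oh(\sli_r(\Cc))}\cdot I\log\tfrac1\eps$ and $k^{\Oh(\sli_r(\Cc))}\cdot M\log\tfrac1\eps$. I would also point out that any returned non-$\bot$ answer can be trusted only up to the $\eps$ error. The query time remains $\Oh(k)$.
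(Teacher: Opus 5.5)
Your proposal is correct and follows the paper's proof essentially verbatim. As in the paper, you rerun the semi-ladder algorithm after each update and use \cref{thm:termination} and \cref{lem:candidate-to-near} to bound the queries per run by $2^{k^{\Oh(\sli_r(\Cc))}}$. You then rescale the error by that count and keep one independent instance of each structure per iteration index, so each instance only sees queries determined by the other instances' outputs. One step goes slightly beyond the paper: you issue the $\mathtt{nearVertex}$ queries for all partitions regardless of earlier outcomes, which removes adaptivity within a single iteration. The paper leaves this point implicit, and it costs nothing because your query count already covers every partition.
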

\begin{proof}
	For simplicity, we first assume that data structures $\nearVertexDS{G}{r,\ell}$ and $\farVertexDS{G}{r,k}$ are deterministic. We will then explain how to strengthen the proof if they are randomized.
	
	The data structure $\dominatingSetDS{G}{r,k}$ simply maintains the assumed data structures $\nearVertexDS{G}{r,\ell}$ and $\farVertexDS{G}{r,k}$. Upon every update to $G$, we relay the update to $\nearVertexDS{G}{r,\ell}$ and $\farVertexDS{G}{r,k}$ and we run the semi-ladder algorithm on $G$ to find a distance-$r$ dominating set of size at most $k$ (or detect a lack thereof). By \cref{thm:termination}, this algorithm can be implemented using at most $\ell$ queries $\notDominated{D,r}$ with $|D|\leq k$ and at most $\ell$ queries $\candidate{W,r,k}$ with $|W|\leq \ell$. Every query $\notDominated{D,r}$ can be answered by a single call to the query $\far{D,r}$ offered by $\farVertexDS{G}{r,k}$. By \cref{lem:candidate-to-near}, every query $\candidate{W,r,k}$ can be answered by performing $k^{|W|+1}\leq 2^{k^{\Oh(\sli_r(\Cc))}}$ calls to queries of the form $\near{S,r}$ with $S\subseteq W$, offered by $\nearVertexDS{G}{r,\ell}$. Once a distance-$r$ dominating set of size $k$ (or lack thereof) is computed, it can be provided in time $\Oh(k)$ upon any $\ds{}$ query.
	The claimed bounds on the amortized update time of $\dominatingSetDS{G}{r,k}$ follow immediately from the assumed guarantees about $\nearVertexDS{G}{r,\ell}$ and $\farVertexDS{G}{r,k}$; and similarly for the initialization time and the space complexity.
	
	In case $\nearVertexDS{G}{r,\ell}$ and $\farVertexDS{G}{r,k}$ are randomized, we first assume that they work against an adaptive adversary and we will then explain how to improve the argument if they only work against an oblivious adversary. The data structure $\dominatingSetDS{G}{r,k}$ maintains their instances with a rescaled error parameter $\eps'\coloneqq \tfrac{\eps}{q}$, where
	\[q\coloneqq k^{\sli_r(\Cc)}+k^{k^{\sli_r(\Cc)}}\leq 2^{k^{\Oh(\sli_r(\Cc))}}\]
	is an upper bound on the total number of queries to $\nearVertexDS{G}{r,\ell}$ or $\farVertexDS{G}{r,k}$ that may occur in a single run of the semi-ladder algorithm. Therefore, by the union bound, we conclude that with probability at least $1-\eps$ none of these queries returns an incorrect answer and the semi-ladder algorithm runs correctly. It is straightforward to verify that rescaling the error probability in $\nearVertexDS{G}{r,\ell}$ and $\farVertexDS{G}{r,k}$ influences the complexity guarantees for $\dominatingSetDS{G}{r,k}$ as claimed.

	We remark that even if the adversary issuing the updates to our graph is oblivious, this argument still required $\nearVertexDS{G}{r,\ell}$ and $\farVertexDS{G}{r,k}$ to work against an adaptive adversary, because the semi-ladder algorithm acts as an internal adaptive adversary --- it issues a number of $\mathtt{nearVertex}/\mathtt{farVertex}$ queries that depend on the answers to previous $\mathtt{nearVertex}/\mathtt{farVertex}$ queries in that execution of the \cref{alg:1}. To remedy this, we initialize multiple independent instances of our helper data structures: $\nearVertexDSS{G}{r,\ell}{1}, \ldots, \nearVertexDSS{G}{r,\ell}{\ell}, \farVertexDSS{G}{r,k}{1}, \ldots, \farVertexDSS{G}{r,k}{\ell}$. In the $i$th iteration of the main loop of the \cref{alg:1} we use  $\nearVertexDSS{G}{r,\ell}{i}$ and $\farVertexDSS{G}{r,k}{i}$ data structures to answer any $\mathtt{nearVertex}$ and $\mathtt{farVertex}$ queries asked during that iteration. The $\mathtt{nearVertex}$ queries asked within the $i$th iteration of different runs do not depend on answers to each other, hence they can all be handled by a single instance $\nearVertexDSS{G}{r,\ell}{i}$ working only against an oblivious adversary; and similarly for the $\farVertexDSS{G}{r,k}{i}$ data structure. This guarantees the correctness in the setting with the weaker assumptions, but it incurs an additional multiplicative $\ell$ factor to update/query time, initialization time, and memory. However, this additional $\ell$ factor does not change the complexities as stated.
\end{proof}

\subsection{Independence}

For the \dris{r} problem, Fabia\'nski et al.~\cite{FabianskiPST19,Fabianski2018arxiv} proposed a different procedure, called the \emph{ladder algorithm}. Similarly to the semi-ladder algorithm, it is based on alternately finding candidates for the sought distance-$r$ independent set and witnesses that the current candidate is not yet a solution. The notion of witnessing is, however, more delicate. For the further discussion, we fix the distance parameter $r\in \N$.

\begin{definition}
	Let $G$ be a graph  and $\tup a$ be a tuple of vertices $G$. We say that a vertex $w\in V(G)$ is a \emph{distance-$r$ dependence witness} for $\tup a$ if there are distinct $i,j\in [|\tup a|]$ such that $\dist_G(a_i,w)+\dist_G(a_j,w)\leq r$. More generally, a set $P\subseteq V(G)$ is a distance-$r$ dependence witness for $\tup a$ if $P$ contains some $w\in P$ that is a distance-$r$ dependence witness for $\tup a$.
\end{definition}

Note that a $k$-tuple $\tup a$ of vertices in a graph $G$ forms a distance-$r$ independent set of size $k$ if and only if there is no distance-$r$ dependence witness for $\tup a$.

The ladder algorithm will be parameterized by one more parameter $p\in \N$, which stipulates the size of dependence witnesses investigated by the algorithm. Finding candidates and witnesses is delegated to the following two queries, which will eventually be implemented using (generalizations of) $\near{S,r}$ and $\far{S,r}$.
\begin{itemize}
	\item $\witnessInd{\cal L,r,p}$:
	Given a family $\cal L$ of tuples of vertices, return a set $P\subseteq V(G)$ of size at most~$p$ such that $P$ is a distance-$r$ dependence witness for all $\tup a\in \cal L$. In case no such vertex set $P$ exists, return $\bot$.
	\item $\candidateInd{W,r,k}$: Given a vertex subset $W$, return $k$-tuple $\tup a$ of vertices such that $W$ is not a distance-$r$ dependence witness for $\tup a$. In case no such tuple $\tup a$ exists, return $\bot$.
\end{itemize}

\begin{algorithm}
	\SetKwInOut{Input}{Input}
	\SetKwInOut{Output}{Output}
	
	\nonl
	\underline{procedure IndSet} $(G,r,k,p)$\\
	\Input{A graph $G$ and parameters $r,k,p\in \N$}
	\Output{Conclusion whether there exists a distance-$r$ independent set of size $k$ in $G$}
	\medskip
	${\cal L}\coloneqq \emptyset$\;
	$W \coloneqq \emptyset$\;
	\Repeat
	{
		$\tup a\coloneqq \candidateInd{W}$\;
		\If{$\tup a = \bot$}
		{
			\KwRet{``There is no distance-$r$ independent set of size $k$''} \;
		}
		${\cal L}\coloneqq {\cal L}\cup \{\tup a\}$ \;
		$P \coloneqq \witnessInd{\cal L,p}$ \;
		\If{$P = \bot$}
		{
			\KwRet{``There exists a distance-$r$ independent set of size $k$''} \;
		}
		$W \coloneqq W \cup P$ \;
	}
	\caption{Ladder algorithm for distance-$r$ independent set of size $k$ in a graph $G$}\label{alg:2}
\end{algorithm}


The ladder algorithm for parameters $r,k,p$ proceeds iteratively in rounds as follows (see \cref{alg:2} for a pseudocode). We construct two sequences: candidates $\tup a_1,\tup a_2,\tup a_3,\ldots$, each being a $k$-tuple of vertices, and witnesses $P_1,P_2,P_3,\ldots$, each being a vertex subset of size at most $p$. After the $(i-1)$st iteration, the candidates $\tup a_1,\ldots,\tup a_{i-1}$ and the witnesses $P_1,\ldots,P_{i-1}$ are already constructed. Then, the $i$th iteration is as follows:
\begin{itemize}
	\item First, we call $\candidateInd{W,r,k}$ with $W=P_1\cup P_2\cup \ldots \cup P_{i-1}$ consisting of the union of all the witnesses found so far. If this call returns $\bot$, then we can safely report that there is no distance-$r$ independent set of size $k$ in $G$. Otherwise, the output of the call becomes the next candidate $\tup a_i$.
	\item Second, we call $\witnessInd{\cal L,r,p}$, where $\cal L\coloneqq \{\tup a_1,\ldots,\tup a_i\}$ is the set of all the candidates found so far. If this call returns $\bot$, we report that there exists a distance-$r$ independent set of size $k$ in $G$. Otherwise, the witness output by the call becomes the next witness $P_i$, with which we can proceed to the next iteration.
\end{itemize}
Note that the correctness of the algorithm is not obvious: When the algorithm reports in the second point that there exists a distance-$r$ independent set of size $k$, then a priori there is no reason to assume that this is correct. In particular, the algorithm only provides this conclusion without providing any actual independent set. However, Fabia\'nski et al.~\cite{FabianskiPST19,Fabianski2018arxiv} proved that in classes of bounded expansion, the algorithm always finishes within a bounded number of rounds and provides a correct answer, provided the parameter $p$ is set large enough.

\begin{theorem}[consequence of {\cite[Theorems~5, Theorem~12, and Corollary~16]{Fabianski2018arxiv}}]\label{thm:ladder-bound}
	Let $\Cc$ be a graph class of bounded expansion and $r,k\in \N$. Then, there exist constants $p,\ell\in \Oh_{\Cc,r,k}(1)$ such that the ladder algorithm (\cref{alg:2}) run on any graph $G\in \Cc$ for parameters $r,k,p$, terminates within at most $\ell$ rounds and always outputs the correct answer. 
\end{theorem}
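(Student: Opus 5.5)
This statement is a repackaging of results from \cite{Fabianski2018arxiv}, so the plan is to assemble those results in the same way as the proof sketch of \cref{thm:termination}, now for independence.

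\textbf{Step 1 (relevant formulas and their bounded semi-ladder index).} Consider the formula $\psi(\tup x,y)$ on $k$-tuples $\tup x$, defined by $\psi(\tup x,y) \equiv \bigvee_{i\neq j}\bigl(\dist(x_i,y)+\dist(x_j,y)\leq r\bigr)$. Thus $\psi(\tup x,y)$ says that $y$ is a distance-$r$ dependence witness for $\tup x$. It is a Boolean combination of formulas $\delta_{s}(x_i,y)\wedge\delta_{t}(x_j,y)$ with $s+t\leq r$. Since $\Cc$ has bounded expansion, it is nowhere dense, and hence stable by the results cited in \cite{Fabianski2018arxiv}. Stability gives that every first-order formula has bounded ladder index on $\Cc$, and in particular $\psi$ does. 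Alternatively, one can use \cref{thm:sli-be} for each $\delta_s$ and then the closure of bounded (semi-)ladder index under Boolean combinations, in the spirit of \cite[Lemma~5]{Fabianski2018arxiv}. So some constant $\ell_0=\Oh_{\Cc,r,k}(1)$ bounds the ladder index of $\psi$ on $\Cc$.

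\textbf{Step 2 (choosing $p$ so that finding no witness certifies independence).} If $\witnessInd{\mathcal L,r,p}$ returns $\bot$, we need a $k$-tuple with no dependence witness at all. The relevant result is the ``dual'' finite-witness statement, \cite[Theorem~12]{Fabianski2018arxiv}, obtained from uniform quasi-wideness. It says the following: if every $p$-element set $P$ fails to witness some $\tup a\in\mathcal L$, then some $k$-tuple is independent. The reason is that the sets $\psi(\tup a,G)$ for $\tup a\in\mathcal L$ form a set system of bounded VC/ladder dimension, and the ``no $p$-transversal'' property forces a common non-witnessed tuple. I would take $p$ to be the bound from that theorem, depending only on $\Cc,r,k$. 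Pinning this step down precisely is the main obstacle: the correctness of a $\bot$ answer from $\witnessInd{\cdot}$ is exactly the nontrivial content.

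\textbf{Step 3 (bounding the number of rounds).} The sequences $\tup a_1,\tup a_2,\ldots$ and $P_1,P_2,\ldots$ produced by \cref{alg:2} satisfy two properties. For $j<i$, the tuple $\tup a_i$ is not witnessed by any element of $P_j$, because $P_j\subseteq W$ at round $i$. Also, $P_i$ witnesses every $\tup a_j$ with $j\leq i$. Treating the sets $P_i$ as $p$-tuples, this is a ladder for the formula $\Psi(\tup x,\tup y)=\bigvee_{m\le p}\psi(\tup x,y_m)$. Its length is bounded via \cite[Corollary~16]{Fabianski2018arxiv}, using that disjunctions of $p$ copies of a formula with bounded ladder index still have bounded ladder index. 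This gives $\ell=\Oh_{\Cc,r,k}(1)$ on the number of rounds.

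\textbf{Step 4 (correctness of every exit).} The first exit, where $\candidateInd{\cdot}$ returns $\bot$, is trivially correct: any independent $k$-tuple would avoid all witnesses, so its absence means no independent set exists. The second exit is correct by Step 2. Combining Steps 3 and 4 gives the theorem.
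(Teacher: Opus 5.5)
Your assembly is correct and is essentially what the paper does. The paper gives no proof; it states the result as a direct consequence of \cite{Fabianski2018arxiv}, combining three ingredients: stability of nowhere dense classes (bounded ladder index), a Helly-type theorem fixing $p$ so that a $\bot$ answer of $\witnessInd{\mathcal{L},r,p}$ certifies a distance-$r$ independent $k$-set, and the observation that the rounds of \cref{alg:2} form a ladder for $\bigvee_{m\le p}\psi(\tup x,y_m)$. The one flaw is your heuristic justification in Step~2: the Helly-type property is a graph-theoretic fact about nowhere dense classes and does not follow from bounded VC dimension, since many pairwise disjoint nonempty witness sets have VC dimension $1$ and no small transversal, yet say nothing about whether some tuple has no dependence witness. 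Because you rely on the cited theorem rather than on that gloss, the proof still stands.
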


Thus, within a single run of the ladder algorithm as above, every $\witnessInd{\cal L,r,p}$ query involves a family $\cal L$ consisting of at most $\ell$ candidates, each being a $k$-tuple of vertices, and every $\candidateInd{W,r,k}$ query involves a vertex set $W$ with $|W|\leq p\ell$. We remark that similarly to \cref{thm:termination}, \cref{thm:ladder-bound} also holds in the larger generality where $\Cc$ is only assumed to be nowhere dense.

We now show how to implement the $\witnessInd{\cal L,r,p}$ and $\candidateInd{W,r,k}$ queries assuming access to the following generalizations of $\far{S,r}$ and $\near{S,r}$:
\begin{itemize}
	\item $\far{S,\rho}$: Given a set $S$ and a mapping $\rho\colon S\to \N$, return a vertex $v$ of $G$ such that $\dist_G(v,s)\geq\rho(s)$ for all $s\in S$; or $\bot$ if no such vertex exists.
	\item $\near{S,\rho}$: Given a set $S$ and a mapping $\rho\colon S\to \N$, return a vertex $v$ of $G$ such that $\dist_G(v,s)\leq \rho(s)$ for all $s\in S$; or $\bot$ if no such vertex exists.
\end{itemize}
In the following, for a function $\rho\colon S\to \N$ we denote $\max \rho\coloneqq \max_{s\in S} \rho(s)$. The reader should think that in the queries above we always have $\max \rho \leq r+1$ for the distance parameter $r$ we are working with. The generalization allows us to assign a different relevant distance $\rho(s)\leq r+1$ to every vertex $s$ of $S$.

\begin{lemma}\label{lem:candidateInd-far}
	Let $G$ be a graph, $W$ be a subset of vertices of $G$, and $k,r\in \N$ be parameters. Then the query $\candidateInd{W,r,k}$ can be answered in time $(r+2)^{\Oh(k|W|)}$ by performing at most $(r+2)^{k|W|}$ queries of the form $\far{S,\rho}$, where $S$ is a subset of $W$ and $\max \rho\leq r+1$.
\end{lemma}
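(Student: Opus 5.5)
We need a $k$-tuple $\tup a$ such that no $w\in W$ is a distance-$r$ dependence witness for $\tup a$. In other words, for every $w\in W$ and all distinct $i,j$ we need $\dist_G(a_i,w)+\dist_G(a_j,w)>r$. The plan is to guess, for each $a_i$, how far it is from each witness. For every $i\in[k]$ and $w\in W$ we guess a value $d_i(w)\in\{0,1,\ldots,r+1\}$. The intended meaning is that $\dist_G(a_i,w)=d_i(w)$ when this distance is at most $r$, and $d_i(w)=r+1$ when the distance is at least $r+1$. There are $(r+2)^{k|W|}$ such guesses $(d_1,\ldots,d_k)$.

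First we filter the guesses combinatorially. We keep only those satisfying $d_i(w)+d_j(w)>r$ for all $w\in W$ and all distinct $i,j$. This check takes time polynomial in $k|W|$, which is absorbed into $(r+2)^{\Oh(k|W|)}$.

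For each surviving guess and each $i$, we want a vertex $a_i$ with $\dist_G(a_i,w)\geq d_i(w)$ for all $w\in W$. We obtain it by a single query $\far{S,\rho}$ with $S=W$ and $\rho=d_i$; note that $\max\rho\leq r+1$. If all $k$ queries succeed, we return $\tup a=(a_1,\ldots,a_k)$. If no guess yields all $k$ answers, we return $\bot$. The total number of queries is at most $k(r+2)^{k|W|}$. To meet the stated bound of $(r+2)^{k|W|}$ exactly, we organize the search per coordinate: the constraints on $a_i$ depend only on $d_i$, so we can precompute, for each $i$ and each $d\colon W\to[0,r+1]$, whether $\far{W,d}$ succeeds. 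That costs $k(r+2)^{|W|}\leq (r+2)^{k|W|}$ queries when $k\geq 2$. The remaining search over combinations is purely combinatorial. The case $k\le 1$ is trivial, since we return any vertex (or use one $\far{}$ call with $\rho\equiv 0$).

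For soundness, suppose the returned $\tup a$ satisfies $\dist(a_i,w)\geq d_i(w)$. Then $\dist(a_i,w)+\dist(a_j,w)\geq d_i(w)+d_j(w)>r$, so no $w\in W$ witnesses dependence.

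For completeness, suppose a valid $\tup a$ exists. Set $d_i(w)\coloneqq\min(\dist(a_i,w),r+1)$. If either of $\dist(a_i,w),\dist(a_j,w)$ is at least $r+1$, the truncated sum is still greater than $r$. Otherwise the truncated values equal the true distances, whose sum is greater than $r$. So this guess passes the filter, and $a_i$ itself witnesses that each query $\far{W,d_i}$ is nonempty. Hence the algorithm does not return $\bot$.

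The main subtlety is the completeness argument: truncating distances at $r+1$ must preserve the pairwise-sum condition, and the lower-bound-only form of $\far{}$ must suffice. Both hold for the reasons above, and the rest is counting.
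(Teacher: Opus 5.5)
Your proof is correct and takes essentially the paper's route: enumerate $k$-tuples of truncated distance profiles in $\{0,\ldots,r+1\}^W$, keep those in which no $w\in W$ gets two coordinates summing to at most $r$, and realize each coordinate by a $\far{W,\cdot}$ query. Your explicit monotonicity argument under $\dist_G(a_i,w)\geq d_i(w)$ and your per-profile precomputation tighten two points the paper's proof glosses over. The precomputation can be simplified further: the query $\far{W,d}$ does not depend on $i$, so $(r+2)^{|W|}$ queries suffice, and this also covers the edge case $W=\emptyset$, where your bound $k(r+2)^{|W|}\leq (r+2)^{k|W|}$ fails.
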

\begin{proof}
	For a vertex $u$ of $G$, we define the \emph{profile} of $u$ as the function $\mathsf{prof}[u]\colon W\to \{0,1,\ldots,r,r+1\}$ such that for every $s\in W$,
	\[\mathsf{prof}[u](s)\coloneqq \begin{cases} \dist_G(u,s) &\textrm{if }\dist_G(u,s)\leq r,\\
		r+1 &\textrm{otherwise.}\end{cases}\]
	Let $\cal P\coloneqq \{0,1,\ldots,r,r+1\}^W$ be the set of all possible profiles. Note that $|\cal P|=(r+2)^{|W|}$. 
	
	Observe that for a $k$-tuple $\tup a$ of vertices, the $k$-tuple of profiles $(\mathsf{prof}[a_i]\colon i\in [k])$ determines whether $W$ is a distance-$r$ dependence witness for $\tup a$. Therefore, there is a set $\cal I\subseteq (\cal P)^k$ of $k$-tuples of profiles such that $W$ is a distance-$r$ dependence witness for $\tup a$ if and only if $(\mathsf{prof}[a_i]\colon i\in [k])\in \cal I$. Note that $\cal I$ can be computed in time $(r+2)^{\Oh(k|W|)}$ by investigating every $k$-tuple of profiles in $(\cal P)^k$ and deciding in time $(k|W|)^{\Oh(1)}$ whether it should be included in $\cal I$.
	
	Now, for each $\tup \pi=(\pi_1,\ldots,\pi_k)\in \cal I$ and each $i\in [k]$, we call $\far{S,\pi_i}$ to find, if existent, any vertex $a_i$ whose profile $\mathsf{prof}[a_i]$ is coordinate-wise not smaller than $\pi_i$. If for any $\tup \pi\in \cal I$ we manage to find all such vertices $a_i$, then we have found a tuple $\tup a=(a_1,\ldots,a_k)$ for which $W$ is not a distance-$r$ dependence witness; so $\tup a$ can be reported. And if this check fails for every $\tup \pi\in \cal I$, then we may safely conclude that $W$ is a distance-$r$ dependence witness for every $k$-tuple of vertices of $G$.
\end{proof}

\begin{lemma}\label{lem:witnessInd-near}
	Let $G$ be a graph, $p,k,r\in \N$, and $\cal L$ be a set of $k$-tuples of vertices of $G$. Then the query $\witnessInd{W,r,p}$ can be answered in time $(r+2)^{\Oh(pk|\cal L|)}$ by performing at most $(r+2)^{pk|\cal L|}$ queries of the form $\near{S,\rho}$, where $|S|\leq k|\cal L|$ and $\max \rho\leq r$.
\end{lemma}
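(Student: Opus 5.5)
We mirror the proof of \cref{lem:candidateInd-far}, but with the roles reversed: guess the relevant distances of the witness vertices to the candidate tuples, then find each witness vertex by a single $\near{S,\rho}$ query. Let $U\coloneqq \bigcup_{\tup a\in\cal L}\{a_1,\ldots,a_k\}$ be the set of vertices appearing in the tuples of $\cal L$; then $|U|\leq k|\cal L|$. For a vertex $w$, define its profile $\mathsf{prof}[w]\colon U\to\{0,1,\ldots,r,r+1\}$ as before: $\mathsf{prof}[w](u)=\dist_G(w,u)$ if this is at most $r$, and $r+1$ otherwise. Whether $w$ is a distance-$r$ dependence witness for a given $\tup a\in\cal L$ depends only on $\mathsf{prof}[w]$. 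Indeed, if $\dist_G(a_i,w)+\dist_G(a_j,w)\leq r$, then both summands are at most $r$, so both are recorded exactly in the profile; and any profile entry equal to $r+1$ makes the corresponding sum exceed $r$.

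The plan is to enumerate all $p$-tuples of profiles $(\pi_1,\ldots,\pi_p)$; there are $(r+2)^{p|U|}\leq (r+2)^{pk|\cal L|}$ of them. For each tuple we check, in time polynomial in $pk|\cal L|$, whether for every $\tup a\in\cal L$ there exist $t\in[p]$ and distinct $i,j$ with $\pi_t(a_i)+\pi_t(a_j)\leq r$. Call such tuples good.

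Here lies the key observation and the only subtle point. We do not need vertices with exactly the prescribed profiles; it suffices to require upper bounds. If $\dist_G(w,u)\leq\pi_t(u)$ for all $u$, then
\[\dist_G(a_i,w)+\dist_G(a_j,w)\leq \pi_t(a_i)+\pi_t(a_j)\leq r,\]
so $w$ is still a witness for $\tup a$. Upper bounds are exactly what $\near{}$ queries express. Coordinates equal to $r+1$ impose no constraint, so we drop them.

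For each good tuple and each $t\in[p]$, we set $S_t\coloneqq\{u\in U:\pi_t(u)\leq r\}$ and $\rho_t\coloneqq\pi_t|_{S_t}$. Thus $|S_t|\leq k|\cal L|$ and $\max\rho_t\leq r$. We then call $\near{S_t,\rho_t}$. If all $p$ calls succeed, returning vertices $w_1,\ldots,w_p$, we output $P\coloneqq\{w_1,\ldots,w_p\}$; by the monotonicity argument above, $P$ witnesses every $\tup a\in\cal L$.

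For completeness of the search, suppose some $P$ with $|P|\leq p$ is a witness for all of $\cal L$. Pad $P$ by repetition to exactly $p$ vertices. The tuple of their exact profiles is then good, and every query made for it succeeds, because the vertices of $P$ themselves satisfy the constraints. Hence, if every good tuple fails, we may safely return $\bot$.

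In total this makes at most $p(r+2)^{pk|\cal L|}$ queries, with overhead $(r+2)^{\Oh(pk|\cal L|)}$. To meet the stated bound of $(r+2)^{pk|\cal L|}$ queries exactly, we absorb the factor $p$ with a slightly more careful count. For example, we only query tuples whose profiles are all distinct and nontrivial, and we note that $p\leq 2^p\leq (r+2)^p$ while $|U|$ can be taken strictly smaller than $k|\cal L|$ or the exponent rounded accordingly. At worst, the constant simply moves into the $\Oh$ in the time bound.
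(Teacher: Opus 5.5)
Your proof is correct and follows the paper's argument essentially verbatim: you take profiles over the vertices occurring in $\cal L$ with distances truncated above $r$, enumerate the $p$-tuples of profiles that certify witnessing for all of $\cal L$, and issue one $\near{S,\rho}$ query per profile restricted to its finite coordinates. You are also more explicit than the paper about the monotonicity that makes upper bounds suffice. The one soft spot is your final paragraph, although the paper glosses over the same factor $p$: the clean fix is that each query is determined by a single profile, so at most $(r+2)^{|U|}\leq (r+2)^{k|\cal L|}\leq (r+2)^{pk|\cal L|}$ distinct queries ever arise, and these can be issued once and cached (the case $p=0$ needs no queries at all). By contrast, your claim that $|U|$ can be taken strictly smaller than $k|\cal L|$ is not justified in general.
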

\begin{proof}
	Let $L$ be the set of all the vertices featured in the tuples of $\cal L$. Note that $|L|\leq k|\cal L|$. We will use a slightly different notion of a profile than in the proof of \cref{lem:candidateInd-far}. For a vertex $u$ of $G$, its \emph{profile} is the function $\mathsf{prof}[u]\colon L\to \{0,1,\ldots,r,+\infty\}$ defined as follows: for $s\in L$,
	\[\mathsf{prof}[u](s)\coloneqq \begin{cases} \dist_G(u,s) &\textrm{if }\dist_G(u,s)\leq r,\\
		+\infty &\textrm{otherwise.}\end{cases}\]
	Let $\cal Q\coloneqq \{0,1,\ldots,r,+\infty\}^L$ be the set of all possible profiles. Note that $|\cal Q|=(r+2)^{|L|}\leq (r+2)^{k|\cal L|}$. 
	
	Observe that for a $p$-tuple of vertices $\tup b=(b_1,\ldots,b_p)$, the $p$-tuple of profiles $(\mathsf{prof}[b_i]\colon i\in [p])$ determines whether $\{b_1,\ldots,b_p\}$ is a distance-$r$ dependence witness for each of the $k$-tuples in $\cal L$. Therefore, there is a set $\cal J\subseteq (\cal Q)^p$ of $p$-tuples of profiles such that $\{b_1,\ldots,b_p\}$ is a distance-$r$ dependence witness for every $k$-tuple in $\cal L$ if and only if $(\mathsf{prof}[b_i]\colon i\in [p])\in \cal J$. Note that $\cal J$ can be computed in time $(r+2)^{\Oh(pk|\cal L|)}$ by investigating every $p$-tuple of profiles in $(\cal Q)^p$ and deciding in time $(pk|\cal L|)^{\Oh(1)}$ whether it should be included in $\cal J$.
	
	Now, for each $\tup \pi=(\pi_1,\ldots,\pi_p)\in \cal J$ and each $i\in [p]$, we let $L'\coloneqq \{s\in L~\mid~\pi_i(s)\textrm{ is finite}\}$ and call $\near{L',\pi_i|_{L'}}$ to find, if existent, any vertex $b_i$ whose profile $\mathsf{prof}[b_i]$ is coordinate-wise not larger than $\pi_i$. If for any $\tup \pi\in \cal J$ we manage to find all such vertices $b_i$, then $\{b_1,\ldots,b_p\}$ is a distance-$r$ dependence witness for all the tuples in $\cal L$; so $\{b_1,\ldots,b_p\}$ can be reported. And if this check fails for every $\tup \pi\in \cal J$, then we may safely conclude that there is no set of size at most $p$ that is a distance-$r$ dependence witness for all the tuples in $\cal L$.
\end{proof}

We conclude this section with a statement analogous to \cref{lem:summary-dom} that summarizes the ingredients needed to design a data structure for the \dris{r} problem.

\begin{lemma}\label{lem:summary-ind}
	Let $r,k\in \N$, $\Cc$ be a graph class of bounded expansion, and $p,\ell\in \N$ be the constants provided by \cref{thm:ladder-bound} for $\Cc,r,k$. Let $G$ be a dynamic graph that belongs to $\Cc$ at all times. Suppose that there are data structures $\nearVertexDS{G}{r,k\ell}$, supporting queries $\near{S,\rho}$ in $G$ with $|S|\leq k\ell$ and $\max \rho\leq r$, and $\farVertexDS{G}{r+1,p\ell}$, supporting queries $\far{S,\rho}$ in $G$ with $|S|\leq p\ell$ and $\max \rho\leq r+1$, so that $\nearVertexDS{G}{r,k\ell}$ and $\farVertexDS{G}{r+1,p\ell}$ have amortized update/query time $T$, initialization time $I$, and space complexity $M$. Then there is a data structure $\independentSetDS{G}{r,k}$ for $G$ that supports the~query
	\begin{itemize}
		\item $\is{}$: Decide whether in $G$ there exists a distance-$r$ independent set of size $k$. 
	\end{itemize}
	The amortized update time of $\independentSetDS{G}{r,k}$ is $(r+2)^{\Oh(pk\ell)}\cdot T$, the query time is $\Oh(1)$, the initialization time is $\Oh(I)$, and the space complexity is $\Oh(M)$.
	
	Moreover, we may allow the data structures $\nearVertexDS{G}{r,k\ell}$ and $\farVertexDS{G}{r+1,p\ell}$ to be randomized with error probability at most $\eps$, for any fixed parameter $\eps>0$ and against an oblivious adversary, with amortized update/query time, initialization time, and space complexity becoming $T\cdot \log \tfrac{1}{\eps}$, $I\cdot \log \tfrac{1}{\eps}$, and $M\cdot \log \tfrac{1}{\eps}$, respectively. In this case,  $\independentSetDS{G}{r,k}$ is also randomized with error probability at most $\eps$ against an oblivious adversary. The amortized update time becomes $(r+2)^{\Oh(pk\ell)}\cdot T\cdot \log\tfrac{1}{\eps}$, the initialization time becomes $(r+2)^{\Oh(pk\ell)}\cdot I\cdot \log \tfrac{1}{\eps}$, and the space complexity becomes  $(r+2)^{\Oh(pk\ell)}\cdot M\cdot \log \tfrac{1}{\eps}$.
\end{lemma}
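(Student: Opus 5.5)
The plan mirrors the proof of \cref{lem:summary-dom}. The data structure $\independentSetDS{G}{r,k}$ maintains the two assumed structures $\nearVertexDS{G}{r,k\ell}$ and $\farVertexDS{G}{r+1,p\ell}$ and relays every update to both. After each update, it reruns the ladder algorithm (\cref{alg:2}) with parameters $r,k,p$ from scratch and stores the binary answer, so an $\is{}$ query is answered in $\Oh(1)$ time.

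Correctness and termination within $\ell$ rounds come from \cref{thm:ladder-bound}. In each round we must issue one call $\candidateInd{W,r,k}$ and one call $\witnessInd{\cal L,r,p}$, where $|W|\leq p\ell$ and $|\cal L|\leq \ell$. We implement $\candidateInd{W,r,k}$ through \cref{lem:candidateInd-far}. This uses at most $(r+2)^{k|W|}\leq (r+2)^{pk\ell}$ queries $\far{S,\rho}$ with $S\subseteq W$, so $|S|\leq p\ell$ and $\max\rho\leq r+1$; these fall within the scope of $\farVertexDS{G}{r+1,p\ell}$. The call $\witnessInd{\cal L,r,p}$ is handled by \cref{lem:witnessInd-near}. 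This uses at most $(r+2)^{pk\ell}$ queries $\near{S,\rho}$ with $|S|\leq k\ell$ and $\max\rho\leq r$, which $\nearVertexDS{G}{r,k\ell}$ supports. The remaining bookkeeping of each lemma costs $(r+2)^{\Oh(pk\ell)}$ time. Multiplying by $\ell$ rounds gives the amortized update time $(r+2)^{\Oh(pk\ell)}\cdot T$, since $\ell\leq pk\ell$ is absorbed in the exponent. Initialization and space are $\Oh(I)$ and $\Oh(M)$ because only the two helper structures are stored.

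For the randomized variant, let $q\coloneqq 2\ell(r+2)^{pk\ell}$ be an upper bound on the number of helper queries issued in one run. We instantiate the helpers with error parameter $\eps/q$, and a union bound then shows that all answers in a run are correct with probability at least $1-\eps$. Rescaling the error in this way multiplies the costs by $\log\tfrac{q}{\eps}=\Oh(pk\ell\log(r+2))\cdot\log\tfrac1\eps$.

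The main obstacle is the same as in \cref{lem:summary-dom}. The ladder algorithm is an adaptive internal adversary, because later queries depend on the answers to earlier ones, while the helpers are only guaranteed against an oblivious adversary. We resolve this the same way: keep $\ell$ independent copies of each helper and use the $i$th copy only for the queries of round $i$. Then, across different runs, the queries sent to copy $i$ do not depend on that copy's own previous answers within a run. The $\ell$ copies account for the extra $(r+2)^{\Oh(pk\ell)}$ factor in the initialization time and space in the randomized statement.
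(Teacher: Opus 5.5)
Your proposal is correct and follows the paper's proof, which reruns the argument of \cref{lem:summary-dom}. It uses \cref{thm:ladder-bound} to bound the number of rounds by $\ell$, reduces each round to $(r+2)^{\Oh(pk\ell)}$ helper queries via \cref{lem:candidateInd-far,lem:witnessInd-near}, and handles the randomized case with the same error rescaling and independent per-round copies. Your parameter checks ($|W|\le p\ell$ with $\max\rho\le r+1$ for the far queries, $|S|\le k\ell$ with $\max\rho\le r$ for the near queries) and your absorption of the factors $\ell$ and $\log\tfrac{q}{\eps}$ into $(r+2)^{\Oh(pk\ell)}$ are exactly what that analogy requires.
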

\begin{proof}
	The proof is completely analogous to that of \cref{lem:summary-dom}, with \cref{thm:ladder-bound} used to argue that the ladder algorithm employed with parameter $p$ is correct and performs at most $\ell$ iterations, and \cref{lem:candidateInd-far,lem:witnessInd-near} used to reduce the at most $2\ell$ queries of the form $\witnessInd{\cal L,r,p}$ and $\candidateInd{W,r,k}$ to $(r+2)^{\Oh(pk\ell)}$ queries of the form $\near{S,\rho}$ and $\far{S,\rho}$ that are relayed to the data structures $\nearVertexDS{G}{r,k\ell}$ and $\farVertexDS{G}{r+1,p\ell}$. 
\end{proof}

\section{Counting and finding small patterns in sparse graphs}\label{sec:dvorak}



In this section we first recall the main results of Dvo\v{r}\'ak and T\r{u}ma~\cite{DvorakT13} about dynamic data structures for counting homomorphisms and (induced) subgraphs in classes of bounded expansion, and then we generalize them in various ways for the sake of using them in the next section.
The main result Dvo\v{r}\'ak and T\r{u}ma can be expressed as follows (this is a slightly more general formulation of \cref{thm:overview-DT}).

\begin{theorem}[\cite{DvorakT13}] \label{thm:dvorak-tuma}
	Let $H$ be a fixed graph, $\G$ be a graph class of bounded expansion, $k\in \N$, and $\F$ be either $\Hom$, $\Sub$, or $\ISub$. Suppose $G$ is a dynamic $k$-colored graph on $n$ vertices that belongs to $\G$ at all times. Then, there is a~data structure $\UnMaps_{\F, H}[G]$ that is able to determine the quantity $|\F(H, G)|$ after every update. The amortized update time is $\Oh_{\G, H, k}(\log^h n)$, where $h = {|H| \choose 2} - 1$, the initialization time is $\Oh_{\G,H,k}(n)$, and the space complexity is $\Oh_{\G, H, k}(n)$. 
\end{theorem}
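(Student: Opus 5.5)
This statement is cited from~\cite{DvorakT13}, so the plan below reconstructs their argument rather than giving something new.

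\textbf{Step 1: reduce to homomorphisms.} I would first reduce $\Sub$ and $\ISub$ to $\Hom$ using Möbius inversion over the partition lattice of $V(H)$.
\begin{itemize}
	\item $|\Sub(H,G)|$ is a fixed integer combination of $|\Hom(H/\mathcal{P},G)|$ over partitions $\mathcal{P}$ of $V(H)$. Here $H/\mathcal{P}$ identifies the parts, and terms whose quotient creates a loop are dropped.
	\item $|\ISub(H,G)|$ is a combination of $|\Sub(H',G)|$ over supergraphs $H'\supseteq H$ on the same vertex set.
	\item In the $k$-colored setting, "non-edge" is encoded by summing over all ways of coloring the extra pairs. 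These are boundedly many terms, all depending only on $H$ and $k$.
\end{itemize}
So it suffices to maintain $|\Hom(F,G)|$ for every colored graph $F$ with at most $|H|$ vertices.

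\textbf{Step 2: reduce homomorphisms to stars in an augmented graph.} Maintain a tower of fraternal augmentations $\vec G_1,\ldots,\vec G_{t}$ with $t=|H|$, using \cref{thm:bf} at each level and \cref{lem:fraternal} to keep every level of bounded outdegree.
\begin{itemize}
	\item Edges carry colors, lengths, and labels from $[d]$ that distinguish out-edges with the same tail.
	\item For any homomorphic image of a connected $F$ with at most $t$ vertices, the fraternal closure makes the image's vertex set "transitively reachable" from some vertex by out-edges in $\vec G_{t}$.
	\item Concretely, every $\varphi\in\Hom(F,G)$ with connected image is certified by a unique pair: a root $v$, and a bounded-length sequence of labelled out-edge walks from $v$ reaching every image vertex.
	\item Then $|\Hom(F,G)|$ becomes a sum over $v$ of local quantities. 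Each local quantity is determined by the labelled, colored structure within bounded outdegree-distance of $v$, i.e.\ within the out-closure of $v$ of size $\Oh(d^{t})$. Disconnected $F$ factor as products over components.
\end{itemize}
Uniqueness of the certificate (canonical choice of root and walks, e.g.\ the lexicographically first pattern) avoids double counting. If the canonical choice is hard to fix, I would instead use inclusion--exclusion over patterns, as in the overview's shortcut argument.

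\textbf{Step 3: dynamic maintenance.} Store, for each vertex $v$, its local contribution, and maintain the global sum.
\begin{itemize}
	\item An edge update in $G$ triggers $\Oh(\log n)$ amortized reorientations in $\vec G_1$. Each change at level $i$ creates or deletes $\Oh_d(1)$ fraternal edges at level $i+1$, giving $\Oh(\log n)$ amortized changes there per change, hence $\Oh(\log^{t} n)$ changes at the top level.
	\item A change to an edge $(x,y)$ at the top level only affects contributions of vertices $v$ that reach $x$ by backward walks of bounded length. This set is not bounded-size, because in-degrees are unbounded.
\end{itemize}

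\textbf{Main obstacle: unbounded in-degree.} Resolving this is the crux. I would handle it by storing, at each vertex $x$, aggregated partial counts indexed by the "type" of the partial pattern hanging below $x$. The contribution of $v$ is then a bounded-arity polynomial in the aggregates of its bounded out-neighborhood. Thus an update at $x$ changes only $\Oh(1)$ aggregates, which push to parents only along out-edges, each in $\Oh(1)$ time.

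The careful bookkeeping here is what yields the exponent $\binom{|H|}{2}-1$ in the paper rather than $|H|$. It also fixes the tower depth as the number of augmentation rounds actually needed. Initialization on an edgeless graph and linear space are immediate, since all aggregates are zero or trivial initially.
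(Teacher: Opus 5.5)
Your Step~1 is fine: it is the same reduction chain the paper recalls in \cref{sec:app-weighted-dvorak}, namely inclusion--exclusion over coloured supergraphs for $\ISub$ and integer combinations over projections for $\Sub$. The gap is in Step~2, whose structural claim is false, and Step~3 is built on it.

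Take $G=K_{1,n}$, which is a forest, and $F=P_3$. Map the middle vertex of $F$ to the centre and its two ends to arbitrary leaves; this gives $|\Hom(F,G)|\geq n^2$. But every level of your tower has bounded outdegree. So a root $v$, together with a bounded-length system of labelled out-walks from $v$ and the bounded information of which vertex of $F$ goes where, certifies only $\Oh(1)$ maps per $v$, hence $\Oh(n)$ maps overall. This holds no matter how many augmentation rounds you perform.

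The direction is reversed. With bounded outdegree, augmentation makes image vertices reach \emph{common} vertices along out-edges (compare the shortcuts in \ref{c:center}), while the other image vertices range over unbounded in-neighbourhoods. In the example, the leaves are arbitrary in-neighbours of the centre. Dvo\v{r}\'ak and T\r{u}ma orient with bounded \emph{in}degree, and there the pattern does hang below a root along out-edges; but in that convention the out-neighbourhoods are the unbounded ones. So $|\Hom(F,G)|$ is not a sum of local quantities over bounded out-closures. Unbounded in-degree is not a maintenance nuisance to patch in Step~3; it is the core of the counting problem.

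What replaces your Step~2 has two parts, as recalled in the appendix.
\begin{itemize}
\item There is a natural bijection between $\Hom(H,G)$ and pairs $(H',\varphi')$ with $H'\in\mathcal{H}^e$ and $\varphi'\in\Hom(H',G')$, where $G'$ is an $h$-th augmentation of $G$. As I understand it, this comes from pulling $G'$ back along $\varphi$, recolouring, and $0$-contracting.
\item For $h=\binom{|H|}{2}-2$, every $H'\in\mathcal{H}^e$ is an elder graph.
\end{itemize}
Elderness yields the vineyard decomposition into clans, each with a bounded tuple of already-placed ghosts. The dynamic programme stores, for each clan and each tuple of ghost images, the sum over the \emph{unbounded} set of admissible images of the clan root. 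An edge insertion is handled by guessing which pattern edges land on the new edge and multiplying stored entries of the resulting independent subproblems. The vineyard structure built from elderness is what lets each update touch only boundedly many entries.

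Your aggregates paragraph gestures at something like this, but it is set in the reversed picture. It also does not say what is summed over, which tuple indexes an aggregate, or why that tuple lies in a bounded neighbourhood. Without first closing the pattern under augmentation, that last property fails.

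Two smaller points:
\begin{itemize}
\item The exponent $\binom{|H|}{2}-1$ does not come from bookkeeping. It is just $h+1$, the depth of the Brodal--Fagerberg tower needed for elderness.
\item Your fallback of inclusion--exclusion over shortcut patterns is circular. \cref{lem:IE} only converts distance conditions into rooted homomorphism counts, and the paper maintains those via \cref{lem:dyn-rooted-homs} and \cref{thm:homcounts-relational}, i.e.\ via this very theorem.
\end{itemize}
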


\subsection{Finding examples of mappings}


While the data structure of \cref{thm:dvorak-tuma} is able to \emph{count} the appearances of $H$ in $G$ as an induced subgraph (that is, determine $|\ISub(H, G)|$), it is not easy to restore any example of such a mapping from the data structure, because the counting is performed using the Inclusion--Exclusion Principle. In fact, Dvo\v{r}\'ak and T\r{u}ma explicitly ask the question whether examples of induced subgraph isomorphisms can be also reported efficiently~\cite[Section~5]{DvorakT13}. While the same issue also appears when counting subgraphs, restoring examples of homomorphisms can be done by carefully tracing transitions with nonzero contributions in the designed dynamic programming, as this part of the argumentation in \cite{DvorakT13} does not involve subtractions or counting using the Inclusion-Exclusion Principle.


In this section, we resolve all these open questions and prove that example mappings can be efficiently reported both for homomorphisms and for (induced) subgraphs, at the cost of making the data structure randomized.
Formally, we prove the following extension of \cref{thm:dvorak-tuma}, already announced in \cref{sec:overview}.


\reporting*

For the algorithms in the following sections we will only need the case $\F = \Hom$, which is the only case that is easy to handle for the argumentation provided in the work of Dvo\v{r}\'ak and T\r{u}ma~\cite{DvorakT13} (despite not being stated and proven explicitly there). However, the toolbox that we need to introduce for resolving the harder cases of $\F = \Sub$ and $\F = \ISub$ will turn out to be useful later on; and besides, these cases can be also viewed as results of independent interest. In particular, even though the case $\F=\Hom$ does not actually require randomization, in the proofs of \cref{thm:main-domset-be,thm:main-indset-be} we will reuse some randomized parts of the toolbox, and the resulting data structures are randomized.


Let us first introduce a handy definition. A \emph{vertex function} is a function $f$ that takes a graph $G$ and its vertex $v$ and returns a nonnegative integer $f(G,v)$. If we always have $f(G,v)\in \{0,1\}$, we call the vertex function $f$ \emph{binary}. 

The following lemma, which draws inspiration from the works of Majewski, Pilipczuk, and Zych-Pawlewicz~\cite{Majewski24} and of Nadara, Pilipczuk, and Smulewicz~\cite{Nadara22}, is our main tool for recovering examples while given access only to counting. We call this technique \emph{fingerprint retrieval}.
\begin{lemma}(Fingerprint Retrieval Lemma) \label{lem:restore-by-sampling}
	Let $f$ be a vertex function and let $G$ be a dynamic graph on $n$ vertices.
	Suppose that for any fixed weight function $\wei \colon V(G) \to \Z$, there is a data structure $\WeiSum_{f, \wei}[G]$ that reports the quantity $\sum_{v \in V(G)} f(G, v)\cdot \wei(v)$ after every update to $G$.
	Also, suppose that
	\begin{enumerate}[label={(\arabic*)}]
		\item \label{it:binary-f} $f$ is binary, or
		\item \label{it:general-f} there is a data structure $\Ver_{f}[G]$ that supports the following queries: for a given vertex $v$, decide whether $f(G, v)>0$.
	\end{enumerate}
	Assume further that for the data structure $\WeiSum_{f, \wei}[G]$ and, in case \ref{it:general-f}, also for $\Ver_f[G]$, we have the following complexity guarantees: amortized update and query time at most $T$, initialization time at most~$I$, and space complexity at most $M$.

	Then, for every~$\eps>0$ there exists a randomized data structure $\RetVer_{f,\eps}[G]$ that support the following query: return any $v$ such that $f(G, v) > 0$, or $\bot$ if no such vertex exists. The amortized update and query time is $\Oh(T\log n\log \tfrac{1}{\eps})$, the initialization time is $\Oh((I+n)\log n\log \tfrac{1}{\eps})$, and the space complexity is $\Oh((M+n)\log n\log \tfrac{1}{\eps})$. The data structure never provides false positive, but may fail to provide an example vertex $v$ with probability at most $\eps$, against an oblivious adversary.
\end{lemma}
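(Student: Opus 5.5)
We follow the sampling scheme from the overview. Identify $V(G)$ with $[n]$ and let $X\coloneqq\{v\colon f(G,v)>0\}$. Upon initialization, for every $i\in\{0,1,\ldots,\lceil\log n\rceil\}$ and $j\in[\xi]$, with $\xi=\Theta(\log\tfrac1\eps)$, we sample a set $S_{i,j}\subseteq V(G)$ that contains each vertex independently with probability $2^{-(i+1)}$. For each pair $(i,j)$ we maintain two instances of $\WeiSum$: one with weight $\wei_{i,j}=\mathbf 1_{S_{i,j}}$, and one with weight $\wei'_{i,j}(v)=v\cdot\mathbf 1_{S_{i,j}}(v)$. In case \ref{it:binary-f} we additionally maintain $\WeiSum_{f,\mathbf 1_{S_{i,j}}\cdot v^2}$; the role of this third instance is explained below. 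In case \ref{it:general-f} we maintain one instance of $\Ver_f[G]$. Every update is relayed to all $\Oh(\log n\log\tfrac1\eps)$ instances. Sampling the sets takes $\Oh(n\log n\log\tfrac1\eps)$ time and space, which gives the claimed initialization and space bounds.

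To answer a query, we iterate over all pairs $(i,j)$. For each pair we read $A=\sum_{v\in X\cap S_{i,j}} f(G,v)$ and $B=\sum_{v\in X\cap S_{i,j}} f(G,v)\,v$. If $A>0$ and $A$ divides $B$, we take $u\coloneqq B/A$ as a candidate and verify it.
\begin{itemize}
\item In case \ref{it:general-f} we verify $u$ by checking $u\in[n]$ and calling $\Ver_f$ on $u$.
\item In case \ref{it:binary-f} we have $f\in\{0,1\}$, so $A=|X\cap S_{i,j}|$. We accept $u$ exactly when $A=1$. In that case $B$ is the unique element of $X\cap S_{i,j}$, so $u$ is correct.
\end{itemize}
We return the first candidate that passes verification, and $\bot$ if none does. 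In case \ref{it:binary-f} the third instance is not even needed, because $A=1$ already certifies the answer. In case \ref{it:general-f} the verification step is the only thing that guarantees the absence of false positives: when $|X\cap S_{i,j}|\ge2$, the weighted average $B/A$ can land on some vertex outside $X$ or fail to be an integer, and the verification filters these out. Thus in both cases any returned vertex is in $X$, and no false positives occur regardless of the randomness.

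For the failure probability, assume $X\ne\emptyset$ and let $m=|X|$. Fix $i$ with $2^i\le m<2^{i+1}$, and set $p=2^{-(i+1)}\in(\tfrac1{2m},\tfrac1m]$. Then
\[\Pr\bigl[|X\cap S_{i,j}|=1\bigr]=mp(1-p)^{m-1}\ge\tfrac12\cdot\bigl(1-\tfrac1m\bigr)^{m-1}\ge\tfrac1{2e}.\]
When $|X\cap S_{i,j}|=1$, say $X\cap S_{i,j}=\{v\}$, we get $A=f(G,v)$ and $B=f(G,v)\,v$, so $u=v$ passes verification. Since the $\xi$ sets for this $i$ are independent, all of them miss with probability at most $(1-\tfrac1{2e})^{\xi}\le\eps$ for a suitable $\xi=\Theta(\log\tfrac1\eps)$.

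This argument is sound only against an oblivious adversary. The sets $S_{i,j}$ are sampled at initialization and are independent of the graph state at query time, which is fixed beforehand by the oblivious update sequence. So the bound above holds for each query separately, even though the answers to different queries are correlated through the shared sets.

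A query performs $\Oh(\log n\log\tfrac1\eps)$ reads and verifications, each costing $\Oh(T)$. The same bound applies to updates, which gives the claimed $\Oh(T\log n\log\tfrac1\eps)$ time.

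The main subtle point is the verification in case \ref{it:general-f}. There $f$ can take arbitrary nonnegative values, so the equality $A=1$ does not certify uniqueness, and $\Ver_f$ is what rules out spurious averages.
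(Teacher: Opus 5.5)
Your proof is correct and follows essentially the same route as the paper: the same sampling of sets $S_{i,j}$ with probabilities $2^{-(i+1)}$, the same weight functions $\mathbf{1}_{S_{i,j}}$ and $v\cdot\mathbf{1}_{S_{i,j}}(v)$, acceptance on $A=1$ in the binary case and integrality plus $\Ver_f$ in the general case, and the same oblivious-adversary argument. The differences are cosmetic. You use the level $i=\lfloor\log_2 m\rfloor$, which gives $p\in[\tfrac{1}{2m},\tfrac{1}{m})$; your write-up states the endpoints the other way around, but the inequalities you actually use still hold. This level gives the constant $\tfrac{1}{2e}$ instead of the paper's $\tfrac18$, which comes from level $\lfloor\log_2 m\rfloor+1$, and the $v^2$-weighted third instance is, as you note yourself, unnecessary.
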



\begin{proof}
Throughout the proof we assume that the vertex set of $G$ is $V(G)=[n]=\{1,2,\dots,n\}$ with $n>0$. Recall that every update preserves this vertex set.
Thus, every vertex is permanently identified with its integer label.

Assume without loss of generality that $\eps\leq 1$. Set 
\[L\coloneqq \lfloor \log_2 n\rfloor + 1\qquad \textrm{and} \qquad\xi\coloneqq\left\lceil 8\ln\tfrac{1}{\eps}\right\rceil.\]
Note that $\xi\ge 1$.



\paragraph*{Sampling sets $S_{i,j}$.} Upon initialization of the data structure, for every pair $(i,j)\in [0,L]\times [\xi]$ we construct a set $S_{i,j}\subseteq V(G)$ by including every $v\in V(G)$ in $S_{i,j}$ with probability $p_i\coloneqq 2^{-(i+1)}$.
All these random choices are independent over all vertices and all pairs $(i,j)$. After initialization, the sets $S_{i,j}$ remain fixed forever.

We store each set $S_{i,j}$ explicitly as an $n$-bit array indexed by $[n]$.
Hence, the whole family $\{S_{i,j}\colon (i,j)\in [0,L]\times [\xi]\}$ is stored once and uses $\Oh(n(L+1)\xi)$ additional space.
Note that for every fixed subset $X\subseteq V(G)$ and every fixed pair $(i,j)$, the random variable $|S_{i,j}\cap X|$ has distribution $\mathrm{Bin}(|X|,p_i)$.

\paragraph*{Weighted sums.}
For every pair $(i,j)\in [0,L]\times [\xi]$, define functions $\wei^{(0)}_{i,j},\wei^{(1)}_{i,j}\colon V(G)\to \Z$ by
\[
\wei^{(0)}_{i,j}(v) \coloneqq
\begin{cases}
1 & \text{if } v\in S_{i,j},\\
0 & \text{if } v\notin S_{i,j},
\end{cases}
\qquad
\wei^{(1)}_{i,j}(v) \coloneqq
\begin{cases}
v & \text{if } v\in S_{i,j},\\
0 & \text{if } v\notin S_{i,j}.
\end{cases}
\]
These weight functions are fixed after the sampling upon initialization.
As $V(G)=[n]\subseteq \Z$, both $\wei^{(0)}_{i,j}$ and $\wei^{(1)}_{i,j}$ indeed map $V(G)$ to~$\Z$.

Let us give some intuition behind this definition.
By setting $\wei^{(1)}_{i,j}(v)$ to $v$ on $S_{i,j}$ and to $0$ outside, we make sure that for any $X\subseteq V(G)$ with $|X\cap S_{i,j}|=1$, the sum $\sum_{v\in X} \wei^{(1)}_{i,j}(v)$ is exactly the (identifier of the) unique vertex of $X\cap S_{i,j}$. The choice of the probabilities in the definition of the sets $S_{i,j}$ makes sure for any non-empty $X$, the probability that $|X\cap S_{i,j}|=1$ for at least one pair $(i,j)\in [0,L]\times [\xi]$ is at least $1-\eps$. Therefore, if we manage to maintain the sums $\sum_{v\in X} \wei^{(1)}_{i,j}(v)$ for all $(i,j)\in [0,L]\times [\xi]$, where $X\coloneqq \{v\mid f(G,v)>0\}$, then with high probability one of these sums is actually equal to the identifier of some vertex belonging to $X$ --- so the example we seek.


\paragraph*{Data structure.}
We now give the description of the data structure $\RetVer_{f,\eps}[G]$. For each pair $(i,j)\in [0,L]\times [\xi]$ we maintain two instances of the $\WeiSum$ data structure: $\WeiSum_{f,\wei^{(0)}_{i,j}}[G]$ and $\WeiSum_{f,\wei^{(1)}_{i,j}}[G]$.
In case \ref{it:general-f}, we additionally maintain one instance of $\Ver_f[G]$, which will be used to filter out candidates for examples.
Every update of the dynamic graph $G$ is relayed to all the maintained instances.

\paragraph*{Query algorithm.}
When a query is issued to the current $G$, we inspect all the pairs $(i,j)\in [0,L]\times [\xi]$ in lexicographic order.
For every pair $(i,j)$ we compute the values 
\[\alpha_{i,j}\coloneqq \sum_{v\in V(G)} f(G,v)\cdot \wei^{(0)}_{i,j}(v)\qquad\textrm{and}\qquad \beta_{i,j}\coloneqq \sum_{v\in V(G)} f(G,v)\cdot \wei^{(1)}_{i,j}(v)\] by querying the maintained instances of $\WeiSum$.
If $\alpha_{i,j}=0$, then we continue to the next pair.
Since $f(G,v)\ge 0$ for all $v\in V(G)$ and $\wei^{(0)}_{i,j}(v)\in\{0,1\}$, we always have $\alpha_{i,j}\ge 0$.
Consequently, whenever $\alpha_{i,j}\neq 0$, in fact $\alpha_{i,j}>0$.
\begin{itemize}
\item Case \ref{it:binary-f}: If $\alpha_{i,j}=1$, then we return the vertex $\beta_{i,j}$, because then exactly one vertex of $S_{i,j}$ contributes a nonzero value to the sum defining $\beta_{i,j}$. Otherwise, if $\alpha_{i,j}>1$, we continue.
\item Case \ref{it:general-f}: Since $\alpha_{i,j}>0$ and therefore is nonzero, we form the rational number \[q\coloneqq \frac{\beta_{i,j}}{\alpha_{i,j}}.\]
When several positive summands contribute to the sums, this quotient need not identify any unique witness; this is why the algorithm needs to perform an explicit verification step.
If $q$ is not an integer, or if $q\notin[n]$, then we continue.
Otherwise, we query $\Ver_f(G)$ on the vertex $q$.
If $\Ver_f(G)$ confirms that $f(G,q)>0$, then we return $q$; otherwise we continue.
\end{itemize}
If no pair $(i,j)$ leads to returning a vertex, then we output $\bot$:  there is no vertex $v$ with $f(G,v)>0$.

\paragraph*{No false positives.} We now argue that when the query algorithm returns a vertex $v$, then we for sure have $f(G,v)>0$.
Suppose first that we are in the case \ref{it:binary-f} and the algorithm returns a vertex while processing some pair $(i,j)\in [0,L]\times [\xi]$.
Then $\alpha_{i,j}=1$, that is, $\sum_{v\in V(G)} f(G,v)\cdot \wei^{(0)}_{i,j}(v)=1$.
Since $\wei^{(0)}_{i,j}$ is the indicator function of $S_{i,j}$, this is equivalent to $\sum_{v\in S_{i,j}} f(G,v)=1$.
Now, every summand belongs to $\{0,1\}$ due to $f$ being binary.
Therefore, there exists exactly one vertex $x\in S_{i,j}$ such that $f(G,x)=1$, and every other vertex $v\in S_{i,j}\setminus\{x\}$ satisfies $f(G,v)=0$.
Consequently, $\beta_{i,j}=\sum_{v\in S_{i,j}} f(G,v)\cdot v=x$.
We conclude that the query algorithm returns the vertex $x$, and this vertex satisfies $f(G,x)=1>0$.
Since $x\in V(G)=[n]$, the returned value is indeed a valid vertex label.

Now suppose that we are in the case \ref{it:general-f}.
Then the algorithm returns a vertex only after first checking that the candidate $q$ is an integer in $[n]$, and then asking $\Ver_f[G]$, which confirms that the returned vertex $q$ indeed satisfies $f(G,q)>0$.
Hence every returned vertex is correct.
Note that such a returned vertex $q$ need not belong to the currently processed set $S_{i,j}$: when several positive vertices contribute, $q=\beta_{i,j}/\alpha_{i,j}$ is a weighted average of their labels and may accidentally equal the label of a vertex outside $S_{i,j}$.
This is irrelevant, because the required output is any vertex with positive $f$-value and $\Ver_f[G]$ confirms $f(G,q)>0$ before returning~$q$.

Therefore, the query algorithm returns no false positives in both cases.
If the algorithm outputs $\bot$, then it makes no positive claim, so again no false positive occurs.

\paragraph*{Probability of a false negative.} We now bound the probability that the algorithm returns a false negative.
Fix an arbitrary time step of the execution and suppose that a query is made at this time step.
Let $G$ be the graph at that moment.
Because the adversary is oblivious, $G$ depends only on the fixed sequence of updates, and not on the random family $\{S_{i,j}\colon (i,j)\in [0,L]\times [\xi]\}$ sampled at the initialization.

Define \[X\coloneqq \{v\in V(G)\mid f(G,v)>0\}\qquad\textrm{and}\qquad m\coloneqq |X|.\]
Thus, $X$ is a fixed subset of $V(G)$ from the point of view of the probability space governing the initial sampling of the family $\{S_{i,j}\colon (i,j)\in [0,L]\times [\xi]\}$.

If $m=0$, then $f(G,v)=0$ for every vertex $v\in V(G)$.
Therefore, for every pair $(i,j)\in [0,L]\times [\xi]$ we have $\alpha_{i,j}=\sum_{v\in V(G)} f(G,v)\cdot \wei^{(0)}_{i,j}(v)=0$, so the algorithm returns the answer $\bot$, and this answer is correct.
Hence, only the case $m\ge 1$ requires further analysis.


Set \[i^\star\coloneqq \lfloor \log_2 m\rfloor + 1\]
Since $1\le m\le n$, we have $1=\lfloor \log_2 1\rfloor + 1 \le i^\star \le \lfloor \log_2 n\rfloor + 1 = L$.
Let $p\coloneqq p_{i^\star}=2^{-(i^\star+1)}$.
Exponentiating base~$2$ yields $2^{i^\star-1}\le m < 2^{i^\star}$ and therefore \[1/4 = 2^{i^\star-1}\cdot 2^{-(i^\star+1)} \le mp < 2^{i^\star}\cdot 2^{-(i^\star+1)} = 1/2.\]
In particular, $p < 1/(2m)$.

Fix any $j\in [\xi]$ and consider the random variable $Y\coloneqq |S_{i^\star,j}\cap X|$.
Since every vertex of $X$ is included in $S_{i^\star,j}$ independently with probability $p$, we have $Y\sim\mathrm{Bin}(m,p)$.
Therefore \[\Pr(Y=1)=mp(1-p)^{m-1}.\]
Since $p<1/(2m)\le 1/2<1$, Bernoulli's inequality gives $(1-p)^{m-1}\ge 1-(m-1)p$.
As $p<1/(2m)$, we have $(m-1)p < (m-1)/(2m) < 1/2$, so $1-(m-1)p > 1/2$.
Consequently, \[(1-p)^{m-1} > 1/2.\]
Combining this with $mp\ge 1/4$, we get \[\Pr(Y=1)=mp(1-p)^{m-1} > 1/8.\]
Thus, we conclude that \[\Pr\bigl(|S_{i^\star,j}\cap X|=1\bigr) > 1/8\qquad\textrm{for every fixed }j\in [\xi].\]

The sets $S_{i^\star,1},\dots,S_{i^\star,\xi}$ are sampled independently, so the events $\bigl\{|S_{i^\star,j}\cap X|=1\bigr\}$, for $j=1,\dots,\xi$, are independent.
Hence \[\Pr\Bigl(|S_{i^\star,j}\cap X|\neq 1\textrm{ for all }j\in [\xi]\Bigr) \le \left(1-1/8\right)^\xi \le \exp\!\left(-\xi/8\right) \le \eps,\]
where the last inequality holds by the definition of $\xi$.

It remains to prove that if $|S_{i^\star,j}\cap X|=1$ for some $j\in[\xi]$, then the query algorithm returns a correct vertex no later than when it processes the pair $(i^\star,j)$.
Let $x$ be the unique element of $S_{i^\star,j}\cap X$.
Then $f(G,x)>0$ by the definition of $X$.
If the algorithm has already returned a vertex for an earlier pair, then that earlier returned vertex is correct by the no-false-positives property.
So suppose that the pair $(i^\star,j)$ is actually processed.

\begin{itemize}
\item Case \ref{it:binary-f}: Every vertex $v\in S_{i^\star,j}\setminus\{x\}$ satisfies $v\notin X$, hence $f(G,v)=0$, whereas $f(G,x)=1$ because $f$ is binary.
Therefore \[\alpha_{i^\star,j}=\sum_{v\in S_{i^\star,j}} f(G,v)=1\qquad \textrm{and}\qquad \beta_{i^\star,j}=\sum_{v\in S_{i^\star,j}} f(G,v)\cdot v=x.\]
Thus the algorithm returns a correct vertex $x$ satisfying $f(G,x)>0$.
\item Case \ref{it:general-f}: Again, every vertex $v\in S_{i^\star,j}\setminus\{x\}$ satisfies $v\notin X$, hence $f(G,v)=0$.
Therefore,
\[\alpha_{i^\star,j}=\sum_{v\in S_{i^\star,j}} f(G,v)=f(G,x)>0\qquad\textrm{and}\qquad \beta_{i^\star,j}=\sum_{v\in S_{i^\star,j}} f(G,v)\cdot v=f(G,x)\cdot x,\]
so $q=\beta_{i^\star,j}/\alpha_{i^\star,j}=x$.
Since $x\in[n]$, the algorithm queries $\Ver_f[G]$ on $x$, receives confirmation that $f(G,x)>0$, and returns $x$.
\end{itemize}

We conclude that a false negative can occur only if $|S_{i^\star,j}\cap X|\neq 1$ for every $j\in\{1,\dots,\xi\}$, and this happens with probability at most $\eps$.
Therefore, for every fixed query asked at any fixed time step, the probability that the data structure outputs $\bot$ despite $X\neq\emptyset$ is at most $\eps$.
This is exactly the false-negative guarantee claimed in the lemma statement.

\paragraph*{Complexity.} Upon initialization of the data structure, we have to sample and store all the sets $S_{i,j}$ for $(i,j)\in [0,L]\times [\xi]$; this takes $\Oh(n\log n\log \tfrac{1}{\eps})$ time. Also, we have to initialize all the $2(L+1)\xi=\Oh(\log n\log \tfrac{1}{\eps})$ maintained instances of $\WeiSum$ and, in the case \ref{it:general-f}, also the maintained instance of $\Ver$; this takes $\Oh(I\log n\log \tfrac{1}{\eps})$ time.
Therefore, the initialization time is $\Oh((n+I)\log n\log \tfrac{1}{\eps})$, as promised. Note also that the stored sets $S_{i,j}$ take $\Oh(n\log n\log \tfrac{1}{\eps})$ space, the maintained instances of $\WeiSum$ and $\Ver$ take $\Oh(M\log n\log \tfrac{1}{\eps})$ space, and no more space is used by the data structure throughout its lifetime. Therefore, the space complexity is $\Oh((M+n)\log n\log \tfrac{1}{\eps})$.

We are left with analyzing the amortized update and query complexity.
Note that every update to the maintained graph $G$ is relayed to all the $2(L+1)\xi=\Oh(\log n\log \tfrac{1}{\eps})$ maintained instances of $\WeiSum$ and, in the case \ref{it:general-f}, also to the single maintained instance of $\Ver$.
Therefore, the amortized cost of executing an update to $G$ is $\Oh(T\log n\log \tfrac{1}{\eps})$.
For every processed pair it performs two queries to maintained instances of $\WeiSum$ and, in the case \ref{it:general-f}, at most one query to $\Ver_f[G]$.
As for the amortized query complexity, the query algorithm inspects at most $(L+1)\xi=\Oh(\log n\log \tfrac{1}{\eps})$ pairs $(i,j)$, and for each inspected pair it issues two queries to the maintained instances of $\WeiSum$ and, in the case \ref{it:general-f}, at most one query to the maintained instance of $\Ver$.
Besides these calls, the algorithms performs only basic control logic of the loop and the arithmetic tests based on $\alpha_{i,j}$ and $\beta_{i,j}$, which amount to $\Oh(1)$ time per processed pair. Therefore, the amortized query time is $\Oh(T\log n\log \tfrac{1}{\eps})$ as well.
\end{proof}

We remark that in the proof of \cref{lem:restore-by-sampling}, it is possible to replace explicit sampling of the sets $\{S_{i,j}\colon i\in [0,L]\times [\xi]\}$ by sampling $\xi=\Theta(\log n)$ pairwise-independent hash functions $h_1,\ldots,h_\xi\colon V(G)\to [p]$, where $p$ is a prime of magnitude $\Theta(n)$. Notably, storing each such hash function $h_i$ requires only $\Oh(\log n)$ bits, instead of the $\Oh(n\log n)$ bits needed to store the sets $S_{i,1},\ldots,S_{i,\xi}$. So at the cost of a somewhat more complicated probabilistic analysis of the query algorithm, one can reduce the contribution of the sets $S_{i,j}$ to the initialization time and the space complexity from $\Oh(n\log n\log \tfrac{1}{\eps})$ to $\Oh(\log n\log \tfrac{1}{\eps})$. However, regardless of this improvement, we still need to initialize and store $\Oh(\log n\log \tfrac{1}{\eps})$ instances of $\WeiSum$, which contributes terms $\Oh(I\log n\log \tfrac{1}{\eps})$ and $\Oh(M\log n\log \tfrac{1}{\eps})$ to the initialization time and the space complexity, respectively. As we expect $I,M\geq n$ from the implementation of $\WeiSum$, these contributions anyway dominate the potential savings and the improvement has no practical impact on the guarantees asserted in the statement of  \cref{lem:restore-by-sampling}. Therefore, we omit the details.


\bigskip

Before proceeding, let us note a useful property of mappings that is shared by homomorphisms, subgraph isomorphisms, and induced subgraph isomorphisms, which essentially says that roots can be gadgeted using edges of unique colors. In the following, by a \emph{type of mappings} we mean a family of functions whose domain is the vertex set of one graph $H$ and the co-domain is the vertex set of another graph $G$.

\begin{definition}
	We say that a type of mappings $\F$ is \textit{root-gadgetable} if it satisfies the following property: Let $\rt{H}{x}$ and $\rt{G}{u}$ be $k$-colored graphs, for some $k\in \N$, each equipped with one root. Let $H'$ be the graph $H$ modified by adding pendants $x'$ and $x''$ adjacent to $x$, and  $G'$ be the graph $G$ modified by adding pendants $u', u''$ adjacent to $u$, so that edges $xx'$ and $uu'$ have color $k+1$ and edges $xx''$ and $uu''$ have color $k+2$. (Thus, $H'$ and $G'$ are $(k+2)$-colored.) Then
	\[\F(H',G') = \F(\rt{H'}{x,x',x''},\rt{G'}{u,u',u''}),\]
	and there is a natural bijection between $\F(H', G')$ and $\F(\rt{H}{x},\rt{G}{u})$ given by restricting any $\phi \in \F(H', G')$ to  $V(H)$. 
\end{definition}

\begin{lemma}\label{lem:root-gadget}
	Homomorphisms, subgraph isomorphisms, and induced subgraph isomorphisms are root-gadgetable types of mappings.
\end{lemma}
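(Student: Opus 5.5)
The plan is to verify the two claimed properties, the equality of the sets and the restriction bijection, for each of $\Hom$, $\Sub$, $\ISub$, handling all three uniformly as far as possible. Throughout I will use the colors $k+1$ and $k+2$: they occur in $H'$ only on the edges $xx'$ and $xx''$, and in $G'$ only on the edges $uu'$ and $uu''$, since $H$ and $G$ are $[k]$-colored.

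\emph{Step 1: the equality $\F(H',G')=\F(\rt{H'}{x,x',x''},\rt{G'}{u,u',u''})$.} The inclusion ``$\supseteq$'' is trivial. For ``$\subseteq$'', take any $\phi\in\F(H',G')$; in every case $\phi$ is in particular a color-preserving homomorphism. The edge $xx'$ has color $k+1$, so $\phi(x)\phi(x')$ must be an edge of color $k+1$ in $G'$, and the only such edge is $uu'$. Hence $\{\phi(x),\phi(x')\}=\{u,u'\}$. Likewise $\{\phi(x),\phi(x'')\}=\{u,u''\}$. The only vertex common to both pairs is $u$, so $\phi(x)=u$, $\phi(x')=u'$ and $\phi(x'')=u''$. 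If colored edges are modeled as symmetric relations, I will phrase this via the pairs $(a,b)\in R_{k+1}$, but the conclusion does not change.

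\emph{Step 2: restriction lands in $\F(\rt{H}{x},\rt{G}{u})$.} Let $\phi\in\F(H',G')$. No vertex of $V(H)\setminus\{x\}$ can be mapped to $u'$ or $u''$: such a vertex $y$ would have to satisfy $\phi(y)\phi(z)\in E(G')$ for each neighbour $z$ of $y$, while the only neighbour of $u'$ is $u$ via an edge of color $k+1$, which $\phi$ cannot use for edges of $H$ colored in $[k]$. An isolated $y$ needs a separate argument. In the injective cases, $u'$ and $u''$ are already taken by $x'$ and $x''$, so $y$ cannot map there. In the $\Hom$ case, an isolated $y$ could map to $u'$. To handle this I will note that the gadget is applied within the paper to connected $H$, or I will read the definition as requiring $\phi(V(H))\subseteq V(G)$. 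This is the main point that needs care, and it may force restricting the statement for $\Hom$ to $H$ without isolated vertices other than $x$.

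Given this, $\phi|_{V(H)}$ maps into $V(G)$, preserves colored edges, and sends $x$ to $u$. Injectivity, in the $\Sub$ and $\ISub$ cases, is inherited. For $\ISub$, non-edges of $H$ are mapped to non-edges of $G'$ between vertices of $G$, and these are exactly the non-edges of $G$.

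\emph{Step 3: bijectivity.} Injectivity of the restriction map follows from Step 1, because the values on $x'$ and $x''$ are forced. For surjectivity, take $\psi\in\F(\rt{H}{x},\rt{G}{u})$ and extend it by $x'\mapsto u'$ and $x''\mapsto u''$. Then I check the required conditions. Edges are preserved with their colors. Injectivity holds since $u',u''\notin V(G)$. For the induced case, a pair $y x'$ with $y\neq x$ is a non-edge of $H'$, and $\psi(y)\neq u$ by injectivity, so $\psi(y)u'$ is a non-edge of $G'$. The pair $x'x''$ maps to $u'u''$, which is also a non-edge. All three checks are routine.
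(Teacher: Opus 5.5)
Your Steps~1 and~3 follow the paper's argument. The paper also forces $\phi(x)=u$, $\phi(x')=u'$, $\phi(x'')=u''$ using the colors $k+1$ and $k+2$, and then simply asserts that restriction to $V(H)$ ``can be readily verified'' to be a bijection; you spell out that verification.

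The caveat you raise in Step~2 is not a defect of your proof. It is a genuine flaw in the statement, which the paper's one-line verification overlooks. For $\F=\Hom$, a vertex $y\neq x$ that is isolated in $H$ is unconstrained in $H'$ and may be sent to $u'$ or $u''$. The restriction then does not lie in $\Hom(\rt{H}{x},\rt{G}{u})$. For instance, let $H$ consist of two isolated vertices $x,y$ and $G$ of a single vertex $u$. Then $|\Hom(H',G')|=3$, while $|\Hom(\rt{H}{x},\rt{G}{u})|=1$. Your handling of $\Sub$ and $\ISub$ (via injectivity) and of non-isolated $y$ (via the colors of the edges at $u'$ and $u''$) is correct.

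Of the remedies you list, only restricting the statement is sound. For $\Hom$, assume that every vertex of $H$ other than $x$ has a neighbor in $H$; your proof then goes through verbatim. The other two do not work:
\begin{itemize}
\item The claim that the paper applies the gadget only to connected $H$ is inaccurate: \cref{lem:map-vertex} and \cref{thm:dvorak-tuma-examples} are stated for an arbitrary fixed $H$.
\item Reading $\F(H',G')$ as containing only maps with $\phi(V(H))\subseteq V(G)$ changes the quantity that the counting data structure actually maintains.
\end{itemize}

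What remains true without any assumption is the equivalence $\F(H',G')\neq\emptyset \iff \F(\rt{H}{x},\rt{G}{u})\neq\emptyset$: redirect isolated vertices to $u$. This equivalence is all that \cref{lem:map-vertex} uses. The proof of \cref{thm:dvorak-tuma-examples}, however, needs the retrieved $u_{i+1}$ to lie in $V(G)$. It therefore requires a small patch when $\F=\Hom$ and $x_{i+1}$ is isolated, e.g.\ if the retrieved $u_{i+1}$ is a pendant vertex, replace it by an arbitrary vertex of $G$.
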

\begin{proof}
	Let $\F \in \{\Hom, \Sub, \ISub\}$. Let $\phi \in \F(H', G')$. As (induced) subgraph isomorphisms are also homomorphisms, we have that $\phi$ is a homomorphism from $H$ to $G$. As $x$ and $u$ are the only vertices simultaneously incident to edges of colors $k+1$ and $k+2$ in $H'$ and $G'$, respectively, we must have that $\phi(x)=u$, and consequently also $\phi(x') = u'$ and $\phi(x'') = u''$. It follows that $\F(H',G') = \F(\rt{H'}{x,x',x''},\rt{G}{u,u',u''})$. One can then readily verify that restricting any member of $\F(H', G')$ to $V(H)$ defines a bijection between $\F(H', G')$ and $\F(\rt{H}{x}, \rt{G}{u})$, for any $\F$ from $\{\Hom, \Sub, \ISub\}$.
\end{proof}

We now proceed to the proof of \cref{thm:dvorak-tuma-examples}, which will consist of two steps. First, we are going to design a data structure $\MapVer$ that can identify a possible image of a single vertex $x$ of $H$. Then, we are going to use it repeatedly to identify images of all the vertices of $H$ one by one.

Let us focus on designing $\MapVer$ first. The idea  is to apply \cref{lem:restore-by-sampling} for a vertex function $f$ defined as follows: $f(G, u)$ is the number of mappings from $\F$ that map $x$ to $u$. However, in order to follow this path we have to enrich \cref{thm:dvorak-tuma} with a capability of counting \emph{weighted} mappings. The following definition will be useful.

\begin{definition} \label{def:mapping-value}
	Let $H$ and $G$ be graphs and $\wei\colon V(H)\times V(G)$ be a weight function defined on pairs of vertices: one from $H$ and one from $G$. Let $\phi \colon V(H) \to V(G)$ be a mapping. We define the \textit{value} of $\phi$ (with respect to $\wei$) as follows: \[\val_\wei(\phi) = \prod_{x \in V(H)} \wei(x, \phi(x)).\] Consequently, for a type of mappings $\F$, we define $\val_\wei(\F(H, G))$ as $\sum_{\phi \in \F(H, G)} \val_\wei(\phi)$. This definition can be extended to mappings between rooted graphs in the expected way.
\end{definition}

With this definition, we state a weighted version of \cref{thm:dvorak-tuma} that will help us in applying \cref{lem:restore-by-sampling}. 

\begin{theorem} \label{lem:weighted-dvorak}
	Let $k\in \N$, $H$ be a fixed $k$-colored graph, $\G$ be a graph class of bounded expansion, and $\F$ be either $\Hom$, $\Sub$, or $\ISub$. Let $G$ be a dynamic $k$-colored graph on $n$ vertices that belongs to $\G$ at all times, and let $\wei \colon V(H) \times V(G) \to \Z$ be a fixed weight function. Then there is a data structure $\WeiMaps_{\F, H, \wei}[G]$ that is able to determine $\val_\wei(\F(H, G))$ after each update. The amortized update time is $\Oh_{\G, H, k}(\log^h n)$, where $h = {|H| \choose 2} - 1$, while the initialization time and the space complexity is~$\Oh_{\G, H, k}(n)$. 
\end{theorem}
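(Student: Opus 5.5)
The plan is to re-run the argument of Dvo\v{r}\'ak and T\r{u}ma behind \cref{thm:dvorak-tuma} and check that every step survives when each mapping $\phi$ contributes $\val_\wei(\phi)$ instead of $1$. That argument has three layers.
\begin{enumerate}
\item A dynamically maintained tower of fraternal augmentations, built from Brodal--Fagerberg orientations (\cref{thm:bf,lem:fraternal}). It does not depend on $\wei$ and is reused verbatim.
\item A reduction of $|\Hom(H,G)|$ to a bounded number of counts of rooted homomorphisms of oriented, edge-labelled patterns into the augmentation. These are counted by a dynamic program that stores, at each vertex $v$, the number of partial mappings of the pattern ``hanging below'' $v$ along out-edges. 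The patterns are indexed by orientations and labellings of $H$ and by which edges of $H$ are realized by which augmented edges.
\item Inclusion--Exclusion (M\"obius inversion over partitions of $V(H)$, and over supersets of edge sets) that expresses $|\Sub(H,G)|$ and $|\ISub(H,G)|$ as fixed integer combinations of homomorphism counts $|\Hom(H',G)|$ over quotients and supergraphs $H'$ of $H$.
\end{enumerate}

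The key observation is that $\val_\wei(\phi)=\prod_x \wei(x,\phi(x))$ factorizes over the vertices of $H$. Hence in layer 2 every dynamic-programming entry can be redefined as a sum of products of weights, with the product taken over the pattern vertices already mapped. When a pattern vertex $x$ is mapped to $v$, the local combination step multiplies by $\wei(x,v)$ instead of by $1$. No other part of the recurrences changes. The number of entries touched per update, and therefore the $\Oh_{\G,H,k}(\log^h n)$ amortized bound, stays the same, since each entry still costs $\Oh_{H}(1)$ arithmetic operations.

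Weights need care in the orientation-based decomposition. A single homomorphism of $H$ is counted exactly once, via a canonical choice of pattern; I would verify that this canonical choice depends only on $\phi$ and the augmentation, not on anything weight-related, so the weighted sums decompose identically.

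For layer 3, Inclusion--Exclusion must be applied to weighted counts. For the subgraph case, $\Hom(H,G)$ is partitioned according to the kernel partition $\sigma$ of $\phi$: each $\phi$ with kernel $\sigma$ corresponds to an injective map from the quotient $H/\sigma$ into $G$. Assigning the quotient vertex $X\in\sigma$ the weight $\prod_{x\in X}\wei(x,\cdot)$ makes values match exactly. M\"obius inversion on the partition lattice then expresses $\val_\wei(\Sub(H,G))$ as a combination of $\val_{\wei_\sigma}(\Hom(H/\sigma,G))$ with product weights $\wei_\sigma$. These are still fixed weight functions, so the weighted homomorphism structure applies to them. Quotients by partitions that merge adjacent vertices contribute zero and can be dropped.

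For induced subgraph isomorphisms, the standard inclusion--exclusion over supersets of $E(H)$ on the same vertex set needs no weight change, because the vertex set and the map are unchanged.

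Edge colors are handled as in the original by colored patterns. Numbers remain polynomial in $n$ for bounded weights; in general I would note that arithmetic is on integers of size at most $n^{\Oh(|H|)}\cdot\max|\wei|^{|H|}$, consistent with the computation model.

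I expect the main obstacle to be the middle layer. The original exposition in~\cite{DvorakT13} counts unweighted objects, and the reduction may implicitly use symmetries: for example, it may divide by automorphism counts or identify patterns up to isomorphism. Such steps break once different vertices of $H$ carry different weights. I would try first to fix $H$ with labelled vertices throughout, so that patterns are counted as labelled maps and no division by symmetries occurs, and then check that the correspondence between homomorphisms and pattern-realizations is a genuine bijection of labelled maps.
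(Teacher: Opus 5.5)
Your plan is essentially the paper's own proof from the appendix. The paper also reruns the Dvo\v{r}\'ak--T\r{u}ma chain with weights: the elder-graph dynamic program via additivity of $\val_\wei$ over disjoint unions and multiplicativity over products, the bijection with patterns in the augmentation, projections with unchanged integer coefficients and product weights, and inclusion--exclusion over colored supergraphs for $\ISub$. One point to tighten: the middle-layer patterns $\mathcal{H}^e$ are $0$-contractions, so a pattern vertex is a part $p$ of a partition of $V(H)$. It must therefore carry the weight $\prod_{x\in p}\wei(x,\cdot)$ rather than $\wei(x,\cdot)$ for a single $x$. This is the same product-weight trick you already use for quotients, and the paper does it in \cref{lem:weighted-hom-proj}.
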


\cref{lem:weighted-dvorak} can be proved by a rather straightforward modification of the proof of \cref{thm:dvorak-tuma} due to Dvo\v{r}\'ak and T\r{u}ma.
On a high level, \cref{thm:dvorak-tuma} is proven by means of consecutive reductions:
\begin{itemize}
	\item counting induced subgraph isomorphisms is reduced to counting subgraph isomorphisms;
	\item counting subgraph isomorphisms is reduced to counting homomorphisms; and
	\item counting homomorphisms is reduced to counting homomorphisms in a special case, where both $H$ and $G$ are additionally directed, but $H$ is a so-called \textit{elder graph}.
\end{itemize}
The first two reductions use the Inclusion--Exclusion Principle, the third reduction directly expresses general homomorphism counts as sums over homomorphism counts from elder graphs, while homomorphism counts from elder graphs are maintained by maintaining suitable dynamic programming tables.
As can be expected, it is a routine task to incorporate weighted counting into all these steps, in particular into the dynamic programming. For the sake of completeness, in \cref{sec:app-weighted-dvorak} we provide a detailed exposition of the modifications that need to be applied to the proof of \cref{thm:dvorak-tuma} in order to derive  \cref{lem:weighted-dvorak}.

With \cref{lem:weighted-dvorak} in hand, we are now able to design the promised $\MapVer$ data structure:

\begin{lemma} \label{lem:map-vertex}
	Let $\eps>0$, $k\in \N$, $\rt{H}{x}$ be a fixed $k$-colored graph with one root $x$, $\G$ be a graph class of bounded expansion, and $\F$ be either $\Hom$, $\Sub$, or $\ISub$. 
	Let $G$ be a dynamic $k$-colored graph on $n$ vertices that belongs to $\G$ at all times.
	Then, there is a randomized data structure $\MapVer_{\F, \rt{H}{x}, k,\eps}[G]$ which after every update is able to report that $\F(H,G)=\emptyset$, or provide a vertex $u$ of $G$ such that $\F(\rt{H}{x},\rt{G}{u})$ is non-empty. The amortized update time is $\Oh_{\G, \rt{H}{x}, k}(\log^{\Oh_{\rt{H}{x}}(1)} n \log{\frac{1}{\eps}})$, the initialization time is $\Oh_{\G, \rt{H}{x}, k}(n \log n \log {\tfrac{1}{\eps}})$, and the space complexity is $\Oh_{\G, \rt{H}{x}, k}(n \log n \log {\tfrac{1}{\eps}})$. The data structure never provides false positives, but may fail to provide an example vertex $u$ with probability at most $\eps$, against an oblivious adversary.
\end{lemma}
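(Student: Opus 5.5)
We will apply the Fingerprint Retrieval Lemma (\cref{lem:restore-by-sampling}) in case~\ref{it:general-f} to the vertex function
\[f(G,u)\coloneqq |\F(\rt{H}{x},\rt{G}{u})|,\]
the number of $\F$-mappings from $H$ to $G$ sending $x$ to $u$. The returned vertex $u$ then satisfies $\F(\rt{H}{x},\rt{G}{u})\neq\emptyset$, as the statement requires. If $\bot$ is returned, then $f\equiv 0$, which means $\F(H,G)=\emptyset$. Two ingredients are needed: the data structures $\WeiSum_{f,\wei}[G]$ for every fixed $\wei\colon V(G)\to\Z$, and a verifier $\Ver_f[G]$.

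For $\WeiSum_{f,\wei}[G]$ we use \cref{lem:weighted-dvorak} with the pair-weight function $\wei'\colon V(H)\times V(G)\to\Z$ defined by $\wei'(x,u)\coloneqq \wei(u)$ and $\wei'(y,u)\coloneqq 1$ for $y\neq x$. Then
\[\val_{\wei'}(\F(H,G))=\sum_{\phi\in\F(H,G)}\wei(\phi(x))=\sum_{u\in V(G)} f(G,u)\,\wei(u).\]
So the data structure $\WeiMaps_{\F,H,\wei'}[G]$ maintains exactly the required sum. It has amortized update time $\Oh_{\G,H,k}(\log^{h}n)$ with $h=\binom{|H|}{2}-1$, and linear initialization time and space. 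No modification of $G$ is needed here, so the bounded-expansion promise is used as is.

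The verifier is the only nontrivial part, since it must decide whether $f(G,u)>0$ for a \emph{query} vertex $u$. For this we use root-gadgeting (\cref{lem:root-gadget}). Let $H'$ be $H$ with pendants $x',x''$ attached to $x$ via edges of fresh colors $k+1$ and $k+2$. We maintain the data structure $G'$, which is $G$ together with two extra vertices $u',u''$. These are isolated during updates. When verifying a vertex $u$, we temporarily insert $uu'$ with color $k+1$ and $uu''$ with color $k+2$, read off $|\F(H',G')|=|\F(\rt{H}{x},\rt{G}{u})|$ from \cref{thm:dvorak-tuma} (the unweighted case of \cref{lem:weighted-dvorak}), and then delete the two edges. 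By \cref{obs:pendants-bnd-exp}, $G'$ stays in a bounded expansion class $\Cc'$ depending only on $\Cc$. Handling $k+2$ colors only changes the constants. Each verification therefore costs $\Oh(1)$ updates, i.e., polylogarithmic amortized time. Amortization is fine because every temporary insertion is undone. Both the inserts and the deletes are counted in the amortized bound, and with the bounded number of queries per update this is absorbed into the stated complexities.

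Finally, we plug these into \cref{lem:restore-by-sampling}. This gives update and query time $\Oh(T\log n\log\frac1\eps)$ with $T=\log^{\Oh_H(1)}n$, initialization time $\Oh((n+I)\log n\log\frac1\eps)=\Oh_{\G,H,k}(n\log n\log\frac1\eps)$, and the matching space bound. The no-false-positive and oblivious-adversary guarantees are inherited from that lemma.

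The main point to be careful about is the verifier. We must ensure that the temporary gadget edges keep the graph within a fixed bounded-expansion class; this holds because adding pendants preserves bounded expansion. We must also ensure that the amortized analysis of \cref{thm:dvorak-tuma} tolerates update sequences generated by queries; it does, since its amortized guarantees hold for arbitrary update sequences. A further point is that the vertices of $G'$ carry the additional colors, so we have to check that the weighted variant counts colored mappings correctly.
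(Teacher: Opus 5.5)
Your proposal is correct and follows the paper's proof essentially verbatim. It uses the same vertex function $f(G,u)=|\F(\rt{H}{x},\rt{G}{u})|$, the same weight function $\wei'$ to obtain $\WeiSum_{f,\wei}[G]$ from \cref{lem:weighted-dvorak}, and the same root-gadget verifier built from \cref{thm:dvorak-tuma} on $G$ plus two normally isolated vertices, all plugged into case~\ref{it:general-f} of \cref{lem:restore-by-sampling}. (Two small points: the $\Oh(\log n\log\frac1\eps)$ calls to $\Ver_f[G]$ per query are already accounted for inside \cref{lem:restore-by-sampling}, so no ``bounded number of queries'' assumption is needed; and a returned $\bot$ is correct only with probability at least $1-\eps$, not always.)
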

\begin{proof}
	Working towards an application of \cref{lem:restore-by-sampling}, we are going to construct the required data structures $\WeiSum_{f, \wei}[G]$ and $\Ver_f[G]$ such that the resulting data structure $\RetVer_{f,\eps}[G]$ fits the requirements for $\MapVer_{\F, \rt{H}{x}, k,\eps}[G]$.
		
	Let $f$ be a vertex function defined as follows: for a graph $G$ and a vertex $u$ of $G$, 
	\[f(G,u)\coloneqq |\F(\rt{H}{x},\rt{G}{u})|.\] 
	Consider any weight function $\wei\colon V(G)\to \Z$ and define a weight function $\wei' \colon V(H) \times V(G) \to \Z$ in the following way: for $(y,v)\in V(H)\times V(G)$,
	\[
	\wei'(y,v) =
	\begin{cases}
		\wei(v) & \text{if } y = x, \\
		1  & \text{otherwise.}
	\end{cases}
	\]
	Note that for $\wei'$ defined in this way, we have  \[\val_{\wei'}(\F(H, G)) =\sum_{\phi\in \F(H,G)} \wei(\phi(x)) = \sum_{u\in V(G)} |\F(\rt{H}{x},\rt{G}{u})|\cdot \wei(u)= \sum_{u \in V(G)} f(G, u)\cdot \wei(u).\]
	Hence, we can set $\WeiSum_{f, \wei}[G] \coloneqq \WeiMaps_{\F, H, \wei'}[G]$, provided by \cref{lem:weighted-dvorak}.
	
	As for designing $\Ver_f[G]$, let us first construct $H'$ from $H$ by adding two pendants $x',x''$ connected to $x$ by edges $xx'$ and $xx''$ of colors $k+1$ and $k+2$, respectively. Further, for any $u \in V(G)$, let $G_u$ be obtained from $G$ by adding two pendants $u',u''$ connected to $u$ by edges $uu'$ and $uu''$ of colors $k+1$ and $k+2$, respectively. As $\F$ is a root-gadgetable type of mappings by \cref{lem:root-gadget}, we have that $|\F(\rt{H}{x}, \rt{G}{u})|=|\F(H',G_u)|$. Hence, deciding whether $f(G, u)>0$ is equivalent to deciding whether $|\F(H', G_u)| > 0$. Consequently, the data structure $\Ver_f[G]$ can be implemented in the following way. Let us initialize the data structure $\UnMaps_{\F, H'}$, provided by \cref{thm:dvorak-tuma}, on the vertex set $V(G) \cup \{p_1, p_2\}$, where $p_1$ and $p_2$ are fresh vertices that will normally stay isolated. All updates to $G$ are relayed to that data structure. Whenever a query about a vertex $u$ is issued to $\Ver_f(G)$, we temporarily add edges $up_1$ and $up_2$ with colors $k+1$ and $k+2$ to $\UnMaps_{\F, H'}$, in order to transform $G$ to $G_u$, and answer that $f(G, u)>0$ if and only if $|\F(H', G_u)| > 0$. After answering the query, the edges $up_1$ and $up_2$ are removed, thus making $p_1$ and $p_2$ isolated again.
	
	If $\G'$ is the class of all the graphs obtainable by repeatedly adding pendants to a graph from $\G$, then $G_u \in \G'$ and $\G'$ has bounded expansion by \cref{obs:pendants-bnd-exp}. Hence, the amortized time complexity of both handling an update to $G$ or answering a single query by $\Ver_f[G]$ is $\Oh_{\G', H', k+2}(\log^{\Oh_{H'}(1)} |G'|) = \Oh_{\G, \rt{H}{x}, k}(\log^{\Oh_{\rt{H}{x}}(1)} n)$. Similarly, the space complexity and the initialization time is $\Oh_{\G, H, k}(n)$. 
	
	We may now
	apply \cref{lem:restore-by-sampling} by supplying it with functions $f, \wei$, data structures $\WeiSum_{f, \wei}[G]$ and $\Ver_f[G]$, and the error parameter $\eps$. This yields the data structure $\RetVer_{f,\eps}[G]$ that may serve as $\MapVer_{\F, \rt{H}{x}, k,\eps}[G]$ with the desired properties.
\end{proof}

%

We are now ready to conclude the proof of the \cref{thm:dvorak-tuma-examples}.

\begin{proof} [Proof of \cref{thm:dvorak-tuma-examples}]
	Let us enumerate $V(H)$ as $\{x_1, \ldots, x_{|H|}\}$ and create a sequence of graphs $H_0, \ldots, H_{|H|}$, where $H=H_0$ and $H_{i}$ is created from $H_{i-1}$ by adding pendants $p_{2i-1}$ and $p_{2i}$, connected to $v_i$ by edges of colors $k+2i-1$ and $k+2i$, respectively. Let also $\eps'\coloneqq \tfrac{\eps}{|H|}$ and $\G'$ be the class of graphs obtainable from graphs from $\G$ by repeatedly adding pendants. By \cref{obs:pendants-bnd-exp}, $\G'$ has bounded expansion.
	
	We maintain data structures $\MapVer_i \coloneqq \MapVer_{\F, \rt{H_i}{x_{i+1}}, k + 2i,\eps'}$ for $i=0, \ldots, |H|-1$, provided by \cref{lem:map-vertex} for the class $\G'$. Each data structure $\MapVer_i$ is initialized on $n+2i$ vertices: the $n$ vertices of $G$ and $2i$ additional vertices that will normally stay isolated. Any update to $G$ is passed to all of the data structures $\MapVer_0, \ldots, \MapVer_{|H|-1}$, so that each data structure $\MapVer_i$ maintains a graph consisting of $G$ plus $2i$ isolated vertices. 
	
	The data structure $\MapVer_i$ will be used to identify the image of $x_{i+1}$ after already fixing images of $x_1, \ldots, x_i$.
	That is, let us assume that we have already identified vertices $u_1, \ldots, u_i \in V(G)$ such that there exists a mapping $\phi \in \F(H, G)$ with $\phi(x_j)=u_j$ for all $j \le i$. Let $G_0, G_1, \ldots, G_i$ be the sequence of graphs such that $G=G_0$ and $G_i$ is created from $G_{i-1}$ by adding pendants $q_{2i-1}$ and $q_{2i}$ connected to $u_i$ by edges of  colors $k+2i-1$ and $k+2i$, respectively. By using the $2i$ isolated vertices in the graph stored in $\MapVer_i$ and temporarily adding $2i$ edges incident to them, we can turn the graph stored in $\MapVer_i$ into $G_i$. Since $\F$ is root-gadgetable (by \cref{lem:root-gadget}), a straightforward induction on $i$ shows that there is a natural bijection between mappings $\phi' \in \F(H_i, G_i)$ and mappings $\phi \in \F(H, G)$ satisfying $\phi(x_j)=u_j$ for all $j\leq i$, defined by $\phi'|_{V(H)} = \phi$. As we assumed that there exists $\phi \in \F(H, G)$ such that $\phi(x_i) = u_i$, we conclude that $\F(H_i, G_i)$ is nonempty. Hence, $\MapVer_i$ is able to provide a vertex $u_{i+1}$ such that there exists $\phi' \in \F(H_i, G_i)$ with $\phi'(x_{i+1})=u_{i+1}$, or conclude that no such vertex exists. Moreover, if such $u_{i+1}$ is found, then it has to be the case that $u_{i+1} \in V(G)$ (that is, it is not possible that $u_{i+1}=q_j$ for some $j\le i$), so by restricting $\phi'$ to $V(H)$ we get a mapping $\phi$ such that $\phi(x_j)=u_j$ for all $j\le i+1$. By repeating this reasoning for $i=0, \ldots, |H|-1$, we get a full mapping $\phi \in \F(H, G)$, as desired; or a conclusion at some step that no such mapping exists.
	
	Note that we have $|H_i|\leq |H|+2i\leq 3|H|$ and $|G_i| = n+2i\leq n + 2|H|$. Therefore, the promised guarantees on the amortized update time, initialization time, and space complexity follow directly from the guarantees for the data structures $\MapVer_i$, provided by \cref{lem:map-vertex}. Clearly the constructed data structure provides no false positives, and since we make $|H|$ queries to the $\MapVer_i$ data structures in total, the probability of a false negative is bounded by $|H|\eps'=\eps$.
\end{proof}

Finally, we remark that since $|\F(H, G)| \le |G|^{|H|}$ and $\val_\wei(\phi) \le |G|$ for any $\F$, $\wei$, and $\phi$ considered in the proofs presented above, all the integers involved in the computations are of bitlength $\Oh(|H|\log n)$, and hence they fit into $\Oh(|H|)$ words in the RAM model. Therefore, all the arithmetic operations can be performed in time $\Oh_H(1)$.

\subsection{Extension to rooted relational structures}

In the next section we will need a variant of \cref{thm:dvorak-tuma-examples} for rooted, colored directed graphs. While it is easy to add those features to the proof of \cref{thm:dvorak-tuma-examples}, for the sake of providing a robust citation interface for future works we choose to take an extra mile and lift \cref{lem:weighted-dvorak,thm:dvorak-tuma-examples} to the setting of rooted relational structures. The proof is a reduction by means of rather simple gadgeteering, similar to the arguments proposed by Dvo\v{r}\'ak and T\r{u}ma in~\cite[Section~4]{DvorakTumaArxiv}, where they argued that \cref{thm:dvorak-tuma} lifts to this setting.

Let us clarify the model. For a signature $\Sigma$, by a \emph{dynamic $\Sigma$-structure} we mean a $\Sigma$-structure $\Af$ whose universe stays fixed, but which is modified by updates of the following kind: given a relation $R\in \Sigma$ and a tuple $\tup u\in \Af^{\ar(R)}$, add/remove $\tup u$ from $R^\Af$. Data structures for dynamic $\Sigma$-structures are defined analogously to the graph setting.


\begin{theorem}\label{thm:homcounts-relational}
	Let $\Cc$ be a graph class of bounded expansion, $\Sigma$ be a signature, $\rt{\Bf}{\tup x}$ be a fixed $\Sigma$-structure with roots $\tup x$, and $\F$ be either $\Hom$, $\Sub$, or $\ISub$. Let $\Af$ be a dynamic $\Sigma$-structure with a universe of size $n$ whose Gaifman graph belongs at all times to $\Cc$. Further, let $\wei\colon U(\Bf)\times U(\Af)\to \Z$ be a fixed weight function. Then there is a deterministic data structure $\WeiMaps_{\F, \rt{\Bf}{\tup x},\wei}[\Af]$ for $\Af$ that supports queries:
	\begin{itemize}
		\item Given a tuple $\tup u\in U(\Af)^{|\tup x|}$, compute $\val_\wei(\F(\rt{\Bf}{\tup x},\rt{\Af}{\tup u}))$.
	\end{itemize}
	Moreover, for every $\eps>0$, there is a randomized data structure $\MapEx_{\F, \rt{\Bf}{\tup x},\eps}[\Af]$ for $\Af$ that supports queries:
	\begin{itemize}
		\item Given a tuple $\tup u\in U(\Af)^{|\tup x|}$, return any mapping $\varphi\in\F(\rt{\Bf}{\tup x},\rt{\Af}{\tup u})$, or $\bot$ if this set is empty.
	\end{itemize}
	The amortized update and query time of the data structure $\WeiMaps_{\F, \rt{\Bf}{\tup x},\wei}[\Af]$ is $\Oh_{\Cc}(\log^{\Oh_{\rt{\Bf}{\tup x},\Sigma}(1)} \log n)$, while the initialization time and the space complexity is $\Oh_{\Cc,\rt{\Bf}{\tup x},\Sigma}(n)$. 
	The amortized update and query time of the data structure $\MapEx_{\F, \rt{\Bf}{\tup x},\eps}[\Af]$ is $\Oh_{\Cc}(\log^{\Oh_{\rt{\Bf}{\tup x},\Sigma}(1)} \log n \log \tfrac{1}{\eps})$, while the initialization time and the space complexity is $\Oh_{\Cc,\rt{\Bf}{\tup x},\Sigma}(n\log n\log \tfrac{1}{\eps})$. The data structure $\WeiMaps_{\F, \rt{\Bf}{\tup x},\wei}[\Af]$ always provides correct answers to the queries. The data structure $\MapEx_{\F, \rt{\Bf}{\tup x},\eps}[\Af]$ never outputs false positives, but may output a false negative with probability at most $\eps$ against an oblivious adversary.
\end{theorem}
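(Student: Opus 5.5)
The plan is to reduce everything to the graph case, i.e.\ to \cref{lem:weighted-dvorak} and \cref{thm:dvorak-tuma-examples}, via a gadget encoding of $\Sigma$-structures as edge-colored graphs. Then we handle the roots with the pendant gadget of \cref{lem:root-gadget}, using a bounded number of reserved isolated vertices that are temporarily attached at query time.

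\textbf{Encoding.} Fix the signature $\Sigma$, and let $a$ be the maximum arity. For a $\Sigma$-structure $\Af$, build a colored graph $G(\Af)$. Its vertex set consists of the universe $U(\Af)$ (marked by a pendant of a special color), together with one fresh \emph{tuple vertex} $t_{R,\tup u}$ for every $R\in\Sigma$ and $\tup u\in R^\Af$. The vertex $t_{R,\tup u}$ is joined to $u_i$ by an edge of color $(R,i)$ for each $i\in[\ar(R)]$. When entries of $\tup u$ repeat, we use a single edge whose color records the set of positions; this is still a bounded palette depending on $\Sigma$. Build $G(\Bf)$ in the same way.

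The number of potential tuple vertices is unbounded, so we preallocate a pool of spare vertices. An update adding $\tup u$ to $R^\Af$ grabs a spare vertex and inserts at most $a$ edges; removal does the reverse. To keep $n$ fixed we can preallocate $\Oh_{\Cc,\Sigma}(n)$ spare vertices. This suffices because the Gaifman graph is $d$-degenerate, so it has at most $\Oh_{\Sigma}(d^{a}n)$ tuples: every tuple is a clique in a $d$-degenerate graph, and there are at most $2^{\Oh(d)}n$ cliques.

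\textbf{Sparsity.} The graph $G(\Af)$ is, up to isolated spares and pendants, a subgraph of a bounded-congestion depth-$1$ minor of the Gaifman graph. Take the branch set of $t_{R,\tup u}$ to be $\{u_1\}$. The congestion is bounded because each vertex lies in at most $2^{\Oh(d)}$ cliques of size at most $a$ when we charge each tuple to its minimum element in a degeneracy order. Hence, by \cref{thm:cong-minors} and \cref{obs:pendants-bnd-exp}, $G(\Af)$ ranges over a bounded expansion class $\Cc'$ depending on $\Cc$ and $\Sigma$.

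\textbf{Correspondence.} We need a bijection between $\F(\Bf,\Af)$ and $\F(G(\Bf),G(\Af))$, with images of tuple vertices determined by the images of universe elements. For $\Hom$ this is immediate: the colors force universe elements to map to universe elements and tuple vertices to the corresponding tuple vertices. For $\Sub$ the same argument works, since injectivity transfers both ways. For $\ISub$ it is subtler. Non-edges between universe vertices are automatically preserved, because the two universes are independent sets in the encodings. Absence of tuples, however, means absence of a tuple vertex. An induced embedding of the graph encoding therefore does not directly forbid extra tuples: an extra tuple in $\Af$ appears as an unmatched tuple vertex, which graph-induced embeddings happily ignore.

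I expect this to be the main obstacle. The first remedy I would try is to express $\ISub$ via inclusion–exclusion over $\Sub$ counts of structures $\Bf'\supseteq\Bf$ obtained by adding tuples on $U(\Bf)$; there are finitely many of these. This works for the weighted counts. For examples, we then recover mappings with \cref{lem:restore-by-sampling} in case~(2), applied vertex by vertex as in the proof of \cref{thm:dvorak-tuma-examples}, with $\Ver$ given by the counts themselves.

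\textbf{Weights and roots.} Weights carry over by setting weight $1$ on tuple vertices and pendants and $\wei$ on universe vertices. Roots $\tup x$ are handled as in \cref{lem:root-gadget}. We attach to $x_i$ two pendants with fresh colors $c_{2i-1},c_{2i}$. At query time we temporarily attach reserved isolated vertices to $u_i$ with the same colors, query the counting or example structure, and then detach them. Each query thus costs $\Oh(|\tup x|)$ updates plus one query, which gives the stated amortized bounds.
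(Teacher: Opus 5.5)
Your overall route matches the paper's. You use tuple vertices with position-set colours, a stash of spare vertices, a congested shallow-minor argument for sparsity, a direct correspondence for $\Sub$, inclusion--exclusion over superstructures for $\ISub$, and fingerprint retrieval for examples. Rooting via pendant gadgets instead of the paper's unary marker relations $Q_i$ is a harmless variation.

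\textbf{The gap is the $\Hom$ case, which you call immediate.} With position-set colours, suppose a structure homomorphism identifies two distinct entries $v_i\neq v_j$ of a tuple of $\Bf$. It sends that tuple to a tuple of $\Af$ with a repeated entry. The tuple vertex of the latter carries a single edge of colour $(R,\{i,j\})$, not edges of colours $(R,\{i\})$ and $(R,\{j\})$, so the homomorphism does not extend to the encodings. Concretely, take $E^\Bf=\{(a,b)\}$ with $a\neq b$ and $E^\Af=\{(c,c)\}$. There is one structure homomorphism but no graph homomorphism $G(\Bf)\to G(\Af)$. In general, $\Hom(G(\Bf),G(\Af))$ only captures homomorphisms that preserve the equality pattern inside every tuple.

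The paper flags exactly this failure and reduces $\Hom$ to $\Sub$ by grouping homomorphisms according to their kernel:
\[\val_\wei(\Hom(\Bf,\Af))=\sum_{\Pp}\val_{\wei_\Pp}(\Sub(\Bf_\Pp,\Af)).\]
Here $\Pp$ ranges over partitions of $U(\Bf)$, $\Bf_\Pp$ is the quotient structure, and $\wei_\Pp(z_P,u)=\prod_{x\in P}\wei(x,u)$. Examples for $\Hom$ then also go through this reduction or fingerprint retrieval, not a direct graph correspondence. Alternatively, an encoding with a separate vertex for each position of each tuple would make $\Hom$ correspond directly.

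Two smaller points also need repair.

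\textbf{Universe marking is not rigid.} Marking each universe vertex with a single pendant of a special colour fails for an element of $\Bf$ that occurs in no tuple. In $G(\Bf)$ it forms an isolated coloured edge, which can be mapped with its endpoints swapped. Counts and weighted sums are then wrong. Use two pendants of distinct colours, or, as the paper does, add a unary relation containing every element.

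\textbf{The branch set is inconsistent with your charging.} Choosing $\{u_1\}$ as the branch set of $t_{R,\tup u}$ gives unbounded congestion: a star centre that is the first entry of $n$ tuples lies in $n$ branch sets. The branch set must be the element you charge the tuple to in the degeneracy order, as with the paper's $\mu(\tup v)$. Relatedly, a vertex can lie in unboundedly many cliques. It is only charged with $2^{\Oh(d)}$ of them.
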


\begin{proof}
	We first argue that we may focus on the case where there are no roots ($\tup x$ and $\tup u$ are empty tuples). Indeed, if $|\tup x|=k$, then we may add $k$ new unary relations $Q_1,\ldots,Q_k$ to the signature and mark the $i$th root $x_i$ of $\rt{\Bf}{\tup x}$ using $Q_i$. Upon query about a tuple $\tup u$ in $\Af$ (either in $\WeiMaps_{\F, \rt{\Bf}{\tup x},\wei}[\Af]$ or in $\MapEx_{\F, \rt{\Bf}{\tup x},\eps}[\Af]$), we may temporarily mark each $u_i$ using $Q_i$, answer the query for unrooted but $\{Q_1,\ldots,Q_k\}$-decorated $\Bf$ and $\Af$, and restore the original state of $\Af$ by unmarking back the elements~$u_i$. Thus, from now on we focus on the case where there are no roots.
	
	By adding a unary relation $R$, such that every element of $\Af$ is a singleton tuple
	in $R$, and every element of $\Bf$ is a singleton tuple in $R$, we may assume that every element of $\Bf$ is in some relation.
	
	We first focus on designing the $\WeiMaps_{\F,\Bf,\wei}[\Af]$ data structure.
	Let $C=\Sigma\uplus 2^{[q]}$, where $q\coloneqq \max_{R\in \Sigma} \ar(R)$.
	Observe, that $q \in \Oh(\Cc)$ is a constant as $\Cc$ has bounded expansion and
	every non-empty $q$-ary relation induces a $q$-clique in the Gaifman graph.
	Let $M_\Bf$ be a $C$-colored undirected graph defined in the following way. First, we include the universe $U(\Bf)$ in the vertex set of $M_{\Bf}$. Then, for each $R\in \Sigma$ and $\tup v\in R^\Bf$, we create two new vertices $t^{R,1}_{\tup v}, t^{R,2}_{\tup v} \in V(M_\Bf)$ and add to $M_\Bf$ the following edges:
	\begin{itemize}
		\item edge $t^{R,1}_{\tup v}t^{R,2}_{\tup v}$ of color $R$; and
		\item for each vertex $w\in U(\Bf)$ appearing in $\tup v$, an edge $wt^{R,1}_{\tup v}$ of color $\{j\in [\ar(R)]\mid w=v_i\}$\footnote{In \cite[Section~4]{DvorakTumaArxiv} all edges of the form $wt^{R,1}_{\tup v}$ are of the same color. However, it is necessary to assign them different colors in order not lose information about the position(s) in the tuple $\tup v$ occupied by the vertex $w$.}.
	\end{itemize}
	We define the graph $M_\Af$ analogously. Note the following (an analogous statement is also proved in~\cite{DvorakTumaArxiv}).
	
	\begin{claim}\label{cl:still-bnd}
		There is a class of bounded expansion $\Cc'$, depending only on $\Cc$ and $\Sigma$, such that
		$M_\Af\in \Cc'$.
	\end{claim}
	\begin{claimproof}
		Let $G$ be the Gaifman graph of $\Af$. Since $G\in \Cc$, $G$ is $d$-degenerate where $d\coloneqq \nabla_0(\Cc)$, hence there exists an ordering $\preceq$ of the vertices of $G$ such that every vertex $u$ of $G$ has at most $d$ neighbors that are smaller in $\preceq$. (Such an ordering can be obtained by repeatedly removing from $G$ a vertex of the smallest degree and ordering the vertices by the reverse order of removal.) For any $R\in \Sigma$ and any tuple $\tup v\in R^{\Af}$, let $\mu(\tup v)$ be the $\preceq$-maximal vertex featured in $\tup v$. Note that for every $u\in V(G)$ we have $|\mu^{-1}(u)|\leq |\Sigma|\cdot (d+1)^q$,
		for every tuple $\tup v\in \mu^{-1}(u)$ is entirely contained in the set consisting $u$ and its $\preceq$-smaller neighbors --- which is of size at most $d+1$.
		Hence, it is easy to verify that the map $\eta$ defined~as
		\begin{align*}
			\eta(u)& \coloneqq \{u\} && \textrm{for }u\in U(\Af);\textrm{ and}\\
			\eta(t^{R,1}_\tup v)\coloneqq \eta(t^{R,2}_\tup v)& \coloneqq \{\mu(\tup v)\} && \textrm{for }\tup v\in R^\Af, R\in \Sigma;
		\end{align*} 
		is a congestion-$(2|\Sigma|\cdot (d+1)^q+1)$ depth-$0$ model of $M_\Af$ in $G$. It follows that $M_\Af\in \Minors^{2|\Sigma|\cdot (d+1)^q+1,0}(\Cc)$, which is a class of bounded expansion by \cref{thm:cong-minors}.
	\end{claimproof}
	
	It is now straightforward to see that subgraph isomorphisms from $M_\Bf$ to $M_\Af$ are in one-to-one correspondence to substructure isomorphisms from $\Bf$ to $\Af$. 
	
	\begin{claim}\label{cl:subgraph-works}
		Every substructure isomorphism $\varphi\in \Sub(\Bf,\Af)$ can be uniquely extended to a subgraph isomorphism $\varphi'\in \Sub(M_\Bf,M_\Af)$. Conversely, every subgraph isomorphism $\varphi'\in \Sub(M_\Bf,M_\Af)$ becomes a substructure isomorphism after restricting the domain to $U(\Bf)$.
	\end{claim}
	
	So if we extend the function $\wei \colon U(\Bf) \times U(\Af) \to \Z$ to a function $\wei' \colon V(M_H) \times V(M_G) 
	\to \Z$ by setting $\wei'(x, u) = \wei(x, u)$ for all $(x, u) \in U(\Bf) \times U(\Af)$, and $\wei'(x, u)=1$ for all other pairs $(x,u)$, then
	\[\val_{\wei'}(\Sub(M_\Bf,M_\Af))=\val_\wei(\Sub(\Bf,\Af)).\]
	Consequently, if we maintain the data structure $\WeiMaps_{\Sub, M_\Bf, \wei}(M_\Af)$ provided by \cref{lem:weighted-dvorak} for the class $\Cc'$ given by \cref{cl:still-bnd}, then this data structure can be directly used to answer queries to $\WeiMaps_{\Sub,\Bf,\wei}[\Af]$. Noting that one update to $\Af$ corresponds to at most $q+1$ updates to $M_\Af$, hence the claimed complexity guarantees about $\WeiMaps_{\Sub,\Bf,\wei}[\Af]$ follow directly from the guarantees provided by \cref{lem:weighted-dvorak}. 
	
	We note that each update to $\Af$, apart from altering the edge set of $M_\Af$, also alters its vertex set by either adding or removing two vertices specific to the toggled tuple, so it seems as if a method for creating or removing isolated vertices should be required. However, as argued in the \cref{cl:still-bnd}, the number of tuples in $\Af$ can be upper bounded by $n|\Sigma|\cdot (d+1)^q$, so we can deal with that by creating a stash of $2n|\Sigma|\cdot (d+1)^q$ isolated vertices during the initialization. Whenever we insert a tuple $\tup v$ to $R^\Af$, we take two vertices from the stash and name them $t^{R,1}_\tup v$ and $t^{R,2}_\tup v$ and whenever we remove $\tup v \in R^\Af$, we return these to the stash.
	
	
	
	While \cref{cl:subgraph-works} holds for subgraph isomorphisms, it unfortunately fails for induced subgraph isomorphisms and homomorphisms. Fortunately, it is well-known that weighted homomorphism counts and weighted induced subgraph isomorphism counts can be expressed as linear combinations of weighted subgraph isomorphism counts. Let us explain this argument in more detail, using the argumentation from \cite{DvorakTumaArxiv}, which is repeated in \cref{sec:app-weighted-dvorak}. In the following we assume that the reader is familiar with the material from this section.
	
	
	Let us tackle the case of homomorphisms first. Similarly as in \cref{def:projections} for any partition $\Pp$ of $U(\Bf)$, we may define the quotient $\Sigma$-structure $\Bf_\Pp$ by identifying every part $P\in\Pp$ into a single element $z_P$, and pushing the relations in $\Bf$ naturally to relations in $\Bf_\Pp$ through this identification mapping. We may also naturally define the quotient weight function $\wei_\Pp\colon U(\Bf)\times U(\Af)$ by setting $\wei_\Pp(z_P,u)\coloneqq \prod_{x\in P} \wei(x,u)$, for each $P\in \Pp$ and $u\in U(\Af)$. It then follows that 
	\[\val_\wei(\Hom(\Bf, \Af)) = \sum_{\Pp\colon \textrm{partition of }U(\Bf)} \val_{\wei_{\Pp}}(\Sub(\Bf_\Pp, \Af)).\]
	Hence, maintaining the data structures $\WeiMaps_{\Sub, \Bf_\Pp, \wei_\Pp}(\Af)$ for all partitions $\Pp$ pf $U(\Bf)$ allows us to answer the queries to $\WeiMaps_{\Hom, \Bf, \wei}[\Af]$ within the claimed complexity bounds.
	
	For the case of induced substructure isomorphisms we use the same Inclusion--Exclusion argument as in \cref{lem:weighted-isub-to-sub}, but instead of iterating over all supergraphs, it suffices to iterate over all superstructures, that is, all $\Sigma$-structures $\Bf'$ such that $U(\Bf') = U(\Bf)$ and $R^\Bf \subseteq R^{\Bf'}$ for all $R\in \Sigma$. Specifically, we have
	\[\val_\wei(\ISub(H, G)) = \sum_{\Bf'\colon \textrm{superstructure of } \Bf} (-1)^{\|\Bf'\| - \|\Bf\|}\cdot \val_\wei(\Sub(\Bf', \Af)),\]  where $\|\Bf\| = \sum_{R\in \Sigma} |R^\Bf|$, and similarly for $\Bf'$. Since all such superstructures $\Bf'$ are still of size $\Oh_{\Bf,\Sigma}(1)$ and there is $\Oh_{\Bf,\Sigma}(1)$ of them, maintaining the data structures $\WeiMaps_{\Sub, \Bf', \wei}(\Af)$ allows us to answer the queries to $\WeiMaps_{\ISub, \Bf, \wei}[\Af]$ within the claimed complexity bounds.
	
	Having designed a suitable data structure $\WeiMaps_{\F, \Bf, \wei}[\Af]$ for each $\F\in \{\Hom,\Sub,\ISub\}$, we may repeat the argumentation of \cref{lem:restore-by-sampling,lem:map-vertex,thm:dvorak-tuma-examples} to design also the data structure $\MapEx_{\F, \Bf,\eps}[\Af]$. The reasoning lifts essentially verbatim; we leave the details to the reader.
\end{proof}

	


\newcommand{\Labs}{\Lambda}
\newcommand{\lab}{\lambda}
\newcommand{\length}{\mathsf{len}}

\newcommand{\Mm}{\mathcal{M}}

\section{Implementation in classes of bounded expansion}\label{sec:implementation}

Here we implement the $\near{S,r}$ and $\far{S,r}$ queries for an arbitrary distance parameter $r$ in classes of bounded expansion. 
For that, we describe in this section the corresponding data structures $\nearVertexDS{G}{r,k,\eps}$
and $\farVertexDS{G}{r,k,\eps}$, generalized appropriately to be used for progressive exploration
algorithms for both $\drds{r }$ and $\dris{r }$.

\subsection{Detecting near vertices}

We start with the implementation of the $\nearVertexDS{G}{r,k,\eps}$ data structure, which will be an easy application of the toolbox presented in \cref{sec:dvorak}, particularly \cref{thm:homcounts-relational}.
As described in Section~\ref{sec:overview}, we will handle this by, upon every query,
creating a collection of graphs as in Figure~\ref{fig:homs} and finding their homomorphic rooted images in $G$.

\begin{theorem}\label{thm:NearVertex-new}
	Fix a graph class $\Cc$ of bounded expansion, $k,r\in \N$, and $\eps>0$.
	Let $G$ be a dynamic graph on $n$ vertices that belongs at all times to $\Cc$. Then there exists a randomized data structure $\nearVertexDS{G}{r,k,\eps}$ that provides access to the following query:
	\begin{itemize}
		\item $\near{S,\rho}$: For a given set of vertices $S\subseteq V(G)$ with $|S|\leq k$ and a function $\rho\colon S\to [0,r]$, return a vertex $v$ such that $\dist_G(s,v)\leq \rho(s)$ for every $s\in S$, or $\bot$ if no such vertex exists.
	\end{itemize}
	Every answer to the query is correct with probability at least $1-\eps$ against an oblivious adversary.
	The amortized time complexity of updates and queries is $\log^{\Oh_{\Cc,r,k}(1)} n\cdot \log \tfrac{1}{\eps}$. The data structure can be initialized for an edgeless $G$ in time $\Oh_{\Cc,r,k}(n\log n\log \tfrac{1}{\eps})$, and uses $\Oh_{\Cc,r,k}(n\log n\log \tfrac{1}{\eps})$ space at all times.
\end{theorem}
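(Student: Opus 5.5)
The plan is to reduce the query to rooted homomorphism existence and invoke the $\MapEx$ data structure of \cref{thm:homcounts-relational}. Two facts drive the reduction. First, $\dist_G(s,v)\leq \rho(s)$ holds if and only if there is a walk of length exactly $\rho'(s)$ from $v$ to $s$ for some $\rho'(s)\leq \rho(s)$. Second, a walk of length $t$ from $v$ to $s$ is exactly a homomorphism of a path on $t+1$ vertices sending its endpoints to $v$ and $s$.

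For every $\ell\in[0,k]$ and every function $\tau\colon[\ell]\to[0,r]$, let $H_\tau$ be the graph of \cref{fig:homs}. It consists of a center $x$ and vertices $z_1,\ldots,z_\ell$, where $x$ is joined to $z_i$ by a path of length $\tau(i)$. When $\tau(i)=0$, the vertex $z_i$ is simply identified with $x$; since roots need not be distinct, we keep $z_i$ as a root equal to $x$. We root $H_\tau$ at the tuple $(z_1,\ldots,z_\ell)$. As stated in \cref{cl:homs}, a vertex $v$ satisfies $\dist_G(v,s_i)\leq\rho(s_i)$ for all $i$ if and only if there exist $\tau\leq\rho$ (pointwise) and $\varphi\in\Hom(\rt{H_\tau}{\tup z},\rt{G}{\tup s})$ with $\varphi(x)=v$. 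The forward direction takes $\tau(i)=\dist(v,s_i)$ and maps the paths onto shortest paths. The backward direction uses the fact that homomorphic images of paths are walks. There are at most $\sum_{\ell\le k}(r+1)^\ell=\Oh_{r,k}(1)$ such rooted graphs, each with $\Oh(rk)$ vertices.

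For each such $\tau$ we maintain an instance $\MapEx_{\Hom,\rt{H_\tau}{\tup z},\eps'}[G]$ from \cref{thm:homcounts-relational}, treating $G$ as a structure with one symmetric binary relation whose Gaifman graph is $G\in\Cc$. Every update to $G$ is relayed to all instances. To answer $\near{S,\rho}$, we enumerate $S=\{s_1,\ldots,s_\ell\}$ and iterate over all $\tau\leq\rho$. For each one we query the corresponding instance with root tuple $(s_1,\ldots,s_\ell)$. If some query returns a homomorphism $\varphi$, we output $\varphi(x)$; if all return $\bot$, we output $\bot$. If $S=\emptyset$, any vertex is a valid answer.

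Correctness: an output $\varphi(x)$ is always valid, since $\MapEx$ has no false positives and walks give upper bounds on distance. An incorrect $\bot$ can arise only if the instance for $\tau=\dist(v,s_\cdot)$, for some valid $v$, produces a false negative. This happens with probability at most $\eps'$, so we may even take $\eps'=\eps$. The queries are issued with arguments determined only by the fixed input and the oblivious update sequence, not by earlier random answers, since we stop at the first success. The per-instance guarantee therefore applies directly. The complexity bounds of \cref{thm:homcounts-relational} give amortized update and query time $\log^{\Oh_{\Cc,r,k}(1)} n\cdot\log\tfrac1\eps$ per instance. They also give initialization time and space $\Oh_{\Cc,r,k}(n\log n\log\tfrac1\eps)$. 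Multiplying by the constant number of instances yields the claimed bounds.

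The only delicate point is bookkeeping: handling $\tau(i)=0$ through repeated roots, and making sure the rooted-query interface of \cref{thm:homcounts-relational} (temporary marking with unary relations) is used correctly. There is no real mathematical obstacle, since \cref{thm:homcounts-relational} already provides the rooted, example-reporting machinery.
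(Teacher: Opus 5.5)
Your proposal is correct and follows the paper's proof: both reduce $\near{S,\rho}$ to rooted homomorphism queries from the spider graphs of \cref{fig:homs} whose leg lengths are bounded pointwise by $\rho$, keep one $\MapEx_{\Hom}$ instance from \cref{thm:homcounts-relational} for each such graph, and return the image of the center (the paper pads $S$ to a $k$-tuple with repetitions, whereas you index by $\ell\le k$). The only real difference is in the error analysis. The paper sets $\eps'=\eps/|\Xi_r|$ and applies a union bound. You instead observe that a false $\bot$ forces a false negative in the one instance indexed by the distance profile of a fixed valid $v$, so $\eps'=\eps$ suffices; this is valid provided the instances use independent randomness, but it saves only a constant factor.
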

\begin{proof}
	Consider a query $\near{S,\rho}$. Enumerate all the vertices of $S$ as a $k$-tuple $\tup s$, possibly repeating some of them if necessary. Define $\rho'\colon [k]\to [0,r]$ by setting $\rho'(i)\coloneqq \rho(s_i)$, for all $i\in [k]$. 
	
	Let $\tup z$ be a tuple consisting of $k$ distinct vertices. 
	Consider a family of $\Xi_{\rho'}$ of all the rooted graphs $\rt{H}{\tup z,y}$ that can be obtained as follows (see Figure~\ref{fig:homs}):
	\begin{itemize}
		\item for every $i\in [k]$, attach to $z_i$ a path of some length between $0$ and $\rho'(i)$, and
		\item fuse all the other endpoints of those paths into a single vertex and call it $y$.
	\end{itemize}
	We also let $\Xi_r\coloneqq \Xi_{\rho_0}$ where $\rho_0(i)=r$ for all $i\in [k]$. 
	Note that $\Xi_{\rho'}\subseteq \Xi_r$ for every $\rho'\colon [i]\to [0,r]$, while $|\Xi_r|=(r+1)^{k}\leq \Oh_{r,k}(1)$ and $|H|\leq 1+(r-1)k\leq \Oh_{r,k}(1)$ for all $\rt{H}{\tup z,y}\in \Xi_r$.
	The following is clear.
	
	\begin{claim}\label{cl:distExample}
		For any vertex $v$ of $G$, the following conditions are equivalent:
		\begin{itemize}
			\item $\dist_G(s,v)\leq \rho(s)$ for all $s\in S$; and
			\item there exists $\rt{H}{\tup z,y}\in \Xi_{\rho'}$ such that $\Hom(\rt{H}{\tup z,y},\rt{G}{\tup s,v})\neq\emptyset$.
		\end{itemize}
	\end{claim}

	Let $\eps'\coloneqq \tfrac{\eps}{|\Xi_r|}\geq \tfrac{\eps}{\Oh_{r,k}(1)}$.
	For every rooted graph $\rt{H}{\tup z,y}\in \Xi_{r}$ we may apply \cref{thm:homcounts-relational} to construct the data structure $\mathsf{MappingExample}_{\Hom, \rt{H}{\tup z},\eps'}[G]$; note that here, we remove $y$ from the roots.
	Then, for every $\rt{H}{\tup z,y}\in \Xi_{\rho'}$, we ask the data structure $\mathsf{MappingExample}_{\Hom, \rt{H}{\tup z},\eps'}[G]$ for an example homomorphism $\varphi_{\rt{H}{\tup z,y}}\in \Hom(\rt{H}{\tup z},\rt{G}{\tup s})$. If any such homomorphism exists, then by \cref{cl:distExample} we have that $\dist_G(s,v)\leq \rho(s)$ for all $s\in S$, where $v\coloneqq \varphi_{\rt{H}{\tup z,y}}(y)$; so $v$ can be reported. And if no such homomorphism exists, for every $\rt{H}{\tup z,y}\in \Xi_{\rho'}$, then by \cref{cl:distExample} we may conclude that there is no vertex $v\in V(G)$ such that $\dist_G(s,v)\leq \rho(s)$ for all $s\in S$. By the union bound, the probability that any of the $|\Xi_{\rho}|\leq |\Xi_r|$ calls to  $\mathsf{MappingExample}_{\Hom, \rt{H}{\tup z},\eps'}[G]$ returns an incorrect answer is upper bounded by $\eps'\cdot |\Xi_\rho|\leq \eps$; this bounds the error probability of the algorithm.

	Since $|\Xi_{r}|\leq \Oh_{r,k}(1)$ and $|H|\leq \Oh_{r,k}(1)$ for all $\rt{H}{\tup z,y}\in \Xi_r$, the promised bounds on the amortized update and query time, initialization time, and space complexity follow immediately from the guarantees provided by \cref{thm:homcounts-relational}. 
\end{proof}

\subsection{Maintaining augmentations}\label{sec:augmentations}

Before proceeding to the specific description of the $\farVertexDS{G}{r,k,\eps}$ data structure, we explain how we dynamically maintain an augmented version of our graph.
Recall from Section~\ref{sec:overview}, that in order to
implement $\farVertexDS{G}{r,k,\eps}$ we need to be able to count patterns
of short paths in $G$, and we achieve this by maintaining the decorated orientation
of a supergraph of $G$. In this section we introduce the machinery that allows us to efficiently maintain
appropriately decorated orientations.
This is a key technical ingredient in the approach, which was already
exploited by Dvo\v{r}\'ak and T\r{u}ma~\cite{DvorakT13} in a similar
manner. Here we repeat their reasoning and adjust it to our setting.


The approach is based on maintaining an iterated fraternal augmentation of $G$, which is still sparse thanks to \cref{lem:fraternal}. The edges of this augmentation are decorated by some information memorized during the augmentation process. Formally, we will adhere to the following definitions.



\begin{definition}
	Let $\Lambda$ be a finite set of labels and $r\in \N$.
	A \emph{$(\Lambda,r)$-decorated graph} is an oriented graph $D$ together with a \emph{labeling function} $\lab_D\colon E(D)\to \Lambda$ and a \emph{length function} $\length_D\colon E(D)\to [r]$. The subscript can be omitted if $D$ is clear from the context. When speaking about just an \emph{$r$-decorated graph} we assume only the availability of the length function, and we allow $D$ to be also an undirected graph.
	
	We say that a $(\Lambda,r)$-decorated graph $D$ is \emph{unambiguous} if for every vertex $u$ of $D$, the edges of $D$ with tail $u$ receive pairwise different labels under $\lab_D$. Note that this in particular implies that the maximum outdegree in $D$ is at most $|\Lambda|$.
	
	For an undirected graph $G$, we say that a $(\Lambda,r)$-decorated graph $D$ is a \emph{faithful $(\Lambda,r)$-augmentation} of $G$ if the following conditions are satisfied:
	\begin{enumerate}[label={(A\arabic*)}]
		\item $V(D)=V(G)$ and the undirected graph underlying $D$ is a supergraph of $G$.
		\item $D$ is unambiguous.
		\item\label{c:short} For every edge $(u,v)$ of $D$, we have $\dist_G(u,v)\leq \length_D(u,v)$.
		\item\label{c:center} For every path $P=(u_0,u_1,\ldots,u_{r'})$ in $G$ of length $r'\leq r$, there exist indices 
		\[0=\alpha_0< \alpha_1< \ldots <\alpha_{s-1}<\alpha_s = \beta_t< \beta_{t-1}< \ldots< \beta_1< \beta_0=r',\]
		for some $s,t\in \N$, such that 
		\begin{itemize}
			\item for every $i\in [s]$, the edge $(u_{\alpha_{i-1}},u_{\alpha_i})$ is present in $D$ and $\length_D(u_{\alpha_{i-1}},u_{\alpha_i})\leq \alpha_i-\alpha_{i-1}$; and 
			\item for every $j\in [t]$, the edge $(u_{\beta_{j-1}},u_{\beta_j})$ is present in $D$ and $\length_D(u_{\beta_{j-1}},u_{\beta_j})\leq \beta_{j-1}-\beta_{j}$. 
		\end{itemize}
		The subgraph of $D$ consisting of the edges  $(u_{\alpha_{i-1}},u_{\alpha_i})$, for $i\in [s]$, and  $(u_{\beta_{j-1}},u_{\beta_j})$, for $j\in [t]$, will be called a \emph{shortcut} of $P$ (see Figure~\ref{fig:shortcut} for the reference).
	\end{enumerate}
\end{definition}

Homomorphisms of $(\Lambda,r)$-decorated graphs are defined naturally, as for general relational structures: for $\varphi$ to be a homomorphism we require that for every edge $(u,v)$ of the source graph, there is an edge $(\varphi(u),\varphi(v))$ of the target graph with exactly the same label and length.

We aim to show that in any class $\Cc$ of bounded expansion, faithful augmentations of any fixed depth can be maintained in polylogarithmic amortized update time. The proof is based on \cite[Theorem 4]{DvorakTumaArxiv}.

In preparation for the description of the data structure maintaining the faithful augmentation, let us describe a sequence of graph classes of bounded expansion. Let $\Cc$ be a graph class of bounded expansion. We define a sequence of graph classes $\Cc_1, \ldots, \Cc_r$ inductively in the following way:
\begin{itemize}
	\item $\Cc_1 \coloneqq \Cc$; and
	\item for $i\in [r-1]$, we set $\Cc_{i+1} \coloneqq \Fra(\Cc_i, 4 \cdot i \cdot d_i)$ for some chosen constants $d_i \geq \nabla_0(\Cc_i)$.
\end{itemize}
Note that a straightforward induction using \cref{lem:fraternal} shows that the consecutive classes $\Cc_1,\Cc_2,\Cc_3,\ldots$ have bounded expansion, hence $\nabla_0(\Cc_i)$ is finite and can be used to define $\Cc_{i+1}$.

Our data structure then needs access to parameters $d_1,\ldots , d_n$
to maintain orientations on $r$ levels of fraternal augmentations using
the data structure of Brodal and Fagerberg (Theorem~\ref{thm:bf}).
We note that given parameters
$\nabla_0(\Cc),\nabla_1(\Cc),\nabla_2(\Cc),\ldots$, we can compute the
upper bounds $d_1\le d_2\leq \ldots\le d_r$ on
$\nabla_0(\Cc_1)\le \nabla_0(\Cc_2)\leq \ldots \leq \nabla_0(\Cc_r)$.
This is because
the proof of \cref{lem:fraternal} is effective and
gives computable upper bounds on the parameters $\nabla_r(\Fra(\Cc_i,d))$ based on $d$ and the parameters $\nabla_r(\Cc)$.

With that, let us state the main lemma of this section.

\begin{lemma}\label{lem:dyn-augmentation}
	Fix a graph class $\Cc$ of bounded expansion and $r\in \N$. Then there exists a finite set of labels $\Lambda$ such that every graph $G\in \Cc$ has a faithful $(\Lambda,r)$-augmentation $D$. 
	
	Also, there is a data structure that for a dynamic $n$-vertex graph $G\in \Cc$, maintains its faithful $(\Lambda,r)$-orientation $D$ with amortized update time $\Oh_{\Cc,r}(\log^{r-1} n)$. Further, every update to $G$ triggers $\Oh_{\Cc,r}(\log^{r-1} n)$ updates to $D$ in the amortized sense. The data structure can be initialized for an edgeless $G$ in time $\Oh_{\Cc,r}(n)$ and occupies $\Oh_{\Cc,r}(n)$ space at all times. 
\end{lemma}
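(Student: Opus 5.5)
We build $D$ as the top of a tower of $r$ oriented graphs $\vec G_1\subseteq \vec G_2\subseteq \cdots\subseteq \vec G_r$, following Dvo\v{r}\'ak and T\r{u}ma and using the classes $\Cc_1,\ldots,\Cc_r$ defined above.

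\textbf{The tower.} Level $1$ is $G$ itself, with every edge of length $1$. We maintain it with the Brodal--Fagerberg structure of \cref{thm:bf} for parameter $d_1$, which yields a $4d_1$-orientation $\vec G_1$. For $i\geq 1$, the undirected graph $G_{i+1}$ contains all edges of $G_i$. In addition, for every fork $(w,u),(w,v)$ in $\vec G_i$ whose lengths satisfy $a+b\leq r$, it contains the fraternal edge $uv$. The length of an edge is the minimum over all ways it arises, and an edge is created only if this minimum is at most $r$.

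We maintain counters so that each edge $uv$ of $G_{i+1}$ stores, for each candidate length $\ell$, the number of witnesses producing $uv$ with length $\ell$. A witness is either a fork through some $w$ or the edge itself lying in $G_i$. The edge is present if and only if some counter is positive, and its length is the smallest $\ell$ with a positive counter.

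Inductively, the maximum outdegree of $\vec G_i$ is at most $4\sum_{j\leq i}d_j\leq 4id_i$. Hence $G_{i+1}\in\Fra(\Cc_i,4id_i)=\Cc_{i+1}$ is $d_{i+1}$-degenerate. We therefore run a fresh Brodal--Fagerberg instance on $G_{i+1}$, producing $\vec G_{i+1}$ of outdegree at most $4id_i+4d_{i+1}$. There is one caveat: edges already oriented at a lower level keep that orientation. Concretely, we run Brodal--Fagerberg only on the new edges $E(G_{i+1})-E(G_i)$ and take the union, which keeps outdegrees bounded additively.

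We set $D\coloneqq\vec G_r$. Labels are assigned greedily per tail from $\Lambda\coloneqq[4\sum_j d_j]$ and are reassigned only when an edge changes tail or appears. This keeps $D$ unambiguous at $\Oh(1)$ cost per change.

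\textbf{Faithfulness.} Condition (A1) is immediate. For \ref{c:short}, we argue by induction on the level: a fraternal edge $uv$ arising from $w$ with lengths $a,b$ satisfies $\dist_G(u,v)\leq\dist_G(u,w)+\dist_G(w,v)\leq a+b$.

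For \ref{c:center}, we argue the stronger statement that every path of length $r'\leq r$ in $G$ admits a shortcut already in $\vec G_{r'}$, by induction on $r'$. Take the path $P=(u_0,\ldots,u_{r'})$ and consider the sequence of its edges in $\vec G_1$.

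If the edges are all oriented monotonically, they form the required $Q$ and $R$ with $s=r'$ or $t=r'$. Otherwise, some inner vertex $u_j$ has both incident path edges oriented out of it. The corresponding fork produces an edge $u_{j-1}u_{j+1}$ of length at most $2$ in $G_2$.

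More generally, we track a contracted sequence $u_{\gamma_0},\ldots,u_{\gamma_m}$ whose consecutive pairs are adjacent at the current level, with lengths at most $\gamma_{i}-\gamma_{i-1}$. Each step up a level removes every source vertex that is not an endpoint. This is a potential argument on the number of "valleys": a sequence with no internal source is exactly a $Q$-then-$R$ shape. Each level removes at least one vertex from the sequence, so after at most $r-1$ levels we reach the required shape, at level at most $r$. Monotonicity of lengths holds because we take the minimum length.

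This is the step I would check most carefully, in particular that removing a source at level $i$ uses edges whose orientation at level $i$ may come from the new Brodal--Fagerberg run.

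\textbf{Complexity.} The main point is the cascading. A single change to $\vec G_i$, meaning an insertion, deletion, or reorientation of one edge, changes $\Oh(4id_i)=\Oh_{\Cc,r}(1)$ fork counters. This is because it affects only forks with that edge's tail, and the tail has bounded outdegree. Consequently, it triggers $\Oh(1)$ edge insertions and deletions in $G_{i+1}$.

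By \cref{thm:bf}, each update to $G_{i+1}$ causes amortized $\Oh(\log n)$ reorientations at level $i+1$. Hence the amortized number of changes at level $i$ per update of $G$ is $\Oh(\log^{i-1}n)$: level $1$ contributes $\Oh(\log n)$ reorientations, which feed $\Oh(\log n)$ updates to level $2$, and so on. Since $D$ is the union of all levels, it undergoes amortized $\Oh_{\Cc,r}(\log^{r-1}n)$ changes per update, and each change costs $\Oh_{\Cc,r}(1)$ plus Brodal--Fagerberg overhead. Amortizations compose, since the amortized bounds of \cref{thm:bf} hold over any sequence of operations.

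The exponent needs care. The bound should be counted as edge updates to $D$, namely $\Oh(\log^{r-1}n)$ at the top level, while the deletion cost $\Oh(d+\log n)$ at the top must be charged properly. I expect this bookkeeping to yield exactly $\log^{r-1}n$; it is the second delicate point.

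Space and initialization are linear because every level has $\Oh(n)$ edges and $\Oh(n)$ counters. The existence claim follows by running the dynamic structure from the edgeless graph, inserting the edges of the given $G$ one by one.
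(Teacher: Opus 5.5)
Your construction is essentially the paper's tower: Brodal--Fagerberg runs only on the edges that are new at each level, lower orientations are frozen upward, with witness counters, greedy labels, and the same multiplicative cascade. The genuine difference is how lengths are organized, and with it the proof of \ref{c:center}.

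The paper creates at level $i+1$ only fraternal edges coming from forks of total length exactly $i+1$ that are not yet in $G_i$. So an edge of length $\ell$ always lies in $G_\ell$, and lengths never change along the tower. This makes $D_r$ statically closed under short forks. Two out-edges at a common tail with total length $L\le r$ each have length at most $L-1$, so they already form a fork in $D_{L-1}$, and the third side is in $G_L$ with length at most $L$. Then \ref{c:center} follows from a minimum-size chain of useful edges along $P$, since an internal source would give a shorter chain.

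You admit every fork of total length at most $r$ at every level and take minimum lengths, so you do not get that stratification for free. Your level-by-level elimination of internal sources replaces it, and it is correct:
\begin{itemize}
\item internal sources of the chain are never consecutive, so all of them can be bypassed at once;
\item the forks used live in $\vec G_i$, which contains the edges oriented by the level-$i$ instance, so the point you flag is harmless;
\item orientations are frozen upward and lengths only decrease;
\item every level with an internal source shortens the chain, so by level $r'$ none remains;
\item a chain without internal sources has all forward-oriented edges before all backward-oriented ones, which is exactly the $\alpha/\beta$ shape of \ref{c:center}.
\end{itemize}
What your convention costs is bookkeeping. A change of an edge's length alters fork sums one level up, so it must be propagated as a fourth kind of change in your cascade. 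Length changes are triggered by counter changes, so the counts are unaffected. Also, as the paper does, process removals before insertions at each level, so that every Brodal--Fagerberg instance always sees a subgraph of a graph in $\Cc_{i+1}$.

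On the exponent, your chain is internally inconsistent. Level $1$ already undergoes $\Oh(\log n)$ amortized changes per update of $G$. So your argument gives $\Oh_{\Cc,r}(\log^{i}n)$ changes of $\vec G_i$, hence $\Oh_{\Cc,r}(\log^{r}n)$ changes of $D$ once the reorientations of the top-level instance are counted. Only the insertions and deletions of edges of $G_r$ number $\Oh_{\Cc,r}(\log^{r-1}n)$.

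The paper's \cref{cl:maintain} counts the same way and stops at $G_r$. Already for $r=1$ the statement would demand $\Oh(1)$ amortized reorientations of a Brodal--Fagerberg orientation, which \cref{thm:bf} does not provide. So this off-by-one is shared with the paper rather than a defect of your approach. Do not try to force $\log^{r-1}n$; every later use only needs a polylogarithmic bound.
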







\begin{proof} 
	Since $d_i \geq \nabla_0(\Cc_i)$ for $i \in [r]$, every graph in $\Cc_i$ admits an orientation with maximum outdegree at most $d_i$.
	Moreover, knowing $d_i$, we can efficiently maintain this orientation.
	
	For each $i\in [r]$,
	let us define dynamic graphs $G_i, G_i', D_i$ and $D_i'$, all on the same vertex set $V(G)$. All of them are going to be $r$-decorated, except for $D_r$, which will be $(\Lambda, r)$-decorated. The graphs $G_i$ and $G_i'$ will be undirected, while $D_i$ and $D_i'$ will be orientations of $G_i$ and $G_i'$, respectively. For convenience, we also define $G_0$ and $D_0$ to be edgeless graphs on the same vertex set too. We set $G_1' \coloneqq G$, where $\length_{G_1'}(e) \coloneqq 1$ for each $e \in E(G_1')$. The graphs $G_1', \ldots, G_r'$ will be edge-disjoint, so we can inductively define $G_1,\ldots,G_r$ and $D_1,\ldots,D_r$ from $G_1', \ldots, G_r'$ and $D_1', \ldots, D_r'$ by setting \[G_i \coloneqq G_{i-1} \cup G_i'\qquad\textrm{and}\qquad D_i \coloneqq D_{i-1} \cup D_i'.\] We want to maintain that $D_i'$ is a $4d_i$-orientation of $G_i'$. Consequently, $D_i$ will have maximum outdegree at most $\Delta_i \coloneqq 4(d_1+\ldots+d_i)$. Note that $\Delta_i \le 4 \cdot i \cdot d_i$. From the construction it will become apparent that $G_i' \in \Cc_i$, which is why a $4d_i$-orientation of $G_i'$ exists and can be effectively maintained using the data structure of \cref{thm:bf}.
	
	We now inductively define the graphs
	$G_2',\ldots,G'_r,D_1',\ldots,D_r'$.
	To define $G_{i+1}'$ for $i\in [r-1]$, consider a triple of distinct
	vertices $u, v, w \in V(G)$ such that
	$(w, u), (w, v) \in E(D_i)$, $uv \not\in G_i$ and
	$\length_{D_i}(w, u) + \length_{D_i}(w, v) =i+1$. For each such triple,
	we add the edge $uv$ to $G_{i+1}'$ with $\length_{G_{i+1}'}(u,v)=i+1$ and there are no other edges in $G_{i+1}'$. Note that this specification ensures that $G_{i+1}'$ will be edge-disjoint from $G_1', \ldots, G_i'$, as desired. Also note that $G_{i+1}$ defined in this way belongs to $\Fra(G_i, \Delta_i)$, so if $G_i \in \Cc_i$ and $\Delta_i \le 4 \cdot i \cdot d_i$, we have that $G_{i+1} \in \Cc_{i+1}$, as desired. Then $D_{i+1}'$ is a $4d_{i+1}$-orientation of $G_{i+1}'$, which will be eventually maintained using the data structure of \cref{thm:bf}. 
	Note that such definition ensures that $\dist_{G_{i+1}}(u,v)\leq \length_{D_{i+1}}(u,v)$, which is easily seen through an induction argument. That concludes the definitions of $G_i, G_i', D_i$ and $D_i'$. The resulting $D_r$ will be our faithful $(\Lambda, r)$-augmentation of $G$ (for some $\Lambda$ to be defined later).
	
%
%
%

	We now show how to effectively maintain the constructed graphs.
	
	\begin{claim}\label{cl:maintain}
		The graphs $G_i, G_i', D_i$ and $D_i'$ can all be maintained
		with amortized update time bounded by $\Oh_r(\log^{r-1}n \cdot \Delta_1 \Delta_2 \ldots \Delta_r)$ per edge addition or removal in $G$. Moreover, every update to $G$ triggers that many updates to $G_i, G_i', D_i$, and $D_i'$ in the amortized sense.
	\end{claim}
	
	\begin{claimproof}
		The graphs $D_i'$, which are orientations of $G_i'$, are maintained using the Brodal--Fagerberg data structure of \cref{thm:bf}. That is, assuming that $d_i$ is chosen so that $G_i$ is $d_i$-degenerate (which follows from $G_i\in \Cc_i$ and $d_i\ge \nabla_0(\Cc_i)$), the data structure of \cref{thm:bf} indeed maintains a $4d_i$-orientation of $G_i'$ (a subgraph of $G_i$), which we call $D_i'$. Note that by \cref{thm:bf}, these data structures take time $\Oh_{\Cc,r}(n)$ to initialize and use $\Oh_{\Cc,r}(n)$ space at all times.
		
		Whenever an update to the edge set of $G$ occurs, this starts a series of updates to all of $G_i, G_i', D_i$ and $D_i'$, for $i=1,2,\ldots,r$. Whenever an edge is supposed to change its orientation or its length in any of these graphs, this is modeled by one edge removal and one edge addition. Each change to the edge set of $D_i$ causes $\Oh(\Delta_i)$ changes to the edge set of $G_{i+1}'$, since for a given edge $(w, u)$ there are only at most $\Delta_i$ edges of the form $(w, v)$, and consequently, at most $\Delta_i$ edges $uv$ that need updating.
		Each change to $G_i'$ incurs $\Oh(\log n)$ reorientations in $D_i'$ in the amortized sense, which incurs the same number of reorientations in $D_i$, which then incurs $\Oh(\log n \cdot \Delta_i)$ changes in $G_{i+1}$. Hence, the amortized number of changes to all maintained graphs caused by one change to $G$ is $\Oh_r(\log^{r-1} n \cdot \Delta_1 \Delta_2 \ldots \Delta_{r-1})$ (one can easily argue that amortized costs multiply in the same way that worst-case costs do). Whenever we perform a series of reorientations, we should first perform all corresponding removals and only then all corresponding additions, to make sure that we stay within the relevant graph classes during the intermediate computations as well.
		
		An additional aspect requiring careful bookkeeping is the maintenance of labels and lengths. If we remove an edge $(w, u)$ from $D_i$ that along with some edge $(w, v)$  gave rise to an edge $(u, v) \in G_{i+1}'$, then $(u, v)$ may still belong to $G_{i+1}'$ if $w$ was not a unique vertex with edges to $u$ and $v$ in $D_i$ such that $\length_{D_i}(w, u)+\length_{D_i}(w, v)=i+1$. In order to correctly and effectively update the existence of this edge, for every edge $e$ in $G_{i+1}'$ we additionally keep an auxiliary counter $\mathrm{cnt}_{e, i}$, denoting the number of vertices $w \in V(G_i)$ such that $(w, u), (w, v) \in E(D_i)$ and $\length_{D_i}(w, u) + \length_{D_i}(w, v) = i+1$. Maintaining these counters lets us effectively (with $\Oh(1)$ overhead) determine the correctly updated edge set of $G_{i+1}'$ under any additions or removals of edges from $D_i$. As for labels in $D_r$, it is enough to set $\Lambda \coloneqq \{1, 2, \ldots, \Delta_r\}$ and for each vertex $v \in V(G)$ keep a boolean array $\mathrm{used}_v[1,\ldots,\Delta_r]$, where $\mathrm{used}_v[j]$ denotes whether there exists an edge outgoing of $v$ with the label $j$. When a new outgoing edge is added to $v$, we search for an unused label for it. Navigating through the $\mathrm{used}_v[1,\ldots,\Delta_r]$ array incurs an $\Oh(\Delta_r)$ overhead to the  $\Oh_r(\log^{r-1}(n) \Delta_1 \Delta_2 \ldots \Delta_r)$ total time of processing all necessary updates to all intermediate structures for one edge change in $G$.
	\end{claimproof}
	
	We  shall now prove that $D_r$ indeed contains the desired shortcuts.
	
	\begin{claim}\label{cl:shortcuts}
		Let $P=(u_0, \ldots, u_{r'})$ be a path in $G$ of some length $r' \le r$. Then, $D_r$ contains a shortcut of path $P$. 
	\end{claim}
%
	\begin{proof}
		Let us say that an edge $e \in G_r$ is \emph{useful} if it is of the form $u_{i}u_{j}$ and $\length_{G_r}(e) \le |j-i|$. 
		
		Let us take a subset
		$S=\{s_0, \ldots, s_c\} \subseteq \{0, \ldots, r'\}$ of smallest
		possible size such that $s_0=0, s_c=r'$ and for each $0 \le i \le c-1$ we have that $u_{s_i}u_{s_{i+1}}$ is a useful edge. There exists at least one feasible subset $S$ with that property as $\{0, \ldots, r'\}$ clearly fulfills the condition. We claim that these edges form a shortcut of $P$ in $D_r$.
		
		Let us assume that this is not the case. Then, there exists $1 \le i \le c-1$ such that both $s_{i}s_{i-1}$ and $s_is_{i+1}$ are directed outwards from $s_i$. We have that $\length_{D_r}(s_i, s_{i+1}) \le s_{i+1}-s_i$ and $\length_{D_r}(s_i, s_{i-1}) \le s_i - s_{i-1}$, so $L \coloneqq \length_{D_r}(s_i, s_{i+1})+\length_{D_r}(s_i, s_{i-1}) \le s_{i+1}-s_{i-1} \le r$.
	Hence, the graph $G_L$ is defined and at the moment of constructing it, we put the edge $s_{i-1}s_{i+1}$ into it, unless it was already present in $G_{L-1}$. In either case, we have that $\length_{G_r}(s_{i-1}s_{i+1}) \le L \le s_{i+1}-s_{i-1}$, so it is a useful edge as well, contradicting the minimality of $S$. 
	\end{proof}
	
	\cref{cl:shortcuts} proves that $D_r$ is a faithful $(\Lambda,r)$-augmentation of $G$, which by \cref{cl:maintain} can be maintained within the claimed complexity bounds. This concludes the proof.
\end{proof}

\subsection{Detecting far vertices}\label{sec:far-be}

We now move on to the implementation of the $\farVertexDS{G}{r,k,\eps}$ data structure. Our implementation is encapsulated in the following statement.

\begin{theorem}\label{thm:FarVertex-new}
	Fix a graph class $\Cc$ of bounded expansion, $r,k\in \N$, and $\eps>0$.
	Let $G$ be a dynamic graph on $n$ vertices that belongs at all times to $\Cc$. Then there exists a randomized data structure $\farVertexDS{G}{r,k,\eps}$ that provides access to the following query:
	\begin{itemize}
		\item $\far{S,\rho}$: For a given set of vertices $S\subseteq V(G)$ with $|S|\leq k$ and a function $\rho\colon S\to [0,r]$, return a vertex $v$ such that $\dist_G(u,S)>\rho(s)$ for all $s\in S$, or $\bot$ if no such vertex exists.
	\end{itemize}
	Every answer to the query is correct with probability at least $1-\eps$ against an oblivious adversary.
	The amortized time complexity of updates and queries is $\log^{\Oh_{\Cc,r,k}(1)} n\cdot \log \tfrac{1}{\eps}$. The data structure can be initialized for an edgeless $G$ in time $\Oh_{\Cc,r,k}(n\log n\log \tfrac{1}{\eps})$, and it uses $\Oh_{\Cc,r,k}(n\log n\log \tfrac{1}{\eps})$ space at all times.
\end{theorem}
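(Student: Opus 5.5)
We follow the plan from the overview. We reduce the query to a weighted counting problem and then invoke the Fingerprint Retrieval Lemma (\cref{lem:restore-by-sampling}). The vertex function is binary: $f(G,v)=1$ iff $\dist_G(v,s)>\rho(s)$ for all $s\in S$. Case~\ref{it:binary-f} of the lemma applies, so no $\Ver_f$ structure is needed. It therefore suffices to build, for every fixed $\wei\colon V(G)\to\Z$, a deterministic structure $\WeiSum_{f,\wei}[G]$ that answers $\sum_{v} f(G,v)\wei(v)$ for the query parameters $(S,\rho)$ supplied at query time. The lemma is stated for a fixed $f$. To handle query-dependent $f$, we either apply its proof uniformly, since the sets $S_{i,j}$ do not depend on $f$, or note that there are only boundedly many $\rho$ once $S$ is encoded via roots.

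We maintain a faithful $(\Lambda,r)$-augmentation $D$ of $G$ using \cref{lem:dyn-augmentation}. Each update to $G$ triggers $\Oh_{\Cc,r}(\log^{r-1}n)$ amortized updates to $D$, and we relay each of these to counting structures built on $D$. Viewed as a $\Sigma$-structure with one binary relation per (label, length) pair, $D$ has Gaifman graph in $\Cc_r$, a class of bounded expansion. For a single source $s$ and $r'\le r$, \ref{c:short} and \ref{c:center} give the following characterisation: $\dist_G(s,v)\le r'$ iff some pattern $\pi=(\pi_Q,\pi_R)$ with total length at most $r'$ is realized between $s$ and $v$. Here $\pi_Q,\pi_R$ are sequences in $\Lambda\times[r]$; the ``if'' direction follows from \ref{c:short} and the triangle inequality. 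By unambiguity, given $s$ and $\pi_Q$ the path $Q$ is unique, and given $v$ and $\pi_R$ the path $R$ is unique. Let $\Pi_{r'}$ be the finite set of such patterns and let $\Pi^\star$ be the disjoint union over pairs $(i,\pi)$ with $s_i\in S$ and $\pi\in\Pi_{\rho(s_i)}$.

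Inclusion--Exclusion gives
\[\sum_v f(G,v)\wei(v)=\sum_{\Gamma\subseteq\Pi^\star}(-1)^{|\Gamma|}\sum_{v}[\text{all of }\Gamma\text{ realized at }v]\cdot\wei(v).\]
For each $\Gamma$ we build a rooted decorated structure $\rt{H_\Gamma}{\tup z}$. Its roots are $z_1,\dots,z_k$, standing for the vertices of $S$, and it has one extra vertex $x$. For each $(i,\pi)\in\Gamma$ it contains a directed path from $z_i$ with the labels and lengths of $\pi_Q$, and a directed path from $x$ with those of $\pi_R$, and the two paths meet at a common endpoint. By unambiguity, each vertex $v$ has at most one homomorphism $\varphi$ with $\varphi(\tup z)=\tup s$ and $\varphi(x)=v$. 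Hence the inner sum equals $\val_{\wei'}(\Hom(\rt{H_\Gamma}{\tup z},\rt{D}{\tup s}))$, where $\wei'(x,v)=\wei(v)$ and $\wei'\equiv1$ elsewhere. \Cref{thm:homcounts-relational} maintains this quantity with polylogarithmic amortized time and supports rooted queries. We keep one instance for each $\Gamma\subseteq\Pi^\star_{\max}$, the family obtained with $\rho\equiv r$; every $\Pi^\star$ for a given $(S,\rho)$ is contained in it. The number of instances and their sizes are $\Oh_{\Cc,r,k}(1)$, so the total update time is $\log^{\Oh_{\Cc,r,k}(1)}n$. If $|S|<k$, we pad the tuple by repetition. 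This yields the $\WeiSum$ structures, and \cref{lem:restore-by-sampling} supplies the $\log n\log\frac1\eps$ factor and the stated initialization and space bounds.

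The main obstacle is verifying the claimed exact identity between homomorphism counts and the number of vertices realizing all patterns of $\Gamma$. Several patterns in $\Gamma$ may share $Q$-prefixes or force coinciding vertices. Since we count homomorphisms rather than subgraph isomorphisms, such collapses are allowed, and uniqueness follows from unambiguity: the images of both paths are determined by $s_i$ and by $v$, respectively. We also must check that multiplicities do not enter, because $\wei'$ is nontrivial only on $x$. A secondary check is the converse direction of the distance characterisation, namely that a realized pattern of total length at most $r'$ implies distance at most $r'$. This follows from \ref{c:short}. The statement's condition $\dist(u,S)>\rho(s)$ is read as $\dist(v,s)>\rho(s)$ for each $s$.
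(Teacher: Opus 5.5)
Your argument is correct, but it is organised differently from the paper's proof.

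\textbf{The paper's route.} The paper first reduces an arbitrary query $\far{S,\rho}$ to the single-source, fixed-radius query $\far{\{u\},(u\mapsto r+1)}$. It keeps $1+kr$ spare isolated vertices. At query time it joins a distinguished spare vertex $u$ to each $s\in S$ by a path of length $(r+1)-\rho(s)$, and notes that the modified graphs stay in a bounded-expansion class $\Cc'$. After this, $f(G,v)=1$ iff $\dist_G(u,v)>r+1$ is a genuine query-independent vertex function. The weighted sums then come from \cref{lem:wei-sums-far}, i.e.\ the single-root inclusion--exclusion of \cref{lem:IE} applied with radius $r+1$, and \cref{lem:restore-by-sampling} is used verbatim.

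\textbf{Your route.} You leave $G$ untouched and absorb both the several sources and the individual radii into the inclusion--exclusion. The length accounting in \ref{c:center} yields the radius-restricted pattern sets $\Pi_{\rho(s_i)}$. You range over subsets of $[k]\times\Pi_r$, with $k$-rooted pattern graphs handled by the rooted queries of \cref{thm:homcounts-relational}. You also need fingerprint retrieval with a query-dependent $f$; this is exactly the extension the paper carries out in the proof of \cref{thm:FarVertex-new-deg}.

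\textbf{What each buys.} Your route avoids the auxiliary class $\Cc'$ and the radius-$(r+1)$ augmentation. It also avoids the $\Oh(kr)$ temporary edge updates per query that have to be pushed through the augmentation tower. In addition, it sidesteps a detail the paper's reduction leaves implicit: unused spare vertices are isolated and hence trivially far from $u$, so they must be excluded from the count. The paper's route, in return, keeps the inclusion--exclusion at $2^{|\Pi_{r+1}|}$ terms, with pattern graphs whose size does not depend on $k$, so $k$ enters only through the temporary edge updates. Your sum has $2^{k|\Pi_r|}$ terms and pattern graphs that grow with $k$, which raises the exponent of the polylogarithm. Both versions stay within the claimed bound $\log^{\Oh_{\Cc,r,k}(1)} n\cdot\log\tfrac{1}{\eps}$.
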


The following statement follows from \cref{thm:homcounts-relational}.

\begin{lemma}\label{lem:dyn-rooted-homs}
	Fix a graph class $\Cc$ of bounded expansion, a finite set of labels $\Lambda$, $r\in \N$, and a rooted $(\Lambda,r)$-oriented graph $\rt{H}{x,y}$. Let $D$ be a dynamic $(\Lambda,r)$-oriented graph on $n$ vertices whose underlying undirected graph belongs at all times to $\Cc$. Further, let $u$ be a fixed vertex of $D$ and $\wei\colon V(D)\to \Z$ be a weight function; these do not change over time. Then there exists a data structure that maintains the value
	\[\sum_{v\in V(G)} \wei(v)\cdot |\Hom(\rt{H}{x,y},\rt{D}{u,v})|\]
	with amortized update time $\log^{\Oh_{\Cc,H,\Lambda,r}(1)} n$. The data structure can be initialized for an edgeless $D$ in time $\Oh_{\Cc,H,\Lambda,r}(n)$ and uses $\Oh_{\Cc,H,\Lambda,r}(n)$ space at all times.
\end{lemma}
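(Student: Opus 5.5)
We would derive \cref{lem:dyn-rooted-homs} directly from the $\WeiMaps$ part of \cref{thm:homcounts-relational}, in three steps: encode $D$ as a relational structure, turn the free root $y$ into a weighted unrooted vertex, and pin $x$ to the fixed vertex $u$.

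First, we model $(\Lambda,r)$-decorated oriented graphs as $\Sigma$-structures. Take the signature $\Sigma\coloneqq\{E_{\lambda,\ell}\colon \lambda\in\Lambda,\ \ell\in[r]\}$ of binary relations. A tuple $(a,b)$ belongs to $E_{\lambda,\ell}$ exactly when $(a,b)$ is an edge with label $\lambda$ and length $\ell$. Under this encoding, homomorphisms of decorated graphs coincide with homomorphisms of $\Sigma$-structures. The Gaifman graph of the structure for $D$ is its underlying undirected graph, so it lies in $\Cc$ at all times. A single edge change in $D$ is one tuple insertion or removal, and a change of orientation, label or length is at most two such updates.

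Next, we handle the two roots. The root $y$ is not fixed, so we drop it from the roots and encode it through the weight function. Define $\wei'\colon V(H)\times V(D)\to\Z$ by $\wei'(y,v)\coloneqq\wei(v)$ and $\wei'(z,v)\coloneqq 1$ for every $z\neq y$. If $x=y$, we instead treat the query as a single point value; this is a degenerate case.

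For every $v$, each $\varphi\in\Hom(\rt{H}{x},\rt{D}{u})$ with $\varphi(y)=v$ is exactly an element of $\Hom(\rt{H}{x,y},\rt{D}{u,v})$. Summing over $v$ therefore gives
\[\val_{\wei'}\bigl(\Hom(\rt{H}{x},\rt{D}{u})\bigr)=\sum_{v\in V(D)}\wei(v)\cdot|\Hom(\rt{H}{x,y},\rt{D}{u,v})|.\]
We then maintain $\WeiMaps_{\Hom,\rt{H}{x},\wei'}[D]$ from \cref{thm:homcounts-relational}. Since $u$ is fixed, after every update we query it with the tuple $(u)$ and store the result. This adds one query per update, which fits within the amortized bound. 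The update, initialization and space bounds follow from that theorem, with dependence on $\Sigma$ absorbed into dependence on $\Lambda$ and $r$.

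There is no substantial obstacle. The points to check are that the Gaifman graph of the encoding is exactly the underlying graph of $D$, so the bounded expansion hypothesis transfers, and that if the data structure's root marking happens to be one-shot we may instead mark $u$ permanently with a unary relation $Q$. Marking $u$ with $Q$ does not change the Gaifman graph.
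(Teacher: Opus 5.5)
Your proof is correct and takes essentially the paper's route: encode $D$ as a relational structure and apply the $\WeiMaps$ data structure of \cref{thm:homcounts-relational} to $\rt{H}{x}$ queried at $u$, with the weight $\wei(v)$ attached to the unrooted vertex $y$. The paper's one-line proof actually places the weight on $z=x$, which appears to be a typo since it would yield $\wei(u)\cdot|\Hom(\rt{H}{x},\rt{D}{u})|$; your choice $\wei'(y,v)=\wei(v)$ is the correct one.
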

\begin{proof}
	Follows immediately from \cref{thm:homcounts-relational} applied to the rooted $(\Lambda,r)$-oriented graph $\rt{H}{x}$ (understood as a relational structure in the obvious way) and the weight function $\wei'\colon V(H)\times V(D)$ defined as follows: for $(z,v)\in V(H)\times V(D)$,
	\[\wei'(z,v)\coloneqq \begin{cases}\wei(v) &\textrm{if }z=x,\\ 1 & \textrm{otherwise.}\end{cases}\qedhere\]
\end{proof}

The next lemma is the combinatorial core of our approach. We show that using the Inclusion--Exclusion Principle, we may express a weighted sum over vertices $u$ that are \emph{far} from a fixed vertex $v$ as a linear combination of weighted sums over vertices that are \emph{close}.

\begin{lemma}\label{lem:IE}
	Let $G$ be a graph, $\wei\colon V(G)\to \Z$ be a weight function on the vertices of $G$, $u$ be a vertex of~$G$, and $r\in \N$. Suppose $D$ is a faithful $(\Lambda,r)$-augmentation of $G$, for some finite set of labels $\Lambda$. Then there exists a set $\Mm$, depending only on $\Lambda$ and $r$, with the following properties:
	\begin{itemize}
		\item every element of $\Mm$ is a pair $(s, \rt{H}{x,y})$, where $s\in \{-1,+1\}$ and $\rt{H}{x,y}$ is a $(\Lambda,r)$-oriented graph with $\Oh_{\Lambda,r}(1)$ vertices and two roots $x,y$; and
		\item we have
		\[\sum_{\substack{v\in V(G)\\ \dist_G(u,v)>r}} \wei(v) =\sum_{(s,\rt{H}{x,y})\in \Mm}\ s\cdot \sum_{v\in V(G)} \wei(v)\cdot |\Hom(\rt{H}{x,y},\rt{D}{u,v})|.\]
	\end{itemize}
\end{lemma}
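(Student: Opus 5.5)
The plan is to turn ``far'' into the complement of a finite union of events, each of which is ``some fixed shortcut pattern is realized'', and then apply Inclusion--Exclusion over that finite family. Define a \emph{pattern} as a pair $\pi=(\pi_Q,\pi_R)$ of sequences of elements of $\Lambda\times[r]$ whose total length sum is at most $r$. The set $\Pi$ of patterns is finite, of size at most $(|\Lambda| r)^{2r+2}$, and depends only on $\Lambda$ and $r$. For $u,v\in V(G)$, say that $\pi$ is \emph{realized} at $(u,v)$ if there are directed walks in $D$ starting at $u$ and at $v$ respectively, following the labels and lengths prescribed by $\pi_Q$ and $\pi_R$, that end at a common vertex.

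The first step is the equivalence: $\dist_G(u,v)\le r$ if and only if some $\pi\in\Pi$ is realized at $(u,v)$. For the forward direction, take a shortest $u$-$v$ path. Condition~\ref{c:center} of faithfulness gives a shortcut of it, and recording labels and lengths along its two halves yields a pattern whose length sum is at most $r'\le r$. For the backward direction, condition~\ref{c:short} says each edge $(a,b)$ of $D$ satisfies $\dist_G(a,b)\le\length_D(a,b)$. Summing along both walks by the triangle inequality gives $\dist_G(u,v)\le r$.

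The second step encodes a set of patterns as a homomorphism target. For $\Gamma\subseteq\Pi$, build the rooted $(\Lambda,r)$-oriented graph $\rt{H_\Gamma}{x,y}$ as in \cref{fig:HGamma}. For each $\pi\in\Gamma$, attach a fresh directed path from $x$ labelled by $\pi_Q$ and a fresh directed path from $y$ labelled by $\pi_R$, and identify their final vertices (if a sequence is empty, its endpoint is the root itself). It has $\Oh_{\Lambda,r}(1)$ vertices. The key claim is that $|\Hom(\rt{H_\Gamma}{x,y},\rt{D}{u,v})|$ equals $1$ if every $\pi\in\Gamma$ is realized at $(u,v)$, and $0$ otherwise. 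The reason is unambiguity of $D$: from a fixed start vertex, a prescribed label sequence determines at most one directed walk. So the images of all vertices on the $x$-branches and $y$-branches are forced, and a homomorphism exists exactly when each pair of forced walks meets.

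I expect this step to be the main subtlety. One must check that identifying endpoints inside $H_\Gamma$ causes no problems: a homomorphism need not be injective, so coincidences in $D$ are harmless. One must also check that degenerate cases fit, namely empty $\pi_Q$ or $\pi_R$, the case $u=v$ with the empty pattern, and the possibility that a loop would be created in $H_\Gamma$. If a loop would arise, for example when the pattern is empty but the roots are distinct, I would handle it by keeping $x$ and $y$ as roots and allowing $x=y$ in the structure, or by treating the empty pattern through the separate term for $u=v$. Also, $H_\Gamma$ may fail to be oriented when two branches create antiparallel or parallel edges. The safest fix is to view $H_\Gamma$ as a relational structure with one binary relation per (label, length) pair, where such multiplicities are harmless, consistent with how homomorphisms of decorated graphs were defined.

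Finally, apply Inclusion--Exclusion. Writing $[\cdot]$ for indicators, for each $v$ we have
\[[\dist_G(u,v)>r]=\prod_{\pi\in\Pi}\bigl(1-[\pi\text{ realized at }(u,v)]\bigr)=\sum_{\Gamma\subseteq\Pi}(-1)^{|\Gamma|}\,|\Hom(\rt{H_\Gamma}{x,y},\rt{D}{u,v})|.\]
For $\Gamma=\emptyset$, $H_\emptyset$ is just the two isolated roots, so the count is $1$. Multiplying by $\wei(v)$ and summing over $v$ gives the identity with $\Mm\coloneqq\{((-1)^{|\Gamma|},\rt{H_\Gamma}{x,y})\colon\Gamma\subseteq\Pi\}$, which depends only on $\Lambda$ and $r$.
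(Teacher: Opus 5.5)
Your proposal is correct and follows the paper's proof essentially step for step: you use the same pattern set $\Pi$, the same characterization of $\dist_G(u,v)\le r$ by realized patterns via \ref{c:center} and \ref{c:short}, the same gadget $\rt{H_\Gamma}{x,y}$ whose homomorphism count into $\rt{D}{u,v}$ is $0$ or $1$ by unambiguity (the paper phrases this through ``anchored'' graphs), and the same Inclusion--Exclusion over $\Gamma\subseteq\Pi$. Your extra care about degenerate gadgets addresses something the paper passes over silently: the roots merge when the empty pattern is in $\Gamma$, and $H_\Gamma$ can acquire loops or antiparallel/multiple edges. Treating $H_\Gamma$ as a relational structure is the right fix, and it is compatible with how the lemma is later used through \cref{thm:homcounts-relational}. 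The only slip is that your explicit bound $(|\Lambda| r)^{2r+2}$ on $|\Pi|$ fails in trivial cases such as $r=0$; this is immaterial, since only $|\Pi|\le \Oh_{\Lambda,r}(1)$ is needed.
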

\begin{proof}
	Let a \emph{pattern} be a pair of finite sequences $\sigma,\tau$, each with entries in $\Lambda\times [r]$ such that the total sum of the second coordinates of the entries of $\sigma$ and $\tau$ is at most $r$. Note that this in particular means that $|\sigma|+|\tau|\leq r$. Let $\Pi$ be the set of all possible patterns; note that $|\Pi|\leq \Oh_{\Lambda,r}(1)$.
	
	For a pattern $\pi=(\sigma,\tau)\in \Pi$ and a vertex $v$ of $G$, we say that $\pi$ is \emph{realized} in $\rt{D}{u,v}$ if in $D$ there exists oriented walks $P_\sigma=(a_0,a_1,\ldots,a_{|\sigma|})$ and $P_\tau=(b_0,b_1,\ldots,b_{|\tau|})$ such that
	\begin{itemize}
		\item for each $i\in [|\sigma|]$, the pair $(\lab_D((a_{i-1},a_i)),\length_D((a_{i-1},a_i)))$ is equal to the $i$th entry of $\sigma$;
		\item for each $j\in [|\sigma|]$, the pair $(\lab_D((b_{j-1},b_j)),\length_D((b_{j-1},b_j)))$ is equal to the $j$th entry of $\tau$; and
		\item $a_0=u$, $b_0=v$, and $a_{|\sigma|}=b_{|\tau|}$. 
	\end{itemize}
	The subgraph of $\rt{D}{u,v}$ consisting of the union of the walks $P_\sigma$ and $P_\tau$ will be called the \emph{realization} of $\pi$ in $\rt{D}{u,v}$. Note that since $D$ is unambiguous, the realization of $\pi$ in $\rt{D}{u,v}$, if existent, is unique.
	
	We note the following.
	
	\begin{claim}\label{cl:dist-pattern}
		For every vertex $v$ of $G$, we have the following:
		\[\dist_G(u,v)\leq r \qquad\textrm{if and only if}\qquad \textrm{there exists }\pi\in \Pi\textrm{ that is realized in }\rt{D}{u,v}.\]
	\end{claim}
	\begin{claimproof}
		Suppose first that $\dist_G(u,v)\leq r$ and let $Q$ be a $u$-$v$ path of length at most $r$. Since $D$ is a faithful $(\Lambda,r)$-augmentation of $G$, by \ref{c:center} there is a shortcut of $Q$, and this shortcut witnesses that some pattern $\pi\in \Pi$ is realized in $\rt{D}{u,v}$.
		
		Suppose next that some pattern $\pi\in \Pi$ is realized in $\rt{D}{u,v}$, hence it has some realization. Since the total sum of lengths of the edges in this realization is at most $r$, we can repeatedly use \ref{c:short} with triangle inequality to infer that $\dist_G(u,v)\leq r$.
	\end{claimproof}
	
	Next,
	call a rooted $(\Lambda,r)$-oriented graph $\rt{H}{x,y}$ \emph{anchored} if every vertex of $H$ is reachable from $x$ or $y$ in $H$. Consider any vertex $v$ of $G$. We note the following.
	
	\begin{claim}\label{cl:unambigious}
		For any anchored $(\Lambda,r)$-oriented graph $\rt{H}{x,y}$, we have \[|\Hom(\rt{H}{x,y},\rt{D}{u,v})|\in \{0,1\}.\]
	\end{claim}
	\begin{claimproof}
		Consider any $\varphi\in \Hom(\rt{H}{x,y},\rt{G}{u,v})$. Observe that for any edge $(s,t)\in E(H)$, if $\varphi(s)=w$, then $\varphi(t)$ must be an out-neighbor of $w$ in $D$ such that the edge $(w,\varphi(t))$ has label $\lab_H((s,t))$. By the unambiguity of $D$, such an outneighbor, if existent, is unique. Since we require that $\varphi(x)=u$ and $\varphi(y)=v$, the assumption that $\rt{H}{x,y}$ is anchored implies that all the images of all the vertices of $H$ under $\varphi$ are determined, or no such $\varphi$ exists.
	\end{claimproof}
	 
	\begin{claim}\label{cl:testing-subset}
		For any set of patterns $\Gamma\subseteq \Pi$ there exists an anchored $(\Lambda,r)$-oriented graph $\rt{H_\Gamma}{x,y}$ on at most $2+(r-1)|\Gamma|$ vertices such that we have:
			\[\textrm{all the patterns of }\Gamma\textrm{ are realized in } \rt{D}{u,v}\qquad \textrm{if and only if}\qquad \Hom(\rt{H_\Gamma}{x,y},\rt{D}{u,v})\neq \emptyset.\]
	\end{claim}
	\begin{claimproof}
		For every pattern $\pi=(\sigma,\tau)\in \Gamma$, build a $(\Lambda,r)$-oriented graph $\rt{H_\pi}{x,y}$ as follows:
		\begin{itemize} 
			\item construct an oriented path $P_\sigma$ of length $|\sigma|$ that starts at $x$ and whose consecutive edges have lengths and labels equal to the consecutive terms of $\sigma$;
			\item construct an oriented path $P_\tau$ analogously for $\tau$, where the start vertex of $P_\tau$ is $y$; and
			\item identify the end vertex of $P_\sigma$ with the end vertex of $P_\tau$.
		\end{itemize}
		Note that $\rt{H_\pi}{x,y}$ is anchored and has at most $r-1$ vertices apart from $x$ and $y$. 
		
		Now, $\rt{H_\Gamma}{x,y}$ can be obtained by taking the disjoint union of graphs $\rt{H_\pi}{x,y}$ for all $\pi\in \Gamma$ and identifying all their vertices $x$ into a single vertex $x$, and similarly for $y$ (see Figure~\ref{fig:HGamma} for the reference). It is straightforward to see that $\rt{H_\Gamma}{x,y}$ constructed in this way has the desired properties.
	\end{claimproof}
	
	For $\Gamma\subseteq \Pi$, let $V_\Gamma$ be the set of those vertices $v$ of $G$ for which every pattern of $\Gamma$ is realized in $\rt{D}{u,v}$. Also, let $V_\mathsf{far}$ be the set of vertices $v\in V(G)$ satisfying $\dist_G(u,v)>r$. By \cref{cl:dist-pattern}, $V_\mathsf{far}$ can be equivalently defined the set of those vertices $v$ for which none of the patterns of $\Pi$ is realized in $\rt{D}{u,v}$. Therefore, by the Inclusion--Exclusion Principle,
	\begin{equation}\label{eq:bobr}
		\sum_{\substack{v\in V(G)\\ \dist_G(u,v)>r}} \wei(v)=\sum_{v\in V_\mathsf{far}} \wei(v) = \sum_{\Gamma\subseteq \Pi} (-1)^{|\Gamma|}\sum_{v\in V_\Gamma} \wei(v).
	\end{equation}
	Next, by \cref{cl:unambigious,cl:testing-subset}, for every $\Gamma\subseteq \Pi$ and $v\in V(G)$ we have
	\begin{equation}\label{eq:wydra}
		|\Hom(\rt{H_\Gamma}{x,y},\rt{D}{u,v})|=\begin{cases}1 & \textrm{if }v\in V_\Gamma,\\ 0 & \textrm{otherwise.}\end{cases}
	\end{equation}
	By combining \eqref{eq:bobr} and \eqref{eq:wydra} we conclude that
	\begin{equation*}
		\sum_{\substack{v\in V(G)\\ \dist_G(u,v)>r}} \wei(v)= \sum_{\Gamma\subseteq \Pi} (-1)^{|\Gamma|}\sum_{v\in V(G)} \wei(v)\cdot |\Hom(\rt{H_\Gamma}{x,y},\rt{D}{u,v})|.
	\end{equation*}	
	Hence, we may take
	\[\Mm\coloneqq \left\{\,((-1)^{|\Gamma|},\rt{H_\Gamma}{x,y})\colon \Gamma\subseteq \Pi\,\right\}.\qedhere\]
\end{proof}

By combining \cref{lem:dyn-augmentation,lem:dyn-rooted-homs,lem:IE}, we easily obtain the following.

\begin{lemma}\label{lem:wei-sums-far}
	Fix a graph class $\Cc$ of bounded expansion and $r\in \N$.
	Let $G$ be a dynamic graph on $n$ vertices that belongs at all times to $\Cc$, $\wei\colon V(G)\to \Z$ be a weight function fixed upon initialization, and $u$ be a fixed vertex of $G$ (which does not change over time). Then there exists a data structure that maintains the value
	\[\sum_{\substack{v\in V(G)\\ \dist_{G}(u,v)>r}} \wei(v)\]
	with amortized update time $\log^{\Oh_{\Cc,r}(1)} n$. The data structure can be initialized for an edgeless $G$ in time $\Oh_{\Cc,r}(n)$ and uses $\Oh_{\Cc,r}(n)$ space at all times.
\end{lemma}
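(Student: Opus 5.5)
We plan to combine the three preceding lemmas directly. First, by \cref{lem:dyn-augmentation} applied to $\Cc$ and $r$, we fix the finite label set $\Lambda$ and maintain a faithful $(\Lambda,r)$-augmentation $D$ of the dynamic graph $G$. This costs amortized update time $\Oh_{\Cc,r}(\log^{r-1} n)$, and each update to $G$ triggers amortized $\Oh_{\Cc,r}(\log^{r-1} n)$ edge updates to $D$. Next, by \cref{lem:IE} (whose set $\Mm$ depends only on $\Lambda$ and $r$, so it can be computed once at initialization), the target quantity equals
\[\sum_{(s,\rt{H}{x,y})\in \Mm} s\cdot \sum_{v\in V(G)} \wei(v)\cdot |\Hom(\rt{H}{x,y},\rt{D}{u,v})|.\]
For each of the $|\Mm|=\Oh_{\Lambda,r}(1)$ pairs we maintain an instance of the data structure of \cref{lem:dyn-rooted-homs} on $D$, with the fixed vertex $u$ and weight $\wei$. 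Every edge update of $D$ produced by the augmentation structure is relayed to all of these instances. After each update we recompute the signed sum of their maintained values in $\Oh_{\Lambda,r}(1)$ time.

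To apply \cref{lem:dyn-rooted-homs}, we need a bounded expansion class containing the underlying undirected graph of $D$ at all times. By construction in the proof of \cref{lem:dyn-augmentation}, this graph is $G_r\in\Cc_r$, which has bounded expansion by iterated \cref{lem:fraternal}. We should make sure this holds also in intermediate states. The proof of \cref{lem:dyn-augmentation} already orders removals before additions, and every intermediate graph is a subgraph of some valid $G_r$, hence in $\Cc_r$ since $\Fra$ is closed under subgraphs. A label/length change or reorientation is modelled as removal plus insertion, as in that proof.

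Finally, the complexity is the product of the two layers. Each update to $G$ causes amortized $\log^{\Oh_{\Cc,r}(1)} n$ updates to $D$, and each of those costs amortized $\log^{\Oh_{\Cc,\Lambda,r,H}(1)} n$ in each of the $\Oh_{\Lambda,r}(1)$ instances. Since $\Lambda$ and all $H$ depend only on $\Cc,r$, the total is $\log^{\Oh_{\Cc,r}(1)} n$. Multiplying amortized bounds is legitimate by summing over the update sequence, as remarked in \cref{cl:maintain}. Initialization and space are $\Oh_{\Cc,r}(n)$ from both lemmas.

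The only step needing care is the class-membership check during intermediate updates together with the composition of amortized bounds; everything else is bookkeeping. One minor subtlety is that \cref{lem:IE} is stated for a fixed $G$, whereas here it is applied to the current $G$ and $D$ after every update. Since $\Mm$ does not depend on $G$, this is immediate.
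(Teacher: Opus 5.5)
Your proposal is correct and matches the paper's proof. Like the paper, you maintain the faithful augmentation $D$ via \cref{lem:dyn-augmentation}, keep one instance of \cref{lem:dyn-rooted-homs} per element of the set $\Mm$ from \cref{lem:IE}, relay every change of $D$ to all instances, multiply the amortized costs, and recombine the signed sums after each update. You also explicitly check that the underlying undirected graph of $D$ stays in the bounded expansion class $\Cc_r$, including in intermediate states when removals are relayed before insertions. This is a hypothesis of \cref{lem:dyn-rooted-homs} that the paper's proof uses only implicitly, so the check is a welcome addition.
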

\begin{proof}
	Let $\Lambda$ be the label set provided for $\Cc$ and $r$ by \cref{lem:dyn-augmentation}; note that $|\Lambda|\leq \Oh_{\Cc,r}(1)$. Further, let $\Mm$ be the set provided by \cref{lem:IE} for $\Lambda$ and $r$.
	By \cref{lem:dyn-augmentation}, we may maintain a faithful $(\Lambda,r)$-augmentation $D$ of $G$ so that every update to $G$ triggers an amortized number of $\log^{\Oh_{\Cc,r}(1)} n$ updates to $D$. Therefore, by \cref{lem:dyn-rooted-homs}, for every $(s,\rt{H}{x,y})\in \Mm$ we may construct a data structure that maintains the value $\sum_{v\in V(G)} \wei(v)\cdot |\Hom(\rt{H}{x,y},\rt{D}{u,v})|$ with amortized update time $\log^{\Oh_{\Cc,H,\Lambda,r}(1)} n$ (per update in $D$). Since $|\Mm|\leq \Oh_{\Lambda,r}(1)\leq \Oh_{\Cc,r}(1)$ and each graph $\rt{H}{x,y}$ featured in $\Mm$ has $\Oh_{\Lambda,r}(1)\leq \Oh_{\Cc,r}(1)$ vertices, we conclude that updating all those data structures takes amortized time $\log^{\Oh_{\Cc,r}(1)} n$ per update in $D$, so also amortized time $\log^{\Oh_{\Cc,r}(1)} n$ per update in $G$. Finally, the value $\sum_{\substack{v\in V(G)~\mid~ \dist_{G}(u,v)>r}} \wei(v)$ can be recomputed in time $\Oh_{\Cc,r}(1)$ upon every update using the formula provided by \cref{lem:IE}. The claimed bounds on the initialization time and the space usage follow directly from the bounds provided by \cref{lem:dyn-augmentation,lem:dyn-rooted-homs}.
\end{proof}

With all the tools prepared, we may prove \cref{thm:FarVertex-new} and thereby provide a suitable data structure $\farVertexDS{G}{r,k}$ by combining \cref{lem:wei-sums-far,lem:restore-by-sampling}.

\begin{proof}[Proof of \cref{thm:FarVertex-new}]
	We first argue that without loss of generality we may assume that in every query $\far{S,\rho}$, we have $S=\{u\}$ for a fixed vertex $u$ that does not change over time and $\rho(u)=r+1$. We do it as follows. Instead of maintaining the original dynamic graph $G$, we maintain the dynamic graph $G'$ obtained from $G$ by $1+kr$ fresh vertices that remain isolated. Distinguish one of those vertices and call it $u$. Upon query $\far{S,\rho}$ in $G$, we temporarily use the additional isolated vertices to connect $u$ with each $s\in S$ by a path of length $(r+1)-\rho(s)$; this requires at most $k(r+1)$ edge additions. Once the paths are prepared, we ask the query $\far{\{u\},(u\mapsto r+1)}$ in $G'$. It is clear that this query in $G'$ is equivalent to the query $\far{S,\rho}$ in $G$ in terms of correct answers. Having obtained the answer to the query in $G'$, we may remove the temporary edges, thus making $u$ and all the other additional vertices again isolated. Note that we have $G'\in \Cc'$ where $\Cc'$ is the class of all subdivisions of graphs in which one vertex can be removed to obtain a graph from $\Cc$; this class also has bounded expansion, since we have $\nabla_d(\Cc')\leq \nabla_d(\Cc)+1$ for all $d\in \N$. So the result for $\Cc'$ with the assumption that $S=\{u\}$ and $\rho=(u\mapsto r+1)$ in all the queries implies the general result for $\Cc$.
	
	We proceed under the assumption that every query asks about $S=\{u\}$ and $\rho=(u\mapsto r+1)$, for a fixed vertex $u$.
	For the purpose of applying \cref{lem:restore-by-sampling}, for a vertex $v$ of $G$ we define
	\[f(G,v)\coloneqq \begin{cases} 1 &\textrm{if }\dist_G(u,v)>r+1,\\ 0 & \textrm{otherwise.}\end{cases}\]
	A suitable data structure $\WeiSum_{f, \wei}[G]$ for any fixed weight function $\wei\colon V(G)\to \Z$ is provided by \cref{lem:wei-sums-far} (applied with parameter $r+1$ instead of $r$). The data structure for $\Ver_f[G]$ can be obtained by maintaining the data structures $\WeiMaps_{\Hom,\rt{H}{x,y},1}[G]$ provided by \cref{thm:homcounts-relational} for $\rt{H}{x,y}$ ranging over paths of length $0,1,2,\ldots,r+1$ with endpoints $x$ and $y$. Therefore, \cref{lem:restore-by-sampling} applies and the data structure $\mathsf{RetrieveVertex}_{f,\eps}[G]$ provided by it meets all the requirements for $\farVertexDS{G}{r,k,\eps}$.
\end{proof}

\subsection{Dynamic domination and independence in classes of bounded expansion}

We may finish the proofs of our main two results, \cref{thm:main-domset-be,thm:main-indset-be}, which we recall for convenience.

\mainDomBE*
\begin{proof}
	Follows immediately by plugging \cref{thm:NearVertex-new,thm:FarVertex-new} into \cref{lem:summary-dom} and using \cref{thm:sli-be} to infer the finiteness of $\sli_r(\Cc)$.
\end{proof}

\mainIndBE*
\begin{proof}
	Follows immediately by plugging \cref{thm:NearVertex-new,thm:FarVertex-new} into \cref{lem:summary-ind}.
\end{proof}

\section{Distance-\texorpdfstring{$1$}{1} domination in degenerate graphs}\label{sec:distone}

In this section we give a simpler data structure for \textsc{Dominating Set} for the distance-$1$ case, that is, we prove \cref{thm:main-domset-deg}. For this purpose, we give a simpler and more efficient implementation of data structures supporting queries $\near{S,1}$ and $\far{S,1}$ under the assumption that the maintained graph is $d$-degenerate. These data structures are new; they are not merely specializations of the bounded-expansion constructions from \cref{sec:implementation}. 

Throughout this section we assume that the maintained dynamic graph $G$ has a fixed vertex set $V=\{1,\ldots,n\}$ and is $d$-degenerate at all times, for a fixed parameter $d\in \N$. We also assume, without loss of generality, that $n\geq 2$ and $d\geq 1$.




Using the Brodal--Fagerberg data structure (\cref{thm:bf}), we may assume that we maintain an orientation $\vec{G}$ of $G$ with maximum outdegree $\Delta \coloneqq 4d$. 
We write $N(v)$ for the open neighborhood of $v$ in $G$ (consisting of all the neighbors of $v$ in $G$), $N[v]\coloneqq \{v\}\cup N(v)$ for the closed neighborhood, and $N[X]\coloneqq \bigcup_{v\in X}N[v]$ for any $X\subseteq V$. We also write $N^+(v)\coloneqq \{w\in V : (v,w)\in E(\vec G)\}$ for the open outneighborhood of $v$ in $\vec G$.

\subsection{Toolkit}\label{sec:distone-infra}



We start with describing a helper data structure that we call \emph{toolkit}, which will be later used in the data structures for $\near{S,1}$ and $\far{S,1}$ queries. We stress that the toolkit is fully deterministic; randomization will enter the scene only later, through an application of \cref{thm:dvorak-tuma-examples}.


Let us first explain the intuition.
The toolkit architecture replaces searching through neighborhoods in $G$ (which may be large) with certain bucket arithmetic plus inspection of a few outneighborhoods in~$\vec G$; these are small due to the bound on the maximum outdegree of $\vec G$.  
Buckets reduce the query ``given $Z$, find $u$ such that $Z\subseteq N^+(u)$'' to a single dictionary lookup: instead of scanning through all the vertices of~$V$, we consult the single list $B(Z)\coloneqq \{u\in V \mid Z\subseteq N^+(u)\}$, maintained in the dictionary.

To facilitate later uses for weighted counting,
we shall assume that the maintained graph is equipped with a weight function $\wei\colon V\to \Z$ that is fixed upon initialization and does not change over time. In the toolkit data structure, we then explicitly maintain the following objects at all times.
\begin{itemize}
\item The bucket family $B(Z)\coloneqq \{u\in V \mid Z\subseteq N^+(u)\}$ for all sets $Z\subseteq V$ with $B(Z)$ being nonempty (note that this implies $|Z|\leq \Delta$). Each bucket $B(Z)$ is stored as a doubly linked list, while the whole bucket family is stored as a dictionary keyed by the sorted representation of $Z$. Observe that every vertex $u\in V$ is in $2^{|N^+(u)|}\leq 2^\Delta$ buckets, hence the total sum of the lengths of the lists stored for the buckets $B(Z)$ is bounded by $2^\Delta\cdot n$. In particular, the dictionary stores at most this many entries. Additionally, with every bucket $B(Z)$ we store its total weight $\wei(B(Z))\coloneqq \sum_{v\in B(Z)}\wei(v)$.
\item A registration list for every vertex $u \in V$, tracking its $2^{|N^+(u)|} \le 2^\Delta$ bucket memberships (pointers to list elements on the bucket lists).
\end{itemize}
We remark that if we have a pointer to an element of a doubly linked list, then this element may be deleted in $\Oh(1)$ time from that list, hence maintaining registration lists lets us delete all occurrences of $u$ on bucket memberships lists in $\Oh(2^{\Delta})$ time complexity.

We recall that one of the features of the \cref{thm:bf} is the ability to perform the membership tests $v\in N(u)$ and $v\in N^+(u)$ in $\Oh(d)$ time.

Every bucket key $Z$ is stored explicitly as the sorted list of its elements, so it occupies $\Oh(|Z|)\le \Oh(\Delta)=\Oh(d)$ space. It stores a pointer to its first element, its current cardinality, and its current weight.
Consequently, one bucket lookup, insertion, or deletion by key requires $\Oh(\log (2^\Delta n))=\Oh(\Delta+\log n)$ comparisons on lists of length at most $\Delta$, hence it takes $\Oh(d^2 + d \log n)$ time complexity (more sophisticated data structures would allow for a more efficient lookup, but the improvement is considered negligible). Whereas after the bucket is found, accessing its first element, its cardinality or its weight costs $\Oh(1)$ time.
Note that the bucket $B(\emptyset)$ consists of the whole vertex set $V$ at all times. 


We now prove a lemma providing guarantees on the complexity of maintaining a toolkit.

\begin{lemma}[Toolkit]\label{lem:infra-update}
One instance of the toolkit described above can be initialized on the edgeless graph on $V$ provided with a weight function $\wei\colon V\to \Z$ in $\Oh(n)$ time.
The toolkit uses $\Oh(2^{4d}\cdot dn)$ space at all times.
Upon edge insertion or deletion, the toolkit can be updated in amortized time $\Oh(2^{4d}\cdot d^2\cdot\log^2 n)$. 
\end{lemma}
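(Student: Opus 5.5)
The plan is to treat the toolkit as a layer on top of the Brodal--Fagerberg structure of \cref{thm:bf}, run with parameter $d$ so that it maintains a $\Delta$-orientation $\vec G$ with $\Delta=4d$. The key observation is that the bucket family and the registration lists depend only on the outneighborhoods $N^+(u)$. Hence every change to them is caused by an event at some vertex $u$ whose outneighborhood changes. There are exactly three kinds of such events: an inserted edge oriented out of $u$, a deleted edge that was outgoing from $u$, or a reorientation of an edge, which changes the outneighborhoods of both its endpoints. By \cref{thm:bf}, every insertion or deletion in $G$ causes amortized $\Oh(\log n)$ reorientations, and these are reported explicitly, so the number of outneighborhood changes per update is amortized $\Oh(\log n)$.

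For initialization, the graph is edgeless, so $N^+(u)=\emptyset$ for every $u$ and the only nonempty bucket is $B(\emptyset)=V$. We build this single list, compute its weight $\wei(V)$, set up the registration lists (each holding one pointer), and initialize Brodal--Fagerberg. All of this takes $\Oh(n)$ time.

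For space, each vertex belongs to at most $2^\Delta=2^{4d}$ buckets. The list entries and registration pointers therefore total $\Oh(2^{4d} n)$. Each dictionary key is a sorted list of length at most $\Delta$, and there are at most $2^{4d} n$ keys, giving $\Oh(2^{4d}\cdot dn)$. The Brodal--Fagerberg structure adds only $\Oh(n+m)=\Oh(dn)$ by degeneracy.

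The update procedure is the core step. Whenever $N^+(u)$ is about to change, we handle $u$ in two stages.
\begin{itemize}
\item \emph{Removal.} Using the registration list of $u$, we delete $u$ from all its $2^{|N^+(u)|}$ bucket lists in $\Oh(1)$ each, and subtract $\wei(u)$ from each bucket weight. If a bucket becomes empty, we delete its key from the dictionary, at cost $\Oh(d^2+d\log n)$.
\item \emph{Reinsertion.} After the change, we enumerate all subsets $Z$ of the new $N^+(u)$ in sorted form. For each $Z$ we look it up in the dictionary (creating it if absent), append $u$, add $\wei(u)$ to the bucket weight, and record the pointer in the registration list of $u$.
\end{itemize}
One outneighborhood change therefore costs $\Oh(2^{4d}(d^2+d\log n))$. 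Since $n\ge2$ and $d\ge1$, we have $d^2+d\log n\le \Oh(d^2\log n)$, so this is $\Oh(2^{4d} d^2\log n)$.

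Multiplying by the amortized $\Oh(\log n)$ outneighborhood changes per update, and adding the $\Oh(d+\log n)$ cost of Brodal--Fagerberg itself, gives amortized $\Oh(2^{4d} d^2\log^2 n)$ per update.

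The step most likely to cause difficulty is the amortization bookkeeping. We must check that the charges from the amortized reorientation count of \cref{thm:bf} compose correctly with our per-event worst-case cost. This works because our cost per reported reorientation is a fixed worst-case bound, so the total work over any sequence of updates is that bound times the number of reorientations plus insertions and deletions, which is within the claimed amortized bound. A smaller point is to process all removals before all reinsertions when several edges change at once, so that the bucket contents stay consistent.
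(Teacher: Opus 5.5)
Your proposal is correct and follows essentially the same route as the paper's proof: Brodal--Fagerberg yields amortized $\Oh(\log n)$ vertices whose outneighborhood changes per update, and each such vertex is re-registered in all $\Oh(2^{4d})$ subset buckets at cost $\Oh(d^2+d\log n)$ per bucket operation. Your space count charges $\Oh(d)$ words to each of the at most $2^{4d}n$ stored keys, so it matches the stated $\Oh(2^{4d}\cdot dn)$ bound more carefully than the paper's write-up, which concludes $\Oh(2^{4d}\cdot n)$ without counting the keys.
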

\begin{proof}
Initialization takes $\Oh(n)$ time: we set up all the relevant dictionaries to be empty, and then register all the vertices in the bucket $B(\emptyset)$, which becomes the only element in the dictionary of buckets.


Consider one edge insertion or deletion.
First, we update the two affected neighbor dictionaries $N(\cdot)$; this takes $\Oh(\log n)$ time.
Second, we update the Brodal--Fagerberg orientation $\vec G$.
By \cref{thm:bf}, this takes amortized time $\Oh(d+\log n)$, the amortized number of reoriented edges is $\Oh(\log n)$, and we obtain the whole list of those edges, hence also the set of vertices whose out-neighborhood changed.
Each reoriented edge contributes at most its two endpoints.
In addition, the updated edge contributes only its two endpoints. 
Therefore, the amortized number of vertices $u$ for which the set $N^+(u)$ may change is~$\Oh(\log n)$.

Fix one such vertex $u$.
Let $A_{\mathrm{old}}\coloneqq N^+_{\mathrm{old}}(u)$ and $A_{\mathrm{new}}\coloneqq N^+_{\mathrm{new}}(u)$, where we follow the convention that $\mathrm{old}$ and $\mathrm{new}$ signifies whether an object in $G$ refers to the state before or after the update.
The old registrations of $u$ form the family $\mathcal{R}_{\mathrm{old}}(u)\coloneqq \{Z\mid Z\subseteq A_{\mathrm{old}}\}$, while the new registrations form the family  $\mathcal{R}_{\mathrm{new}}(u)\coloneqq \{Z\mid Z\subseteq A_{\mathrm{new}}\}$.
Producing one subset key and performing the corresponding bucket lookup, insertion or deletion, and update to the bucket's cardinality and weight, costs $\Oh(d^2 + d\log n)$, because the key has size at most $\Delta$. Note here that if the bucket $B(Z)$ did not exist before ($B_{\textrm{old}}(Z)=\emptyset$), it must be added to the dictionary of buckets, and if its cardinality drops to $0$, then it must be removed from the dictionary of buckets; all of this can be done within the same complexity bounds.
So the whole work caused by the vertex $u$ is bounded by \[\Oh((|\mathcal{R}_{\mathrm{old}}(u)|+|\mathcal{R}_{\mathrm{new}}(u)|)\cdot (d^2+d\log n))=\Oh(2^\Delta\cdot (d^2+d\log n)).\]

Multiplying by the amortized number $\Oh(\log n)$ of changed vertices per graph update yields amortized update time $\Oh(2^\Delta\cdot (d^2+d\log n)\cdot \log n)\leq \Oh(2^{4d}\cdot d^2\cdot \log^2 n)$, because $\Delta=4d$.
Note here that the factor $2^{4d}$ comes from enumerating all subsets of an outneighborhood of size at most $\Delta=4d$.

For the space bound, the orientation $\vec G$ and the outneighbor dictionaries use $\Oh(nd)$ space because every $d$-degenerate graph on $V$ has at most $d(n-1)$ edges.
For every $u\in V$, its registration list contains at most $2^{|N^+(u)|}\le 2^\Delta$ entries.
Hence all registrations and bucket incidences together use at most $\Oh\left(\sum_{u\in V} 2^{|N^+(u)|}\right)=\Oh(2^\Delta\cdot n)$ words, which also gives an upper bound on the total size of the lists stored for the buckets. 
Thus, as $\Delta=4d$, the whole toolkit uses $\Oh(2^{4d}\cdot n)$ words of space.
\end{proof}

In further preparation for the queries $\near{S,1}$ and $\far{S,1}$, we introduce the following definitions. Let $S\subseteq V(G)$. We define:
\begin{align*}
	U(S)&\coloneqq \{u\in V\setminus S\mid S\subseteq N(u)\};\\
	U_2(S) & \coloneqq \{u\in V\setminus S\mid S\subseteq N^+(u)\};\\
	U_1(S) & \coloneqq U(S)\setminus U_2(S)=\{u\in U(S)\mid u\in \textstyle\bigcup_{s\in S} N^+(s)\}.
\end{align*}
Clearly, $(U_1(S),U_2(S))$ is a partition of $U(S)$.
Note also that $|U_1(S)|\leq \Delta|S|$, because $|N^+(s)|\leq \Delta$ for each $s\in S$.
Finally, observe the following.


\begin{lemma}[Bucket characterization of $U_2(S)$]\label{lem:distone-U2-bucket}
For any $S\subseteq V$, we have $U_2(S)=B(S)$.
\end{lemma}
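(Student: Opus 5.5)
The plan is a direct double inclusion that unfolds the definitions. Recall that $B(Z)=\{u\in V\mid Z\subseteq N^+(u)\}$ and $U_2(S)=\{u\in V\setminus S\mid S\subseteq N^+(u)\}$. These definitions differ only in the extra requirement $u\notin S$ in $U_2(S)$. So the whole argument reduces to showing that this requirement is automatically satisfied by every member of $B(S)$.

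The inclusion $U_2(S)\subseteq B(S)$ is immediate: every $u\in U_2(S)$ satisfies $S\subseteq N^+(u)$, and hence $u\in B(S)$.

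For the reverse inclusion, take $u\in B(S)$, so that $S\subseteq N^+(u)$. Suppose for contradiction that $u\in S$. Then $u\in N^+(u)$, which means $(u,u)$ is an edge of $\vec G$. This is impossible, because $\vec G$ is an orientation of the simple graph $G$ and therefore has no loops. Hence $u\in V\setminus S$, and so $u\in U_2(S)$.

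Two degenerate cases need a remark, though neither is a real obstacle. If $S=\emptyset$, both sides equal $V$; this matches the remark that $B(\emptyset)=V$ at all times. If no vertex $u$ satisfies $S\subseteq N^+(u)$, then the bucket $B(S)$ is not stored in the dictionary. We interpret this as $B(S)=\emptyset$, and $U_2(S)$ is then empty as well. I do not expect any step here to be a genuine obstacle; the only point needing care is the absence of loops in $\vec G$.
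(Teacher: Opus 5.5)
Your proof is correct and follows essentially the same approach as the paper: the definitions of $U_2(S)$ and $B(S)$ differ only in the condition $u\notin S$, and that condition holds automatically because $S\subseteq N^+(u)$ together with $u\in S$ would force a loop at $u$, which a simple graph cannot have. The remarks on degenerate cases are harmless but not needed.
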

\begin{proof}
	Follows from the definitions and the observation that $S\subseteq N^+(u)$ entails $u\notin S$, due the fact that $G$ is a simple graph.
\end{proof}

\cref{lem:distone-U2-bucket} provides the following combinatorial characterization of vertices that are candidates for the output in the query $\near{S,1}$.

\begin{lemma}\label{cl:distone-near-candidates}
Let $S\subseteq V$.
If a vertex $v$ satisfies $S\subseteq N[v]$, then exactly one of the following holds:
\begin{itemize}
\item $v\in S$; or
\item $v \not\in S$ and $v\in \bigcup_{s\in S}N^+(s)$; or
\item $|S|\le \Delta$ and $v\in B(S)$.
\end{itemize}
In particular, apart from the single bucket $B(S)$, the list above includes at most $|S|+\Delta\cdot |S|=\Oh(d|S|)$ candidates.
\end{lemma}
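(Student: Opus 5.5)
The plan is a direct case analysis on whether $v$ belongs to $S$, and if not, on how the edges between $v$ and $S$ are oriented in $\vec G$. Fix $v$ with $S\subseteq N[v]$. If $v\in S$, we are in the first case, and the other two are excluded: the second explicitly requires $v\notin S$, and the third requires $v\in B(S)$, i.e.\ $S\subseteq N^+(v)$, which would force $v\in N^+(v)$ — impossible in a simple oriented graph (this is exactly the observation used in \cref{lem:distone-U2-bucket}).

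Suppose now $v\notin S$. Then $S\subseteq N(v)$, so $v\in U(S)$ in the notation introduced just before \cref{lem:distone-U2-bucket}. Since $(U_1(S),U_2(S))$ partitions $U(S)$, exactly one of the following holds. Either $v\in U_1(S)$, which by definition means $v\in\bigcup_{s\in S}N^+(s)$ (second case). Or $v\in U_2(S)$, i.e.\ every edge between $v$ and $S$ is oriented away from $v$, so $S\subseteq N^+(v)$.

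In the latter situation, $|S|\le|N^+(v)|\le\Delta$ because $\vec G$ is a $\Delta$-orientation. Moreover $v\in B(S)$ by \cref{lem:distone-U2-bucket}, giving the third case. Exclusivity between the second and third cases is exactly the disjointness of $U_1(S)$ and $U_2(S)$: an edge $sv$ has a single orientation, so $v\in N^+(s)$ and $s\in N^+(v)$ cannot both hold.

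Finally, the counting statement: the first case contributes at most $|S|$ candidates, and the second contributes at most $\sum_{s\in S}|N^+(s)|\le\Delta|S|$. Since $\Delta=4d$, the total is $\Oh(d|S|)$.

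There is no real obstacle here. The only point needing care is the exclusivity claim, and specifically making sure the bucket case cannot overlap with $v\in S$. That is handled by the simplicity of $G$, as noted above.
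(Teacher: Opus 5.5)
Your proof is correct and follows the paper's argument: a case split on whether $v\in S$, then the partition $(U_1(S),U_2(S))$ of $U(S)$, with \cref{lem:distone-U2-bucket} giving $v\in B(S)$ and $S\subseteq N^+(v)$ giving $|S|\le\Delta$. The one addition is that you explicitly verify that the three cases are mutually exclusive, using simplicity of $G$ and the fact that each edge has a single orientation; the paper leaves this implicit.
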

\begin{proof}
If $v\in S$, we are in the first case.
Assume then that $v\notin S$, so $v\in U(S)$.
If $v\in U_1(S)$, then $v\in \bigcup_{s\in S}N^+(s)$.
And if $v\in U_2(S)$, then $|S|\leq \Delta$ due to $S\subseteq N^+(v)$, and \cref{lem:distone-U2-bucket} yields $v\in B(S)$.
\end{proof}

\subsection{Detecting near vertices}\label{sec:distone-near}


In this subsection we provide the desired implementation of the $\near{S,1}$ queries. Note that this time, the data structure is entirely deterministic and can handle any given set $S$, not necessarily of size stipulated by a fixed constant. Also, since we are going to reuse this data structure in the implementation of $\far{S,1}$, we equip it with the capability of weighted counting.

\begin{theorem}\label{thm:NearVertex-new-deg}
	Fix $d\in \N$.
	Let $G$ be a dynamic graph on vertex set $V=\{1,\ldots,n\}$ that is $d$-degenerate at all times, and let $\wei\colon V\to \Z$ be a weight function fixed upon initialization. Then there exists a data structure $\nearVertexDS{G}{}$ that provides access to the following query:
	\begin{itemize}
		\item $\near{S,1}$: For a given set of vertices $S\subseteq V(G)$, return a vertex $v$ such that $S\subseteq N[v]$, or $\bot$ if no such vertex exists. Also, return the value $\sum_{v\colon S\subseteq N[v]} \wei(v)$.
	\end{itemize}
	The amortized time of an update is $\Oh(2^{4d}\cdot d^2\cdot \log^2 n)$. The queries take worst-case time $\Oh(d^2|S|^2\cdot\log n)$. The data structure can be initialized for an edgeless $G$ in $\Oh(n)$ time and uses $\Oh(2^{4d}\cdot n)$ space at all times.
\end{theorem}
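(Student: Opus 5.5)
The data structure $\nearVertexDS{G}{}$ will consist of the Brodal--Fagerberg orientation $\vec G$ (\cref{thm:bf}) with maximum outdegree $\Delta=4d$, together with one instance of the toolkit from \cref{lem:infra-update} built for the weight function $\wei$. Updates are relayed to both. Initialization, space and amortized update time then follow directly from \cref{thm:bf} and \cref{lem:infra-update}: $\Oh(n)$ initialization, $\Oh(2^{4d}\cdot n)$ space (the $\Oh(dn)$ for $\vec G$ is dominated), and $\Oh(2^{4d}\cdot d^2\log^2 n)$ amortized update time. So all the work is in the query.

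For the query $\near{S,1}$, I would use the partition from \cref{cl:distone-near-candidates}. Every vertex $v$ with $S\subseteq N[v]$ falls into exactly one of three classes.
\begin{enumerate}
\item The first class is $v\in S$.
\item The second is $v\in \bigcup_{s\in S}N^+(s)\setminus S$; together with the first, this class has at most $|S|+\Delta|S|=\Oh(d|S|)$ explicit candidates.
\item The third is $v\in B(S)$, possible only if $|S|\le\Delta$.
\end{enumerate}
The algorithm builds the explicit candidate set $C\coloneqq S\cup\bigcup_{s\in S}N^+(s)$, deduplicated by sorting in $\Oh(d|S|\log(d|S|))$ time. For each $v\in C$ it tests whether $S\subseteq N[v]$ using $|S|$ adjacency queries of cost $\Oh(d)$ each. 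This totals $\Oh(d|S|)\cdot\Oh(d|S|)=\Oh(d^2|S|^2)$ time.

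The qualifying candidates in $C$ contribute their weights directly to the sum. We then add $\wei(B(S))$, obtained by a single dictionary lookup of the sorted key $S$, at cost $\Oh(d^2+d\log n)$ whenever $|S|\le\Delta$; if $|S|>\Delta$ or the key is absent, this contribution is $0$. To return an example vertex, we output any qualifying candidate from $C$, or else the first element of the list of $B(S)$ if it is nonempty, or else $\bot$.

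Correctness rests on two facts. First, the three classes are disjoint, which gives the sum without double counting. The one point to check carefully is that $C$ and $B(S)$ do not overlap. A vertex $v\in B(S)$ has $S\subseteq N^+(v)$, so every edge between $v$ and $S$ is oriented away from $v$. Since $\vec G$ is an orientation, no $s\in S$ can have $v\in N^+(s)$, and $v\notin S$ by \cref{lem:distone-U2-bucket}. Second, every element of $B(S)$ genuinely satisfies $S\subseteq N[v]$, again by \cref{lem:distone-U2-bucket}.

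Summing the costs gives $\Oh(d^2|S|^2+d\log n)$, which is within the claimed worst-case bound $\Oh(d^2|S|^2\log n)$; the extra $\log n$ factor generously covers the dictionary operations and sorting. I expect no real obstacle. The step needing the most care is the disjointness argument above, since it is what ensures the weighted sum is exact rather than an overcount.
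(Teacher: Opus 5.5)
Your proposal is correct and matches the paper's proof: one toolkit instance, explicit testing of the $\Oh(d|S|)$ candidates in $S\cup\bigcup_{s\in S}N^+(s)$, and a single dictionary lookup of the bucket $B(S)$, whose disjointness from the tested candidates makes the weighted sum exact. The only difference is cosmetic: you deduplicate candidates by sorting, while the paper marks them in a temporary Boolean array indexed by $V$; both fit within the stated query bound.
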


\begin{proof}
The data structure consists of one instance of the toolkit of \cref{lem:infra-update} and an auxiliary Boolean array indexed by $V$ used for temporary marking. Therefore, the guarantees on the update time, initialization time, and space usage follows directly from \cref{lem:infra-update}. We are left with implementing the query $\near{S,1}$.


Upon query $\near{S,1}$, we execute the following steps:
\begin{enumerate}
	\item Initialize $L$ to be an empty list. The intention is that on $L$ we shall gather the vertices $v\in S\cup \bigcup_{s\in S}N^+(s)$ satisfying $S\subseteq N[v]$.
	\item For every $s\in S$, test whether $S\setminus\{s\}\subseteq N(s)$. If this is the case, add $s$ to $L$ and mark it in the auxiliary Boolean array as added.
	\item For every $s\in S$ and every vertex $v\in N^+(s)$, test whether $S\subseteq N[v]$. If this is the case and $v$ is not marked as already added, add $v$ to $L$ and mark it as added.
	\item Iterate through $L$ and remove all the markings from the auxiliary Boolean array. Thus it becomes empty (all-false) again and $L$ contains no duplicates.
	\item If $|S|\le \Delta$, find the bucket $B(S)$ in the dictionary of buckets, if existent.
	\item If $L$ is nonempty or $B(S)$ exists, return the first vertex of $L$ or the first vertex of $B(S)$; otherwise return $\bot$. For $\sum_{v\colon S\subseteq N[v]} \wei(v)$, return $\wei(B(S))+\wei(L)$, where $\wei(B(S))$ is replaced by $0$ if $B(S)$ is not present in the dictionary of buckets. Note that $L$ and $B(S)$ are disjoint according to \cref{cl:distone-near-candidates}.
\end{enumerate}
Step~1 takes constant time. Step~2 takes $\Oh(|S|^2 d)$ time, because it boils down to $\Oh(|S|^2)$ adjacency checks. Step~3 takes $\Oh(d^2|S|^2)$ time, because there are at most $d|S|$ vertices $v$ to consider, and for each of them we test $|S|$ vertices of $S$ for membership in $N[v]$, each in $\Oh(d)$ time. Step~4 takes $\Oh(|L|)\leq \Oh(d|S|)$ time.
Step~5 takes $\Oh(d^2+d\log n)$ time, because this is the complexity of retrieving the bucket $B(S)$ from the dictionary of buckets. Finally, Step~6 takes constant time, because retrieving the weight and the first element of a bucket takes constant time. Thus, executing the query takes $\Oh(d^2|S|^2 + d\log n) = \Oh(d^2 |S|^2 \log n)$ time in the worst case.

The correctness of the algorithm presented above follows immediately from \cref{cl:distone-near-candidates} and the observation that the contribution to the sum $\sum_{v\colon S\subseteq N[v]} \wei(v)$ of the vertices $v$ that do not satisfy $S\subseteq N^+(v)$ is exactly counted in the variable $\gamma$.
\end{proof}

\subsection{Detecting far vertices}\label{sec:distone-far}

In this section we give a data structure for the $\far{S,1}$ queries. Similarly to the reasoning presented in \cref{sec:far-be}, the idea is to first give a data structure that maintains weighted sums over vertices outside of $N[S]$ (this part is entirely deterministic), and then lift it to an example-reporting data structure using \cref{lem:restore-by-sampling}. Therefore, our first goal is to prove the following.

\newcommand{\WeiFar}{\mathsf{WeightedFarSum}}
\newcommand{\sumFar}{\mathtt{sumFar}}

\begin{lemma}\label{lem:wei-sums-far-deg}
	Fix $d\in \N$.
	Let $G$ be a dynamic graph on the vertex set $V=\{1,\ldots,n\}$ that is $d$-degenerate at all times, and let $\wei\colon V\to \Z$ be a weight function fixed upon initialization. Then there exists a data structure $\WeiFar_{\wei}[G]$ for $G$ that supports the following query:
	\begin{itemize}
		\item $\sumFar(S)$: Given $S\subseteq V$, return $\sum_{v\in V\setminus N[S]} \wei(v)$.
	\end{itemize}
	The amortized update time is $\Oh(2^{4d}\cdot d^2\cdot \log^2 n)$, and every query is answered in worst-case time $\Oh(2^{|S|}\cdot d^2|S|^2\cdot \log n)$. The data structure can be initialized for an edgeless $G$ in time $\Oh(n)$ and uses $\Oh(2^{4d}\cdot n)$ space at all times.
\end{lemma}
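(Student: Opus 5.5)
The data structure $\WeiFar_{\wei}[G]$ will simply be one instance of $\nearVertexDS{G}{}$ from \cref{thm:NearVertex-new-deg}, initialized with the same weight function $\wei$. Updates are relayed to it, so the bounds on update time, initialization time and space are inherited directly. The only work is in the query, which we answer by inclusion--exclusion over the subsets of $S$, just as outlined in \cref{sec:overview-other}.

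For $T\subseteq S$, write
\[N_\cap(T)\coloneqq\{v\in V\mid T\subseteq N[v]\}.\]
By symmetry of adjacency (and $v\in N[v]$), $T\subseteq N[v]$ holds iff $v\in N[t]$ for every $t\in T$. Hence $N_\cap(T)=\bigcap_{t\in T}N[t]$, with the convention $N_\cap(\emptyset)=V$.

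A vertex $v$ lies outside $N[S]$ exactly when it belongs to none of the sets $N[s]$ for $s\in S$. The Inclusion--Exclusion Principle therefore gives
\[\sum_{v\in V\setminus N[S]}\wei(v)=\sum_{T\subseteq S}(-1)^{|T|}\sum_{v\in N_\cap(T)}\wei(v).\]
Each inner sum is exactly the weighted value returned by the query $\near{T,1}$ of \cref{thm:NearVertex-new-deg}. For $T=\emptyset$ this is just $\wei(B(\emptyset))=\wei(V)$, which the toolkit stores explicitly; the query procedure also handles this case correctly, since $L$ is empty and $B(\emptyset)=V$.

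The query $\sumFar(S)$ thus enumerates all $2^{|S|}$ subsets $T\subseteq S$, issues $\near{T,1}$ for each, and combines the returned weights with the signs $(-1)^{|T|}$. Each call costs worst-case time $\Oh(d^2|T|^2\log n)\le\Oh(d^2|S|^2\log n)$, for a total of $\Oh(2^{|S|}\cdot d^2|S|^2\cdot\log n)$, as claimed. Enumerating the subsets and building their representations adds only $\Oh(2^{|S|}|S|)$ overhead.

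The main point requiring care is that the weighted sum returned by $\near{T,1}$ is exact, with no double counting. This follows from the disjointness of $L$ and $B(T)$ given by \cref{cl:distone-near-candidates}, together with the fact that every vertex of $N_\cap(T)$ appears either in $L$ or in $B(T)$. When $|T|>\Delta$, the bucket $B(T)$ does not exist, and its contribution is correctly taken to be $0$. No randomization is involved, so the data structure is deterministic.
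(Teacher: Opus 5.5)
Your proposal is correct and matches the paper's proof: a single instance of $\nearVertexDS{G}{}$ from \cref{thm:NearVertex-new-deg} carrying the weight $\wei$, and the inclusion--exclusion formula over the $2^{|S|}$ subsets $T\subseteq S$, with each inner sum $\sum_{v\colon T\subseteq N[v]}\wei(v)$ obtained from one $\near{T,1}$ query. Your extra checks of the $T=\emptyset$ case and of the disjointness of $L$ and $B(T)$ (which the paper delegates to \cref{cl:distone-near-candidates}) are accurate and do not change the argument.
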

\begin{proof}
	Observe that by the Inclusion-Exclusion Principle,
	\[\sum_{v\in V\setminus N[S]} \wei(v) = \sum_{v\colon N[v]\cap S=\emptyset} \wei(v)= \sum_{S'\subseteq S} (-1)^{|S'|}\cdot \sum_{v\colon S'\subseteq N[v]} \wei(v).\]
	Therefore, if we maintain one instance of the $\nearVertexDS{G}{}$ data structure provided by \cref{thm:NearVertex-new-deg}, then the value $\sum_{v\in V\setminus N[S]} \wei(v)$ can be computed using the formula above from the answers to $2^{|S|}$ queries to $\nearVertexDS{G}{}$: one for each subset $S'\subseteq S$. The complexity guarantees follow directly from the guarantees provided by \cref{thm:NearVertex-new-deg}.
\end{proof}

We are now in position to implement the data structure for $\far{S,1}$ queries.

\begin{theorem}\label{thm:FarVertex-new-deg}
	Fix $d\in \N$ and $\eps>0$.
	Let $G$ be a dynamic graph on vertex set $V=\{1,\ldots,n\}$ that is $d$-degenerate at all times. Then there exists a randomized data structure $\farVertexDS{G}{\eps}$ that provides access to the following query:
	\begin{itemize}
		\item $\far{S,1}$: For a given set of vertices $S\subseteq V$, return a vertex $v$ that belongs to $V-N[S]$, or $\bot$ if no such vertex exists.
	\end{itemize}
	Every answer to the query is correct with probability at least $1-\eps$ against an oblivious adversary.
	The amortized time complexity of updates and queries is $\Oh(2^{4d+|S|}\cdot d^2|S|^2\cdot \log^3 n \log \tfrac{1}{\eps})$. The data structure can be initialized for an edgeless $G$ in time $\Oh(n\log n\log \tfrac{1}{\eps})$, and uses $\Oh(2^{4d}\cdot n\log n\log \tfrac{1}{\eps})$ space at all times.
\end{theorem}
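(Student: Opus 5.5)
The plan is to combine \cref{lem:wei-sums-far-deg} with the Fingerprint Retrieval Lemma (\cref{lem:restore-by-sampling}), in the binary case \ref{it:binary-f}. The catch is that \cref{lem:restore-by-sampling} is stated for a fixed vertex function $f$, while here the query set $S$ varies. The fix is to let the data structure built for fixed weights answer queries parameterized by $S$. Concretely, I will not use \cref{lem:restore-by-sampling} as a black box. Instead I will rerun its construction, which is oblivious to which $f$ is being retrieved as long as the weighted sums can be queried.

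Upon initialization, with $L=\lfloor\log_2 n\rfloor+1$ and $\xi=\lceil 8\ln\tfrac1\eps\rceil$, I sample the sets $S_{i,j}$ for $(i,j)\in[0,L]\times[\xi]$ exactly as in that proof. I then define the weight functions $\wei^{(0)}_{i,j}$ (the indicator of $S_{i,j}$) and $\wei^{(1)}_{i,j}$ (equal to $v$ on $S_{i,j}$ and $0$ elsewhere). For each of these $2(L+1)\xi$ weight functions I maintain one instance of $\WeiFar_{\wei}[G]$ from \cref{lem:wei-sums-far-deg}, and every update is relayed to all instances. On a query $\far{S,1}$, the vertex function is $f_S(G,v)=[v\notin N[S]]$, which is binary. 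I go through the pairs $(i,j)$ in order and compute
\[\alpha_{i,j}=\sumFar(S)\text{ w.r.t. }\wei^{(0)}_{i,j},\qquad \beta_{i,j}=\sumFar(S)\text{ w.r.t. }\wei^{(1)}_{i,j}.\]
As soon as $\alpha_{i,j}=1$, I return $\beta_{i,j}$; if no pair qualifies, I return $\bot$.

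Correctness follows the proof of \cref{lem:restore-by-sampling} verbatim. When $\alpha_{i,j}=1$, exactly one vertex of $S_{i,j}$ lies outside $N[S]$, and $\beta_{i,j}$ is its label, so there are no false positives. For false negatives, fix a query time. Because the adversary is oblivious, both $S$ and the graph at that time are independent of the sampled sets. Hence $X=V\setminus N[S]$ is a fixed set, and the same binomial calculation shows that with probability at least $1-\eps$ some $S_{i^\star,j}$ meets $X$ in exactly one vertex. The one point I will check carefully is that it is harmless for $S$ to be chosen adaptively across queries. It is, since the guarantee is per query, and the query sets do not depend on the random bits.

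The complexity bounds follow by counting instances. Each update costs $\Oh(\log n\log\tfrac1\eps)$ times the update cost $\Oh(2^{4d}d^2\log^2 n)$, for a total of $\Oh(2^{4d}d^2\log^3 n\log\tfrac1\eps)$. Each query makes $\Oh(\log n\log\tfrac1\eps)$ calls to $\sumFar$, each costing $\Oh(2^{|S|}d^2|S|^2\log n)$, which totals $\Oh(2^{|S|}d^2|S|^2\log^2 n\log\tfrac1\eps)$. Both are within the stated bound. Initialization costs $\Oh(n)$ per instance plus $\Oh(n\log n\log\tfrac1\eps)$ for sampling. Space is $\Oh(2^{4d}n)$ per instance, plus the stored sets. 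I expect no real obstacle; the only step needing care is the adaptation of \cref{lem:restore-by-sampling} to a query-dependent $f$. One option is to note that its proof works for any $f$ chosen independently of the randomness. A cheaper alternative, which avoids building one instance per subset, is to share a single $\nearVertexDS{G}{}$-style toolkit across all weight functions, since the buckets themselves do not depend on the weights.
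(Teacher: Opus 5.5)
Your proposal is correct and matches the paper's proof: the paper also applies the binary case of \cref{lem:restore-by-sampling}, extended so that the vertex function $f(G,S,v)=1$ if $v\notin N[S]$ and $0$ otherwise, together with the $\WeiSum$ queries, takes $S$ as query input, with \cref{lem:wei-sums-far-deg} supplying the weighted sums. The paper justifies this extension either by observing that the proof goes through unchanged, as you do, or by a gadget (a fresh weight-$0$ vertex $v^\star$ temporarily joined to all of $S$), which lets the lemma be used as a black box.
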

\begin{proof}

	We first observe that the data structure of \cref{lem:restore-by-sampling} can be extended to handle graphs with a given set of vertices $S$, in the following sense:
	\begin{itemize}
		\item The vertex function $f$ takes three arguments: a graph $G$, a vertex subset $S\subseteq V(G)$, and a vertex $v$.
		\item The data structure $\WeiSum_{f, \wei}[G]$ instead of reporting $\sum_{v\in V(G)}f(G,v)\cdot \wei(v)$ after every update, can be queried for $\sum_{v\in V(G)}f(G,S,v)\cdot \wei(v)$ for a set $S$ given as input to the query.
		\item Similarly, the query of the data structure $\Ver_{f}[G]$ takes also $S$ on input and reports whether $f(G,S,v)>0$.
		\item The query of the constructed data structure $\RetVer_{f,\eps}[G]$ also takes  $S$ on input and returns any vertex $v$ such that $f(G,S,v)>0$, or $\bot$ if there is no such vertex.
		\item The promised guarantee $T\cdot \log \tfrac{1}{\eps}$ on the update/query complexity of $\WeiSum_{f, \wei}[G]$ and $\Ver_{f}[G]$ may depend on $|S|$. The guarantee on the update/query complexity of $\RetVer_{f,\eps}[G]$ becomes $\Oh(T\cdot |S|\cdot \log \tfrac{1}{\eps})$.
	\end{itemize}
	To argue this, one may either readily verify that the presented proof goes through without any changes, or perform the following easy gadgeteering.
	Add to $G$ a fresh special vertex $v^\star$ with $\wei(v^\star)=0$ that will normally stay isolated; let $G^\star$ be $G$ with $v^\star$ added.
	We define a (standard) vertex function $f^\star$ as $f^\star(G^\star,v)\coloneqq f(G,N(v^\star),v)$ for $v\in V(G)$, and $f^\star(G^\star,v^\star)=0$; that is, we interpret $S$ to be the neighborhood of $v^\star$ and put value $0$ on the special vertex $v^\star$. To implement the extended data structure $\RetVer_{f,\eps}[G]$ of \cref{lem:restore-by-sampling}, we set up the data structure of \cref{lem:restore-by-sampling} without extension, but for $f^\star$ and $G^\star$; that is, $\RetVer_{f^\star,\eps}[G^\star]$. This requires suitable implementations of $\WeiSum_{f^\star, \wei}[G^\star]$ and $\Ver_{f^\star}[G^\star]$, which can be easily emulated using the assumed data structures $\WeiSum_{f, \wei}[G]$ and $\Ver_f[G]$ that take $S$ as input to the query. 
	Upon query to $\RetVer_{f,\eps}[G]$ with a set $S\subseteq V(G)$, we temporarily add all the edges between $v^\star$ and all the vertices of $S$, query $\RetVer_{f^\star,\eps}[G^\star]$, and remove the added edges. We remark here that adding a single vertex $v^\star$ to the graph can increase its degeneracy by at most $1$.

	With \cref{lem:restore-by-sampling} extended, we may proceed to the proof. Towards an application of this lemma, we define a vertex function
	\[f(G,S,v)\coloneqq \begin{cases}1 &\textrm{if } v\notin N[S],\\
	0 &\textrm{otherwise.}\end{cases}\]
	A suitable implementation of a data structure $\WeiSum_{f,\wei}[G]$, for any weight function $\wei\colon V\to \Z$ fixed upon initialization, is provided by \cref{lem:wei-sums-far-deg}.
	Note that $f$ is binary, so there is no need to provide a data structure $\Ver_{f}[G]$.
	
	We may therefore apply the extended \cref{lem:restore-by-sampling} and obtain the data structure $\RetVer_{f,\eps}[G]$, which may serve as the output data structure $\farVertexDS{G}{\eps}$. The complexity guarantees follow directly from the extended \cref{lem:restore-by-sampling} combined with the guarantees provided by \cref{lem:wei-sums-far-deg}.
\end{proof}

%
%
%

\subsection{Dynamic domination in degenerate graphs}

We may now conclude the proof of the main result of this section.

\mainDomDeg*
\begin{proof}
	We may assume that $k\geq 2$, because for $k=1$ we may simply keep track of the degrees of vertices and report any vertex of degree $n-1$, if existent.
	
	Note that if $\Cc$ is the class of $d$-degenerate graphs, then every graph from $\Cc$ excludes the biclique $K_{d+1,d+1}$ as a subgraph. Hence by \cref{thm:sli-biclique} we get $\sli_1(\Cc)<3d+3$. Therefore, we may apply \cref{lem:summary-dom} with  \cref{thm:NearVertex-new-deg,thm:FarVertex-new-deg} plugged in. Note that in applications of \cref{thm:NearVertex-new-deg} we always have $|S|\leq k^{\sli_1(\Cc)}\leq k^{\Oh(d)}$ and in applications of \cref{thm:FarVertex-new-deg} we always have $|S|\leq k$, so thanks to the guarantees provided by guarantees provided by \cref{thm:NearVertex-new-deg,thm:FarVertex-new-deg}, the quantities $T,I,M$ mentioned in \cref{lem:summary-dom} can be set as $2^{\Oh(k)}\cdot k^{\Oh(d)}\cdot \log^3 n$, $2^{\Oh(d)}\cdot n\log n$, and $2^{\Oh(d)}\cdot n\log n$, respectively. The claimed complexity guarantees then follow directly from \cref{lem:summary-dom}.
\end{proof}

\section{Dynamic approximation of distance-\texorpdfstring{$1$}{1} dominating sets}\label{sec:apx}

Here we give a dynamic constant-factor approximation for the domination number in graphs of bounded degeneracy.
We first explain the general weak-reachability argument behind the approximation and only afterwards specialize it to the distance-$1$ representation used by the dynamic implementation. The reason for this is that the general argument applies to an arbitrary distance, but we are able to dynamically maintain the relevant structures only for distance $1$.
Throughout this section, for a graph $G$ we denote by $\mathsf{dom}_1(G)$ the minimum size of a (distance-$1$) dominating set in $G$. A \emph{set system} is a family of subsets of some universe $U$, and a \emph{packing} in a set system $\mathcal{F}$ is a subfamily $\mathcal{M}\subseteq \mathcal{F}$ consisting of pairwise disjoint members of $\mathcal{F}$. A packing $\mathcal{M}$ is \emph{maximal} if no superset of $\mathcal{M}$ is a packing; equivalently, every set $F\in \mathcal{F}\setminus \mathcal{M}$ intersects a member of $\mathcal{M}$. The \emph{arity} of a set system $\mathcal{F}$ is the maximum cardinality of a member of $\mathcal{F}$.

\subsection{Combinatorial core}

Fix a graph $G$, an ordering $\sigma$ of $V(G)$, and a radius $r\in\N$.
For $v\in V(G)$, let $\WReach_r[G,\sigma,v]$ be the set of vertices $u\in V(G)$ for which there exists a path $u=p_0,\ldots,p_\ell=v$ of length $\ell\le r$ such that $\sigma(u)\le \sigma(p_i)$ for every $i\in\set{1,\ldots,\ell}$ (in particular $v \in \WReach_r[G, \sigma, v]$).
Further, let $\wcol_r(G,\sigma)\coloneqq\max_{v\in V(G)} |\WReach_r[G,\sigma,v]|$ and $\mathcal{F}_r(\sigma)\coloneqq\setof{\WReach_r[G,\sigma,v]}{v\in V(G)}$.
The set system $\mathcal{F}_r(\sigma)$ is the general combinatorial source of the approximation: it works for every radius $r$, whereas the dynamic construction below will retain only the case $r=1$.
The next two lemmas formalize the hitting-packing argument showing that weak reachable sets yield a constant-factor approximation of minimum distance-$r$ domination.

\begin{lemma}\label{lem:apx-wreach-domination}
Let $\mathcal{M}\subseteq \mathcal{F}_r(\sigma)$ be a maximal packing in $\mathcal{F}_r(\sigma)$.
Then $D\coloneqq\bigcup_{X\in \mathcal{M}} X$ is a distance-$r$ dominating set in $G$.
\end{lemma}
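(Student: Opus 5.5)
We argue directly from maximality. Fix any vertex $v\in V(G)$; we must show $\dist_G(v,D)\le r$. Consider the set $X_v\coloneqq \WReach_r[G,\sigma,v]\in \mathcal{F}_r(\sigma)$. By maximality of the packing $\mathcal{M}$, either $X_v\in\mathcal{M}$ or $X_v$ intersects some member of $\mathcal{M}$. In the first case $v\in X_v\subseteq D$ (recall $v\in\WReach_r[G,\sigma,v]$ via the length-$0$ path), so $v$ is dominated at distance $0$. In the second case there is $X_w=\WReach_r[G,\sigma,w]\in\mathcal{M}$ and a vertex $u\in X_v\cap X_w$.

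The plan is then to show that the whole of $X_w$ lies in $D$, in particular $w\in D$, and that $\dist_G(v,w)$ is not what we need; rather we use $u$ itself. Indeed $u\in X_w\subseteq D$, and $u\in X_v$ means there is a path from $u$ to $v$ of length at most $r$ (the weak-reachability condition on $\sigma$ is only a restriction on which paths qualify, so any witnessing path in particular certifies $\dist_G(u,v)\le r$). Hence $\dist_G(v,D)\le \dist_G(v,u)\le r$.

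So the proof reduces to the two elementary facts: every $v$ belongs to its own weakly reachable set, and membership $u\in\WReach_r[G,\sigma,v]$ implies $\dist_G(u,v)\le r$. Neither step is a real obstacle; the only point to be careful about is to pick the intersection vertex $u$ (which lies in $D$ because $X_w\subseteq D$) rather than trying to route through $w$, since $\dist_G(v,w)$ could be as large as $2r$. The order-theoretic content of weak reachability is not needed for this lemma at all; it will matter only for the companion approximation bound, where the minimality of $u$ in $\sigma$ along the path is used to charge packing members to vertices of an optimal solution.
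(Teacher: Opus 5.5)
Your proof is correct and follows essentially the same route as the paper: apply maximality of the packing to $\WReach_r[G,\sigma,v]$ (the paper disposes of the case $v\in D$ first, whereas you split on whether $\WReach_r[G,\sigma,v]\in\mathcal{M}$), then take the vertex in the intersection with a member of $\mathcal{M}$, which lies in $D$ and is joined to $v$ by a path of length at most $r$. You are also right that the ordering $\sigma$ plays no role here.
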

\begin{proof}
Take any vertex $x\in V(G)$.
If $x\in D$, then $x$ is distance-$r$ dominated by itself.
So suppose that $x\notin D$.
As $\mathcal{M}$ is maximal, the set $\WReach_r[G,\sigma,x]\in \mathcal{F}_r(\sigma)$ intersects some member of $\mathcal{M}$.
Choose $X\in \mathcal{M}$ and $d\in \WReach_r[G,\sigma,x]\cap X$.
By the definition of $\WReach_r[G,\sigma,x]$, there is a $d$-$x$ path of length at most $r$.
Since $d\in X\subseteq D$, this shows that $x$ is distance-$r$ dominated by a vertex of~$D$.
\end{proof}

\begin{lemma}\label{lem:apx-wreach-transversal}
Let $D^*$ be a distance-$r$ dominating set in $G$. Define $T\coloneqq\bigcup_{d\in D^*} \WReach_r[G,\sigma,d]$.
Then $T$ intersects every set in $\mathcal{F}_r(\sigma)$.
\end{lemma}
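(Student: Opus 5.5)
Fix an arbitrary $v\in V(G)$; it suffices to show that $\WReach_r[G,\sigma,v]$ meets $T$. Since $D^*$ is a distance-$r$ dominating set, we pick $d\in D^*$ and a shortest $d$-$v$ path $P=(d=p_0,p_1,\ldots,p_\ell=v)$ with $\ell\le r$. Let $m$ be the $\sigma$-minimal vertex on $P$, say $m=p_j$. We then claim that $m\in \WReach_r[G,\sigma,v]\cap \WReach_r[G,\sigma,d]$, which places $m$ in $T\cap \WReach_r[G,\sigma,v]$ and finishes the proof.

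To verify the claim, first use the subpath $m=p_j,p_{j+1},\ldots,p_\ell=v$. It has length $\ell-j\le r$, and every vertex on it is $\sigma$-at-least $m$ by the minimality of $m$. By definition, this gives $m\in\WReach_r[G,\sigma,v]$. Symmetrically, the reversed subpath $m=p_j,p_{j-1},\ldots,p_0=d$ has length $j\le r$ and all its vertices are $\sigma$-at-least $m$, so $m\in \WReach_r[G,\sigma,d]$. Since $d\in D^*$, we get $\WReach_r[G,\sigma,d]\subseteq T$, and hence $m\in T$.

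Two degenerate cases need a check. If $v\in D^*$, the trivial path with $\ell=0$ gives $m=v=d$, and $v$ lies in both weak reachability sets by the convention that $v\in\WReach_r[G,\sigma,v]$. If $m$ is an endpoint of $P$, the corresponding subpath has length $0$, and the same convention applies. I expect no real obstacle here. The only point needing care is that the definition of weak reachability only requires $\sigma(u)\le\sigma(p_i)$ along the path, and choosing $m$ as the $\sigma$-minimum of the whole path makes this hold on both halves at once.
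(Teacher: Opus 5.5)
Your proof is correct and is essentially the paper's argument: take a shortest path of length at most $r$ between $v$ and some $d\in D^*$, let $m$ be its $\sigma$-minimal vertex, and use the two subpaths from $m$ to show $m\in \WReach_r[G,\sigma,v]\cap\WReach_r[G,\sigma,d]\subseteq \WReach_r[G,\sigma,v]\cap T$. Your check of the degenerate length-$0$ cases is a harmless extra that the paper leaves implicit.
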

\begin{proof}
Fix any vertex $v\in V(G)$.
Since $D^*$ is a distance-$r$ dominating set, there exists $d\in D^*$ with $\dist_G(v,d)\le r$.
Let $P$ be a shortest $v$-$d$ path and let $u$ be the vertex of $P$ with the minimum position in $\sigma$.
The subpath of $P$ from $u$ to $v$ has length at most $r$, and every internal vertex of this subpath has position at least $\sigma(u)$.
Hence $u\in \WReach_r[G,\sigma,v]$.
By the same argument applied to the subpath of $P$ from $u$ to~$d$, we obtain $u\in \WReach_r[G,\sigma,d]\subseteq T$.
Therefore $\WReach_r[G,\sigma,v]\cap T\neq \emptyset$.
\end{proof}

\begin{lemma}\label{lem:apx-wreach-static}
Let $\mathcal{M}\subseteq \mathcal{F}_r(\sigma)$ be a maximal packing in $\mathcal{F}_r(\sigma)$, $D\coloneqq\bigcup_{X\in \mathcal{M}} X$, and $D^*$ be a minimum distance-$r$ dominating set in $G$.
Then $D$ is a distance-$r$ dominating set in $G$ satisfying \mbox{$|D|\le \wcol_r(G,\sigma)^2\cdot |D^*|$.}
Consequently, whenever $\wcol_r(G,\sigma)\le c$, the set $D$ is a distance-$r$ dominating set of size at most $c^2$ times the optimum.
\end{lemma}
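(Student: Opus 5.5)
By \cref{lem:apx-wreach-domination}, $D$ is a distance-$r$ dominating set, so only the size bound needs an argument. For that we use the hitting/packing duality set up by \cref{lem:apx-wreach-transversal}. Write $c\coloneqq \wcol_r(G,\sigma)$.

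Let $T\coloneqq\bigcup_{d\in D^*}\WReach_r[G,\sigma,d]$. Each weak reachability set has at most $c$ elements, so $|T|\le c\cdot|D^*|$. By \cref{lem:apx-wreach-transversal}, $T$ intersects every member of $\mathcal{F}_r(\sigma)$, and in particular every member of $\mathcal{M}$. The members of $\mathcal{M}$ are pairwise disjoint, so we can pick for each $X\in\mathcal{M}$ a vertex $t_X\in X\cap T$, and these vertices are pairwise distinct. Hence $X\mapsto t_X$ is an injection from $\mathcal{M}$ into $T$, and $|\mathcal{M}|\le |T|\le c\,|D^*|$.

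Each $X\in\mathcal{M}$ has at most $c$ elements, so $|D|\le\sum_{X\in\mathcal{M}}|X|\le c\,|\mathcal{M}|\le c^2|D^*|$. The final sentence of the statement follows by monotonicity: if $c'\ge\wcol_r(G,\sigma)$, then $\wcol_r(G,\sigma)^2\le c'^2$.

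The only substantive step is the injection from $\mathcal{M}$ into $T$, which relies on the disjointness of the packing. The rest is counting, and I do not anticipate any obstacle.
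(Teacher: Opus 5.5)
Your proof is correct and follows the paper's argument essentially verbatim. You get domination from \cref{lem:apx-wreach-domination}, then $|\mathcal{M}|\le|T|\le \wcol_r(G,\sigma)\cdot|D^*|$ from \cref{lem:apx-wreach-transversal} together with the disjointness of the packing, and finally $|D|\le \wcol_r(G,\sigma)\cdot|\mathcal{M}|$; your explicit injection $X\mapsto t_X$ simply spells out the step the paper states as ``since $\mathcal{M}$ is a packing, $|\mathcal{M}|\le|T|$''.
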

\begin{proof}
By \cref{lem:apx-wreach-domination}, the set $D$ is a distance-$r$ dominating set.

Let $T\coloneqq\bigcup_{d\in D^*} \WReach_r[G,\sigma,d]$.
By \cref{lem:apx-wreach-transversal}, every set in $\mathcal{M}$ intersects $T$.
Since $\mathcal{M}$ is a packing, its sets are pairwise disjoint, hence $|\mathcal{M}|\le |T|$.
Using the definition of $\wcol_r(G,\sigma)$, we infer that
\[|\mathcal{M}|\le |T|\le \sum_{d\in D^*} |\WReach_r[G,\sigma,d]|\le \wcol_r(G,\sigma)\cdot |D^*|.\]
Again by the definition of $\wcol_r(G,\sigma)$, every set in $\mathcal{M}$ has size at most $\wcol_r(G,\sigma)$, so
\[|D| = \sum_{X\in \mathcal{M}} |X| \le \wcol_r(G,\sigma)\cdot |\mathcal{M}| \le \wcol_r(G,\sigma)^2\cdot |D^*|.\qedhere\]
\end{proof}

Thus, \cref{lem:apx-wreach-static} already contains the whole approximation argument.
Note that if $\vec{G}_\sigma$ is the orientation obtained by directing every edge towards the smaller endpoint in $\sigma$, then \[\WReach_1[G,\sigma,v]=\set{v}\cup N^+_{\vec{G}_\sigma}(v)\qquad\textrm{for every vertex }v.\]
In our dynamic data structure, we do not maintain an order $\sigma$ explicitly. Instead, we maintain an orientation of bounded outdegree using the data structure of Brodal and Fagerberg~\cite{BrodalF99}.
Therefore, we now adjust the result of \cref{lem:apx-wreach-static} to the setting of distance-$1$ dominating sets and orientations of bounded outdegree.
The sets $S(v)$ introduced below are not a substitute for the preceding $\WReach_r$ argument; they are exactly its distance-$1$ implementation-level representation.

Let $G$ be a graph and let $\vec{G}$ be an orientation of $G$.
Recall that for every vertex $v\in V(G)$, by $N^+_{\vec{G}}(v)$ we denote the out-neighborhood of $v$ in $\vec{G}$. Define \[S(v)\coloneqq\set{v}\cup N^+_{\vec{G}}(v).\]
We regard the family ${\cal S}\coloneqq \setof{S(v)}{v\in V(G)}$ as a set system over the vertex set $V(G)$, with member per each vertex of $G$.

\begin{lemma}\label{lem:apx-domination}
Let $M\subseteq V(G)$ be such that $\setof{S(v)}{v\in M}$ is a maximal packing in the set system ${\cal S}$.
Then $D\coloneqq\bigcup_{v\in M} S(v)$ is a dominating set in $G$.
\end{lemma}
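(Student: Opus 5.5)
The plan is to mirror the proof of \cref{lem:apx-wreach-domination}, with the orientation $\vec G$ taking the place of the ordering $\sigma$. The direct argument needs only maximality and the definition of $S(\cdot)$.

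Take an arbitrary vertex $x\in V(G)$. If $x\in D$, it dominates itself and there is nothing to prove, so assume $x\notin D$. The set $S(x)$ belongs to ${\cal S}$, and since $x\notin D$ we have $x\notin M$. Hence $S(x)$ is not a member of the packing. By maximality, $S(x)$ intersects $S(v)$ for some $v\in M$, so we may pick $y\in S(x)\cap S(v)\subseteq D$.

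Because $y\in S(x)=\{x\}\cup N^+_{\vec G}(x)$, either $y=x$ or $(x,y)$ is an edge of $\vec G$. The first option is impossible: it would give $x=y\in D$, contradicting our assumption. So $(x,y)\in E(\vec G)$, and therefore $xy\in E(G)$. Since $y\in D$, the vertex $x$ has a neighbor in $D$ and is dominated.

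One subtlety is that distinct vertices may have equal sets $S(v)$. The family is indexed by vertices, so maximality should be read as: every $S(u)$ with $u\notin M$ meets some $S(v)$ with $v\in M$. This is exactly the form used above, and with that reading there is no real obstacle.
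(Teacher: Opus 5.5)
Your proof is correct and follows the paper's argument essentially verbatim: assume $x\notin D$, use maximality to obtain $y\in S(x)\cap S(v)$ with $v\in M$, rule out $y=x$ because $y\in D$, and conclude that $(x,y)\in E(\vec G)$, so $x$ has a neighbor in $D$. Your remark about coinciding sets is harmless but in fact moot: $S(u)=S(v)$ with $u\neq v$ would force both $(u,v)$ and $(v,u)$ to be edges of the orientation $\vec G$, which an orientation does not allow.
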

\begin{proof}
Take any vertex $x\in V(G)$.
If $x\in D$, then $x$ is dominated.
Assume now that $x\notin D$.
By maximality, the set $S(x)$ is not disjoint from at least one set $S(v)$ with $v\in M$.
Choose $v\in M$ and $y\in S(x)\cap S(v)$.
Since $y\in S(v)\subseteq D$, we have $y\neq x$.
Therefore $y\in N^+_{\vec{G}}(x)$, which implies that $xy\in E(G)$.
Hence $x$ is dominated by the vertex $y\in D$.
\end{proof}

\begin{lemma}\label{lem:apx-transversal}
Let $D^*$ be a dominating set in $G$ and define $T\coloneqq\bigcup_{d\in D^*} S(d)$.
Then $T$ intersects every set $S(v)$ with $v\in V(G)$.
\end{lemma}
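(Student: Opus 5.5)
The plan is to mirror the proof of \cref{lem:apx-wreach-transversal}, specialized to distance $1$ and to the closed out-neighborhoods $S(v)=\{v\}\cup N^+_{\vec G}(v)$. The role played there by the $\sigma$-minimal vertex of a shortest path is now played by whichever endpoint of a dominating edge is the tail of its orientation.

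Fix an arbitrary vertex $v\in V(G)$. Since $D^*$ is a dominating set, there is $d\in D^*$ with $v\in N_G[d]$. I will split into two cases.

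\emph{Case $v=d$.} Then $v\in S(d)\subseteq T$. Since also $v\in S(v)$, the intersection $S(v)\cap T$ is nonempty.

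\emph{Case $v\neq d$.} Then $vd$ is an edge of $G$, and $\vec G$ orients it one way or the other. If it is oriented as $(v,d)$, then $d\in N^+_{\vec G}(v)\subseteq S(v)$, and $d\in S(d)\subseteq T$, so $d$ is a common element. If it is oriented as $(d,v)$, then $v\in N^+_{\vec G}(d)\subseteq S(d)\subseteq T$, and $v\in S(v)$, so $v$ is a common element. In both subcases the common vertex is the tail-side or head-side endpoint lying in both sets, exactly as $u$ did in \cref{lem:apx-wreach-transversal}.

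There is no real obstacle. The only point requiring care is that an orientation picks exactly one direction for each edge, so the case distinction on the orientation of $vd$ is exhaustive. Combining the cases, $T$ meets $S(v)$ for every $v\in V(G)$, which is the claim.
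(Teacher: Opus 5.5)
Your proof is correct and is essentially identical to the paper's: pick $d\in D^*$ with $d\in N[v]$, handle $d=v$ directly, and otherwise split on the orientation of $vd$ to exhibit $d$ or $v$ as a common element of $S(v)$ and $S(d)$. One minor aside: your opening heuristic says the $\sigma$-minimal vertex corresponds to the tail. In fact, in both of your subcases the common vertex is the head of the oriented edge, which matches orienting edges towards the $\sigma$-smaller endpoint. This slip does not affect the argument.
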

\begin{proof}
Fix any vertex $v\in V(G)$.
Since $D^*$ is a dominating set, there exists $d\in D^*$ with $d\in N[v]$.
If $d=v$, then $v\in S(v)\cap T$.
Assume therefore that $d\neq v$.
If the edge $vd$ is oriented in $\vec{G}$ from $v$ to $d$, then $d\in S(v)\cap S(d)\subseteq S(v)\cap T$.
Otherwise the edge $vd$ is oriented from $d$ to $v$, and then $v\in S(v)\cap S(d)\subseteq S(v)\cap T$.
Thus $S(v)\cap T\neq \emptyset$.
\end{proof}

\begin{lemma}\label{lem:apx-static}
Assume that $\vec{G}$ has maximum out-degree at most $\Delta$.
Let $M\subseteq V(G)$ be such that the family $\setof{S(v)}{v\in M}$ is a maximal packing in ${\cal S}$.
Then $D\coloneqq\bigcup_{v\in M} S(v)$ is a dominating set in $G$ and satisfies $|D|\le (\Delta+1)^2\cdot \mathsf{dom}_1(G)$.
\end{lemma}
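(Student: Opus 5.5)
The plan is to mirror the proof of \cref{lem:apx-wreach-static}, replacing weak reachability sets by the sets $S(v)$ and using \cref{lem:apx-domination,lem:apx-transversal} in place of \cref{lem:apx-wreach-domination,lem:apx-wreach-transversal}. Domination of $D$ is immediate from \cref{lem:apx-domination}, so what remains is the bound $|D|\le (\Delta+1)^2\cdot \mathsf{dom}_1(G)$.

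Fix a minimum dominating set $D^*$, so that $|D^*|=\mathsf{dom}_1(G)$, and set $T\coloneqq\bigcup_{d\in D^*} S(d)$. By \cref{lem:apx-transversal}, every set $S(v)$ with $v\in M$ contains some element of $T$. The sets $S(v)$ for $v\in M$ are pairwise disjoint, so distinct $v\in M$ use distinct elements of $T$. This injection gives $|M|\le |T|$.

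Since $\vec G$ has maximum outdegree at most $\Delta$, every set $S(v)=\{v\}\cup N^+_{\vec G}(v)$ has size at most $\Delta+1$. Hence
\[|T|\le \sum_{d\in D^*}|S(d)|\le (\Delta+1)\cdot|D^*|,\]
and
\[|D|\le \sum_{v\in M}|S(v)|\le (\Delta+1)\cdot|M|\le(\Delta+1)^2\cdot|D^*|.\]

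The one point needing care is the injection from $M$ into $T$. The map sends $v$ to some element of $S(v)\cap T$, and it is injective because the packed sets are disjoint. Distinct vertices $v$ give distinct sets $S(v)$, since $v\in S(v)$ and the packing is indexed by vertices, so no double counting can occur.
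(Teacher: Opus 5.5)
Your proof is correct and follows the paper's argument essentially verbatim. Domination comes from \cref{lem:apx-domination}; the transversal $T$ from \cref{lem:apx-transversal}, together with pairwise disjointness of the packed sets, gives $|M|\le|T|$; and the bound $|S(v)|\le\Delta+1$ is used twice to reach $(\Delta+1)^2\cdot\mathsf{dom}_1(G)$.
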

\begin{proof}
By \cref{lem:apx-domination}, the set $D$ is a dominating set.

Let $D^*$ be a minimum dominating set in $G$ and let $T\coloneqq\bigcup_{d\in D^*} S(d)$.
By \cref{lem:apx-transversal}, every set $S(v)$ with $v\in M$ intersects $T$.
Since the family $\setof{S(v)}{v\in M}$ is a packing in $\cal S$, these sets are pairwise disjoint, and hence $|M|\le |T|$.
Every set $S(d)$ has size at most $\Delta+1$, so \[|M|\le |T|\le \sum_{d\in D^*} |S(d)|\le (\Delta+1)\cdot |D^*| = (\Delta+1)\cdot \mathsf{dom}_1(G).\]
Using again that $|S(v)|\le \Delta+1$ for every $v\in V(G)$, we obtain \[|D| = \sum_{v\in M} |S(v)|\le (\Delta+1)\cdot |M| \le (\Delta+1)^2\cdot \mathsf{dom}_1(G).\qedhere\]
\end{proof}

\subsection{Dynamic data structure}

We now propose a dynamic data structure that maintains an approximate domination number of a graph of bounded degeneracy based on the approximation method yielded by \cref{lem:apx-static}. We will exploit the orientation data structure of \cref{thm:bf} and the following result by Assadi and Solomon.


\begin{theorem}[Assadi--Solomon maximal packing data structure,~\cite{AssadiSolomon21}]\label{thm:AS}
	For every $\Delta\in \N$, there is a randomized dynamic data structure that, against an oblivious adversary, maintains a maximal packing in a fully dynamic labelled set system over a universe $U$ and of arity at most $\Delta$. It supports insertions and deletions of sets in expected amortized time $\Oh(\Delta^2)$, with the same bound holding with high probability, and can be implemented using $\Oh(|U|+\Delta N)$ words of space, where $N$ is the number of maintained labelled sets.
\end{theorem}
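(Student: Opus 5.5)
Since this is a cited result, I would only sketch how to reconstruct it, following the random-order greedy paradigm for dynamic maximal independent set. The plan is to view a maximal packing in a set system $\mathcal{F}$ of arity at most $\Delta$ as a maximal independent set in the \emph{conflict graph} $C(\mathcal{F})$. This graph has one node per labelled set, with two nodes adjacent when the sets intersect. I would then maintain the \emph{lexicographically first} maximal independent set with respect to a uniformly random ranking $\pi$ of the sets. The ranking is drawn upon insertion of a set, as an independent uniform real or a $\Theta(\log)$-bit integer, and is hidden from the oblivious adversary.

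The target object is then a deterministic function of the current set system and of $\pi$. A set $X$ is in the packing $\mathcal{M}$ if and only if no set $Y$ with $\pi(Y)<\pi(X)$ and $Y\cap X\neq\emptyset$ is in $\mathcal{M}$. The key probabilistic step is the one used in the random-order analyses of dynamic MIS (Censor-Hillel--Haramaty--Karnin; Behnezhad et al.; Chechik--Zhang). After inserting or deleting a single node $X$, the \emph{expected} number of nodes whose membership in the greedy MIS changes is $\Oh(1)$, independently of the graph. This holds because a change propagates only along increasing-rank paths that start at $X$, and $X$'s rank is uniform relative to its surroundings. I would take this adjustment-complexity bound as the combinatorial core and reprove it via the standard ``the first-changed node is a random one'' argument.

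The remaining work, and the step I expect to be the main obstacle, is that $C(\mathcal{F})$ can have unbounded degree: a single element $u\in U$ may lie in arbitrarily many sets. Hence, when a set leaves $\mathcal{M}$, we cannot afford to scan all of its conflict-graph neighbours. To handle this I would store, for every element $u\in U$, the sets containing $u$ in a structure ordered by rank. I would also store the (at most one) set of $\mathcal{M}$ covering $u$. Each set $X\notin\mathcal{M}$ keeps a pointer to a single blocker, namely the lowest-ranked packed set meeting it. This can be found by inspecting its $\le\Delta$ elements, in $\Oh(\Delta)$ time.

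When a set $Y$ leaves $\mathcal{M}$, only the sets whose blocker is $Y$ need to be re-examined. I would charge these re-examinations to the change events using the same random-rank argument: a set blocked by $Y$ is, in expectation, re-blocked by a lower-ranked set, and otherwise enters $\mathcal{M}$, which is itself a change event. Each re-examination costs $\Oh(\Delta)$ operations over its elements, and each change event triggers $\Oh(\Delta)$ element-level updates. Together with the $\Oh(1)$ expected number of changes per update, this yields expected amortized time $\Oh(\Delta^2)$ per set insertion or deletion. A standard concentration step (bounded differences, or an amortization over long sequences) then upgrades the bound to hold with high probability.

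Finally, the space bound $\Oh(|U|+\Delta N)$ is immediate. Each set stores its $\le\Delta$ elements and one blocker pointer, each element stores its incidence list and its covering set, and the incidence lists have total length at most $\Delta N$.
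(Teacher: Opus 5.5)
The paper does not prove this statement; it imports it as a black box from Assadi and Solomon. Their algorithm generalizes Solomon's constant-update-time dynamic maximal matching. Roughly, it is a leveling/ownership scheme: a vertex that needs a partner picks its matched hyperedge uniformly at random from a large pool of hyperedges it owns. An oblivious adversary then deletes that hyperedge only after many updates in expectation, and the rematching work is amortized against those deletions. Your route through the random-order lexicographically first packing is genuinely different. The $\Oh(1)$ expected adjustment-complexity ingredient is fine. However, the step you yourself flag as the main obstacle is not actually resolved, so the sketch does not deliver the $\Oh(\Delta^2)$ bound.

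The $\Oh(1)$ adjustment complexity bounds only how many sets change membership in $\mathcal{M}$, not the work needed to discover them, and your charging argument does not bridge this.

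\begin{itemize}
\item \emph{Entering sets.} Your claim that each change event triggers $\Oh(\Delta)$ element-level updates fails once blocker pointers are maintained. When $Y$ enters $\mathcal{M}$, every set $X$ meeting $Y$ whose current blocker has rank larger than $\pi(Y)$ must be redirected to $Y$. The number of such $X$ is bounded only by the number of sets sharing an element with $Y$, which can be $\Theta(N)$ and is not a function of $\Delta$. Bounding this work \emph{in expectation} is exactly the nontrivial part, and you do not address it.
\item \emph{Leaving sets.} When $Y$ leaves $\mathcal{M}$, every set pointing to $Y$ must be re-examined at cost $\Oh(\Delta)$ each. Those still blocked, by a packed set whose rank lies between $\pi(Y)$ and their own, are not change events. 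So the remark that such a set is ``in expectation re-blocked by a lower-ranked set'' charges their re-examination to nothing.
\item \emph{Dropping the pointers does not help.} If you instead test blockedness on demand via the covering sets of the at most $\Delta$ elements, you must scan the rank-ordered incidence lists of the elements of $Y$ past arbitrarily many still-blocked sets.
\end{itemize}

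This neighbour-notification cost is precisely why the centralized implementations of random-order greedy MIS and maximal matching (Behnezhad, Derakhshan, Hajiaghayi, Stein, Sudan; Chechik and Zhang) needed substantial additional machinery. Even then they obtain bounds polylogarithmic in $n$, already for ordinary graphs, i.e.\ $\Delta=2$. Your sketch contains no ingredient that would remove the dependence on $n$.

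The appeal to bounded differences for the high-probability version is also unsupported. Changing a single rank can change the total work by an unbounded amount.

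To make your route work, you would need a new analysis bounding the expected scanning and notification work of random greedy in terms of $\Delta$ alone. Otherwise, the statement is obtained by the Solomon-style leveling argument of the cited work.
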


Assadi and Solomon phrase their result in the language of maximal matchings in hypergraphs, which is equivalent to our terminology of maximal packings in set systems. 

Throughout this subsection, we assume $n\ge 2$ and $d\ge 1$, and we consider a fully dynamic $d$-degenerate graph $G$ on the fixed vertex set $V=\set{1,\ldots,n}$.
Hence, we may invoke the Brodal--Fagerberg orientation data structure (\cref{thm:bf}) on $G$, and thus maintain also an orientation $\vec{G}$ of $G$ with maximum out-degree at most $4d$.
We follow the convention that $G_t$ is the graph $G$ at time step $t$, and consequently $\vec{G}_t$ is the maintained orientation of $G_t$.
For every vertex $v\in V$, we let $S_t(v)\coloneqq\set{v}\cup N^+_{\vec{G}_t}(v)$.

\paragraph{State of the data structure.}
At time step $t$, the complete state of the data structure consists of the following objects.
\begin{itemize}
\item \emph{Orientation component}: The data structure provided by \cref{thm:bf}, storing the current graph $G_t$ together with its orientation $\vec{G}_t$ and, on every graph update, reporting the updated edge together with the list of all edges reoriented during this update.
\item Set system $\mathcal{S}_t\coloneqq\setof{S_t(v)}{v\in V}$ on the universe $V$, stored explicitly. Every set $S_t(v)\in \mathcal{S}_t$ is labelled with a permanent label $v$. For every $v\in V$, we store the current members of $S_t(v)$ explicitly, and we also store the size of $S_t(v)$. Equivalently, we store the permanent element $v$ together with the current out-neighborhood $N^+_{\vec{G}_t}(v)$ and the cardinality of $\{v\}\cup N^+_{\vec{G}_t}(v)$.
\item \emph{Packing component}: The data structure provided by \cref{thm:AS} for the set system $\mathcal{S}_t$, storing a maximal packing $\mathcal{M}_t$ in $\mathcal{S}_t$ as an iterable list of the labels of the members of $\mathcal{M}_t$. Note that by \cref{lem:apx-static}, the set $D_t\coloneqq \bigcup_{X\in \mathcal{M}_t} X$ is a dominating set in $G_t$ of size at most $(4d+1)^2\cdot \mathsf{dom}_1(G_t)$. Therefore, together with $\mathcal{M}_t$ we also store the cardinality of $D_t$. This can be easily updated within the data structure of \cref{thm:AS} upon updates to $\mathcal{M}_t$, as with every set in $\mathcal{S}_t$ we explicitly store also its cardinality.
\end{itemize}
No further objects are needed. In particular, the current dominating set is not stored explicitly: it is represented implicitly by the packing $\mathcal{M}_t$.

\paragraph{Interface.}
As announced in \cref{thm:apx-dynamic}, our data structure supports five methods.
\begin{itemize}
\item $\init{n}$: initialize the data structure on $G_0$ being the edgeless graph on vertex set $V$. 
\item $\add{xy}$: insert the edge $xy$ and update all maintained components.
\item $\remove{xy}$: delete the edge $xy$ and update all maintained components.
\item $\mathtt{SizeDominatingSet()}$: return the cardinality of the current dominating set represented by the packing component.
\item $\mathtt{DominatingSet()}$: returns the current dominating set represented by the packing component.
\end{itemize}
Their guarantees and complexities are summarized in \cref{thm:apx-dynamic}. We now describe the implementation.

\paragraph{Synchronization of the orientation and packing components.}
Whenever a graph update touches an edge $xy$, the set system $\mathcal{S}_t=\{S_t(v)\colon v\in V\}$ needs to be modified only by updating the sets indexed by $x$ and $y$.
More precisely, creating an oriented edge $(x,y)$ inserts $y$ into the $S_t(x)$, deleting an oriented edge $(x,y)$ deletes $y$ from $S_t(x)$, and flipping the orientation of $xy$ from $(x,y)$ to $(y,x)$ is implemented as deleting $y$ from $S_t(x)$ and inserting $x$ into $S_t(y)$.
Thus, every changed oriented edge yields at most two updates to $\mathcal{S}_t$, each also relayed immediately to the packing component.
In the packing component, each modification of a labelled set is implemented by deleting its old version and inserting its updated version.
By \cref{thm:bf}, one update to $G_t$ produces $\Oh(\log n)$ updates to $\vec{G}_t$ in the amortized sense: the updated graph edge contributes one update to $\mathcal{S}_t$, and every reorientation performed during the update contributes two more.

\paragraph{Method $\init{n}$} is implemented as follows:
\begin{enumerate}
\item Initialize the orientation component on the edgeless graph on $V=\{1,\ldots,n\}$.
\item For every vertex $v\in V$, create the singleton set $S(v)=\set{v}$ labelled with $v$.
\item Initialize the packing component on the set system consisting of these $n$ labelled sets.
\end{enumerate}
After these steps the orientation is correct, the stored set system is exactly $\mathcal{S}_0=\setof{\set{v}}{v\in V}$, and the packing component stores a maximal packing in $\mathcal{S}_0$.
The initialization time is $\Oh(n)$, because we create $n$ singleton sets.

\paragraph{Method $\add{xy}$} is implemented as follows:
\begin{enumerate}
\item Execute the insertion of $xy$ in the orientation component.
\item Read the newly oriented copy of $xy$ and the list of edges reoriented during this update.
\item For every changed oriented edge (inserted, removed, or reoriented), relay the modification to appropriately update $\mathcal{S}_t$ and the packing component, as described in the Synchronization paragraph above.
\end{enumerate}
Only the sets labelled with endpoints of changed oriented edges are modified, hence after Step~3 every set labelled with $v\in V$ stored in the set system again equals the current set $S_t(v)$, hence the set system is indeed equal to $\mathcal{S}_t$.
Since the packing component is updated after every modification of $\mathcal{S}_t$, at the end it stores a maximal packing in the updated set system.
The amortized number of updates to $\mathcal{S}_t$ is $\Oh(\log n)$.
Each of them touches a set of size at most $4d+1$, hence, by \cref{thm:AS}, it costs an expected amortized time of $\Oh((4d+1)^2)=\Oh(d^2)$.
Therefore $\add{xy}$ runs in expected amortized time $d^{\O(1)}\cdot \log n$.

\paragraph{Method $\remove{xy}$} is implemented as follows:
\begin{enumerate}
\item Execute the deletion of $xy$ in the orientation component.
\item Read the deleted oriented copy of $xy$ and the list of edges reoriented during this update.
\item For every changed oriented edge, relay the modification to appropriately update $\mathcal{S}_t$ and the packing component, as described in the Synchronization paragraph above.
\end{enumerate}
The same argument as in the $\add{xy}$ method shows that after Step~3 the stored set system is exactly $\mathcal{S}_t$, and the packing component stores a maximal packing in it.
Again there are amortized $\Oh(\log n)$ updates to $\mathcal{S}_t$, each affecting a set of size at most $4d+1$, so $\remove{xy}$ also runs in expected amortized time $d^{\Oh(1)}\cdot \log n$.

\paragraph{Method $\mathtt{SizeDominatingSet()}$} is implemented by just returning the cardinality of $D_t=\bigcup_{X \in \mathcal{M}_t} X$, stored explicitly within the packing component.

\paragraph{Method $\mathtt{DominatingSet()}$} is implemented as follows.
We traverse the set system $\mathcal{M}_t$ (stored as an iterable list of labels of the members of $\mathcal{M}_t$) and we output the union $D_t=\bigcup_{X\in \mathcal{M}_t} X$.
No duplicate elimination is needed, because the members of $\mathcal{M}_t$ are pairwise disjoint.
Since the query outputs exactly the vertices contained in the members of $\mathcal{M}_t$, its running time is $\Oh(|D_t|)$.

\paragraph{Memory usage.}
Let $m_t\coloneqq|E(G_t)|$. By \cref{thm:bf},
the orientation component uses $\Oh(n+m_t)$ words.
The explicitly stored set system $\mathcal{S}_t$ also uses $\Oh(n+m_t)$ words, because the total sum of cardinalities of the members of $\mathcal{S}_t$ is exactly $n+m_t$. 
The packing component is maintained over the universe $V$, with $n$ labelled sets of arity at most $4d+1$, so by \cref{thm:AS}, it uses $\Oh(n+(4d+1)n)=\Oh(dn)$ words.
The dominating set returned by $\mathtt{DominatingSet()}$ is not stored explicitly, so it contributes no additional space. As we have that $m_t \le dn$, we conclude that the whole data structure occupies $\Oh(n+m_t)+\Oh(n+m_t)+\Oh(dn)=\Oh(dn)$ words.


The description provided above amounts to the proof of the result promised in \cref{sec:intro}, which concludes this section.

\mainApx*

\section{Conclusions}\label{sec:conclusions}

Let us conclude by discussing a handful of open questions.
\begin{itemize}
	\item The central open question is that of Dvo\v{r}\'ak and T\r{u}ma~\cite{DvorakT13}: Can one design a dynamic data structure for $\FO$ model-checking on any fixed  class of bounded expansion with polylogarithmic amortized update time? A natural approach to this question is to try to develop dynamic counterparts for various techniques of Sparsity underlying the known static algorithms~\cite{DvorakKT13,grohe2011methods,PilipczukST18}, such as transitive--fraternal augmentations, weak coloring numbers, or low treedepth colorings. After a few years of attempts, we consider this route hopeless. Instead, we believe that developing a new model-checking algorithm for $\FO$ on classes of bounded expansion, which would be more in the spirit of progressive exploration, could be a way to go.
	\item While the data structure of Dvo\v{r}\'ak and T\r{u}ma~\cite{DvorakT13} can be also deployed on any nowhere dense graph class, with the resulting amortized update time becoming $\Oh_{\Cc,H,\delta}(n^\delta)$ for any $\delta>0$, this is not the case for our data structures of \cref{thm:main-domset-be,thm:main-indset-be}. The reason is that in the nowhere dense setting, the fraternal augmentations discussed in \cref{sec:augmentations} have maximum outdegree $\Oh_{\Cc,r,\delta}(n^\delta)$ instead of a constant, and hence the Inclusion--Exclusion formula postulated in \cref{lem:IE} may run over $2^{\Oh_{\Cc,r,\delta}(n^\delta)}$ elements of $\cal M$, which is superpolynomial. We leave it open whether \drds{r} and \dris{r} on nowhere dense classes admit dynamic data structures with amortized update time $\Oh_{\Cc,r,k,\delta}(n^\delta)$, for any $\delta>0$. However, we remark that the \cref{lem:IE} is the only reason why this argument does not lift to nowhere dense classes and multiple other results, in particular \cref{thm:dvorak-tuma-examples}, can be lift verbatim.
	\item Can our data structures be derandomized? At this point, we crucially rely on randomization in the fingerprint retrieval technique, to turn counting data structures into example-reporting data structures.
	\item It remains open whether the result of \cref{thm:apx-dynamic} can be extended to distance-$r$ domination for $r>1$ under the assumption that $G$ belongs at all times to a fixed class of bounded expansion $\Cc$. As discussed in \cref{sec:overview-other}, this is connected to the dynamic maintenance of the set system of weak $r$-reachability sets.
\end{itemize}

\bibliographystyle{plain}
\bibliography{references}

\appendix
\section{The proof of \cref{lem:weighted-dvorak}} \label{sec:app-weighted-dvorak}


The proof of \cref{lem:weighted-dvorak} is heavily based on the proof of its original unweighted version from \cite{DvorakTumaArxiv}. (Note that~\cite{DvorakT13} is the conference version of this work, throughout this section we will mostly rely on the full version available on arXiv~\cite{DvorakTumaArxiv}.) Incorporating weights into it requires closely following the original proof and plugging weights wherever required. The original proof goes through a number of reductions --- reducing counting induced subgraph isomorphisms to counting subgraph isomorphisms, reducing counting subgraph isomorphisms to counting homomorphisms, reducing counting homomorphisms to counting homomorphisms from an elder graph, and finally solving that last problem in a dynamic setting. In this section we are going to outline that process in reverse, explaining how to add weights in all necessary~places.

\subsection{Counting weighted augmented homomorphisms}
 
A directed graph $H$ is called an \textit{elder graph} if it has the property that $(u, w), (v, w) \in E(H)$ imply that $u$ and $v$ are adjacent as well, that is, either $(u, v) \in E(H)$ or $(v, u) \in E(H)$.
In \cite[Theorem~14]{DvorakTumaArxiv}, Dvo\v{r}\'ak and T\r{u}ma prove that for a fixed directed connected elder graph $H$ and dynamic directed graph $G$ with maximum indegree\footnote{In \cite{DvorakTumaArxiv}, orientations have bounded maximum indegree instead of outdegree. In this paper we work with bounded outdegree orientations, as this is the prevalent convention in the literature.} at most $D$, where both have edges colored by $\{0, 1, \ldots, k\}$, it is possible to maintain the number of homomorphisms from $H$ to $G$ in $\Oh(f(|H|, D))$ time complexity per update, for some function $f$. It will turn out that the value of $D$ in our application will be $\Oh_\Cc(1)$, so $f(|H|, D) = \Oh_{\Cc, H}(1)$. 

We generalize their statement to the following weighted version:

\begin{lemma} \label{lem:weighted-ahom}
	Let $D$ be an integer, $H$ be a fixed directed connected elder graph, $G$ be a dynamic directed graph with maximum indegree at most $D$, where both have edges colored by $\{0, 1, \ldots, k\}$ and $\wei \colon V(H) \times V(G) \to \Z$ is a fixed weight function. There exists a data structure $\mathsf{WeightedAHom}_{H, k, D, \wei}(G)$ that maintains $\val_\wei(\Hom(H, G))$. Each update to the data structure takes $f(|H|, D)$ time complexity, it is initialized for an edgeless graph in $\Oh(n)$ time and the space complexity of it is $\Oh(f(|H|, D) \cdot |G|)$, for some function $f$.
\end{lemma}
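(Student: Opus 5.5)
Since $H$ is a connected elder graph, I would repeat the dynamic programming of Dvo\v{r}\'ak and T\r{u}ma~\cite[Theorem~14]{DvorakTumaArxiv} and replace every count by a weighted sum. Their argument rests on a structural fact about elder graphs. Fix a topological-like \emph{elimination ordering} of $H$ in which every vertex $w$ has all its in-neighbours forming a tournament. Then for every vertex $v$ of $H$, the set of vertices reachable from $v$ backwards along in-edges is determined by ``walking up'' in-edges. Since $G$ has maximum indegree at most $D$, the image of a connected elder graph under a homomorphism that fixes the image of a suitable ``bottom'' vertex can be enumerated by following in-edges. There are at most $D^{\Oh(|H|)}$ such extensions.

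The plan is to stratify homomorphisms by the image of a designated vertex. For each connected elder graph $H'$ appearing in their recursion (sub-elder-graphs of $H$, with a designated sink-like root $z$) and each vertex $u\in V(G)$, store
\[c_{H'}(u)\coloneqq \sum_{\varphi\in\Hom(\rt{H'}{z},\rt{G}{u})}\ \prod_{x\in V(H')}\wei(x,\varphi(x)),\]
and maintain the global value as $\sum_u c_H(u)$. Concretely, I would take their unweighted tables indexed by vertices (or small in-neighbourhood tuples) of $G$ and attach to each homomorphism the product $\val_\wei(\varphi)$. The combination rules in their recursion are of two kinds. The first is the product of counts of parts glued at a vertex. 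For this I must make sure that glued vertices contribute their weight exactly once, so I would divide the weight assignment so that each vertex of $H$ is \emph{owned} by exactly one part (multiplying by $\wei(x,u)$ only in the owning part) rather than dividing integers. The second is summation over candidate images, which is linear and transfers directly. With ownership handled this way, every identity of the form ``count $=\sum\prod$ counts'' in their proof becomes ``weighted count $=\sum\prod$ weighted counts'', with weights multiplying over disjoint ownership sets.

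For the dynamic part, an edge update in $G$ changes the entries $c_{H'}(u)$ only for vertices $u$ within a bounded ``in-distance'' of the updated edge. Because indegree is at most $D$, there are $D^{\Oh(|H|)}$ such vertices. Each affected entry is recomputed from the stored entries of its in-neighbours in time depending on $|H|$ and $D$, exactly as in the unweighted case. The global sum is then adjusted by the differences, which keeps the update time at $f(|H|,D)$. Initialization on an edgeless graph sets $c_{H'}(u)$ to $\prod_x\wei(x,u)$ when $H'$ is a single vertex and $0$ otherwise (or the appropriate value for edgeless $H'$; $H$ is connected, so only the single-vertex case contributes). This takes $\Oh(n)$ time. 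The space is $\Oh(f(|H|,D)\cdot|G|)$ because we store a bounded number of tables per vertex.

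I expect the main obstacle to be the bookkeeping of which vertex of $H$ contributes its weight in each table entry, so that gluing never double-counts a factor $\wei(x,\cdot)$. I would resolve it by indexing every table by the set of $H$-vertices it owns, in line with the subgraph $H'$ it counts. The rest should be a mechanical transcription of~\cite[Theorem~14]{DvorakTumaArxiv}.
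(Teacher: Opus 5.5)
Your weighting principle is right and is the one the paper uses. Weighted counts are additive over disjoint unions and multiplicative over products of partial homomorphisms with disjoint domains. Hence every ``count $=\sum\prod$ counts'' identity of Dvo\v{r}\'ak and T\r{u}ma survives, provided each vertex of $H$ contributes its weight in exactly one factor. Your ownership convention is what makes the paper's formula $\val_\wei(\phi)\cdot\prod_i S_\wei(C_i,\ldots)$ correct: stored entries must carry the weights of clan vertices only, never of ghosts.

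The gap is the concrete data structure you describe in place of theirs, which does not have update time $f(|H|,D)$. With tables $c_{H'}(u)$ indexed by the image $u$ of a sink-like root $z$, an update of an edge $(a,b)$ changes $c_{H'}(u)$ for every $u$ from which $a$ or $b$ is reached backwards along in-edges. These are the vertices forward-reachable from the edge, and bounded indegree only bounds the opposite direction. For the path $y\to x\to z$ you get $c_{H'}(u)=\wei(z,u)\sum_{a\in N^-(u)}\wei(x,a)\sum_{b\in N^-(a)}\wei(y,b)$. Inserting $(b,a)$ then changes $c_{H'}(u)$ for every out-neighbour $u$ of $a$; there can be $\Omega(n)$ of them while all indegrees stay at most $1$. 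Even maintaining only the global sum would require the out-neighbourhood sums $\sum_{u\in N^+(a)}\wei(z,u)$, which your scheme does not store. Moreover, a sink-like root need not exist at all. The out-star $x\leftarrow y\to z$ is a connected elder graph in which no vertex is reachable from all others, so its images cannot be enumerated by following in-edges from a single bottom vertex.

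The missing idea is the one behind Theorem~14 of Dvo\v{r}\'ak and T\r{u}ma, which the paper's proof reuses. Fix a vineyard $T$ of $H$ and index tables by the images $w_1,\ldots,w_m$ of a clan's ghosts rather than by the image of its root. The unbounded sum over the root's image is absorbed into the entry: $S_\wei(C,w_1,\ldots,w_m)=\sum_{v\in N_1^+(w_1)}\val_\wei(\Hom_{H,T}(C,v,w_1,\ldots,w_m,G))$. Ghost images lie upstream, along in-edges, of any edge the clan uses, so only boundedly many entries change per update. Each change is computed by guessing which edges of $H$ map to the updated edge. This fixes a partial homomorphism $\phi$ whose extensions factor over sub-clans with already-fixed ghosts. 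If you transcribe these tables with weights, as your final sentence intends, instead of the root-indexed $c_{H'}(u)$, your ownership argument becomes exactly the paper's proof.
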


\begin{proof}
	The generalization to the weighted version requires just a few minor modifications to the proof of the original Theorem 14 of \cite{DvorakTumaArxiv}, therefore we are only going to highlight the differences between them and assume familiarity of the reader with that proof. We are also going to use the terminology of that proof (that is, notions like \textit{vineyard, clan, extended clan, ghost}).
	
	During the initialization of the data structure we fix any vineyard $T$ for $H$. For a clan $C$ and vertices $v, w_1, \ldots, w_m \in V(G)$, by $\Hom_{H, T}(C, v, w_1, \ldots, w_m, G)$ we are going to denote the set of homomorphisms from an extended clan $C^*$ to $G$ such that $r(C)$ is mapped to $v$ and ghosts $g_1, \ldots, g_m$ of $C$ are mapped to $w_1, \ldots, w_m$. A value of a partial homomorphism $\phi$ from $H$ to $G$ belonging to that set is defined naturally as the product of $\wei(x_i, y_i)$ over all $x_i \in V(C^*)$, where $\phi(x_i) = y_i$, and the value of a whole set is defined as the sum of values of its elements.
	
	For each clan $C \neq V(H)$ with $m$ ghosts and each $m$-tuple of vertices $w_1, \ldots, w_m$ of $G$, in the unweighted version the number $S(C, w_1, \ldots, w_m) = \sum_{v \in N_1^+(w_1)} |\Hom_{H, T}(C, v, w_1, \ldots, w_m, G)|$ was recorded. However, we will record the number \[S_\wei(C, w_1, \ldots, w_m) \coloneqq \sum_{v \in N_1^+(w_1)} \val_\wei(\Hom_{H, T}(C, v, w_1, \ldots, w_m, G))\] instead. 
			
	The original proof expresses the difference of the sets $\Hom_{H, T}(C, v, w_1, \ldots, w_m, G)$ before and after an addition of an edge $e$ to $G$ as a disjoint sum of some sets, where each of them results from a different guess of which edges of $H$ will be mapped to $e$. In each of these cases, a partial homomorphism $\phi$ is determined and the problem of extending it to a homomorphism of full $C^*$ decomposes to a few independent subproblems, hence the set of ways to extend the fixed partial homomorphisms resulting from that guess to a full homomorphisms of $C^*$ can be expressed as a cartesian product of sets of homomorphisms for these smaller instances. If $C_i$ and $g_1^i, \ldots, g_{m_i}^i$ are the clan and the ghosts of the $i$-th subproblem, then the number of ways to extend $\phi$ to a full homomorphisms of $C^*$ can be expressed as $\prod_{i=1}^{t} S(C_i, \phi(g_1^i), \ldots, \phi(g_{m_i}^i))$, whereas the incurred contribution to the value of all accounted homomorphisms will be $\val_\wei(\phi) \cdot \prod_{i=1}^{t}S_\wei(C_i, \phi(g_1^i), \ldots, \phi(g_{m_i}^i))$.
	
	We remark that the calculation above crucially relies on the properties of $\val_\wei$ that $\val_\wei(X \uplus Y) = \val_\wei(X) + \val_\wei(Y)$ and $\val_\wei(P \times Q) = \val_\wei(P) \cdot \val_\wei(Q)$, where $P$ and $Q$ are acting on disjoint subsets $V_P$ and $V_Q$ of $V(H)$ and each element $\phi = (\phi_P, \phi_Q) \in P \times Q$ is understood as a unique mapping on $V_P \uplus V_Q$ such that $\phi|_{V_P} = \phi_P$ and $\phi|_{V_Q} = \phi_Q$. (Here, $\uplus$ denotes the disjoint union.) 
\end{proof}

\subsection{Counting weighted homomorphisms}
Before proceeding with the rest of the proof, we are going to recall an important definition of a \textit{$0$-contraction} from \cite{DvorakTumaArxiv}.

\begin{definition}[\cite{DvorakTumaArxiv}]
	Let $F'$ be a directed graph with edges colored by $\{0, 1, \ldots, k\}$. Let $P$ be a partition of vertices of $F'$ such that:
	\begin{itemize}
		\item for every $p \in P$, the subgraph of $F'$ induced by $p$ is connected and contains only edges colored $0$; and
		\item if $p_1, p_2 \in P$ are distinct, $u, u' \in p_1, v, v' \in p_2$ and $(u, v)$ is an edge, then $(v', u')$ is not an edge, and if $(u', v')$ is an edge, then it has the same color as $(u, v)$.
	\end{itemize}
	Let $F''$ be the directed graph with edges colored by $\{0, 1, \ldots, k\}$, such that $V(F'') = P$ and $(p_1, p_2) \in E(F'')$ if and only if $(v_1, v_2) \in E(F')$ for some $v_1 \in p_1$ and $v_2 \in p_2$; and in this case, $(p_1, p_2)$ and $(v_1, v_2)$ have the same color. That is, $F''$ is obtained from $F'$ by identifying the vertices in each part of $P$ and suppressing the parallel edges and loops, and we also remember which vertices of $F'$ correspond to each vertex of $F''$. We say that $F''$ is a \textit{$0$-contraction} of $F'$.
\end{definition}

We also recall the rephrased version of maintaining iterated fraternal augmentations from~\cite{DvorakTumaArxiv}:
\begin{theorem}[{\cite[Theorem 4]{DvorakTumaArxiv}}] \label{thm:maintaining-augmentation}
	Let $\Cc$ be a graph class of bounded expansion and $h \in \N$. For a graph $G$, let its \emph{$h$-th augmentation} be any directed graph obtained by taking any orientation $\vec G$ of $G$ and iterating fraternal augmentation on it $h$ times. There exists $C \in \N$ dependent only on $\Cc$ and $h$ such that there is a data structure maintaining some $h$-th augmentation $\tilde{G}_h$ of a dynamic graph $G \in \Cc$ such that:
	\begin{itemize}
		\item the maximum outdegree of $\tilde{G}_h$ is at most $C$;
		\item an edge can be added to $G$ in $\Oh(C \log^{h+1} n)$ amortized time; and
		\item an edge can be removed from $G$ in $\Oh(C)$ amortized time.
	\end{itemize}
\end{theorem}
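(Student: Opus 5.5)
The plan is to prove the statement by induction on $h$, maintaining a tower of Brodal--Fagerberg orientations exactly as in the proof of \cref{lem:dyn-augmentation}, but without length restrictions. We define classes $\Cc^{(0)}\coloneqq \Cc$ and $\Cc^{(j+1)}\coloneqq \Fra(\Cc^{(j)},C_j)$. Here $C_j$ is the maximum outdegree guaranteed at level $j$, namely $4$ times a bound on $\nabla_0(\Cc^{(j)})$, summed over previous levels. By \cref{lem:fraternal} each $\Cc^{(j)}$ has bounded expansion, so all $C_j$ are constants depending only on $\Cc$ and $h$.

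At level $0$ we run \cref{thm:bf} on $G$ to obtain $\tilde G_0$. At level $j+1$ we explicitly maintain the set of fraternal edges of $\tilde G_j$. For each unordered pair we keep a counter of common in-tails $w$, as in \cref{cl:maintain}, so an edge exists while its counter is positive. The new (non-original) edges are fed into a fresh Brodal--Fagerberg instance, which orients them; the union with $\tilde G_j$ gives $\tilde G_{j+1}$, with outdegree at most $C_{j+1}$. The result is by definition an $h$-th augmentation of $G$, since at every step we orient the current graph and add all fraternal edges. Because the maintained graph at each level lies in $\Cc^{(j)}$, the degeneracy promise of \cref{thm:bf} holds at every intermediate moment. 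This requires processing all deletions of a batch before its insertions, as noted in \cref{cl:maintain}.

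For the running time, a single edge change at level $j$ affects at most $2C_j$ forks and hence $\Oh(C_j)$ counters and fraternal edges at level $j+1$. Each insertion into a Brodal--Fagerberg instance costs $\Oh(1)$ amortized, but causes $\Oh(\log n)$ reorientations. Each reorientation is a deletion plus an insertion at that level and thus propagates $\Oh(C)$ changes upward. Multiplying over $h$ levels gives $\Oh(C\log^{h+1}n)$ amortized per insertion in $G$, after absorbing products of the $C_j$ into $C$.

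The main obstacle is the asymmetric claim that deletions cost only $\Oh(C)$ amortized. I would handle it with a potential argument. A deletion in $G$ triggers $\Oh(C)$ deletions level by level, each possibly reducing a counter to zero. Brodal--Fagerberg deletions never cause reorientations, so they generate no cascade; their cost beyond $\Oh(1)$ is bookkeeping. I would therefore charge each deletion at every level to the earlier insertion of the same edge, which already paid $\Oh(C\log^{h+1}n)$. The delicate part is verifying that the $\Oh(d+\log n)$ deletion cost in \cref{thm:bf} can be reduced to $\Oh(1)$ amortized, either by lazy deletion or by prepaying it at insertion time. I expect this amortization bookkeeping to be the only nonroutine step.
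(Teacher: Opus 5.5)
Your construction is the one the paper uses for its analogue \cref{lem:dyn-augmentation}; the theorem itself is only cited from Dvo\v{r}\'ak and T\r{u}ma. It consists of a tower of Brodal--Fagerberg instances fed with the fraternal pairs not already present, fork counters, and deletions processed before insertions. Your multiplicative count giving $\Oh(C\log^{h+1}n)$ per insertion is right.

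\textbf{The gap is in the $\Oh(C)$ deletion bound.} You assert that a deletion in $G$ only triggers deletions upward, so no reorientation cascade occurs. But in your design, level $j+1$ stores only those pairs $uv$ with positive fork counter that are \emph{not} already edges of $\tilde G_j$. Suppose you delete an edge $uv$ of $G$, or a cascading deletion removes a fraternal edge $uv$, while $u$ and $v$ still have a common in-neighbour in some $\tilde G_j$. Then $uv$ must now be \emph{inserted} into the Brodal--Fagerberg instance of the first level at which it is still witnessed by a fork. That insertion can cause $\Oh(\log n)$ amortized reorientations there, each propagating through all higher levels. So deletions do generate cascades. Charging only each level's Brodal--Fagerberg deletion cost to the earlier insertion of the same edge at that level leaves these induced insertions unpaid. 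A minor point: the cascading deletions number $\Oh(C^h)$ rather than $\Oh(C)$, which is harmless once absorbed into $C$.

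\textbf{The repair is a standard global re-amortization,} and it also settles the point you flagged as delicate.
\begin{enumerate}
\item Analyse updates of $G$ symmetrically. By \cref{thm:bf}, every operation on a level's instance, insertion or deletion, costs $\Oh(C+\log n)$ amortized and causes $\Oh(\log n)$ amortized reorientations.
\item Every change of $\tilde G_j$ (insertion, deletion, or flip) causes $\Oh(C)$ operations at level $j+1$, including the possible reinsertion described above.
\item Hence your multiplicative count bounds the total work of any update sequence by $\Oh(C\log^{h+1}n)$ per update of $G$, whatever the update's type.
\item Since $G$ starts edgeless, deletions can be matched injectively to earlier insertions of the same edges. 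Move the whole amortized cost of each deletion, cascade included, onto its matched insertion.
\end{enumerate}
This at most doubles the insertion bound and leaves $\Oh(1)$ per deletion, which is the claimed asymmetric split.

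Your per-level charging can also be rescued. Each insertion at a level must prepay for the complete cascade of its eventual deletion, including the induced reinsertion at a higher level.
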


We assume that the edges of non-augmented graphs are colored with colors $\{1, \ldots, k\}$ and edges introduced during the augmentations are colored with a new color $0$.

We say that a graph $G_2$ can be obtained from $G_1$ by \emph{recoloring zeros} if and only if $V(G_1) = V(G_2)$, $E(G_1) = E(G_2)$, however for every edge $(u, v) \in E(G_1)$ such that its colors in $G_1$ and $G_2$ are different, we have that its color in $G_1$ is $0$.

Moreover, for a graph $H$ with edges colored by $\{1, \ldots, k\}$, let $\mathcal{H}^e$ denote the set of all possible $0$-contractions of graphs obtained by recoloring zeros from all possible $h$-th augmentations of $H$, where $h = {|H| \choose 2} - 2$.

Dvo\v{r}\'ak and T\r{u}ma prove the following:

\begin{lemma}[{\cite[Lemma 11]{DvorakTumaArxiv}}] \label{lem:unweighted-hom-proj}
	Let $H$ and $G$ be graphs with edges colored by $\{1, \ldots, k\}$ and let $h = {|H| \choose 2} - 2$. If $G'$ is an $h$-th augmentation of $G$, then $|\Hom(H, G)| = \sum_{H' \in \mathcal{H}^e} |\Hom(H', G')|$.
\end{lemma}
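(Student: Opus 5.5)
We need to propose a proof of Lemma \ref{lem:unweighted-hom-proj} (Dvořák–Tůma Lemma 11): $|\Hom(H,G)|=\sum_{H'\in\mathcal H^e}|\Hom(H',G')|$, where $G'$ is an $h$-th augmentation of $G$ with $h=\binom{|H|}{2}-2$.

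The plan is to build an explicit bijection between $\Hom(H,G)$ and the disjoint union $\biguplus_{H'\in\mathcal H^e}\Hom(H',G')$, where we regard $H'$ together with its remembered projection $\pi\colon V(H)\to V(H')$ from the $0$-contraction. Given $\phi\in\Hom(H,G)$, we first record the partition $P_\phi$ of $V(H)$ into fibres of $\phi$. We then build an $h$-th augmentation of $H$ that "mirrors" $G'$ along $\phi$. Each edge $uv$ of $H$ is oriented as $\phi(u)\phi(v)$ is oriented in $G'$ at level $0$; this is well defined since $\phi(u)\neq\phi(v)$ for adjacent $u,v$. At each fraternal step $i$ we add the fraternal edge $xy$ in $H$ exactly when its image is a fraternal edge created at step $i$ in $G'$, or when $\phi(x)=\phi(y)$ and both lie on a common fork. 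The new edges are oriented like their images, and those collapsing to a single vertex get colour $0$ and are later contracted. The edges that exist in $G$ with original colours keep them, and the remaining augmentation edges of colour $0$ in $G'$ that happen to be original edges of $G$ are recoloured; this is where "recoloring zeros" is used. Contracting the fibres of $\phi$, which are connected by colour-$0$ edges after enough augmentation steps, yields a $0$-contraction $H'$ and an induced homomorphism $\phi'\colon H'\to G'$ that is injective on $V(H')$ with respect to fibres.

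The key steps, in order, are as follows.
\begin{enumerate}
\item Show that after $h=\binom{|H|}{2}-2$ steps, any two vertices of $H$ in the same fibre of $\phi$ become joined by colour-$0$ edges, so each part induces a connected colour-$0$ subgraph. The approach is to take two vertices $x,y$ with $\phi(x)=\phi(y)$ joined by a path in $H$ (within a component) and argue, as in \cref{cl:shortcuts}, that the image walk forces forks whose fraternal edges shortcut the path. Each step adds at least one new edge of $H$ among at most $\binom{|H|}{2}$ pairs, which explains the bound on $h$.
\item Verify the second condition of a $0$-contraction, namely the consistency of orientations and colours between parts. This follows because all edges between two parts map to the same edge of $G'$, which has a single orientation and colour.
\item Show the map $\phi\mapsto(H',\phi')$ is injective and surjective. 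For the inverse, given $H'\in\mathcal H^e$ with projection $\pi$ and $\psi\in\Hom(H',G')$, set $\phi=\psi\circ\pi$. This is a homomorphism $H\to G$ because original-coloured edges map to original-coloured edges of $G'$, which are edges of $G$. The uniqueness of $H'$ follows since the mirrored augmentation and partition are canonically determined by $\phi$.
\end{enumerate}

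The main obstacle is the uniqueness direction in step 3. An arbitrary $(H',\psi)$ with $\psi\circ\pi=\phi$ need not coincide with the canonical mirror of $\phi$: $H'$ could merge vertices more coarsely, or carry different orientations. To handle this, I would restrict counting to pairs where $\psi$ is injective on $V(H')$, and require that $H'$ contain exactly the augmentation edges whose images are in $G'$. If $\mathcal H^e$ as defined does not enforce this, I would refine $\mathcal H^e$ to "faithful" contractions and check that this matches the intended definition. A second difficulty is step 1's count, which I would attack by induction on the number of pairs not yet adjacent, first mimicking the Kowalik–Kurowski shortcut argument.
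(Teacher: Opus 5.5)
Your overall plan, a bijection $\phi\mapsto(H',\psi)$ with $\phi=\psi\circ\pi$, is the one the paper attributes to Dvo\v{r}\'ak and T\r{u}ma. However, Step~1 is false, and the repair you suggest for uniqueness would break the identity.

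\textbf{Step~1 fails.} Fibres of $\phi$ need not become connected by colour-$0$ edges, whatever $h$ is. Let $H$ be the path $u\,w\,v$ with both edges of colour~$1$. Let $G$ be a single edge $ab$ of colour~$1$, and let $G'$ use the orientation $(a,b)$; there is no fork, so $G'=G$. Put $\phi(u)=\phi(v)=a$ and $\phi(w)=b$. The mirrored orientation is $(u,w),(v,w)$. No two edges share a tail, so $uv$ is never created. Hence this $\phi$ is represented only by $H'=H$ with the trivial partition and the non-injective $\psi=\phi$.

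\textbf{Your fix undercounts.} Of the two homomorphisms $H\to G$, only the one with $\phi(w)=a$ (where the fork at $w$ creates $uv$) has an injective representative. So restricting to injective $\psi$, or to ``faithful'' contractions, gives the wrong count; you must keep $\mathcal H^e$ and $\Hom(H',G')$ as defined. The partition to contract is not the fibre partition. It is the partition into connected components of the within-fibre edges of $H_h$, and this is forced, since an edge between two distinct parts inside one fibre would be sent to a loop.

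\textbf{The mirror rule is also wrong.} You must add $xy$ for \emph{every} fork of $H_i$ with non-adjacent ends, including those whose image is already an edge of $G_i$; these are exactly the edges that later get recoloured. With your rule the result is not an $h$-th augmentation of $H$, so $H'\notin\mathcal H^e$.

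\textbf{The missing idea is the uniqueness argument.} The mirror is \emph{not} canonically determined by $\phi$. Cross-fibre edges are forced to be oriented and coloured like their images in $G'$. But each within-fibre colour-$0$ edge may be oriented either way, and that choice changes which forks occur later, hence changes $H_h$. You must show that the $0$-contraction does not depend on these choices. This, not connectivity of fibres, is what $h=\binom{|H|}{2}-2$ is for:
\begin{itemize}
\item On a component $C$ with $|C|\ge 3$, every round that is not yet closed adds an edge, starting from at least $|C|-1$ edges. So after $\binom{|C|}{2}-|C|+1\le h$ rounds, $H_h$ is closed under fraternal augmentation.
\item In a closed $H_h$, cross out-neighbours propagate along within-part edges: $(x,y)$ within a part and $(x,z)$ across force the edge $(y,z)$, with its orientation forced by $\phi$.
\item In a closed $H_h$, any two vertices of a connected part have a common vertex reachable from both inside the part.
\item Hence the contracted graph is closed under the rule ``two out-neighbours of a part become adjacent if their images differ, and are merged if their images coincide''. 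Since every edge and merge in it is derivable from $H$, it is the least such structure containing $H$, and so it is independent of the choices.
\end{itemize}
Finally, any $H'\in\mathcal H^e$ through which $\phi$ factors arises from such a mirror, because $\psi$ forces the cross-part orientations, the colours and the partition. This gives uniqueness.
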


They show this equality by showing a natural bijection $\Phi$ between $\Hom(H, G)$ and pairs $(H', \phi')$, where $H' \in \mathcal{H}^e$ and $\phi' \in \Hom(H', G')$. As each $H' \in \mathcal{H}^e$ can be shown to be elder (\cite[Lemma 12]{DvorakTumaArxiv}), a clear consequence of the original unweighted variant of \cref{lem:weighted-dvorak}, \cref{lem:unweighted-hom-proj} and \cref{thm:maintaining-augmentation} was that there exists a data structure that efficiently maintains the number of homomorphisms for a dynamic graph $G$ of bounded expansion. Our goal now will be to adjust \cref{lem:unweighted-hom-proj} to the weighted setting.

Let $\wei\colon V(H) \times V(G) \to \Z$ be a fixed weight function and let $H' \in \mathcal{H}^e$. We recall that $V(H')$ is a partition of $V(H)$. Let us now define a weight function $\wei_{H'} \colon V(H') \times V(G') \to \Z$ by the following formula: $\wei_{H'}(p, u) \coloneqq \prod_{v \in p} \wei(v, u)$. Let $\phi' \in \Hom(H', G')$ and let $\phi \in \Hom(H, G)$ be such that $\Phi(\phi) = (H', \phi')$. The bijection $\Phi$ satisfies that for all $v\in V(H)$ we have $\phi(v) = \phi'(p_v)$, where $p_v$ is the part of the partition $V(H')$ that contains $v$.
Therefore, we have that \begin{align*}\val_\wei(\phi) & = \prod_{v \in V(H)} \wei(v, \phi(v)) = \prod_{p \in V(H')} \prod_{v \in p} \wei(v, \phi(v))\\ &= \prod_{p \in V(H')} \prod_{v \in p} \wei(v, \phi'(p)) = \prod_{p \in V(H')} \wei_{H'}(p, \phi'(p)) = \val_{\wei_{H'}}(\phi').
	\end{align*} As a consequence we observe the following:

\begin{lemma} \label{lem:weighted-hom-proj}
	Let $H$ and $G$ be graphs with edges colored by $\{1, \ldots, k\}$, $h = {|H| \choose 2} - 2$, and $\wei \colon V(H) \times V(G) \to \Z$ be a weight function. If $G'$ is an $h$-th augmentation of $G$, then \[\val_\wei(\Hom(H, G)) = \sum_{H' \in \mathcal{H}^e} \val_{\wei_{H'}}(\Hom(H', G')).\]
\end{lemma}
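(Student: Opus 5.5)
The plan is to derive the identity directly from the bijection $\Phi$ that underlies \cref{lem:unweighted-hom-proj}, together with the value identity $\val_\wei(\phi)=\val_{\wei_{H'}}(\phi')$ computed just before the statement. Since $\Phi$ is a bijection from $\Hom(H,G)$ onto $\bigsqcup_{H'\in\mathcal{H}^e}\{H'\}\times\Hom(H',G')$, the sum defining $\val_\wei(\Hom(H,G))$ can be reindexed over the disjoint union. This gives
\[\val_\wei(\Hom(H,G))=\sum_{\phi\in\Hom(H,G)}\val_\wei(\phi)=\sum_{H'\in\mathcal{H}^e}\ \sum_{\phi'\in\Hom(H',G')}\val_\wei(\Phi^{-1}(H',\phi')).\]

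Next, substitute $\val_\wei(\Phi^{-1}(H',\phi'))=\val_{\wei_{H'}}(\phi')$ term by term. This uses the property of $\Phi$ that $\phi(v)=\phi'(p_v)$ for every $v\in V(H)$, where $p_v$ is the part of $V(H')$ containing $v$. One then regroups the product over $V(H)$ into products over the parts of the partition $V(H')$ and applies the definition $\wei_{H'}(p,u)=\prod_{v\in p}\wei(v,u)$. The inner sum is then exactly $\val_{\wei_{H'}}(\Hom(H',G'))$ by \cref{def:mapping-value}, which yields the claimed formula.

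The main obstacle is not computational but a matter of correctly importing facts from \cite{DvorakTumaArxiv}. I need that $\Phi$ is genuinely a bijection onto pairs $(H',\phi')$ in which distinct $H'$ are treated as distinct summands, so that no homomorphism is double counted across different $0$-contractions. I also need that the vertex correspondence $\phi(v)=\phi'(p_v)$ holds, i.e.\ that each part is collapsed to a single image. Both properties are part of the proof of \cite[Lemma 11]{DvorakTumaArxiv}, and both must be checked explicitly there rather than merely inferred from the counting identity. Finally, note that $V(G')=V(G)$, so $\wei_{H'}$ is well defined as a function on $V(H')\times V(G')$.
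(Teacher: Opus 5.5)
Your proposal is correct and matches the paper's argument: the paper likewise reindexes along the bijection $\Phi$ from \cite[Lemma 11]{DvorakTumaArxiv}. It uses $\phi(v)=\phi'(p_v)$ to obtain $\val_\wei(\phi)=\val_{\wei_{H'}}(\phi')$ by regrouping the product over the parts of $V(H')$, and then sums over $\mathcal{H}^e$.
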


With the above statement, we are able to conclude the following lemma:

\begin{lemma} \label{lem:weighted-hom}
	Let $H$ be a fixed graph, $\G$ be a class of graphs of bounded expansion, and $G \in \G$ be a dynamic graph, where edges of $H$ and $G$ are colored with colors $\{1, \ldots, k\}$. Let also $\wei \colon V(H) \times V(G) \to \Z$ be a fixed weight function. Then, there exists a data structure $\mathsf{WeightedHom}_{H, k, \wei}(G)$ that is able to determine $\val_\wei(\Hom(H, G))$ after each update. 
	The data structure processes each update in $\Oh_{\G, H, k}(\log^h |G|)$,  where $h = {|H| \choose 2} - 1$, it can be initialized for an edgeless graph in $\Oh_{\G, H, k}(|G|)$ time and its space complexity is~$\Oh_{\G, H, k}(|G|)$.
\end{lemma}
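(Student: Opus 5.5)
We plan to combine \cref{lem:weighted-hom-proj} with \cref{thm:maintaining-augmentation} and \cref{lem:weighted-ahom}, exactly mirroring how the unweighted statement is derived in \cite{DvorakTumaArxiv}. Set $h' \coloneqq \binom{|H|}{2}-2$. Using \cref{thm:maintaining-augmentation} for $\G$ and $h'$, we maintain an $h'$-th augmentation $G'$ of $G$ whose maximum outdegree is bounded by a constant $C=\Oh_{\G,H}(1)$. We regard original edges as colored by $\{1,\ldots,k\}$ and augmentation edges as colored $0$, and reverse all edges so that indegree (the convention of \cref{lem:weighted-ahom}) becomes the bounded quantity. The family $\mathcal{H}^e$ is finite, of size $\Oh_{H,k}(1)$, and by \cite[Lemma~12]{DvorakTumaArxiv} every $H'\in\mathcal{H}^e$ is an elder graph. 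Reversing the edges of every $H'$ as well preserves homomorphisms; the elder property needs checking in this direction, but we plan to follow the paper's convention remark in the footnote.

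For every $H'\in\mathcal{H}^e$ we define the quotient weight $\wei_{H'}(p,u)=\prod_{v\in p}\wei(v,u)$. This is a fixed function, since $V(G')=V(G)$ and $H'$ is fixed, so the weights never change. We then maintain $\mathsf{WeightedAHom}_{H',k,C,\wei_{H'}}(G')$ from \cref{lem:weighted-ahom}. The connectivity hypothesis can fail, so we will handle it as follows. If $H'$ is disconnected, $\val_{\wei_{H'}}(\Hom(H',G'))$ factors as the product over the connected components, because homomorphisms and weights decompose componentwise. We therefore keep one structure per component and multiply the maintained values. By \cref{lem:weighted-hom-proj}, the quantity $\val_\wei(\Hom(H,G))$ is then the sum of the maintained values over $H'\in\mathcal{H}^e$, and we recompute it in $\Oh_{H,k}(1)$ time after each update.

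For the complexity, each update to $G$ causes amortized $\Oh(C\log^{h'+1} n)=\Oh_{\G,H}(\log^{h} n)$ changes to $G'$, with $h=\binom{|H|}{2}-1$. An edge change, recolor or reorientation is modelled as a removal followed by an insertion, as in \cref{cl:maintain}. Each of these changes is relayed to $\Oh_{H,k}(1)$ structures, and each structure handles it in $f(|H|,C)=\Oh_{\G,H}(1)$ time. This gives the claimed update bound, while initialization time and space are $\Oh_{\G,H,k}(|G|)$ by the bounds of the two cited results.

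The main obstacle is verifying \cref{lem:weighted-hom-proj}, and specifically that the bijection $\Phi$ of \cite[Lemma 11]{DvorakTumaArxiv} respects vertex images in the sense $\phi(v)=\phi'(p_v)$. Once that holds, the weight identity is the computation already displayed before the lemma, and everything else is bookkeeping. A secondary subtlety is ensuring that $\mathcal{H}^e$ does not depend on the current state of $G$, so that the set of maintained structures is fixed at initialization.
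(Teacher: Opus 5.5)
Your proposal is correct and follows essentially the same route as the paper: maintain the augmentation $G'$ via \cref{thm:maintaining-augmentation}, keep one $\mathsf{WeightedAHom}$ structure with the quotient weight $\wei_{H'}$ for each $H'\in\mathcal{H}^e$, and sum the maintained values using \cref{lem:weighted-hom-proj}, with the same $\Oh(C\log^{h} n)$ update accounting. The only difference is where connectivity is handled. The paper factorizes over the components of $H$ first, so every $H'\in\mathcal{H}^e$ is automatically connected (augmentation only adds edges, and $0$-contraction merges connected parts). You instead factorize over the components of each $H'$, which also works because a component of an elder graph is again elder.
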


\begin{proof}
	This proof combines \cref{lem:weighted-hom-proj}, \cref{lem:weighted-ahom} and \cref{thm:maintaining-augmentation} in an identical way as the analogous unweighted data structure from \cite{DvorakTumaArxiv}.
	
	We note that if $H$ is not connected, then $\val_\wei(\Hom(H, G))$ is just equal to the product \[\prod_{H' \in \mathsf{CC}(H)} \val_\wei(\Hom(H', G))\] taken over the set $\mathsf{CC}(H)$ of connected components of $H$. Hence, we may assume that $H$ is connected.
	
	The data structure $\mathsf{WeightedHom}_{H, k, \wei}(G)$ maintains an $h$-th augmentation $G'$ of $G$, as described in \cref{thm:maintaining-augmentation} and for each $H' \in \mathcal{H}^e$ it additionally maintains a data structure $\mathsf{WeightedAHom}_{H', k, D, \wei}(G)$, where $C$ is the integer from \cref{thm:maintaining-augmentation}. As $|\mathcal{H}^e|$ is bounded by a function of $H$ and $k$ only, its size is constant. Each addition of an edge to $G$ results in $\Oh(C \log^{h+1}|G|)$ changes in $G'$, each removal results in $\Oh(C)$ such changes (amortized).
	 All changes in $G'$ are relayed to all $\mathsf{WeightedAHom}$ structures. The value $\val_\wei(\Hom(H, G))$ is derived as the sum of all values $\val_{\wei_{H'}}(\Hom(H', G'))$.
\end{proof}

\subsection{Counting weighted subgraph isomorphisms}

Next, we proceed to analyzing $\val_\wei(\Sub(H, G))$.  First, we  recall the definition of \textit{projections} from \cite{DvorakTumaArxiv}:
\begin{definition} \label{def:projections}
	Consider a graph $H$ with colored edges, and let $P$ be a partition of $V(H)$ such that
	\begin{itemize}
		\item each element of $P$ induces an independent set in $H$, and
		\item for all $p_1, p_2 \in P, u, u' \in p_1$ and $v, v' \in p_2$, if both $uv$ and $u'v'$ are edges of $H$, then they have the same color.
	\end{itemize}
	Let $H'$ be the graph obtained from $H$ by identifying the vertices in each part of $P$ and suppressing the parallel edges. We say that $H'$ is a \textit{projection} of $H$. 
\end{definition}
Let $\mathcal{H}^p$ denote the set of all projections $H'$ of $H$.

Dvo\v{r}\'ak and T\r{u}ma connect counting the number of subgraph isomorphisms with counting the number of homomorphisms through the following statement:

\begin{lemma}[{\cite[Lemma 7]{DvorakTumaArxiv}}] \label{lem:unweighted-sub-to-homo}
	For every graph $H$ with colored edges, there exist integer coefficients $\alpha_{H'}$ such that for every graph $G$ with colored edges, $|\Sub(H, G)| = \sum_{H \in \mathcal{H}^p} \alpha_{H'} |\Hom(H', G)|$.
\end{lemma}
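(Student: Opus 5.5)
The plan is the classical Möbius inversion over the partition lattice of $V(H)$, run in the colored setting where only projections are allowed. The starting point is to group homomorphisms by their kernel. For a homomorphism $\phi\in\Hom(H,G)$, let $P_\phi$ be the partition of $V(H)$ into the nonempty fibres $\phi^{-1}(u)$.

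First I will check that $P_\phi$ satisfies the two conditions of \cref{def:projections}.
\begin{itemize}
\item Each fibre is independent in $H$, because $G$ has no loops: an edge $uv$ inside a fibre would be mapped to a loop.
\item Suppose $u,u'$ lie in one fibre and $v,v'$ in another, with $uv$ and $u'v'$ both edges of $H$. Both edges are mapped to the same edge of $G$, which carries a single color. Since homomorphisms preserve colors, $uv$ and $u'v'$ have the same color.
\end{itemize}
Hence $\phi$ factors as the quotient map $H\to H'_{P_\phi}$ followed by an injective homomorphism $H'_{P_\phi}\to G$. Conversely, every pair consisting of a projection $H'=H/P$ and some $\psi\in\Sub(H',G)$ yields a homomorphism with kernel exactly $P$. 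This gives the bijection
\[\Hom(H,G)\;\longleftrightarrow\;\biguplus_{H'\in\mathcal H^p}\Sub(H',G).\]
I will treat projections as indexed by the partitions $P$, to avoid collapsing isomorphic quotients. Taking cardinalities gives $|\Hom(H,G)|=\sum_{P}|\Sub(H/P,G)|$.

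Next, the same statement applies to every projection $H/P$ in place of $H$. The projections of $H/P$ are exactly the projections $H/Q$ for partitions $Q$ coarser than $P$ that are themselves valid for $H$, and the coarsening is compatible with both conditions of \cref{def:projections}. Restricted to valid partitions, this is a triangular system over the (finite) poset of valid partitions of $V(H)$ ordered by refinement:
\[|\Hom(H/P,G)|=\sum_{Q\succeq P}|\Sub(H/Q,G)|.\]
Inverting it by Möbius inversion on that poset gives
\[|\Sub(H,G)|=\sum_{Q}\mu(\hat 0,Q)\,|\Hom(H/Q,G)|,\]
where $\hat 0$ is the discrete partition. The coefficients $\alpha_{H'}=\mu(\hat 0,Q)$ are integers that depend only on $H$ and not on $G$. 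If several $Q$ give isomorphic quotients, their coefficients are summed.

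The main obstacle I expect is making precise the claim that the projections of $H/P$ are exactly the valid coarsenings $Q\succeq P$ of $H$. Concretely, I must check that validity of $Q$ for $H$ is equivalent to validity of $Q/P$ for $H/P$ (where $Q/P$ denotes the partition of the parts of $P$ induced by $Q$), including the color condition after edges have been suppressed. The independence condition transfers directly. The color condition transfers because every edge of $H/P$ is the image of some edge of $H$ with the same color. Once this is settled, the inversion is purely formal.
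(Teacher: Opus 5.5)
Your proof is correct and is essentially the argument the paper relies on. The paper does not reprove this lemma but cites Dvo\v{r}\'ak--T\r{u}ma, and in the proof of \cref{lem:weighted-sub-to-homo} it summarizes that argument as grouping each $\phi\in\Hom(H,G)$ by its fibre partition and checking that it is counted equally on both sides; your M\"obius inversion over valid partitions is the closed form of that check, and your verification that projections of $H/P$ are exactly the valid coarsenings $Q\succeq P$ (with colors well defined after suppressing parallel edges) is sound.
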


For a projection $H' \in \mathcal{H}^p$ and a homomorphism $\phi' \colon V(H') \to V(G)$ we can define its \emph{uncontracted version} $\Phi(\phi') = \phi$ by $\phi(v) \coloneqq \phi'(p_v)$, where $v \in p_v$. Similarly as in the case of $0$-contractions, we can define the projected weight function $\wei_{H'}(p, u) = \prod_{v \in p} \wei(v, u)$ and show that $\val_{\wei_{H'}}(\phi') = \val_\wei(\phi)$.

We generalize \cref{lem:unweighted-sub-to-homo} to the weighted setting in the following way:
\begin{lemma} \label{lem:weighted-sub-to-homo}
	For every graph $H$ with colored edges, there exist integer coefficients $\alpha_{H'}$ such that for every graph $G$ with colored edges and every weight function $\wei\colon V(H) \times V(G) \to \Z$,
	$\val_\wei(\Sub(H, G)) = \sum_{H \in \mathcal{H}^p} \alpha_{H'} \val_{\wei_{H'}}(\Hom(H', G))$.
\end{lemma}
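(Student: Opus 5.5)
The plan is to take the coefficients $\alpha_{H'}$ of \cref{lem:unweighted-sub-to-homo} exactly as they are, and to show that the same combinatorial identity behind that lemma holds once every mapping is counted with its value instead of $1$. The Dvo\v{r}\'ak--T\r{u}ma proof is a M\"obius inversion over the lattice of partitions of $V(H)$. Every homomorphism $\phi\in\Hom(H,G)$ has a kernel partition $P_\phi=\{\phi^{-1}(u)\colon u\in \phi(V(H))\}$. This partition is always valid for the projection construction. Its parts are independent, because $G$ is loopless. Edges between two parts all map to the same edge of $G$, and so they share a colour.

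The key steps are as follows.

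First, for each projection $H'$ corresponding to a partition $P$, I recall that the uncontracted map $\Phi\colon\phi'\mapsto\phi$ is a bijection. It goes from $\Hom(H',G)$ onto the set of homomorphisms $\phi\in\Hom(H,G)$ that are constant on the parts of $P$, that is, whose kernel $P_\phi$ is coarser than or equal to $P$. Injective homomorphisms from $H'$ correspond exactly to those $\phi$ with $P_\phi=P$. By the computation preceding the statement, $\val_{\wei_{H'}}(\phi')=\val_\wei(\Phi(\phi'))$. Writing $\Sub_P$ for the weighted value of the homomorphisms $\phi$ with $P_\phi=P$, this gives
\[\val_{\wei_{H'}}(\Hom(H',G))=\sum_{Q\succeq P}\Sub_Q .\]
Here $Q$ ranges over the valid partitions coarser than $P$; partitions that are not valid contribute $0$. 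For the discrete partition, $\Sub_{P}$ is exactly $\val_\wei(\Sub(H,G))$.

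Second, I apply M\"obius inversion on this finite poset of valid partitions:
\[\val_\wei(\Sub(H,G))=\sum_{P}\mu(\text{discrete},P)\,\val_{\wei_{H_P}}(\Hom(H_P,G)).\]
The coefficients $\alpha_{H'}=\mu(\cdot,\cdot)$ depend only on $H$, not on $G$ or $\wei$. They coincide with the unweighted ones, which is the special case $\wei\equiv1$. Distinct partitions may yield isomorphic projections; in that case I simply add up their coefficients.

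The main subtlety is making sure that the weighted identity in the first step is compatible with the way $H'$ is formed. The projected weight must be the product $\wei_{H'}(p,u)=\prod_{v\in p}\wei(v,u)$, and the contraction must remember which vertices of $H$ form each part. I also have to check that restricting to valid partitions, rather than all partitions, does not break the inversion. This holds because the set of valid partitions is closed upward under coarsening among kernels that actually occur. If that closure is not clean, a fallback is to invert over all partitions, which form a lattice, and to note that invalid partitions contribute $0$ on both sides.
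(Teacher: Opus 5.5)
Your proof is correct in its main line, but it is more self-contained than the paper's. The paper re-derives nothing. It keeps the coefficients of \cref{lem:unweighted-sub-to-homo} and asserts that the Dvo\v{r}\'ak--T\r{u}ma proof is really a pointwise statement: after pulling each $\Hom(H',G)$ back through the uncontraction map $\Phi$, every $\phi\in\Hom(H,G)$ is counted equally often on both sides. Weighting each $\phi$ by $\val_\wei(\phi)$ and using $\val_{\wei_{H'}}(\phi')=\val_\wei(\Phi(\phi'))$ then gives the weighted identity at once. You prove that pointwise identity yourself, via kernel partitions and M\"obius inversion. So your argument does not depend on the internals of the cited proof, and it yields explicit coefficients $\alpha_{H_P}=\mu(\hat 0,P)=\prod_{p\in P}(-1)^{|p|-1}(|p|-1)!$. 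The paper's version is shorter, but it rests on an unverified claim about how the cited proof counts.

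Three things should be corrected.

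\textbf{Merging isomorphic projections.} Delete the remark that isomorphic projections arising from distinct partitions can be merged by adding their coefficients. In the weighted setting this breaks the identity, because $\wei_{H'}$ depends on which vertices of $H$ form each part, not on the isomorphism type of $H'$. For example, let $H$ consist of three isolated vertices $a,b,c$. The partitions $\{\{a,b\},\{c\}\}$ and $\{\{a,c\},\{b\}\}$ give isomorphic projections, but their weighted homomorphism values are $\bigl(\sum_u\wei(a,u)\wei(b,u)\bigr)\bigl(\sum_u\wei(c,u)\bigr)$ and $\bigl(\sum_u\wei(a,u)\wei(c,u)\bigr)\bigl(\sum_u\wei(b,u)\bigr)$, which differ in general. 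The sum must stay indexed by partitions. Your own later remark that the contraction remembers its parts already requires this, as does the paper's definition of $\wei_{H'}$.

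\textbf{Restricting to valid partitions.} Valid partitions are closed under refinement, not under coarsening: merging two parts joined by an edge destroys independence. So your stated reason for restricting to them is wrong. No closure property is needed, though. Since $\Sub_Q=0$ for every invalid $Q$, the relation $\val_{\wei_{H_P}}(\Hom(H_P,G))=\sum_{Q\succeq P,\ Q\text{ valid}}\Sub_Q$ holds inside the finite poset of valid partitions, and M\"obius inversion works on any finite poset. Because this poset is a down-set, its intervals, and hence its M\"obius function, agree with those of the full partition lattice. Your fallback over all partitions is also correct.

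\textbf{Equality with the Dvo\v{r}\'ak--T\r{u}ma coefficients.} The claim that your coefficients coincide with those of \cref{lem:unweighted-sub-to-homo} is not justified by the case $\wei\equiv 1$. Isomorphic projections have identical unweighted counts, so those coefficients are not even unique. The claim is also unnecessary, since the lemma only asserts existence.
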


\begin{proof}
	The coefficients $\alpha_{H'}$ are unsurprisingly going to be the same as in \cref{lem:unweighted-sub-to-homo}. For these coefficients, the proof of this lemma basically says that if we reformulate the supposed equality $|\Sub(H, G)| = \sum_{H \in \mathcal{H}^p} \alpha_{H'} |\Hom(H', G)|$ as $|\Sub(H, G)| = \sum_{H \in \mathcal{H}^p} \alpha_{H'} |\Phi(\Hom(H', G))|$, then each $\phi \in \Hom(H, G)$ is counted on both sides of the equality the same number of times. From that, it easily follows that actually $\val_\wei(\Sub(H, G)) = \sum_{H \in \mathcal{H}^p} \alpha_{H'} \val_\wei(\Phi(\Hom(H', G))) = \sum_{H \in \mathcal{H}^p} \alpha_{H'} \val_{\wei_{H'}}(\Hom(H', G))$.
\end{proof}

Armed with this statement, we are able to prove the following:

\begin{lemma}
	Let $H$ be a fixed graph, $\G$ be a class of graphs of bounded expansion, and $G\in \Cc$ be a dynamic graph, where edges of $H$ and $G$ are colored with colors $\{1, \ldots, k\}$. Let also $\wei \colon V(H) \times V(G) \to \Z$ be a fixed weight function. Then, there exists a data structure $\mathsf{WeightedSub}_{H, k, \wei}(G)$, which is able to determine $\val_\wei(\Sub(H, G))$ after each update. 
	The data structure processes each update in $\Oh_{\G, H, k}(\log^h |G|)$, where $h = {|H| \choose 2} - 1$, it is initialized for an edgeless graph in $\Oh_{\G, H, k}(|G|)$ time and its space complexity is $\Oh_{\G, H, k}(|G|)$.
\end{lemma}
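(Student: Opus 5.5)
The plan is to reduce weighted subgraph-isomorphism counting to weighted homomorphism counting through \cref{lem:weighted-sub-to-homo}, and then maintain each homomorphism term with \cref{lem:weighted-hom}. At initialization I will enumerate the set $\mathcal{H}^p$ of all projections $H'$ of $H$, together with the integer coefficients $\alpha_{H'}$ from \cref{lem:weighted-sub-to-homo}. Because $H$ is fixed, $\mathcal{H}^p$ is bounded by the number of partitions of $V(H)$, so it has size $\Oh_H(1)$. Every $H'\in\mathcal{H}^p$ is again a $k$-colored graph with $|H'|\le|H|$. For each such $H'$ I will form the projected weight function $\wei_{H'}(p,u)\coloneqq\prod_{v\in p}\wei(v,u)$, which is fixed once $\wei$ is fixed.

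The data structure $\mathsf{WeightedSub}_{H,k,\wei}(G)$ will keep one instance of $\mathsf{WeightedHom}_{H',k,\wei_{H'}}(G)$ for every $H'\in\mathcal{H}^p$ and relay every update of $G$ to all of these instances. After each update it recomputes
\[\val_\wei(\Sub(H,G))=\sum_{H'\in\mathcal{H}^p}\alpha_{H'}\,\val_{\wei_{H'}}(\Hom(H',G))\]
in $\Oh_H(1)$ arithmetic operations. Correctness is immediate from \cref{lem:weighted-sub-to-homo}, since that identity holds for every colored $G$ and every weight function.

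For the complexity, each $\mathsf{WeightedHom}$ instance processes an update in $\Oh_{\G,H',k}(\log^{\binom{|H'|}{2}-1}|G|)$ time by \cref{lem:weighted-hom}. Since $|H'|\le|H|$, this is at most $\Oh_{\G,H,k}(\log^{h}|G|)$ with $h=\binom{|H|}{2}-1$. Summing over the $\Oh_H(1)$ instances gives the claimed update time, and initialization time and space are likewise $\Oh_{\G,H,k}(|G|)$.

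One detail needs care: identifying vertices can make a projection $H'$ disconnected in a different way from $H$, or can make it tiny. This is harmless, because \cref{lem:weighted-hom} already treats disconnected pattern graphs by taking products over components. The remaining concern is arithmetic. The intermediate values are bounded polynomially in $|G|$ with degree $\Oh(|H|)$, and coefficients like $\alpha_{H'}$ and weights that may be negative are handled by ordinary integer arithmetic, so each operation costs $\Oh_H(1)$. I expect the only step that needs genuine checking to be the weighted identity itself, and it has already been established in \cref{lem:weighted-sub-to-homo}.
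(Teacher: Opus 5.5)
Your proposal is correct and follows the paper's proof. You rewrite $\val_\wei(\Sub(H,G))$ via \cref{lem:weighted-sub-to-homo} as an integer combination of the values $\val_{\wei_{H'}}(\Hom(H',G))$ over the $\Oh_H(1)$ projections $H'\in\mathcal{H}^p$, maintain each term with one instance of $\mathsf{WeightedHom}_{H',k,\wei_{H'}}(G)$ from \cref{lem:weighted-hom}, and use $|H'|\le|H|$ to get the $\log^h$ update bound.
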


\begin{proof}
As based on \cref{lem:weighted-sub-to-homo} we have that
$\val_\wei(\Sub(H, G)) = \sum_{H \in \mathcal{H}^p} \alpha_{H'} \val_{\wei_{H'}}(\Hom(H', G))$,
it suffices to maintain $\val_{\wei_{H'}}(\Hom(H', G))$ for each $H' \in \mathcal{H}^p$. However, the size of $\mathcal{H}^p$
is bounded by a function of $|H|$, so we consider it to be a constant. For each such $H'$ we use one instance of $\mathsf{WeightedHom}_{H', k, \wei_{H'}}(G)$ to track that value.
\end{proof}

\subsection{Counting weighted induced subgraph isomorphisms}
And finally, let us analyze $\val_\wei(\ISub(H, G))$. 
A direct application of the Inclusion-Exclusion principle shows that \[\val_\wei(\ISub(H, G)) = \sum_{i=0}^{{|H| \choose 2} - \|H\|} (-1)^i \sum_{H' \in H(+, i, k)} \val_\wei(\Sub(H', G)),\] where $H(+, i, k)$ is the set of all supergraphs of $H$ obtained by adding exactly $i$ new edges and assigning them colors from~$\{1, \ldots, k\}$.

Based on this, we are able to conclude the last required statement:

\begin{lemma} \label{lem:weighted-isub-to-sub}
	Let $H$ be a fixed graph, $\G$ be a class of graphs of bounded expansion, and $G \in \G$ be a dynamic graph, where edges of $H$ and $G$ are colored with colors $\{1, \ldots, k\}$. Let also $\wei \colon V(H) \times V(G) \to \Z$ be a fixed weight function. Then, there exists a data structure $\mathsf{WeightedIndSub}_{H, k, \wei}(G)$, which is able to determine $\val_\wei(\ISub(H, G))$ after each update.
	The data structure processes each update in $\Oh_{\G, H, k}(\log^h |G|)$, where $h = {|H| \choose 2} - 1$, it can be initialized for an edgeless graph in $\Oh_{\G, H, k}(|G|)$ time, and its space complexity is $\Oh_{\G, H, k}(|G|)$.
\end{lemma}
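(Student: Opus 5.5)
The plan is to reduce maintenance of $\val_\wei(\ISub(H,G))$ to a bounded number of instances of the weighted subgraph data structure $\mathsf{WeightedSub}$ from the previous lemma. The reduction goes through the inclusion--exclusion identity displayed just above:
\[\val_\wei(\ISub(H, G)) = \sum_{i=0}^{{|H| \choose 2} - \|H\|} (-1)^i \sum_{H' \in H(+, i, k)} \val_\wei(\Sub(H', G)).\]

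First I would verify this identity carefully in the weighted, edge-colored setting. Fix an injective map $\phi\colon V(H)\to V(G)$ that is a colored subgraph isomorphism from $H$ to $G$. Let $E_\phi$ be the set of non-edges $uv$ of $H$ such that $\phi(u)\phi(v)\in E(G)$. Each such pair determines a unique color, namely the color of $\phi(u)\phi(v)$ in $G$.

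The supergraphs $H'\in H(+,i,k)$ for which $\phi\in\Sub(H',G)$ are exactly those obtained by adding an $i$-subset of $E_\phi$, each edge colored with its forced color. Hence $\phi$ contributes
\[\val_\wei(\phi)\cdot\sum_i(-1)^i\binom{|E_\phi|}{i}\]
to the right-hand side. This sum equals $\val_\wei(\phi)$ if $E_\phi=\emptyset$, i.e. if $\phi$ is induced, and $0$ otherwise. The point to stress is that the weight $\val_\wei(\phi)$ depends only on $\phi$ and not on $H'$, because $V(H')=V(H)$ and the same $\wei$ is used. So the unweighted argument carries over verbatim.

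Second, the data structure. For every $i$ and every $H'\in H(+,i,k)$, maintain one instance of $\mathsf{WeightedSub}_{H',k,\wei}(G)$, relay every update to all of them, and after each update recompute the alternating sum. The number of such $H'$ is at most $(k+1)^{\binom{|H|}{2}}$, which is $\Oh_{H,k}(1)$.

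Each $H'$ has the same vertex count as $H$, so each instance has update time $\Oh_{\G,H',k}(\log^h|G|)$ with the same $h=\binom{|H|}{2}-1$. It also has initialization time and space $\Oh(|G|)$. Summing over constantly many instances, plus $\Oh_{H,k}(1)$ arithmetic to combine the values, gives the claimed bounds.

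I see no real obstacle. The only step needing care is the coloring bookkeeping: each added edge must be assigned exactly one color, and only the color matching $G$ can make $\phi$ a subgraph isomorphism. This is what makes the binomial cancellation exact. Otherwise the argument is routine.
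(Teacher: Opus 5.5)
Your proposal is correct and follows the paper's proof. The paper likewise maintains one weighted subgraph data structure for each colored supergraph $H'\in H(+,i,k)$ and combines the results with the same inclusion--exclusion identity. You also verify that identity in the weighted, colored setting, which the paper only asserts. One small point in your favour: the paper's one-line proof writes $\val_\wei(\ISub(H',G))$ where $\val_\wei(\Sub(H',G))$ is meant, and your version states it correctly.
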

\begin{proof}
	Based on the mentioned equality, the problem of maintaining $\val_\wei(\ISub(H, G))$ easily reduces to the problem of maintaining $\val_\wei(\ISub(H', G))$ for all supergraphs $H'$ of $H$. As there is a constant number of them, the statement follows from \cref{lem:weighted-sub-to-homo}.
\end{proof}

Now, the \cref{lem:weighted-dvorak} is just a combination of \cref{lem:weighted-hom}, \cref{lem:weighted-sub-to-homo} and \cref{lem:weighted-isub-to-sub}. 

\end{document}